\documentclass[onecolumn,nofootinbib, superscriptaddress,10pt,aps,prx]{revtex4-2}
\usepackage[dvips]{graphicx} 
\usepackage{amsfonts}
\usepackage{csquotes} 
\usepackage{amssymb}
\usepackage{amscd}
\usepackage{amsmath}    
\usepackage{amsthm}
\usepackage{bm} 
\usepackage{bbm}
\usepackage{booktabs}
\usepackage{array}
\usepackage{enumerate}
\usepackage{enumitem}
\usepackage{epsfig}
\usepackage{subfigure}
\usepackage{subfloat}
\usepackage{xcolor}	
\usepackage{physics}
\usepackage[most]{tcolorbox}

\usepackage{tikz}
\usepackage{quantikz}
\usepackage{multirow}

\usepackage{tabularx}
\usepackage{float}
\usepackage{appendix}
\usepackage{makecell}
\usepackage{algpseudocode}  
\usepackage{dsfont}
\usepackage[colorlinks, linkcolor=blue, anchorcolor=blue, citecolor=blue]{hyperref}
\usepackage{MnSymbol}
\usepackage[linesnumbered,ruled,vlined]{algorithm2e}
\usepackage{natbib}
\usepackage{times}
\usepackage{tabularray}

\makeatletter
\newcommand{\printappendixtoc}{%
	\section*{Contents}    
	\@starttoc{atoc}       
}

\newcommand{\l@atocsection}{\@dottedtocline{1}{0em}{2.3em}}
\newcommand{\l@atocsubsection}{\@dottedtocline{2}{1.5em}{3em}}
\makeatother

\newtheorem{theorem}{Theorem}
\newtheorem{fact}{Fact}
\newtheorem{lemma}{Lemma}
\newtheorem{corollary}{Corollary}

\newtheorem{definition}{Definition}

\newtheorem*{remark}{Remark}

\newcommand{\bF}{\mathbb{F}}

\newcommand{\bR}{\mathbb{R}}

\newcommand{\bu}{\bm{u}}
\newcommand{\btheta}{\bm{\theta}}

\newcommand{\cD}{\mathcal{D}}
\newcommand{\cE}{\mathcal{E}}
\newcommand{\cH}{\mathcal{H}}

\newcommand{\cO}{\mathcal{O}}
\newcommand{\cT}{\mathcal{T}}
\newcommand{\cU}{\mathcal{U}}
\newcommand{\wt}{\mathrm{wt}}
\newcommand{\supp}{\mathrm{supp}}
\newcommand{\Bin}{\mathrm{Bin}}

\newcommand{\GR}{\mathrm{GR}}

\newcommand{\TVD}[2]{\mathrm{TV}\left(#1, #2\right)}

\definecolor{dred}{rgb}{.8,0.2,.2}
\definecolor{dblue}{rgb}{.2,0.2,.8}

\newtcolorbox[auto counter]{mybox}[2][]{
	enhanced,
	breakable,
	colback=blue!5!white,
	colframe=blue!75!black,
	fonttitle=\bfseries,
	title=Box \thetcbcounter: #2,#1
}

\begin{document}
	\title{Complexity-driven transitions in quantum observation}
	\author{Zhenyu Du}
	\thanks{These authors contributed equally to this work.}
	\affiliation{Center for Quantum Information, Institute for Interdisciplinary Information Sciences, Tsinghua University, Beijing 100084, China}
	
	\author{Siyuan Cheng}
	\thanks{These authors contributed equally to this work.}
	\affiliation{Center for Quantum Information, Institute for Interdisciplinary Information Sciences, Tsinghua University, Beijing 100084, China}
	
	\author{Han Ye}
	\affiliation{Center for Quantum Information, Institute for Interdisciplinary Information Sciences, Tsinghua University, Beijing 100084, China}
	
	\author{Junjie Chen}
	\affiliation{Center for Quantum Information, Institute for Interdisciplinary Information Sciences, Tsinghua University, Beijing 100084, China}

	\author{Xiao Yuan}
	\email{xiaoyuan@pku.edu.cn}
	\affiliation{Center on Frontiers of Computing Studies, Peking University, Beijing 100871, China}
	\affiliation{School of Computer Science, Peking University, Beijing 100871, China}
	
	\author{Xiongfeng Ma}
	\email{xma@tsinghua.edu.cn}
	\affiliation{Center for Quantum Information, Institute for Interdisciplinary Information Sciences, Tsinghua University, Beijing 100084, China}

	\begin{abstract}
		
		Observing the physical world is a foundational pursuit in science. In the quantum realm, however, observation necessitates a fundamental quantum-to-classical conversion: destructive measurements irreversibly project quantum states into classical data, inevitably incurring a loss of information. What physical principles govern this information loss, and how can we construct optimal measurements to maximize the readout? Here, we address these questions by establishing an intrinsic relationship between readout capability—quantified by the ratio of accessible classical Fisher information to the total quantum Fisher information (QFI), and measurement complexity—defined as the quantum circuit depth required prior to projection.
		Remarkably, we uncover a sudden emergence of observability: a sharp hidden-to-visible transition driven entirely by measurement complexity. We rigorously prove that below critical depth thresholds—$\Theta((\log n)^{1/\delta})$ for $\delta$-dimensional architectures and $\Theta(\log\log n)$ for all-to-all connectivity—readout capability decays exponentially with system size $n$, rendering the quantum information fundamentally inaccessible.
		Surprisingly, immediately above this threshold, the system enters a visible regime: we demonstrate that randomized measurements universally recover a constant fraction of the QFI using approximate unitary 3-designs, for which we explicitly develop optimal-depth circuit constructions tailored to finite-dimensional architectures. By unveiling the fundamental scaling laws and transitions that govern quantum observation, our results delineate definitive resource boundaries for quantum learning, state certification, and quantum metrology.
		
	\end{abstract}
	
	\maketitle
	
	\section{Introduction}
	
	Observing the physical world with ultimate precision is a foundational pursuit that drives the frontier of modern science, enabling transformative breakthroughs ranging from gravitational wave detection~\cite{Caves1981QuantumMechanicalNoise, Tse2019QuantumEnhancedLIGO, Acernese2019IncreasingAstrophysical} and high-resolution imaging~\cite{Tsang2016Superresolution, Boto2000BeatDiffractionLimit} to precision metrology~\cite{Giovannetti2004BeatingStandardLimit, Giovannetti2011AdvancesMetrology, PedrozoPenafiel2020AtomicClock}. 
	At the heart of these diverse applications lies a unified objective: extracting the rich physical information embedded within quantum states, where the ultimate observational limit is rigorously governed by the quantum Fisher information (QFI)~\cite{Helstrom1969EstimationTheory, Braunstein1994GeometryStates}. 
	However, because we reside in a macroscopic classical world, extracting this intrinsic information necessitates a fundamental quantum-to-classical conversion. 
	Measurement serves as the interface for this extraction, irreversibly projecting quantum states into classical data. 
	This destructive process inevitably risks a loss of information: the surviving classical signal, quantified by the classical Fisher information (CFI) of the outcome distribution~\cite{Fisher1925TheoreyStatisticalEstimation, Cramer1946MathematicalMethods}, is intrinsically upper-bounded by the QFI. 
	Consequently, suboptimal measurement strategies can leave a vast amount of quantum information fundamentally inaccessible, severely compromising the final observation.
	Identifying measurement schemes that efficiently translate quantum potential into classical reality (QFI into CFI) has thus emerged as a central quest in quantum science~\cite{Braunstein1994GeometryStates, Pezze2008MachZehnderInterferometry, Szczykulska2016Multiparameter, Pezze2017SimulaneousQuantumEstimation, Zhu2018FisherSymmetric, Zhou2026RandomizedMetrology, Lu2026FisherRandomMeasurements}.
	
	However, fulfilling this quest is fundamentally bottlenecked by an inherent tradeoff between readout capability and measurement complexity. 
	While optimal extraction approaching the QFI limit typically demands highly nonlocal operations via deep quantum circuits---rendering them practically intractable~\cite{Szczykulska2016Multiparameter, Zhu2018FisherSymmetric, Zhou2020SaturatingLOCC, Zhou2026RandomizedMetrology, Lu2026FisherRandomMeasurements, Chen2022InformationGeometryHierarchical}---realistic quantum devices are physically constrained by finite coherence times to operate at shallow depths. 
	Yet, such restricted measurements often impose a severe observational bottleneck, leaving the intrinsic quantum information effectively inaccessible~\cite{Bennett1999NonlocalityWithoutEntanglement, Terhal2001HidingBits, Eggeling2002HidingClassicalData, Tsang2011NonlocalityInterferometry, Chen2022Memory, Huang2022QuantumAdvantage, Liu2025QuantumLearningPhotonic}. 
	Driven by this stringent physical reality, extensive efforts have sought to bridge the gap using simplified schemes, yielding notable successes in low-depth classical shadow estimation~\cite{Huang2020Predicting, Schuster2024RandomUnitaries, Hu2025DemonstrationShallowShadow, Hu2023ShallowShadow}, local state certification~\cite{Huang2024Certifying, Gupta2025SingleQubitCertification, Du2025Localizable, Coladangelo2026TwoBases}, and distributed metrology~\cite{Pezze2025DistributedOptimalLocal}. 
	Nevertheless, these advances remain largely task-specific, state-dependent, or reliant on demanding adaptive operations. 
	A universal theoretical framework is thus still missing, leaving a profound question unanswered: what fundamental scaling laws govern the ultimate relationship between readout capability and measurement complexity?
	
	\begin{figure*}[!htbp]
		\centering
		\includegraphics[width=0.95\linewidth]{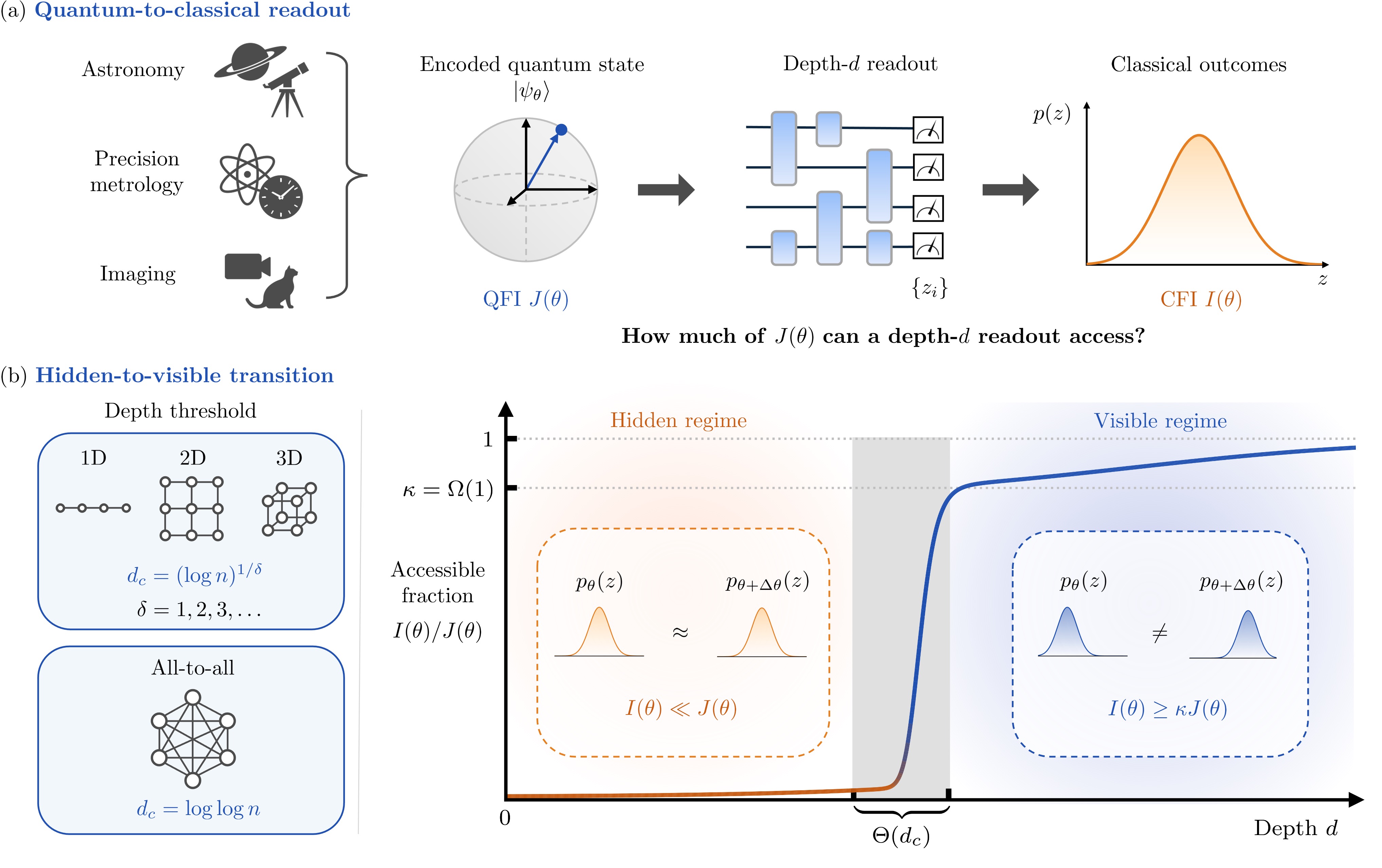}
		
		\caption{The quantum observation task and the complexity-driven transitions.
			(a) The quantum-to-classical conversion.  
			To access the encoded information, the $n$-qubit pure state must be converted into classical outcomes via measurements.
			The quantum state carries quantum Fisher information (QFI) $J(\theta)$ about a parameter $\theta$. 
			After applying a depth-$d$ circuit and measuring in the computational basis, the resulting outcome distribution carries classical Fisher information (CFI) $I(\theta)$, which quantifies the information accessible after measurement.
			(b) The hidden-to-visible transition. Our main result reveals a sharp, complexity-driven transition in readout capability---quantified by the accessible information fraction $I(\theta)/J(\theta)$---dictated by measurement circuit depth.
			Below this critical threshold lies the hidden regime, where there exists a pure-state encoding for which all allowed depth-$d$ readouts produce outcome distributions that are nearly independent of $\theta$, yielding $I(\theta)\ll J(\theta)$. 
			Above the matching threshold lies the visible regime, where we construct depth-optimal measurements that extract a constant fraction $\kappa = \Omega(1)$ of the QFI for every pure-state encoding and every parameter point, ensuring $I(\theta)\ge \kappa J(\theta)$. 
			We establish these circuit depth thresholds at $d_c=(\log n)^{1/\delta}$ for $\delta$D architectures and $d_c=\log\log n$ for all-to-all architectures. 
		}
		\label{fig:summary}
	\end{figure*}
	
	In this work, we address these fundamental questions by uncovering a sharp, hidden-to-visible transition in quantum readout capability, driven entirely by measurement circuit depth (Fig.~\ref{fig:summary}). We establish precise critical thresholds---$\Theta((\log n)^{1/\delta})$ for $\delta$-dimensional architectures and $\Theta(\log \log n)$ for all-to-all connectivity---below which the system resides in a fundamentally unobservable regime. 
	We prove that any measurement below these depth bounds is strictly constrained by ``phase-hiding'' and ``data-hiding'' mechanisms: rather than simply converting the quantum signal, low-depth measurements can irreversibly erase the information, yielding an exponentially vanishing classical signal.
	This rigorous no-go result applies even when the shallow measurement circuit is globally optimized with full knowledge of the encoding, demonstrating that quantum information remains strictly inaccessible without sufficient measurement complexity.
	Consequently, we establish, for the first time, tight circuit-depth lower bounds on essential tasks such as parameter estimation, state certification, and fidelity estimation.
	
	Surprisingly, immediately above this depth scale, the system enters a visible regime. We prove that randomized measurements induced by approximate unitary 3-designs bypass these hiding mechanisms, universally extracting a constant fraction of the full quantum Fisher information matrix (QFIM) for any multiparameter pure-state encoding. To realize this, we develop depth-optimal circuit constructions for these approximate designs across various hardware connectivities, utilizing a novel implementation of exact unitary 2-designs on $\delta$-dimensional architectures as a key ingredient. Crucially, this advance provides the first realization of multiplicative-error approximate unitary designs that scale optimally with system size $n$ in any architecture dimension $\delta \ge 2$, closing a prominent gap in the literature~\cite{Schuster2024RandomUnitaries}.
	
	Ultimately, this hidden-to-visible transition governs the readout of both continuous parameters and discrete quantum information. By unveiling the fundamental depth requirements for quantum-to-classical information readout, our results delineate the definitive resource boundaries for essential tasks across quantum metrology, learning, and certification.

	\section{Main Results}
	We begin by formalizing the physical process of quantum observation as a quantum information readout task. Consider the estimation of a real parameter $\theta$ encoded within a smooth family of $n$-qubit pure states, $ \cE=\{\ket{\psi_\theta}:\theta\in\mathbb{R}\}$.
	The intrinsic information about $\theta$ carried by the quantum state is quantified by the QFI,
	\begin{equation}
		J_{\cE}(\theta) = 4\left(\braket{\dot\psi_\theta}{\dot\psi_\theta} - \left|\braket{\psi_\theta}{\dot\psi_\theta}\right|^2 \right),
		\qquad
		\ket{\dot\psi_\theta}:=\partial_\theta\ket{\psi_\theta}.
	\end{equation}
	To access this information, a measurement $M=\{M_x\}$ converts this quantum state into a classical outcome distribution $p_\theta(x) = \tr\left(M_x\ketbra{\psi_\theta}{\psi_\theta}\right)$.
	The surviving classical signal is quantified by the CFI
	\begin{equation}
		I_{\cE}^{M}(\theta) = \sum_{x:p_\theta(x)>0} \frac{\left(\partial_\theta p_\theta(x)\right)^2}{p_\theta(x)} .
	\end{equation}
	Given multiple independent experimental runs, $J_{\cE}(\theta)$ and $I_{\cE}^{M}(\theta)$ dictate the best estimation precision before and after measurement, respectively, via the quantum and classical Cram\'er--Rao bounds~\cite{Helstrom1969EstimationTheory, Braunstein1994GeometryStates}.
	This framework naturally extends to multiparameter encoding. For a state encoding a parameter vector $\btheta\in\mathbb R^m$, the information is quantified by the $m \times m$ QFIM and CFIM, $J_{\cE}(\btheta)$ and $I_{\cE}^{M}(\btheta)$ (formal definitions are given in Methods).
	
	The CFI is upper-bounded by the QFI ($I_{\cE}^{M}(\theta)\le J_{\cE}(\theta)$, and $I_{\cE}^{M}(\btheta) \preceq J_{\cE}(\btheta)$ for multiparameter encodings), and measurements with limited complexity can fail to approach this theoretical limit, resulting in a severe loss of information.
	To investigate the fundamental relationship between this information loss and measurement complexity, we restrict the allowed measurements to depth-$d$ unitary-basis readouts on a specified architecture $G$. 
	Such a readout first applies a depth-$d$ circuit $U$, constrained by the connectivity of $G$, to the input state $\rho$. 
	This is followed by a computational-basis measurement of all data qubits, yielding the outcome distribution $p_\rho^U(z)=\bra z U\rho U^\dagger\ket z$ for $z\in\{0,1\}^n$.
	We also allow randomized readouts, where the circuit $U$ applied in each experimental run is sampled from an ensemble $\mathcal{U}$.
	While clean ancillas may assist in implementing the data-qubit unitaries when explicitly stated, the final computational-basis measurements are always restricted to the data qubits.
	The precise readout model is defined in Methods.

	The metric of interest is the readout capability of a depth-$d$ circuit family: the guaranteed fraction $\kappa$ of the QFI that an optimal measurement within this class can convert into CFI for arbitrary encoding.
	Concretely, for a given encoding $\cE$ and parameter point $\theta$, we ask whether one can always choose a depth-$d$ readout $M$ such that $I_{\cE}^{M}(\theta) \ge \kappa J_{\cE}(\theta)$. 
	Our main result establishes the critical depth threshold at which this readout capability transitions from being exponentially vanishing to order one across $\delta$D and all-to-all architectures (Fig.~\ref{fig:summary}).

	\begin{theorem}[Complexity-driven transition in quantum information readout]
		\label{thm:depth-optimal-readout}
		For sufficiently large $n$, unitary-basis readouts exhibit the following transition:
		\begin{center}
			\setlength{\tabcolsep}{12pt}
			\begin{tabular}{cccc}
				\toprule
				Architecture 
				& \begin{tabular}[c]{@{}c@{}}Hidden regime \\ $\kappa \le \exp\left[-n^{\Omega(1)}\right]$\end{tabular}
				& \begin{tabular}[c]{@{}c@{}}Visible regime \\ $\kappa = \Omega(1) $\end{tabular}
				& Visible implementation \\ \midrule
				1D 
				& $d \lesssim \log n$ 
				& $d \gtrsim \log n$ 
				& ancilla-free  \\ \midrule
				
				$\delta$D, $\delta\ge 2$ 
				& $d \lesssim (\log n)^{1/\delta}$ 
				& $d \gtrsim (\log n)^{1/\delta}$
				& clean ancillas\\ \midrule
				
				all-to-all 
				& $d \lesssim \log\log n$ 
				& $d \gtrsim \log\log n$ 
				& clean ancillas\\ 
				\bottomrule
			\end{tabular}
		\end{center}
		Here, $\lesssim$ and $\gtrsim$ denote inequalities up to constant factors, which depend solely on the architecture.
		
		\begin{enumerate}
			\item {Hidden regime: there exists an $n$-qubit single-parameter pure-state encoding $\cE$ satisfying $J_{\cE}(\theta)=1$ for every parameter point $\theta$, while any allowed depth-$d$ unitary-basis readout yields $I_{\cE}^{M}(\theta) \le \exp[-n^{\Omega(1)}]$. This bound remains valid when clean ancillas are permitted.}
			\item {Visible regime: there exists a randomized unitary-basis readout implementable in depth $d$ satisfying $I_{\cE}^{M}(\theta) \ge \kappa J_{\cE}(\theta)$ for every smooth $n$-qubit single-parameter pure-state encoding $\cE$ and parameter $\theta$, where $\kappa>0$ is an absolute constant. The same measurement protocol satisfies $I_{\cE}^{M}(\btheta) \succeq \kappa J_{\cE}(\btheta)$ for multiparameter pure-state encodings.}
		\end{enumerate}
	\end{theorem}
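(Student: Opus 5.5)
The proof has two independent halves: a hiding construction for the lower bounds and a randomized-measurement argument for the upper bounds.

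\textbf{Hidden regime.} The first step is a light-cone argument. After a depth-$d$ circuit on architecture $G$, each measured qubit $z_i$ depends only on a backward light cone $L_i$ of the input qubits. For $\delta$D this cone has size $|L_i|\le c\,d^{\delta}$, and for all-to-all gates it has size at most $2^d$. Clean ancillas only enlarge the cone by qubits in a fixed known state, so they do not change the argument.

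The encoding I would choose is a phase-hiding family
\begin{equation}
\ket{\psi_\theta}=\tfrac{1}{\sqrt2}\left(\ket{\phi_0}+e^{i\theta/1}\ket{\phi_1}\right),
\end{equation}
so that $J=1$. The states $\ket{\phi_0},\ket{\phi_1}$ are orthogonal states (for example, code states of a good code, or data-hiding states) whose reduced density matrices on every set of $k$ qubits agree up to error $\epsilon_k$, and moreover satisfy $\tr_{\bar S}(\ket{\phi_0}\bra{\phi_1})\approx 0$ on every such set $S$. Only the cross term depends on $\theta$.

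The task is then to bound the $\theta$-dependence of $p_\theta(z)$. Here $p_\theta(z)-\bar p(z)$ equals $\mathrm{Re}\!\left[e^{i\theta}\bra{\phi_0}U^\dagger\ketbra{z}{z}U\ket{\phi_1}\right]$. The projector $\ketbra{z}{z}$ is a product over $i$ of single-qubit projectors, but the product itself is global. To get an exponential bound on the full distribution, I would use a subset of measured qubits $A$ whose light cones are pairwise disjoint and cover enough of the system, together with a state whose off-diagonal correlations are suppressed exponentially in the number of such local patches.

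A clean choice is a tensor product of $m=n/k$ blocks, each a GHZ-like pair $\ket{0^k}\pm\ket{1^k}$, with phase $\theta/m$ spread across the blocks. Cleaner still is to take $\ket{\phi_{0,1}}$ as the two logical states of a code with distance $D\gg d^\delta$. In that case every operator supported on fewer than $D$ qubits has vanishing off-diagonal element.

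I would then bound the CFI as
\begin{equation}
I\le \sum_z \frac{(\partial_\theta p)^2}{p}\le \max_z\frac{|\partial_\theta p(z)|}{p(z)}\cdot\sum_z|\partial_\theta p|,
\end{equation}
and control the right-hand side through the trace distance between the $\theta$-states after the circuit and measurement.

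The main obstacle is exactly this step: going from local indistinguishability to a bound on the global measurement. Local indistinguishability in the light-cone sense alone gives only polynomial bounds, because the computational-basis string is a global function. I would resolve this with a product-of-hidden-blocks design. Take $n^{\Omega(1)}$ blocks, each of size $k$, where $k$ exceeds the light-cone size. Blocks that are separated so that a depth-$d$ circuit cannot correlate them make the measured distribution factorize approximately. The phase then survives only through a product of per-block suppression factors, giving $e^{-n^{\Omega(1)}}$.

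The thresholds follow from this requirement. In $\delta$D, a block of size $\sim\log n$ must stay hidden against light cones of size $d^\delta$, which gives $d\lesssim(\log n)^{1/\delta}$. In the all-to-all case the cone size is $2^d\lesssim\log n$, which gives $d\lesssim\log\log n$.

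\textbf{Visible regime.} The first step is to show that for a random $U$ drawn from an exact unitary 3-design, the expected CFI obeys $\mathbb E_U I^U(\theta)\ge\kappa J(\theta)$. I would lower bound CFI via Cauchy--Schwarz, in the form
\begin{equation}
I\ge \frac{\left(\sum_z|\partial p|\right)^2}{\sum_z p}.
\end{equation}
Equivalently, I would use $\sum_z(\partial p)^2/p\ge(\sum_z (\partial p)^2)^2/\sum_z (\partial p)^2 p$. The second moment $\mathbb E\sum_z(\partial_\theta p)^2$ is computed with 2-design Weingarten calculus and is of order $J/2^n$ times $2^n$. The third-moment term needs a 3-design. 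Applying Paley--Zygmund-type inequalities then gives a constant fraction of $J$.

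For the multiparameter case, I would apply the same argument to each direction $v^\top\btheta$, using linearity of $\mathbb E_U I^U$, to get $I\succeq\kappa J$.

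It remains to pass from exact designs to approximate ones. A multiplicative-error $\epsilon$-approximate 3-design changes each moment by a factor of at most $(1\pm\epsilon)$, so the bound survives for small constant $\epsilon$. Finally, I would invoke depth-optimal constructions of such designs: in 1D, a brickwork of depth $O(\log n)$; in $\delta$D, patch gluing of exact 2-designs on local patches of side $(\log n)^{1/\delta}$, using clean ancillas; and in the all-to-all case, depth $O(\log\log n)$ via small patches glued by log-depth gates on patches of size $\log n$. Each of these would be taken from the constructions developed later in the paper.
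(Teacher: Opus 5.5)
\textbf{The visible-regime half is fine; the hidden-regime half has a genuine gap.} Your visible-regime argument follows the paper: restrict to a direction $\bu$, bound the CFI by the Cauchy--Schwarz ratio of the second to the third moment of the induced POVM, evaluate the Haar moments, then perturb to a multiplicative-error design. In the hidden regime, however, you never exhibit states whose phase is actually hidden, and the candidates you name are not hidden.
\begin{itemize}
\item \emph{GHZ blocks.} $\ket{0^k}\pm\ket{1^k}$ are perfectly distinguished at depth one: a Hadamard layer followed by the parity of the outcomes measures $X^{\otimes k}$. The phase-spread family $\bigotimes_j(\ket{0^k}+e^{i\theta/m}\ket{1^k})/\sqrt2$ has QFI $1/m$ rather than $1$, and the same product measurement reads it out optimally.
\item \emph{Large code distance.} For any stabilizer code the logical $\bar X$ is a Pauli string. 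Measuring each qubit in the eigenbasis of its factor (depth one) and multiplying the outcomes gives a bit with mean $\cos\theta$. That bit alone already has CFI $1=J$, whatever the distance.
\item \emph{Product of hidden blocks.} You correctly note that $U^\dagger\ketbra{z}{z}U$ is a product of $n$ local projectors and hence global, but the blocks meant to overcome this are left unspecified. A fixed pair of product states per block cannot work. Choose on each qubit a basis vector equidistant from $\ket{u_a}$ and $\ket{v_a}$. Then every factor $\|P_+\ket{u_a}\|\,\|P_+\ket{v_a}\|+\|P_-\ket{u_a}\|\,\|P_-\ket{v_a}\|$ of the rank-one formula equals $1$, so the coherence survives intact.
\item \emph{Threshold heuristic.} Nothing in it forces blocks of size $\log n$. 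Constant per-block suppression with blocks slightly larger than the light cone would hide the phase up to $d^\delta=n^{1-\Omega(1)}$, which contradicts the visible regime.
\item \emph{All-to-all architecture.} There are no spatially separated blocks here. The circuit chooses its light cones with full knowledge of the encoding, so any hiding construction must be uniform over circuit-dependent cones.
\end{itemize}

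\textbf{What is missing is the paper's random-code mechanism.}
\begin{enumerate}
\item Write $\cD_U$ as the product of commuting dephasings by $A_i=U^\dagger Z_iU$. By trace-norm contractivity, keep only $m$ of them with pairwise disjoint supports of size at most $L$. For product states, $\|\cD_U(\ket{\Phi_s}\bra{\Phi_t})\|_1$ is then bounded by a product of per-cone factors.
\item For a Haar-random product state, every balanced dephasing on $t\le L$ qubits contracts the coherence to at most $1-\frac{1}{32\cdot 3^t}$ with probability at least $\frac{1}{16\cdot 3^t}$ (Lemma~\ref{lem:local_random_product_contraction}). This exponentially weak per-cone contraction is what sets the threshold, through the requirement $m\,e^{-C_0L}=n^{\Omega(1)}$. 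In $\delta$D this uses $L=\cO(d^\delta)$ and $m=\Omega(n/L)$; for all-to-all it uses $L=2^d$ and $m\ge n/L^2$.
\item Take $\ket{\eta_0},\ket{\eta_1}$ to be orthonormalized uniform superpositions over two halves of $N=\exp(c\,p_L\gamma_L m)$ independent random product states. A Chernoff bound, the matching-or-star argument, an $\varepsilon$-net over balanced local dephasings, and a union bound over support families ($R\le (n+1)^n$ in the all-to-all case) together give the key property. For every allowed $U$ simultaneously, all but $\cO(\ell^2)\ll N$ cross terms are suppressed to $\lambda_L^{\delta_L m}$, and the exceptional ones are diluted by the factor $1/N$.
\end{enumerate}

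\textbf{Smaller repairs.}
\begin{itemize}
\item \emph{Converting to a CFI bound.} The ratio $\max_z|\partial_\theta p(z)|/p(z)$ is not controlled by the total-variation distance, since it can be unbounded. Instead write $p_\theta(z)=\bar p(z)+r_z\cos(\theta+\alpha_z)$ and use positivity, $\bar p(z)\ge r_z$, to get $I\le 2\sum_z r_z$. Then bound $\sum_z r_z$ by averaging the total-variation bound over $\theta$.
\item \emph{Design construction.} Gluing exact $2$-designs yields only a $2$-design. You need local approximate $3$-designs, which the paper builds from LRFC blocks.
\item \emph{Design error.} $L_t^{\otimes 2}$ is not positive, so the design error cannot be absorbed as a $(1\pm\epsilon)$ factor. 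It must go through $\|\Delta^{(j)}\|_\infty\le\epsilon\|\Phi^{(j)}_{\mathrm{Haar}}\|_\infty$ paired with $\|L_t^{\otimes 2}\|_1=4J_t$. This is what produces the factor $(1-4\epsilon)^2/(1+6\epsilon)$.
\end{itemize}
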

	
	We emphasize that the threshold established in Theorem \ref{thm:depth-optimal-readout} governs the extraction of not only continuous parameters, but also discrete quantum information, such as extracting a single classical bit encoded in two perfectly orthogonal pure states (Sec.~\ref{sec:phase-hiding}). 
	Beyond revealing the physical principles governing quantum observation, these hidden-to-visible transitions have profound operational implications, which we detail below.
	
	The hidden regime is governed by a phase-hiding phenomenon (Theorem~\ref{thm:phase_hiding_low_depth}).  
	At its core, we construct orthogonal states whose relative phase encodes a unit QFI, yet every readout below the depth threshold produces statistics that are almost independent of the phase.  
	While these adversarially constructed states may not be efficiently preparable or representative of typical natural dynamics, their existence proves that shallow circuits cannot guarantee universal constant-fraction readout.
	This mechanism naturally extends to a low-depth data-hiding phenomenon (Corollary~\ref{cor:low_depth_data_hiding}), imposing, for the first time, tight depth lower bounds on quantum information processing tasks such as parameter estimation, fidelity estimation, and state certification.
	
	Conversely, the visible regime is achieved by circumventing these hiding mechanisms via randomized measurements. We demonstrate that measurements generated by multiplicative-error approximate unitary $3$-designs (statistically pseudorandom ensembles that reproduce the uniform Haar measure up to the third moments) universally read out a constant fraction of the full QFIM for any multiparameter pure-state encoding (Theorem~\ref{thm:approx-design-readout}). 
	Crucially, this ensures that the achievable parameter estimation precision maintains the same scaling with the number of experimental runs as the ultimate quantum limit.
	Furthermore, all task dependence is entirely deferred to the classical postprocessing of the measurement record, thereby realizing a powerful ``measure first, ask questions later'' protocol~\cite{Elben2023randomized}.

	To physically implement this universal readout, we provide explicit circuit constructions of approximate unitary designs with optimal depth scaling in system size $n$ across finite-dimensional architectures with $\delta \ge 2$. 
	This closes a gap in the literature: while a recent seminal work~\cite{Schuster2024RandomUnitaries} established optimal constructions for 1D and all-to-all connectivities, the optimal scaling for higher-dimensional architectures had previously remained elusive.
	As a key intermediate ingredient, our framework also yields depth-optimal constructions of exact unitary $2$-designs in finite-dimensional architectures~\cite{Cleve2016ExactDesign}.
	Consequently, our method exponentially reduces the measurement circuit depth required for multiparameter estimation: compared to recent notable advancements relying on exact 3-designs~\cite{Zhou2026RandomizedMetrology}, we lower the depth requirement from $\cO(n)$ down to merely $\cO((\log n)^{1/\delta})$ in $\delta$D architectures, and from $\cO(\log n)$ to $\cO(\log \log n)$ in all-to-all architectures. 
	By tightly matching the established lower bounds of the hidden regime with these explicit circuit constructions, we definitively achieve depth-optimal quantum information readout across all considered architectures.
	Beyond this readout advantage, these low-depth constructions for random unitaries find broad utility across the field, given their foundational role in physics and quantum information~\cite{Schuster2024RandomUnitaries}.
	
	\section{Phase hiding below the threshold}
	\label{sec:phase-hiding}
	We now proceed to reveal the physical mechanisms underlying these hidden-to-visible transitions.
	We first establish the hidden regime of Theorem~\ref{thm:depth-optimal-readout} by deriving depth lower bounds for quantum information readout. 
	The key mechanism is a phase-hiding phenomenon, as illustrated in Fig.~\ref{fig:phase_hiding}. 
	Specifically, we construct states $\ket{\psi_\theta} = \frac{1}{\sqrt2} \left(\ket{\eta_0}+e^{i\theta}\ket{\eta_1}\right)$  that differ only by a relative phase between two orthogonal components $\ket{\eta_0}$ and $\ket{\eta_1}$, yet remain indistinguishable to any low-depth circuit. Consequently, while the information is present in the quantum state, it remains inaccessible to the entire family of low-depth measurements. 
	Crucially, this holds even when the readout is tailored with full prior knowledge of the encoding.
	
	\begin{figure*}[!htbp]
		\centering
		\includegraphics[width=0.9\linewidth]{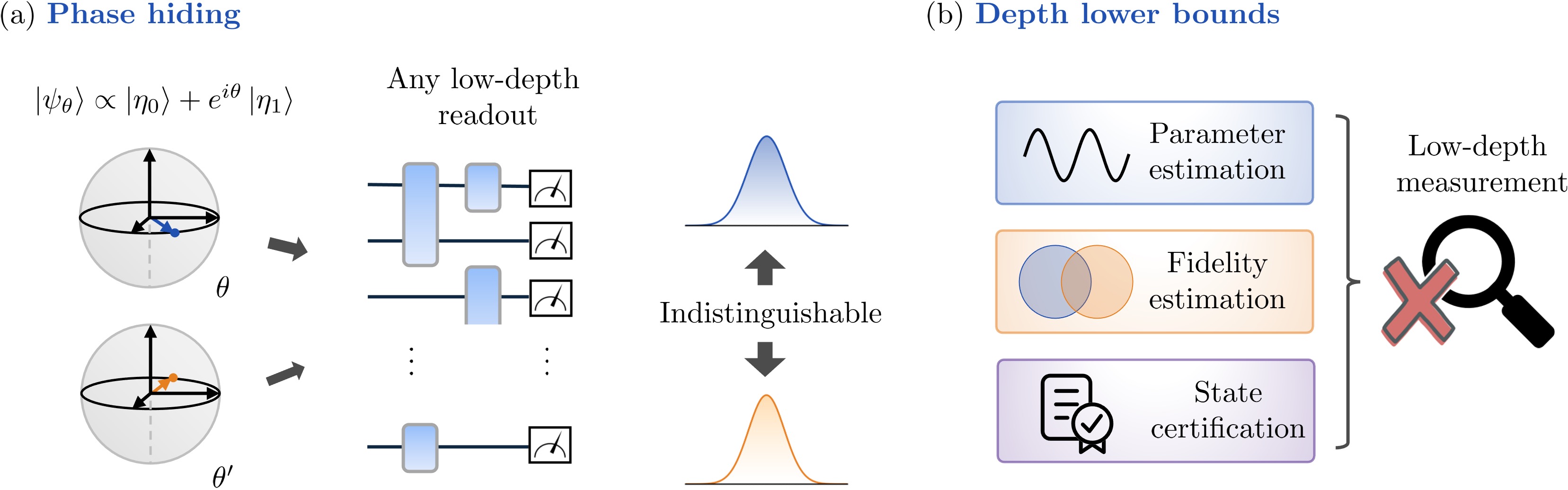}
		\caption{Phase hiding and depth lower bounds for quantum information readout.
			(a) Phase hiding. We prove that there exist orthogonal states $\ket{\eta_0}$ and $\ket{\eta_1}$ such that, for the phase family 
			$\ket{\psi_\theta}=(\ket{\eta_0}+e^{i\theta}\ket{\eta_1})/\sqrt2$, any unitary-basis readout below the depth threshold cannot distinguish states with different phases $\theta$ and $\theta'$. 
			Thus, the phase information is present in the quantum state but hidden from all low-depth readout circuits. 
			(b) Depth lower bounds. 
			The phase indistinguishability implies that, in the worst case, any sample-efficient protocol for parameter estimation, fidelity estimation, or state certification must use readout circuits with depth exceeding the corresponding threshold.}
		\label{fig:phase_hiding}
	\end{figure*}

	This no-go result goes beyond existing depth lower bounds for generating unitary designs~\cite{Schuster2024RandomUnitaries, Cui2025UnitaryDesignsOptimal}.
	Because those prior results apply only to random unitaries, one might naturally expect that task-specific measurement could bypass these constraints to extract information at a much shallower depth.
	Our result refutes this intuition, proving that even fully tailored measurements require the same depth scaling as randomized protocols.
	
	\begin{theorem}[Phase hiding against low-depth readouts, informal]
		\label{thm:phase_hiding_low_depth}
		For sufficiently large $n$ and any circuit depth $d$ within the hidden regime, there exist two orthonormal $n$-qubit states $\ket{\eta_0}$ and $\ket{\eta_1}$ whose relative phase is inaccessible to any allowed readout. 
		
		Specifically, define $\ket{\psi_\theta} = \frac{1}{\sqrt 2} \left( \ket{\eta_0}+e^{i\theta}\ket{\eta_1} \right)$. For every allowed depth-$d$ unitary-basis readout, and for all phases $\theta,\theta'\in\mathbb R$,
		\begin{equation}
			\label{eq:phase_hiding_tv}
			\TVD{p_{\psi_\theta}^U}{p_{\psi_{\theta'}}^U} \le \exp[-n^{\Omega(1)}].
		\end{equation}
		Here, $\TVD{p}{q} \coloneqq \frac{1}{2}\sum_z \abs{p(z)-q(z)}$ is the total-variation distance. 
	\end{theorem}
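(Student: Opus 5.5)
The approach is a light-cone argument combined with a code whose logical phase is invisible to any small set of qubits. The TV distance between $p^U_{\psi_\theta}$ and $p^U_{\psi_{\theta'}}$ is controlled by the interference term. Write
\[
p^U_{\psi_\theta}(z)=\tfrac12\big(p_{\eta_0}(z)+p_{\eta_1}(z)\big)+\mathrm{Re}\big(e^{i\theta}\bra z U\ket{\eta_1}\bra{\eta_0}U^\dagger\ket z\big).
\]
The $\theta$-dependence is therefore contained entirely in the off-diagonal quantity $\sum_z |\bra z U\ket{\eta_1}\bra{\eta_0}U^\dagger\ket z|$. It suffices to bound this, uniformly over allowed $U$, by $\exp[-n^{\Omega(1)}]$.

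\textbf{Construction.} I would take $\ket{\eta_0},\ket{\eta_1}$ to be logical states of a code whose logical operators all have large support, for example a concatenated or random stabilizer code. A cleaner choice is two orthogonal states built from a Haar-random or $t$-design-random code subspace with "local indistinguishability" of order $L$. This means every operator supported on at most $L$ qubits satisfies $\bra{\eta_0}O\ket{\eta_1}\approx 0$ and $\bra{\eta_0}O\ket{\eta_0}\approx\bra{\eta_1}O\ket{\eta_1}$.

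\textbf{Reduction to local operators.} The computational-basis projector $\ketbra zz$ after a depth-$d$ circuit is $U^\dagger\ketbra zz U=\bigotimes_j U^\dagger\ketbra{z_j}{z_j}U$. Each factor is supported on a light cone of size $\ell=O(d^\delta)$ in $\delta$D, or $2^d$ all-to-all. Clean ancillas only enlarge cones within the same scaling. A single product over all $n$ qubits is not local, so I would expand the interference sum differently. Partition the qubits into blocks whose backward light cones are disjoint, and use a marginal plus data-hiding argument in the style of hiding states. Pick the code so that the reduced off-diagonal block $\mathrm{tr}_{\bar A}(\ket{\eta_1}\bra{\eta_0})$ has tiny trace norm for every region $A$ that is a union of $k$ light cones. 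For the global statement, I would use the stronger property that $\ket{\eta_0},\ket{\eta_1}$ are states whose off-diagonal term has small norm in the dual of the cone (operator) norm generated by depth-$d$ circuits.

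\textbf{Random states and the threshold.} Concretely, I would choose $\ket{\eta_{0,1}}$ Haar-random (orthogonalized). For a fixed $U$, Lévy concentration gives TV $\le e^{-\Omega(2^{n}\epsilon^2)}$-type tails around a mean value of order $2^{-n/2}$. A union bound then runs over an $\epsilon$-net of depth-$d$ circuits, whose size is $\exp(O(n\, d\log(1/\epsilon)))$ for local gates, with $d$ up to poly$(n)$ gates. This naive count actually gives hiding for \emph{all} polynomial depths, which contradicts the visible regime. The resolution is that the TV functional is not Lipschitz enough, since summing over $2^n$ outcomes loses factors. So the theorem must really be about a specific structured state pair saturating the matching threshold, and the depth threshold comes from light-cone size: $\ell=d^\delta\lesssim\log n$ means each output bit sees only $O(\log n)$ inputs.

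I would therefore use an explicit construction. Encode one logical qubit in a code where every logical operator is a product over a $\log$-size subset of $n^{\Omega(1)}$ disjoint blocks, in the spirit of a "XOR of many parities" construction. The relative phase is then only visible to measurements correlating all blocks. A shallow readout produces each block's output from a cone of $\ell$ qubits. If each block is encoded with a code of distance $>\ell$, each cone's reduced state is phase-independent. The $n^{\Omega(1)}$ blocks then multiply per-block bias $\le 1/2$ into a total $2^{-n^{\Omega(1)}}$. This product structure is what I would make rigorous via a tensor-network or moment bound.

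\textbf{Main obstacle.} The hardest step is this last one: controlling correlations across overlapping light cones so that the per-block suppression genuinely multiplies. I would first try a greedy selection of non-interacting block subsets, or a Fourier/Pauli-weight expansion of the interference term. The goal would be to show that only Pauli strings of weight $\ge$ (number of blocks) contribute, and that shallow circuits give these strings exponentially small weight.
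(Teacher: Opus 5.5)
\textbf{There is a genuine gap.} Your final construction relies on one inference: if every block has distance larger than the light-cone size $\ell$, so that each cone's reduced state is phase-independent, then per-block biases multiply to $2^{-n^{\Omega(1)}}$. This step does not hold. The readout is a \emph{joint} measurement of all $n$ output bits, so local indistinguishability on each cone, or on each union of a few cones, says nothing about correlations across cones.

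Stabilizer codes show this concretely. For any stabilizer code (concatenated, random, or of any distance), every logical Pauli is a tensor product $\bigotimes_i P_i$ of single-qubit Paulis. A depth-one layer of single-qubit rotations into the $P_i$ eigenbases, followed by computational-basis measurement, gives outcomes whose product is the eigenvalue of $\bar X$. With $\ket{\eta_0},\ket{\eta_1}$ the logical basis states, this separates $\theta=0$ from $\theta=\pi$ with total-variation distance $1$. For any orthonormal logical pair, $\ket{\psi_\theta}$ and $\ket{\psi_{\theta+\pi}}$ are orthogonal logical states, and one of $\bar X,\bar Y,\bar Z$ separates them with constant bias. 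The GHZ pair $\ket{0^n},\ket{1^n}$ is the simplest example. So large-support logical operators give no protection, and the per-block bias $\le 1/2$ has no mechanism behind it. The ``dual-norm'' formulation only restates the goal.

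Your Haar-random detour is also misdiagnosed. For fixed $U$, the interference term $\sum_z|\bra{z}U\ket{\eta_1}\bra{\eta_0}U^\dagger\ket{z}|$ is a sum of $2^n$ terms of size $\sim 2^{-n}$, so its mean is $\Theta(1)$, not $2^{-n/2}$. Haar-random pairs, including a Haar-random code subspace, are therefore visible even at depth zero. There is nothing to concentrate, and this, rather than a Lipschitz issue, is the real source of the contradiction you noticed.

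\textbf{The missing idea} is to change both the quantity that multiplies and the structure of the codewords.
\begin{enumerate}
\item The paper writes the readout as a composition of commuting, trace-norm-contractive dephasings $X\mapsto\frac12(X+A_iXA_i)$ with $A_i=U^\dagger Z_iU$, so that $\TVD{p^U_\rho}{p^U_\sigma}=\frac12\|\cD_U(\rho-\sigma)\|_1$.
\item It drops every dephasing except those of $m$ output sites with pairwise disjoint backward light cones $S_j$ of size at most $L$ (your $\ell$). This is your greedy selection, with $m\gtrsim n/L$ in $\delta$D and $m\ge\lfloor n/L^2\rfloor$ all-to-all. Dropping contractive maps can only increase the trace norm.
\item The codewords are \emph{product} states $\ket{\Phi_s}$. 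Each coherence $\ket{\Phi_s}\bra{\Phi_t}$ then remains a tensor product across the disjoint cones. Its trace norm after the retained dephasings is exactly the product of per-cone factors, each a Bhattacharyya coefficient by the rank-one formula. This factorization is precisely what fails for your entangled code blocks.
\item For independent Haar-random product states, each factor is at most $1-\Omega(3^{-L})$ with probability $\Omega(3^{-L})$, independently over cones. A Chernoff bound then gives suppression $\exp(-\Omega(m9^{-L}))$.
\item Take $\ket{\eta_0},\ket{\eta_1}$ to be orthogonalized uniform superpositions over the two halves of $N=\exp(\Omega(m9^{-L}))$ such codewords. Every $\ket{\psi_\theta}\bra{\psi_\theta}$ then differs from the $\theta$-independent mixture $\frac1N\sum_s\ket{\Phi_s}\bra{\Phi_s}$ only by cross terms, and all but a negligible number of these are crushed.
\item Uniformity over all allowed $U$ comes from a net over balanced Hermitian unitaries on $L$ qubits, a union bound over support families ($R=1$ in $\delta$D, $R\le(n+1)^n$ all-to-all), and a matching-or-star count of exceptional pairs.
\end{enumerate}
The depth threshold emerges from requiring $m\,e^{-O(L)}\ge n^{\Omega(1)}$, i.e.\ $L\lesssim\log n$.
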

	
	The proof proceeds in three key steps. First, the computational-basis measurement following $U$ is a sequence of commuting dephasing operations, generated by the Heisenberg-evolved observables $U^\dagger Z_iU$. 
	Second, the bounded circuit depth ensures that many of these local dephasings possess small, mutually disjoint backward light cones. 
	Third, we construct a random product-state code designed so that a constant fraction of these local dephasings contract the off-diagonal coherent phase.
	Because the light cones are disjoint, these individual contractions multiply and exponentially suppress the phase dependence uniformly over all allowed circuits. 
	The full proof is detailed in Appendix~\ref{app:phase-hiding}.
	
	We now connect the phase-hiding phenomenon to a CFI bound.  
	Consider the phase-hiding family $\cE_{\rm hid} =\{\ket{\psi_{\theta}}\}_{\theta\in[0,2\pi)}$. Since the resulting measurement distributions are exponentially close, their sensitivity to the parameter $\theta$ is severely limited.
	Consequently, the CFI extracted by any allowed depth-$d$ readout is exponentially small, yielding $I_{\cE_{\rm hid}}^U(\theta) \le \exp[-n^{\Omega(1)}]$  (see Methods for the detailed derivation). 
	In contrast, the QFI of the same encoding is $J_{\cE_{\rm hid}}(\theta)=1$. 
	This gap proves that the readout capability is exponentially suppressed below the depth threshold, establishing the hidden regime of Theorem~\ref{thm:depth-optimal-readout}.
	
	This phase-hiding mechanism naturally extends to encode and hide a discrete bit.
	For any $\theta \in \bR$, encode a binary variable \(b\in\{0,1\}\) by
	\begin{equation}
		\ket{\phi_b}
		:=
		\ket{\psi_{\theta + b\pi}}
		=
		\frac{1}{\sqrt 2}
		\left(
		\ket{\eta_0}+(-1)^be^{i\theta}\ket{\eta_1}
		\right).
	\end{equation}
	The two codewords $\ket{\phi_0}$ and $\ket{\phi_1}$ are orthogonal, so an optimal measurement can recover the bit perfectly from a single copy. 
	However, Theorem~\ref{thm:phase_hiding_low_depth} implies they remain almost indistinguishable to low-depth measurements. 
	
	\begin{corollary}[Data hiding against low-depth readouts]
		\label{cor:low_depth_data_hiding}
		For sufficiently large $n$ and any circuit depth $d$ within the hidden regime, there exist two orthogonal $n$-qubit pure states $\ket{\phi_0}$ and $\ket{\phi_1}$ that are indistinguishable to any allowed readout. 
		
		Specifically, any protocol using allowed single-copy readouts to distinguish $\ket{\phi_0}^{\otimes T}$ and $\ket{\phi_1}^{\otimes T}$ with an error probability of at most $1/3$  requires $T=\Omega\left(\exp[n^{\Omega(1)}]\right)$ copies.
	\end{corollary}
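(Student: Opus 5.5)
We derive the corollary directly from Theorem~\ref{thm:phase_hiding_low_depth}. Fix $d$ in the hidden regime and take the orthonormal pair $\ket{\eta_0},\ket{\eta_1}$ given there. Define $\ket{\phi_b}=\ket{\psi_{\theta+b\pi}}$ as above. Orthogonality is immediate:
\[
\braket{\phi_0}{\phi_1}=\tfrac12\left(1-\abs{e^{i\theta}}^2\right)=0,
\]
using $\braket{\eta_0}{\eta_1}=0$. Applying Eq.~\eqref{eq:phase_hiding_tv} with the phases $\theta$ and $\theta'=\theta+\pi$ gives, for every allowed depth-$d$ unitary-basis readout $U$ (clean ancillas included, since the theorem covers them),
\[
\TVD{p^U_{\phi_0}}{p^U_{\phi_1}}\le \epsilon:=\exp[-n^{\Omega(1)}].
\]
This already shows the single-copy indistinguishability.

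For $T$ copies, consider an arbitrary protocol. In round $t$ it chooses a readout $U_t$, possibly at random and possibly adaptively depending on the previous outcomes $z_1,\dots,z_{t-1}$ and on internal randomness $r$. At the end it outputs a guess. Let $P_b$ be the joint distribution of $(r,z_1,\dots,z_T)$ under hypothesis $b$. The error of the optimal decision rule on this transcript is at least $\tfrac12\left(1-\TVD{P_0}{P_1}\right)$ for equal priors. For a worst-case error bound we use instead that the sum of the two error probabilities is at least $1-\TVD{P_0}{P_1}$. In either case, achieving error at most $1/3$ forces $\TVD{P_0}{P_1}\ge 1/3$.

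The key step is a hybrid (chain-rule) argument for total variation under adaptivity. Define hybrid distributions $H_k$ in which rounds $1,\dots,k$ use $\phi_1$ and rounds $k+1,\dots,T$ use $\phi_0$, so that $H_0=P_0$ and $H_T=P_1$. Consecutive hybrids $H_{k-1}$ and $H_k$ share the same conditional law for all rounds except round $k$. Conditioned on the history, round $k$ draws from $p^{U_k}_{\phi_0}$ versus $p^{U_k}_{\phi_1}$. Since $U_k$ is an allowed readout for every history, the conditional TV distance is at most $\epsilon$ pointwise. Coupling the two hybrids through the shared history, the TV distance of the full transcripts is at most the expected conditional TV at round $k$; all later rounds apply the same Markov kernel and so cannot increase the distance. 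Therefore $\TVD{H_{k-1}}{H_k}\le\epsilon$, and the triangle inequality yields
\[
\TVD{P_0}{P_1}\le T\epsilon.
\]

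Combining the two bounds gives $T\ge 1/(3\epsilon)=\Omega\left(\exp[n^{\Omega(1)}]\right)$. The only potential obstacle is handling adaptivity and randomization rigorously. The essential point is that the bound from Theorem~\ref{thm:phase_hiding_low_depth} is uniform over all allowed circuits, so it holds conditionally on any history. Mixtures over randomized readout ensembles are handled the same way, by convexity of total variation.
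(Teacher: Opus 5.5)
Your proof is correct and is essentially the paper's argument. You apply Theorem~\ref{thm:phase_hiding_low_depth} at the phases $\theta$ and $\theta+\pi$ to get a single-copy bound $\epsilon=\exp[-n^{\Omega(1)}]$ that is uniform over allowed readouts. Your hybrid argument then proves the same chain-rule bound $\mathrm{TV}(P_0,P_1)\le T\epsilon$ under randomized and adaptive circuit choices that the paper states in its appendix, from which $T\ge 1/(3\epsilon)$ follows; you additionally make explicit the conversion from error $1/3$ to total variation at least $1/3$, which the paper leaves implicit.
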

	
	We note that these hiding phenomena persist even when classical randomization and adaptive circuit choices are allowed across independent copies (see Appendix~\ref{app:randomized_adaptive}). 
	Nonetheless, our model does not allow adaptive operations within a single copy. 
	This distinction is necessary as adaptive local operations and classical communication (LOCC) can distinguish any two orthogonal pure states~\cite{Walgate2000LocalDistinguish} and saturate the QFI of pure-state encodings~\cite{Zhou2020SaturatingLOCC}. 
	However, implementing such protocols requires sequential, on-the-fly basis updates, imposing extensive classical processing overhead and prohibitively long quantum-memory coherence times.
	Furthermore, our hiding mechanisms differ from standard data hiding against LOCC measurements~\cite{Bennett1999NonlocalityWithoutEntanglement, Terhal2001HidingBits,Eggeling2002HidingClassicalData}.
	While those protocols exploit locality constraints to hide information within mixed states, our mechanism hides orthogonal pure states by restricting measurement complexity.
	
	This data hiding result immediately implies circuit depth lower bounds for sample-efficient fidelity estimation and state certification, since any protocol that reliably estimates or certifies fidelity can also distinguish two orthogonal states. 
	Conversely, multiplicative-error approximate $3$-designs enable efficient fidelity estimation via classical shadow tomography~\cite{Huang2020Predicting, Schuster2024RandomUnitaries}.
	Our $\delta$D constructions in Theorem~\ref{thm:finite-dimensional-design}, alongside known 1D and all-to-all constructions~\cite{Schuster2024RandomUnitaries, Cui2025UnitaryDesignsOptimal}, collectively achieve this depth scaling. 
	Consequently, our lower bounds are tight up to constant factors, even when the target state is fully known before the measurement circuit is chosen.
	This demonstrates that the same complexity-driven transition governs the extraction of discrete quantum information: below the critical depth, exponentially many samples are required for state discrimination, whereas above this depth scale, randomized measurements can effectively extract this information with a sample complexity independent of system size.
	
	Furthermore, our findings complement known separations in distributed quantum metrology.  
	There, spatially separated sensors impose locality constraints, resulting in quadratic CFI gaps between local and entangled measurements in mixed-state sensing tasks such as nonlocal optical interferometry~\cite{Gottesman2012LongerBaselineTelescope,Tsang2011NonlocalityInterferometry,Stas2026NonlocalInterferometry}.
	Here, by contrast, we demonstrate that the CFI gap between measurements with and without complexity restrictions can be exponentially large.
	
	\section{Universal readout above the threshold}
	Having revealed the constraints of the hidden regime, we now demonstrate how exceeding the complexity threshold unlocks the encoded quantum information. 
	This shifts the system into the visible regime, where we show that a constant fraction of the QFIM can be universally extracted. 
	To achieve this, our readout protocol (Fig.~\ref{fig:random_unitary}) operates in two steps.
	First, we scramble the state using random unitaries sampled from multiplicative-error approximate 3-designs, proving that subsequent measurements guarantee effective information extraction.
	Second, we present depth-optimal constructions of these designs across various hardware architectures.
	
	\begin{figure*}[!htbp]
		\centering
		\includegraphics[width=0.6\linewidth]{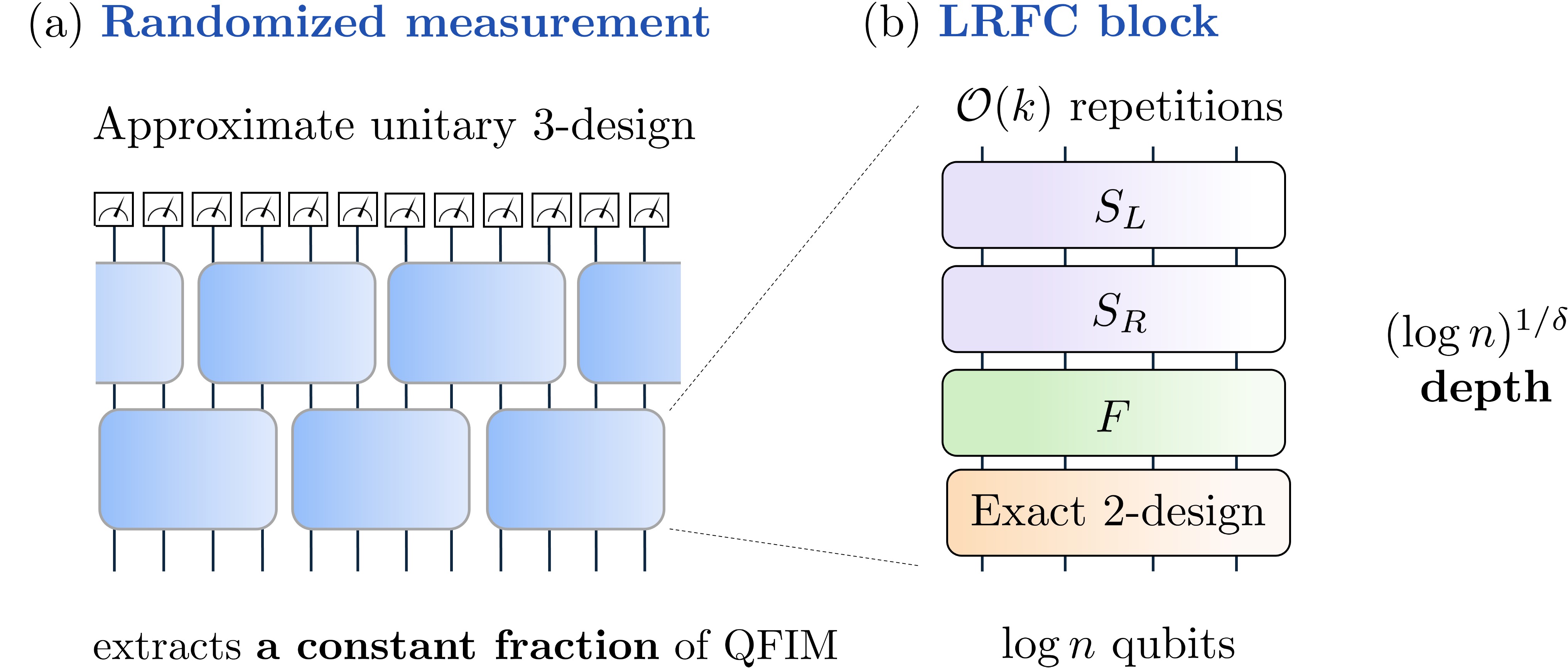}
		\caption{Universal quantum information readout from approximate unitary designs.
			(a) A randomized unitary-basis readout is obtained by sampling a circuit from a multiplicative-error approximate unitary $3$-design, applying it to the data qubits, and measuring in the computational basis. 
			We prove that this measurement universally extracts a constant fraction of the QFIM for any pure-state multiparameter encoding. 
			The global approximate design is realized by a double-layer blocked circuit, where the data qubits are partitioned into $\Theta(\log n)$-qubit patches, and independent local random unitaries are applied to neighboring patch pairs in two shifted layers. 
			(b) Internal structure of the local random unitary. Each local unitary implements $\cO(k)$ repetitions of the Luby-Rackoff-Function-Clifford (LRFC) block, $U_{\rm LRFC}=S_LS_RFC$, where $C$ is an exact unitary 2-design, $F$ is a $2k$-wise independent phase operation, and $S_L$ and $S_R$ are $2k$-wise independent shuffle operations. 
			We present efficient implementations of these primitives on a $\delta$-dimensional architecture, yielding optimal depth scaling $\cO\left((\log n)^{1/\delta}\right)$ with respect to system size $n$ for both approximate unitary $k$-designs and exact unitary $2$-designs.
		}
		\label{fig:random_unitary}
	\end{figure*}
	
	Formally, let $\cU=\{q_a,U_a\}$ be an ensemble of unitaries on a $D$-dimensional Hilbert space.  
	The procedure of sampling $a$, applying $U_a$, and measuring in the computational basis defines the POVM $M_{\cU} = \left\{q_a\,U_a^\dagger\ketbra{y}{y}U_a \right\}_{a,y}$.
	For a smooth multiparameter encoding $\cE=\{\ket{\psi_{\btheta}}\}$ with $\btheta\in\bR^m$, let $J_{\cE}(\btheta)$ denote the $m \times m$ QFIM, and let $I_{\cE}^{M_{\mathcal{U}}}(\btheta)= \mathbb E_{U\sim\cU} I_{\cE}^{U}(\theta)$ denote the corresponding CFIM achieved by this measurement. 
	The following theorem establishes that when $\cU$ forms an approximate unitary $3$-design, this randomized measurement universally extracts a constant fraction of the full QFIM for any multiparameter encoding.
	
	\begin{theorem}[Constant-fraction readout via approximate $3$-design]
		\label{thm:approx-design-readout}
		Let $M_{\cU}$ be the POVM induced by a
		multiplicative-$\epsilon$ approximate unitary $3$-design on a $D$-dimensional Hilbert space, with $0\le \epsilon<1/4$.  
		Then, for every smooth multiparameter pure-state encoding $\cE$ and every parameter point $\btheta$,
		\begin{equation}
			I_{\cE}^{M_{\cU}}(\btheta)
			\succeq
			\kappa_D(\epsilon)
			J_{\cE}(\btheta),
			\qquad
			\kappa_D(\epsilon)
			=
			\frac{(1-4\epsilon)^2}{1+6\epsilon}
			\frac{D+2}{4(D+1)} .
			\label{eq:qfim-design-readout}
		\end{equation}
	\end{theorem}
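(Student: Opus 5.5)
The plan is to reduce the matrix inequality to a scalar one, lower-bound the classical Fisher information by a ratio of a second moment to a third moment over the ensemble, and evaluate both moments using the design property.

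\textbf{Reduction to a scalar inequality.} It suffices to prove $\bu^{T} I_{\cE}^{M_{\cU}}(\btheta)\bu \ge \kappa_D(\epsilon)\,\bu^T J_{\cE}(\btheta)\bu$ for every $\bu\in\bR^m$. This reduces everything to a single direction. Set $\ket{\psi'}=\sum_j u_j\partial_j\ket{\psi_{\btheta}}$ and split $\ket{\psi'}=\ket{\varphi}+c\ket{\psi}$ with $\braket{\psi}{\varphi}=0$. Then $\bu^TJ\bu=4\braket{\varphi}{\varphi}$.

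For fixed $U$ put $p_y=|\bra y U\ket\psi|^2$. Its directional derivative is
\begin{equation*}
a_y=2\,\mathrm{Re}\bigl(\bra\psi U^\dagger\ket y\bra y U\ket{\psi'}\bigr).
\end{equation*}
Because $\mathrm{Re}(c)$ is fixed by normalization, the $c\ket\psi$ contribution is real times $p_y$ and vanishes. Hence $a_y=2\,\mathrm{Re}\bigl(\bra\psi U^\dagger\ket y\bra yU\ket\varphi\bigr)$, and the directional CFI is $\mathbb E_{U}\sum_y a_y^2/p_y$.

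\textbf{Two Cauchy--Schwarz steps.} First, for each fixed $U$,
\begin{equation*}
\Bigl(\sum_y a_y^2\Bigr)^2\le \Bigl(\sum_y a_y^2/p_y\Bigr)\Bigl(\sum_y a_y^2p_y\Bigr).
\end{equation*}
Second, apply $\mathbb E[X^2/Y]\ge(\mathbb E X)^2/\mathbb E Y$ over the ensemble. Together these give
\begin{equation*}
\bu^TI\bu\ \ge\ \frac{\bigl(\mathbb E_U\sum_y a_y^2\bigr)^2}{\mathbb E_U\sum_y a_y^2p_y}.
\end{equation*}
The numerator is a degree-$(2,2)$ moment in $(U,U^\dagger)$ and the denominator a degree-$(3,3)$ moment, so a $3$-design controls both.

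\textbf{Haar values.} Write $a_y^2=2|\bra yU\ket\psi|^2|\bra yU\ket\varphi|^2+2\,\mathrm{Re}\bigl[(\bra\psi U^\dagger\ket y\bra yU\ket\varphi)^2\bigr]$. Using Weingarten/symmetric-subspace formulas and $\psi\perp\varphi$:
\begin{itemize}
\item The second term averages to zero.
\item The first term gives $\mathbb E\sum_ya_y^2=2\|\varphi\|^2/(D+1)$.
\end{itemize}
The denominator is computed the same way from the projector onto $\mathrm{Sym}^3$. I expect it to equal $8\|\varphi\|^2/((D+1)(D+2))$, or the analogous value consistent with the target constant. Taking the ratio gives $\|\varphi\|^2(D+2)/(2(D+1))\cdot\tfrac12=\kappa_D(0)\cdot4\|\varphi\|^2$, which matches $\kappa_D(0)=(D+2)/(4(D+1))$.

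\textbf{Multiplicative error (main obstacle).} A multiplicative-$\epsilon$ design only gives $(1\pm\epsilon)$ control of expectations of positive semidefinite operators in $U^{\otimes k}\otimes\bar U^{\otimes k}$. The denominator is the expectation of a positive quantity, so it is at most $(1+\mathcal O(\epsilon))$ times the Haar value. Tracking the positive pieces carefully should produce the factor $1+6\epsilon$.

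The numerator is harder because it contains the sign-indefinite cross term $\mathrm{Re}[(\cdot)^2]$. I would first polarize: express $\bra yU\ket\psi\bra yU\ket\varphi$ through the quantities $|\bra yU(\psi\pm\varphi)|^2$ and $|\bra yU(\psi\pm i\varphi)|^2$. This writes $a_y^2$ as a signed combination of a few positive moment operators with bounded coefficients. Each positive piece deviates from its Haar value by at most $\epsilon$ times that value. The Haar values of the pieces sum to a constant multiple of the main term, which should yield $\mathbb E\sum a_y^2\ge(1-4\epsilon)\cdot 2\|\varphi\|^2/(D+1)$.

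Squaring the numerator bound and dividing by the denominator bound gives $\kappa_D(\epsilon)=\frac{(1-4\epsilon)^2}{1+6\epsilon}\frac{D+2}{4(D+1)}$. The delicate step is arranging the polarization so that the error coefficient is exactly $4\epsilon$ rather than something larger.
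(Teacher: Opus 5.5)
Your outline matches the paper's. You restrict to a direction $\bu$, reduce by Cauchy--Schwarz to $(\mathbb E\sum_y a_y^2)^2/\mathbb E\sum_y a_y^2p_y$, and compare the second and third projective moments with Haar. Your two Cauchy--Schwarz steps give exactly the paper's single Cauchy--Schwarz over the joint outcome $(a,y)$. The problems are in the two steps you left as expectations.

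First, the Haar value of the denominator is $4\|\varphi\|^2/((D+1)(D+2))$, not $8\|\varphi\|^2/((D+1)(D+2))$. The $2\mathrm{Re}(z^2)p_y$ part averages to zero, and $\mathbb E|\braket{\phi}{\psi}|^4|\braket{\phi}{\varphi}|^2=2\|\varphi\|^2/(D(D+1)(D+2))$, since only the identity and the swap of the two $\psi$ slots survive. With your value the ratio would be $\frac{D+2}{8(D+1)}\bu^{\mathsf T}J\bu$, half the target. The equality in your ratio line is also false: its two sides differ by a factor $4$.

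Second, and more seriously, ``the denominator is the expectation of a positive quantity, so it is at most $(1+\mathcal O(\epsilon))$ times the Haar value'' is not a valid inference.
\begin{itemize}
\item The multiplicative order only gives $|\tr\!\left(\Delta^{(3)}X\right)|\le\epsilon\tr\!\left(\Phi^{(3)}_{\mathrm{Haar}}X\right)$ for PSD $X$.
\item Pointwise nonnegativity of $\phi\mapsto\bra{\phi}L\ket{\phi}^2\bra{\phi}\rho\ket{\phi}$ does not make $\Pi_{\mathrm{sym}}(L\otimes L\otimes\rho)\Pi_{\mathrm{sym}}$ PSD. For $D=2$, $\psi=\ket{0}$, $\varphi=\ket{1}$ one has $L=2\sigma_x$. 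The operator $\sigma_x\otimes\sigma_x\otimes\ket{0}\bra{0}$ acts as $\sigma_x/\sqrt3$ on the span of $\ket{000}$ and the two-excitation Dicke state, so it has eigenvalue $-1/\sqrt3$ on $\mathrm{Sym}^3$.
\item In fact $a_y^2p_y$ carries the same indefinite cross term $2\mathrm{Re}(z^2)p_y$ as $a_y^2$, and $a_y^2$ is also pointwise positive. So the asymmetry you draw between numerator and denominator is spurious.
\end{itemize}

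The repair is your polarization, applied to both moments and with a normalization you omitted. Polarizing with $\psi\pm\varphi$ gives relative error $\epsilon(1+v+v^2)/v$ with $v=\|\varphi\|^2$, which is unbounded. Use $\hat\varphi=\varphi/\|\varphi\|$ instead. Then $L=\|\varphi\|(\hat\chi_+-\hat\chi_-)$ with $\hat\chi_\pm=\ket{\psi\pm\hat\varphi}\bra{\psi\pm\hat\varphi}$ is the Jordan decomposition of $L$. The PSD parts of $L^{\otimes 2}$ sum to $|L|^{\otimes2}=JP^{\otimes2}$, where $P$ is the projector onto $\mathrm{span}\{\psi,\varphi\}$. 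The Haar values of $\tr\!\left(\Phi_M^{(2)}L^{\otimes2}\right)$ and $\tr\!\left(\Phi_M^{(3)}L^{\otimes2}\otimes\rho\right)$ are $2J/(D(D+1))$ and $4J/(D(D+1)(D+2))$. Their deviations are then at most $6\epsilon J/(D(D+1))$ and $12\epsilon J/(D(D+1)(D+2))$. That is relative error $3\epsilon$ in both, giving $\frac{(1-3\epsilon)^2}{1+3\epsilon}\frac{D+2}{4(D+1)}\ge\kappa_D(\epsilon)$. So there is no need to engineer ``exactly $4\epsilon$''.

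The paper sidesteps positivity entirely. It bounds $\|\Delta^{(j)}\|_\infty\le\epsilon\|\Phi^{(j)}_{\mathrm{Haar}}\|_\infty$ and applies H\"older with $\|L^{\otimes2}\|_1=\|L^{\otimes2}\otimes\rho\|_1=4J$; this is exactly where $4\epsilon$ and $6\epsilon$ come from. That route is shorter, while yours, once repaired, gives a slightly better constant.
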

	
	For any fixed $\epsilon<1/4$, the coefficient $\kappa_D(\epsilon)$ is bounded below by a positive constant.
	A $3$-design suffices because the matrix inequality can be reduced to every one-dimensional tangent direction. 
	Along each such direction, the CFI can be lower-bounded by a Cauchy--Schwarz ratio involving the induced second and third moments of the randomized measurement. 
	The design condition then ensures that these moments approximate the corresponding Haar moments and yield the coefficient $\kappa_D(\epsilon)$ uniformly over all tangent directions. The full proof is in Appendix~\ref{app:metrology}.
	
	Random Clifford unitaries form an exact 3-design~\cite{Zhu2017MultiqubitClifford} and achieve a constant readout capability~\cite{Zhou2026RandomizedMetrology}. 
	However, implementing a generic multiqubit Clifford unitary requires depth $\cO(n)$ on $\delta$D architectures and $\cO(\log n)$ on all-to-all architectures~\cite{Aasonsom2004ImprovedSimulation,Bravyi2021HadamardClifford, Moore2001ParallelComputing}.  
	These requirements far exceed the depth thresholds established in Theorem~\ref{thm:depth-optimal-readout}. 
	Theorem~\ref{thm:approx-design-readout} circumvents this bottleneck by demonstrating that exact unitary designs are unnecessary, and a multiplicative approximation suffices.

	This relaxation is crucial for drastically reducing the required circuit depth. 
	Indeed, existing low-depth constructions already realize the required approximate unitary $3$-designs in depth $\cO(\log n)$ for one-dimensional circuits and $\cO(\log\log n)$ for all-to-all architectures~\cite{Schuster2024RandomUnitaries, Cui2025UnitaryDesignsOptimal}. 
	While these works successfully establish the depth upper bound of Theorem~\ref{thm:depth-optimal-readout} for those specific connectivities, a gap remains regarding the optimal construction for general $\delta$D architectures.
	We close this gap by introducing constructions of approximate unitary designs that scale optimally with system size $n$ in every $\delta$D architecture.
	
	\begin{theorem}[Low-depth approximate designs with $\delta$D implementations, informal]
		\label{thm:finite-dimensional-design}
		Fix an architecture dimension $\delta \ge 1$. An $n$-qubit multiplicative-$\epsilon$ approximate unitary $k$-design can be  implemented on a $\delta$D architecture using $\cO(nk)$ clean ancillas in depth $\cO\left(k \log (k) \left(k\log\frac{n k}{\epsilon}\right)^{1/\delta}\right)$.
	\end{theorem}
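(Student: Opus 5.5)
The plan is to follow the two-layer (patchwork) strategy of Schuster et al.~\cite{Schuster2024RandomUnitaries} and reduce Theorem~\ref{thm:finite-dimensional-design} to building \emph{small} approximate $k$-designs on patches. Partition the $n$ data qubits on the $\delta$D lattice into hypercubic patches of $\xi = \Theta(k\log(nk/\epsilon))$ qubits each. Choose the patch geometry so that every patch pair in either of two shifted layers spans a region of linear size $\cO(\xi^{1/\delta})$. The patchwork lemma says that if each local block is a multiplicative-$\epsilon/n$ approximate $k$-design on $2\xi$ qubits, then the two-layer brickwork circuit is a multiplicative-$\epsilon$ approximate $k$-design on all $n$ qubits, provided $\xi \gtrsim \log(nk/\epsilon)$. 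We take this gluing lemma from that work and treat it as architecture-independent. It only uses the layer structure, not the connectivity, so its $\delta$D version needs only that the patch pairs overlap correctly, which a cubic tiling with half-period shift gives. The remaining task is therefore to implement an approximate $k$-design on a patch of $m=\Theta(\xi)$ qubits, arranged in a $\delta$D cube of side $L=m^{1/\delta}$, in depth $\cO(k\log k\, L)$ using $\cO(mk)$ clean ancillas.

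For the local design we use the LRFC construction of Fig.~\ref{fig:random_unitary}(b). We take as known the fact that $\cO(k)$ (or a $\log k$-refined count) repetitions of $S_L S_R F C$ form a multiplicative approximate $k$-design on $m$ qubits with error $2^{-\Omega(m)}$-ish. Choosing $m \gtrsim k\log(nk/\epsilon)$ makes this error at most $\epsilon/n$. Each primitive then needs a $\delta$D circuit of depth $\cO(\log k \cdot L)$:
\begin{enumerate}
\item[(i)] $C$, an exact Clifford $2$-design. We do not need a full random Clifford. It suffices to sample from a smaller exact 2-design, e.g.\ a random $\mathrm{CNOT}$-dihedral/affine map $x\mapsto Ax+b$ with $A$ from a group acting transitively on nonzero vectors (multiplication in $\mathbb F_{2^m}$), dressed by random phases and a random Pauli. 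Multiplication by a field element is a polynomial-product circuit. We plan to implement it by a parallel-prefix/convolution-style computation whose communication on a $\delta$D grid costs $\cO(L)$, using ancillas to hold partial products.
\item[(ii)] $F$, a $2k$-wise independent phase. Evaluate a random degree-$(2k-1)$ polynomial over $\mathbb F_{2^m}$ into ancillas, apply the phase, and uncompute. Horner-type evaluation costs depth $\cO(\log k)$ field multiplications if done as a product tree over $\cO(k)$ ancilla registers, which is where the $\cO(mk)$ ancilla count and $\log k$ factor enter.
\item[(iii)] $S_L,S_R$, $2k$-wise independent shuffles (Luby--Rackoff rounds). Compute a $2k$-wise independent function of one half into ancillas and XOR it into the other half, again by polynomial evaluation.
\end{enumerate}
Composing the pieces gives local depth $\cO(k\cdot \log k\cdot L)=\cO(k\log k\,(k\log\frac{nk}{\epsilon})^{1/\delta})$. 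Since the two patch layers add only a factor $2$, this is the claimed bound.

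The main obstacle is step (i)/(ii): showing that a single multiplication in $\mathbb F_{2^m}$ (and hence the exact 2-design and the $k$-wise independent functions) fits in depth $\cO(m^{1/\delta})$ on a $\delta$D grid, with no extra $\log m$ factor. Naive approaches use $\mathcal O(\log m)$-depth trees whose edges are long and cost $\cO(L)$ each, giving $\cO(L\log m)$. Our first attempt is a recursive multidimensional layout: view the product as a linear map over $\mathbb F_2$ (for fixed random multiplier, compiled classically) implemented as an invertible linear reversible circuit. We would then invoke (or reprove) that any $\mathrm{GL}(m,\mathbb F_2)$ element, with ancillas, is implementable in depth $\cO(m^{1/\delta})$ on a $\delta$D grid, via fan-out and parity trees whose total wire length telescopes geometrically ($L + L/2 + \dots = \cO(L)$). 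For the diagonal phase parts, the product-tree evaluation must reuse these geometric-sum routings so that the $\log k$ levels each cost $\cO(L)$ rather than compounding. Verifying that the routing geometric sum survives the ancilla fan-out is the step we expect to require the most care.
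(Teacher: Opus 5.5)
\textbf{Verdict: there are genuine gaps.} Your overall architecture matches the paper: gluing local designs on $\Theta(\xi)$-qubit blocks, $\mathcal{O}(k)$ LRFC repetitions, and product-tree polynomial evaluation over $\mathcal{O}(k)$ field registers. But the step you flag as the crux is not resolved, and the route you sketch for it cannot work.

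\textbf{Field multiplication.} Reducing it to ``every element of $\mathrm{GL}(m,\mathbb{F}_2)$ has a depth-$\mathcal{O}(m^{1/\delta})$ fan-out/parity-tree implementation'' fails for two reasons.

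First, that claim is false for $\delta\ge2$. At depth $d=\mathcal{O}(m^{1/\delta})$, only the $\mathcal{O}(m)$ sites within distance $d$ of a compact $m$-qubit register can matter, however many ancillas you add. So the circuit effectively has $\mathcal{O}(m^{1+1/\delta})$ gates and realizes at most $2^{\mathcal{O}(m^{1+1/\delta})}$ maps. That is far fewer than $|\mathrm{GL}(m,\mathbb{F}_2)|=2^{\Theta(m^2)}$. Geometric telescoping does work for a single parity or a single fan-out; this is the paper's fixed ring-valued summation lemma. It does not work for $m$ distinct dense parities at once, which with fan-out trees take $\Theta(m^2)$ ancillas.

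Second, linearity covers only multiplication by classical constants. The product tree for $x^j$ multiplies two data-dependent registers (e.g.\ $x\cdot x^2$), which is a bilinear map, not a linear one.

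The missing idea is a subquadratic-work, linear-space recursion whose depth is dominated by its top level. The paper uses Toom--Cook:
\begin{itemize}
\item Split each polynomial into $r$ blocks.
\item Evaluate at $2r-1$ unit-separated points of a constant-degree extension, and interpolate, using constant-width coefficientwise linear maps.
\item Run the recursive products in batches of $\lfloor r/4\rfloor$. Then $V_s\le c_0s+\lfloor r/4\rfloor V_{\lceil s/r\rceil}=\mathcal{O}(s)$ and $T_s\le 2B_0T_{\lceil s/r\rceil}+c_1s^{1/\delta}$ with $B_0$ constant, giving $T_s=\mathcal{O}(s^{1/\delta})$ once $r^{1/\delta}>4B_0$.
\item Compile each layer into the grid by $\mathcal{O}(s^{1/\delta})$-depth permutation routing.
\item Reduce modulo $p(z)$ with a constant number of extra products, via the reciprocal-polynomial identity.
\end{itemize}

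\textbf{The $2$-design in (i).} Your candidate for $C$ is not a $2$-design. Affine permutations, diagonal phases and Paulis are all monomial matrices, so each element maps diagonal operators to diagonal ones. Hence $\mathbb{E}\,U^{\otimes2}(Z_1\otimes Z_1)U^{\dagger\otimes2}$ is diagonal, while the Haar twirl is $(2^\ell\,\mathrm{SWAP}-I)/(4^\ell-1)$. Since $F,S_L,S_R$ are also monomial, every LRFC product stays monomial, and no number of repetitions yields a design.

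You need an element that exchanges $X$- and $Z$-type Weyl labels. The paper samples $U_MD_u$ with $M$ uniform in $\mathrm{SL}(2,\mathbb{F}_{2^\ell})$, which is transitive on nonzero labels. It writes $U_M$ as a constant-length word in three generators:
\begin{itemize}
\item the field Fourier transform, i.e.\ $H^{\otimes\ell}$ followed by a trace-dual change of coordinates, itself a fixed polynomial product;
\item scalings;
\item quadratic shears, which need a squaring over $\mathbb{Z}_4[z]$, another Toom--Cook instance.
\end{itemize}

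\textbf{Bookkeeping in $k$.} Your accounting is inconsistent. The LRFC error is $2k^2 2^{-\ell/2}$, so patches of size $\xi=\Theta(\log(nk/\epsilon))$ suffice. The $k$ under the $1/\delta$ root must come from the $\mathcal{O}(k\xi)$ ancillas that share each cell. With your patch size $k\log(nk/\epsilon)$ plus $\mathcal{O}(mk)$ ancillas, the cell side is really $(k^2\log(nk/\epsilon))^{1/\delta}$. That is a $k^{1/\delta}$ overshoot, which setting $L=m^{1/\delta}$ hides.
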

	
	The construction, illustrated in Fig.~\ref{fig:random_unitary}, employs a double-layer blocked-circuit gluing strategy. 
	The data qubits are partitioned into patches of size $\xi = \Theta(\log\frac{n k}{\epsilon})$.
	Realizing a local approximate $k$-design with multiplicative error $\cO(\epsilon/n)$ on neighboring patches suffices to form the desired global multiplicative-error design~\cite{Schuster2024RandomUnitaries}. 
	To construct these local designs, we repeat an Luby-Rackoff-Function-Clifford (LRFC) block, $U_{\rm LRFC}=S_LS_RFC$, for $\cO(k)$ times~\cite{Cui2025UnitaryDesignsOptimal}. 
	Within this block, $C$ is an exact unitary $2$-design, $F$ applies a $2k$-wise independent phase, and $S_L,S_R$ perform $2k$-wise independent shuffles.
	We prove that all components within an LRFC block can be implemented in depth $\cO\left(\log(k) (k\xi)^{1/\delta}\right)$. 
	Notably, our construction also gives the first depth-optimal exact unitary 2-design on $\delta$D architectures~\cite{Cleve2016ExactDesign}.
	
	Finally, specializing Theorem~\ref{thm:finite-dimensional-design} to $k=3$ with a constant $\epsilon$ gives multiplicative approximate unitary $3$-designs in depth $\cO((\log n)^{1/\delta})$.
	Combined with Theorem~\ref{thm:approx-design-readout}, this establishes the visible regime of Theorem~\ref{thm:depth-optimal-readout}, thereby definitively achieving depth-optimal quantum information readout across all considered architectures.
	
	\section{Discussion}
	Our results open numerous directions for future work. To move beyond our worst-case depth lower bounds, a crucial first step is to investigate measurement complexity in specific tasks. Practical problems often come with structural promises regarding the encoding, such as limited entanglement or a known dynamical model, which may enable substantially shallower readouts than our established limits~\cite{Pezze2025DistributedOptimalLocal, Huang2024Certifying, Du2025Localizable}. 
	A complementary challenge lies in computational complexity. Even when information is accessible in an information-theoretic sense, recovering it from measurement data can remain computationally intractable. Therefore, developing a unified framework that captures the interplay among measurement, sample, and computational complexity is an important direction for future research.
	
	Beyond complexity considerations, optimizing the readout protocols for practical hardware presents another key challenge. For example, the higher-dimensional and all-to-all implementations currently rely on clean ancillas. Determining whether these ancillas can be removed without compromising the optimal depth scaling remains an important open problem. Furthermore, extending our framework beyond the single-copy, pure-state regime to other scenarios, such as low-rank mixed-state encodings and collective measurements on a small number of copies~\cite{Chen2022InformationGeometryHierarchical}, is a highly valuable yet largely unexplored direction.
	
	Perhaps the most intriguing direction is to develop a systematic resource theory of measurements.
	Historically, quantum information theory has primarily been organized around resources carried by quantum states, such as entanglement~\cite{Horodecki2009Entanglement, Chitambar2019QuantumResourceTheories}, coherence~\cite{Baumgratz2014QuantifyingCoherence}, and magic~\cite{Bravyi2005Universal, Howard2017ResourceTheoryMagic}. 
	Our work identifies measurement complexity as an operational resource required to access information stored in quantum states.
	Other measurement resources, such as limited ancillary qubits,  mid-circuit measurements, and quantum communication, should give rise to their own resource hierarchies. 
	Recent classifications of joint measurements based on finite entanglement have already highlighted the rich structure of measurement~\cite{Pauwels2025ClassificationJointMeasurements}.  We anticipate that a resource theory of measurements, complementary to existing resource theories of states, will open new directions in quantum foundations and in information-processing tasks such as quantum learning, certification, and metrology.
	
	\section*{Acknowledgments}
	Z.D., S.C., H.Y., J.C, and X.M. acknowledge the support from the National Natural Science Foundation of China Grants No.~12174216,  the Innovation Program for Quantum Science and Technology Grant No.~2021ZD0300804,  No.~2021ZD0300702, the CCF-QuantumCtek Superconducting Quantum Computing Special Cooperation Program (Grant No.~CCF-QC2025005), and the Turing AI Institute of Nanjing.
	X.Y. is supported by Beijing Natural Science Foundation Z250004, the National Natural Science Foundation of China NSAF (Grant No.~U2330201) and  Grant (No.~12361161602),  the Quantum Science and Technology-National Science and Technology Major Project (2023ZD0300200),   and Beijing Science and Technology Planning Project (Grant No.~Z25110100810000). 
	
	\bibliography{ref}
	
	\section*{Methods}
	
	\subsection{Readout model and Fisher information}
	We first formalize the bounded-depth readout model. Let $G$ be an interaction graph with vertices corresponding to the data qubits $[n] \coloneqq \{1,2,\cdots,n\}$. We denote by $\cU_{n,d}^{G}$ the set of depth-$d$ circuits on $G$, where any $U \in \cU_{n,d}^{G}$ can be decomposed as
	\begin{equation}
		U=U^{(d)}\cdots U^{(1)},\qquad
		U^{(t)}=\bigotimes_j U_j^{(t)},
		\label{eq:depth-d-circuit}
	\end{equation}
	where the gates $U_j^{(t)}$ within each layer have pairwise disjoint supports, and each is either a one-qubit gate or a two-qubit gate acting on an edge of $G$. 
	
	We also consider a restricted ancilla-assisted model for implementing data-qubit unitaries.
	In this setting, $G$ denotes a fixed architecture with vertex set $Q\sqcup A$, where $Q$ is the set of $n$ data qubits and $A$ consists of ancillary qubits. An $n$-qubit unitary $U$ belongs to $\cU_{n,d}^{G,\mathrm{anc}}$ if there exists an initial ancilla state $\tau_A$ and a depth-$d$ circuit $V$ on $G$, such that 
	\begin{equation}
		\tr_A\left[V(\rho\otimes\tau_A)V^\dagger\right] = U\rho U^\dagger
		\label{eq:ancilla-assisted-implementation}
	\end{equation}
	for every input state $\rho$ on $Q$.
	The final ancilla state need not equal $\tau_A$. 
	When we say that an implementation uses \emph{clean ancillas}, we mean the special case in which the ancillas are initialized in the product state $\ket{0}^{\otimes |A|}$ and are returned to $\ket{0}^{\otimes |A|}$ at the end of the circuit.  
	Thus, clean ancillas are reusable qubits, and they are included as a special case of \(\cU_{n,d}^{G,\mathrm{anc}}\).
	For all-to-all architectures, $G$ is the complete graph on $Q\sqcup A$.
	For $\delta$D architectures, $G$ is a $\delta$D grid, with the data subset $Q\subseteq G$ fixed as part of the architecture.
	
	This class should not be confused with arbitrary depth-$d$ ancilla-assisted measurements. 
	The ancillas are used only to implement an effective unitary channel on the data register. The final readout is still the computational-basis measurement of the data qubits, with outcome distribution $p_\rho^U(z)= \bra{z}U\rho U^\dagger\ket{z}$.
	Thus, the model remains a unitary-basis readout model. More general protocols that directly measure ancillas, use mid-circuit measurements, or apply feed-forward within a single copy define stronger measurement models and are not included in $\cU_{n,d}^{G,\mathrm{anc}}$~\cite{Du2025SpaceTimeMeasurement, Foxman2025RandomUnitariesConstantQuantum}. 
	
	We quantify the information extracted by these readout circuits using Fisher information matrices. 
	Consider a smooth $m$-parameter pure-state encoding 
	$\cE= \{\rho_{\btheta}=\ketbra{\psi_{\btheta}}\}_{\btheta\in\bR^m}$.
	Let $L_i$ be the symmetric logarithmic derivatives (SLDs), defined by $\partial_i \rho_{\btheta} = \frac{1}{2}\left(L_i\rho_{\btheta}+\rho_{\btheta}L_i\right)$,
	and define the QFIM by
	\begin{equation}
		[J_{\cE}(\btheta)]_{ij} := \Re\tr\left(\rho_{\btheta}L_iL_j\right).
	\end{equation}
	For a POVM $M=\{M_x\}_x$, with outcome probabilities
	$p_{\btheta}(x)=\Tr(M_x\rho_{\btheta})$, the corresponding CFIM is
	\begin{equation}
		[I_{\cE}^M(\btheta)]_{ij} := \sum_{x:p_{\btheta}(x)>0} \frac{\partial_i p_{\btheta}(x)\,\partial_j p_{\btheta}(x)}{p_{\btheta}(x)}.
	\end{equation}
	
	\subsection{Proof strategy for the hidden regime}
	In this section, we formally state the phase-hiding theorem, outline the core ideas of its proof, and convert this theorem into a bound on the CFI.
	
	\begin{theorem}[Phase hiding against low-depth readouts, formal version of Theorem~\ref{thm:phase_hiding_low_depth}]
		\label{thm:phase_hiding_low_depth-formal}
		There exist positive constants \(\{\alpha_\delta\}_{\delta\ge1}\) and \(\alpha_{\mathrm{a2a}}>0\) such that the following holds for sufficiently large \(n\).  Let \(G\) be either a $\delta$D architecture or the all-to-all architecture.  Suppose the readout depth satisfies $d\le \alpha_\delta(\log n)^{1/\delta}$ if $G$ is a $\delta$D architecture, or $d\le \alpha_{\mathrm{a2a}}\log\log n$ if $G$ is an all-to-all architecture.
		Then there exist two orthonormal \(n\)-qubit states
		\(\ket{\eta_0}\) and \(\ket{\eta_1}\) such that the phase family $\ket{\psi_\theta} =  \frac{1}{\sqrt 2} \left( \ket{\eta_0}+e^{i\theta}\ket{\eta_1} \right)$ is hidden from every allowed depth-\(d\) unitary-basis readout in the architecture \(G\).  More precisely, for every such unitary $U$, and for all phases \(\theta,\theta'\in\mathbb R\),
		\begin{equation}
			\label{eq:phase_hiding_tv-formal}
			\TVD{p_{\psi_\theta}^U}{p_{\psi_{\theta'}}^U} \le \exp[-n^{\Omega(1)}].
		\end{equation}
		The bound remains valid when the architecture uses clean ancillas.
	\end{theorem}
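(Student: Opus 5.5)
The plan is to reduce the total-variation distance to the trace norm of a dephased coherence, and then show that this coherence is destroyed multiplicatively by many local dephasings with disjoint light cones.

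\textbf{Step 1: reduction to a dephased coherence.} Since $\ketbra{\psi_\theta}{\psi_\theta}=\frac12(\ketbra{\eta_0}{\eta_0}+\ketbra{\eta_1}{\eta_1})+\frac12(e^{-i\theta}\ketbra{\eta_0}{\eta_1}+\mathrm{h.c.})$, only the off-diagonal part depends on $\theta$. Let $\Delta$ denote full computational-basis dephasing. Then
\begin{equation}
\TVD{p_{\psi_\theta}^U}{p_{\psi_{\theta'}}^U}\le \tfrac12\abs{e^{-i\theta}-e^{-i\theta'}}\,\big\|\Delta\big(U\ketbra{\eta_0}{\eta_1}U^\dagger\big)\big\|_1\le \big\|\Delta\big(U X U^\dagger\big)\big\|_1 ,
\end{equation}
where $X=\ketbra{\eta_0}{\eta_1}$. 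Write $\Delta=\prod_i\mathcal D_{Z_i}$ with $\mathcal D_P(Y)=\frac12(Y+PYP)$. Conjugating by $U$ gives
\begin{equation}
\|\Delta(UXU^\dagger)\|_1=\Big\|\prod_i\mathcal D_{P_i}(X)\Big\|_1,\qquad P_i=U^\dagger Z_iU .
\end{equation}
Each $\mathcal D_{P_i}$ is a trace-norm contraction, and all of them commute. Hence we may discard every dephasing except those indexed by a chosen subset $S$.

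\textbf{Step 2: light-cone packing.} Each $P_i$ is supported on the backward light cone $L_i$. In $\delta$D, $|L_i|=\cO(d^\delta)$; in the all-to-all architecture, $|L_i|\le 2^d$. In both hidden regimes this gives $\ell:=\max_i|L_i|\le c\log n$ for a small constant $c$ set by $\alpha_\delta$ or $\alpha_{\rm a2a}$. A greedy packing yields $|S|\ge n/\mathrm{poly}(\ell)$ with the $L_i$, $i\in S$, pairwise disjoint. The all-to-all case needs a counting argument: each qubit lies in at most $2^d$ forward cones, so a greedy choice removes only $\cO(4^d)$ candidates per pick.

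With clean ancillas, I enlarge each cone to include its ancilla qubits and replace $X$ by $X\otimes\ketbra{0}{0}_A$. The identity $\mathrm{tr}_A$ then reduces to the same computation, because the ancillas are product $\ket 0$ and are returned to $\ket0$. Cones are still disjoint, and ancilla qubits inside a cone are merely fixed known local states.

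\textbf{Step 3: the random product-state code.} I take $\ket{\eta_0}=\bigotimes_j\ket{a_j}$ and $\ket{\tilde\eta_1}=\bigotimes_j\ket{b_j}$, with all $\ket{a_j},\ket{b_j}$ independent Haar-random single-qubit states. Then $\abs{\braket{\eta_0}{\tilde\eta_1}}\le e^{-\Omega(n)}$ with overwhelming probability, and Gram--Schmidt changes $X$ by only $e^{-\Omega(n)}$ in trace norm.

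For the product $X$, the maps $\mathcal D_{P_i}$ with $i\in S$ act on disjoint blocks. Trace norm is multiplicative over tensor products, and $\|\mathcal D_P(Y)\|_1\le\|Y\|_1$ handles the complement. This gives
\begin{equation}
\|\cdots\|_1\le\prod_{i\in S}\big\|\mathcal D_{P_i}\big(\ketbra{a_{L_i}}{b_{L_i}}\big)\big\|_1 .
\end{equation}
For a rank-one $\ketbra{a}{b}$ and an involution $P$, the factor equals $1$ only when $\ket a,\ket b$ lie in a common eigenspace of $P$. Quantitatively, I aim to show
\begin{equation}
\big\|\mathcal D_P(\ketbra ab)\big\|_1\le 1-\gamma\,\big\|\Pi_-\ket a\big\|^2\ \ \text{or}\ \ 1-\gamma\,\big\|\Pi_-\ket b\big\|^2
\end{equation}
in a suitable combined form, where $\Pi_\pm$ are the eigenprojectors of $P$.

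The heart of the proof is a lower bound on these factors that holds \emph{uniformly over all circuits}. For a fixed $P$ on $\ell$ qubits (together with fixed ancillas), a random product state has $\mathbb E\|\Pi_-\ket a\|^2$ bounded away from $0$ and $1$ only up to $2^{-O(\ell)}$ effects. I therefore expect each block to contract by $1-\varepsilon$ with $\varepsilon\ge 2^{-O(\ell)}=n^{-O(c)}$, except with small probability. For a fixed $U$, the blocks are independent, so Chernoff gives a constant fraction of good blocks except with probability $e^{-\Omega(|S|)}$.

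\textbf{Step 4: uniformity and the main obstacle.} I then union-bound over an $n^{-10}$-net of all depth-$d$ circuits in the architecture. Its log-cardinality is $\cO(nd\log n)$ in $\delta$D, including the $\binom{n}{2}^{\cO(nd)}$ gate placements in all-to-all. The bound $\|\Delta(UXU^\dagger)\|_1$ is Lipschitz in $U$, so the net suffices. Fixing one code in this way yields
\begin{equation}
\TVD{\cdot}{\cdot}\le(1-\varepsilon)^{\Omega(|S|)}+e^{-\Omega(n)}\le\exp\!\big[-n^{1-O(c)}\big]=\exp[-n^{\Omega(1)}].
\end{equation}

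I expect this step to be the main obstacle. The per-circuit failure probability $e^{-\Omega(n/\mathrm{poly}\log n)}$ must beat the net size $e^{\tilde{\cO}(n)}$. To make it do so, I would first try to strengthen the failure probability to $e^{-\Omega(|S|\,\ell)}$ by a concentration-of-measure argument on each block, using Lipschitz functions of $2\ell$ random qubits. Failing that, I would apply the union bound blockwise, over nets of local unitaries on each $\ell$-qubit cone, which have only $e^{\cO(4^\ell\log n)}=e^{n^{O(c)}}$ elements per block, so that the good-block event becomes essentially circuit-independent.
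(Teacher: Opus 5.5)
\textbf{The gap is in Step 3, and it is fatal.} Your Steps 1 and 2 (writing the readout as commuting dephasings, then discarding all but a packed family of disjoint light cones) agree with the paper. The problem is the code: with $\ket{\eta_0}$ and $\ket{\tilde\eta_1}$ each a \emph{single} product state, the phase can already be read out at depth one. So no refinement of Step 4 can succeed.

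The error comes from your criterion for when a block factor fails to contract. By the rank-one formula,
$\|\mathcal D_P(\ketbra{a}{b})\|_1=\sqrt{q_aq_b}+\sqrt{(1-q_a)(1-q_b)}$, with $q_a=\|\Pi_+\ket{a}\|^2$ and $q_b=\|\Pi_+\ket{b}\|^2$. This equals $1$ whenever $q_a=q_b$, not only when $\ket a$ and $\ket b$ share an eigenspace. The correct quantitative bound is $1-\|\mathcal D_P(\ketbra{a}{b})\|_1\ge\frac12(q_a-q_b)^2$.

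Because $U$ is chosen after the code, take one layer of single-qubit gates that measures qubit $j$ along the Bloch axis $\hat n_j\propto\vec a_j+\vec b_j$. In that frame, $\ket{a_j}$ and $\ket{b_j}$ have the same polar angle $\beta_j$ and opposite azimuths. Hence
$c_z:=\bra{z}U\ket{\eta_0}\overline{\bra{z}U\ket{\tilde\eta_1}}=e^{i\Gamma}(-1)^{\wt(z)}|c_z|$
for a code-dependent phase $\Gamma$, where $|c_z|=\prod_j[\cos^2(\beta_j/2)]^{1-z_j}[\sin^2(\beta_j/2)]^{z_j}$ sums to $1$. Up to the $e^{-\Omega(n)}$ Gram--Schmidt error,
\[
p^U_{\psi_\theta}(z)\approx|c_z|\bigl(1+(-1)^{\wt(z)}\cos(\theta-\Gamma)\bigr),\qquad
\TVD{p^U_{\psi_\theta}}{p^U_{\psi_{\theta+\pi}}}\approx|\cos(\theta-\Gamma)| .
\]
This is a GHZ-style parity readout. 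Every block factor equals $1$, and at $\theta=\Gamma$ the two codewords are perfectly distinguished.

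\textbf{Consequences for Step 4, and the missing idea.} Both of your fallbacks fail. Aligning axes inside each block gives, for \emph{every} realization of your code, local balanced observables with factor $1$ on all blocks. The uniform good-block event you need is therefore false, independent of any net or concentration bound.

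The missing idea is the paper's code.
\begin{itemize}
\item $\ket{\eta_0}$ and $\ket{\eta_1}$ are the orthonormalized uniform superpositions over the two halves of $N=\exp[\Omega(p_L\gamma_Lm)]$ independent Haar-random product states $\ket{\Phi_s}$.
\item An adversarial $U$ can still align a few pairs $(\Phi_s,\Phi_t)$, but not many at once.
\item Lemma~\ref{lem:robust_product_code} shows the following. For every support family, and every choice of points from an $\varepsilon_L$-net of balanced observables on each block, all but $r-1=O(\ell^2)$ pairs are contracted to $\lambda_L$ on at least $\delta_Lm$ blocks. Here $\ell$ is the paper's matching parameter, not your cone size.
\item The union bound over $RK_L^m$ configurations closes for a combinatorial reason. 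A bad-pair graph with $2\ell^2$ edges contains an $\ell$-matching or an $\ell$-star. Each is entirely bad with probability at most $e^{-c\ell p_Lm}$. The choice $\ell\sim(\log R+m\log K_L)/(p_Lm)$ then beats $K_L^m$.
\end{itemize}
This factor-$\ell$ amplification is exactly what a single pair cannot supply. The bad pairs cost only $2(r-1)/N$ in trace norm, and the good pairs contribute at most $N\lambda_L^{\delta_Lm}$. Both terms are $\exp[-n^{\Omega(1)}]$ in the stated depth regimes.
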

	
	\subsubsection{Proof ideas}
	The phase information of $\ket{\psi_\theta}$ is contained entirely in the coherence term 
	\begin{equation}
		\ketbra{\psi_\theta} - \bar\rho =\frac12\left(e^{-i\theta}\ket{\eta_0}\bra{\eta_1}+e^{i\theta}\ket{\eta_1}\bra{\eta_0}\right),
	\end{equation}
	where $\bar\rho = \frac12 \left(\ket{\eta_0}\bra{\eta_0} + \ket{\eta_1}\bra{\eta_1}\right)$.
	We show that every low-depth measurement almost completely destroys this coherence.
	
	A computational-basis measurement after $U$ can be written, in the Heisenberg
	picture, as a sequence of commuting dephasings.  Namely, with $A_i(U)=U^\dagger Z_iU$, $\mathcal D_{A_i}(X)=\frac12\left(X+A_iXA_i\right)$, the measurement channel is the product of the dephasings $\mathcal D_{A_i}$.  Therefore, the total variation distance between two output distributions is controlled by the trace norm of the coherence term after these dephasings.
	
	Low depth imposes many small, disjoint light cones.
	For a $\delta$D depth-$d$ circuit, one can choose $m=\Omega(n/L)$ output qubits with pairwise disjoint backward light cones of size $L=\mathcal O(d^\delta)$.  For an all-to-all depth-$d$ circuit, each backward light cone has size $L=2^d$. A greedy packing yields $m=\Omega(n/L^2)$ disjoint light cones, and the number of possible support patterns is at most $R\le (n+1)^n$.
	
	The states $\ket{\eta_0}$ and $\ket{\eta_1}$ are built from a random product-state code.  
	One samples many random product states, splits them into two sets, and forms two nearly orthogonal uniform superpositions over those sets.  
	For any dephasing channel on a light cone of size $L$, the coherence between two independent random product states is contracted with probability at least $\exp[-\mathcal O(L)]$.  
	Since the selected light cones are disjoint, these contractions multiply.
	Therefore, after the selected dephasings, the coherence terms are exponentially suppressed, so all phase states produce almost identical classical distributions. 
	A concentration argument, together with a net over local dephasings and a union bound over possible light-cone support patterns, makes the contraction uniform over all $U\in\cU_{n,d}^G$, proving Theorem~\ref{thm:phase_hiding_low_depth-formal}. 
	The full proof is given in the Appendix~\ref{app:phase-hiding}.

	\subsubsection{Convert phase hiding into a CFI bound}
	For the phase-hiding family $\cE_{\rm hid} =\{\ket{\psi_{\theta}}\}_{\theta\in[0,2\pi)}$.
	For a fixed circuit $U$ below the depth threshold, let $a_z=\bra z U\ket{\eta_0}$, $b_z=\bra z U\ket{\eta_1}$.
	Then
	\begin{equation}
		p_{\psi_{\theta}}^U(z) = \bar p^U(z) + r_z\cos(\theta+\alpha_z),
	\end{equation}
	where $\bar p^U(z) \coloneqq \mathbb E_\varphi p_{\psi_{\varphi}}^U = \frac12\bigl(|a_z|^2+|b_z|^2\bigr)$ and $r_z e^{i\alpha_z} = b_z a_z^*$.
	Positivity of $p_{\psi_{\varphi}}^U(z)$ for all $\varphi$ implies $\bar p^U(z)\ge r_z$.
	Convexity of total variation and Eq.~\eqref{eq:phase_hiding_tv-formal} give $\TVD{p_{\psi_{\theta}}^U}{\bar p^U} \le \exp[-n^{\Omega(1)}]$. Averaging over \(\theta\) yields $\sum_z r_z \le \exp[-n^{\Omega(1)}]$.
	The CFI of the measurement distribution is therefore bounded by
	\begin{equation}
		\begin{split}
			I_{\cE_{\rm hid}}^U(\theta) &= \sum_z \frac{r_z^2\sin^2(\theta+\alpha_z)}{\bar p^U(z)+r_z\cos(\theta+\alpha_z)}  \\
			&\le 2\sum_z r_z = \exp[-n^{\Omega(1)}].
		\end{split}
		\label{eq:cfi_phase_hiding}
	\end{equation}
	This proves that the readout capability is exponentially suppressed below the depth threshold, establishing the hidden-regime statement in Theorem~\ref{thm:depth-optimal-readout}.

	\subsection{Proof strategy for the visible regime}
	In this section, we outline the two key ingredients for establishing the visible regime. 
	We first explain how approximate unitary 3-designs guarantee constant-fraction QFIM extraction, and then describe the explicit circuit construction of these designs in $\delta$D architectures using LRFC blocks.
	
	\subsubsection{Proof ideas of Theorem~\ref{thm:approx-design-readout}}
	
	Fix any $\bu\in\mathbb R^m$ and restrict to the one-parameter encoding
	$t\mapsto \rho_{\btheta+t\bu}$.
	Then
	\begin{equation}
		\bu^{\mathsf T}I_{\cE}^{M_{\cU}}(\btheta)\bu
		=
		I_t,
		\qquad
		\bu^{\mathsf T}J_{\cE}(\btheta)\bu
		=
		J_t,
	\end{equation}
	where $I_t$ and $J_t$ are respectively the CFI and QFI of the one-parameter encoding.
	Let $L_t$ denote the corresponding SLD. 
	For the POVM induced by $\cU=\{q_a,U_a\}$, write its outcomes as
	$x=(a,y)$ and set $\ket{\phi_x} =U_a^{\dagger} \ket{y}$, $w_x:=\frac{q_a}{D}$.
	One finds
	\begin{equation}
		I_t =\frac{D}{4} \sum_{x:p_t(x)>0} w_x \frac{\langle\phi_x|L_t|\phi_x\rangle^2}{\langle\phi_x|\rho_t|\phi_x\rangle}.
	\end{equation}
	
	A Cauchy--Schwarz inequality lower-bounds this quantity by the square of a second-moment term divided by a third-moment term:
	\begin{equation}
		I_t
		\ge
		\frac{D}{4}
		\frac{\left(\sum_x w_x\bra{\phi_x}L_t\ket{\phi_x}^2\right)^2}
		{\sum_x w_x\bra{\phi_x}L_t\ket{\phi_x}^2\,\bra{\phi_x}\rho_t\ket{\phi_x}}.
	\end{equation}
	
	By assumption, the second and third moments of $M_{\cU}$ are close to the corresponding Haar moments, and these Haar moments can be evaluated explicitly for pure-state SLDs.
	This yields $I_t\ge \kappa_D(\epsilon) J_t$ for every $\bu$, and therefore $I_\cE^{M_{\cU}}(\btheta) \succeq \kappa_D(\epsilon) J_{\cE}(\btheta)$. The full proof is given in Appendix~\ref{app:metrology}.
	
	\subsubsection{Depth-optimal implementation of the designs}
	\label{sec:methods-design-implementation}
	
	Having established that approximate 3-designs are sufficient for QFI readout, we now explicitly construct them at the required depth. We first build local approximate designs on logarithmic-size blocks using LRFC blocks, and then glue these local patches into a global multiplicative-error design using a double-layer blocked circuit. 
	The full constructions and proofs are detailed in Appendices~\ref{app:finite-dimensional-arithmetic}--\ref{app:random-unitaries}.
	
	Let $\Lambda$ be an even-size block of $\ell=2h$ qubits, with bipartition $\Lambda=\Lambda_L\sqcup \Lambda_R$ and $|\Lambda_L|=|\Lambda_R|=h$. We identify the computational-basis states on the two halves with $(x_L,x_R)\in \mathbb F_{2^h}\times \mathbb F_{2^h}$, where $\mathbb{F}_{2^h}$ denotes the finite field of order $2^h$. A single LRFC block applies the unitary 
	\begin{equation}
		U_{\rm LRFC}=S_LS_RFC,
		\label{eq:methods-lrfc}
	\end{equation}
	composed of four carefully structured components:
	\begin{enumerate}
		\item Clifford ($C$): $C$ is sampled from an exact unitary 2-design on the full block $\Lambda$. To implement this efficiently without the demanding overhead of the full Clifford group, we construct $C$ via a restricted finite-field Clifford ensemble. Identifying the Hilbert space with $\mathrm{span}\{\ket{x}:x\in K\}$ for $K=\mathbb F_{2^\ell}$, we first define the Weyl displacement operators $D_{(a,b)} = i^{\mathrm{Tr}(ab)}X_aZ_b$, where $X_a\ket{x}=\ket{x+a}$, $Z_b\ket{x}=(-1)^{\mathrm{Tr}(bx)}\ket{x}$ and $\mathrm{Tr}:K\to\mathbb F_2$ is the finite-field trace. We then sample $C$ uniformly from the restricted ensemble
		\begin{equation}
			\mathcal C_{\rm res}(K) = \{\,U_MD_u:\, M\in{\rm SL}(2,K),\,u\in K^2\,\}.
		\end{equation} 
		Here, $U_M$ is a Clifford lift that transforms the Weyl basis via $U_MD_vU_M^\dagger = \pm D_{Mv}$. By restricting the symplectic transformations to ${\rm SL}(2,K)$, the operator $C$ can be compiled into a short circuit consisting only of finite-field displacements, Fourier transforms, scalings, and quadratic shears (see Appendix~\ref{app:exact-2-design}).
		
		\item Diagonal phase ($F$): $F$ applies a phase $F\ket{x}=(-1)^{f(x)}\ket{x}$. The function $f:\mathbb F_2^\ell\to\mathbb F_2$ is sampled from a $2k$-wise independent family by setting $f(x)=\lambda(P(x))$, where $P$ is a random polynomial of degree $<2k$ over $\mathbb F_{2^\ell}$ and $\lambda$ is a fixed nonzero linear functional.
		
		\item Conditional shuffles ($S_R, S_L$): $S_R$ is a conditional shuffle of the right half, $S_R\ket{x_L,x_R} = \ket{x_L,x_R+\sigma_R(x_L)}$, where $\sigma_R:\mathbb F_{2^h}\to\mathbb F_{2^h}$ is sampled from a $2k$-wise independent vector-valued function family. $S_L$ is the analogous shuffle of the left half, $S_L\ket{x_L,x_R} = \ket{x_L+\sigma_L(x_R),x_R}$, driven by an independently sampled $2k$-wise independent function $\sigma_L$.
	\end{enumerate}
	
	\begin{theorem}[Finite-dimensional implementation of LRFC components]
		\label{thm:methods-lrfc-implementation}
		Fix an architecture dimension $\delta\ge1$. On an $\ell$-qubit block, every sampled component $C,F,S_R,S_L$ in Eq.~\eqref{eq:methods-lrfc} can be implemented on a $\delta$D architecture using $\cO(k\ell)$ clean ancillas and depth $\cO\left(\log(k)(k\ell)^{1/\delta}\right)$.
	\end{theorem}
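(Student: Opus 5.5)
The plan is to reduce every component to a small set of finite-field arithmetic primitives on $\mathbb F_{2^\ell}$ (or $\mathbb F_{2^h}$) and then implement each primitive on a $\delta$D grid in depth $\cO((k\ell)^{1/\delta})$, up to an extra $\log k$ factor. Throughout, field elements are represented in a fixed polynomial basis, so that $\mathbb F_2$-linear maps are $\ell\times\ell$ binary matrices.

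The primitives are:
\begin{enumerate}
\item addition of a constant or of another register, which is bitwise CNOT/X;
\item multiplication by a fixed field element, an invertible $\mathbb F_2$-linear map;
\item squaring, since the Frobenius map is linear;
\item general multiplication $\ket{x}\ket{y}\ket{0}\mapsto\ket{x}\ket{y}\ket{xy}$;
\item the finite-field Fourier transform, which is $H^{\otimes\ell}$ composed with the linear map dual to the trace form.
\end{enumerate}

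The first step is a geometric lemma. Lay the $\cO(k\ell)$ qubits in a $\delta$D cube of side $s=\cO((k\ell)^{1/\delta})$. Then any $\mathbb F_2$-linear reversible map $\ket{x}\mapsto\ket{Ax}$ on an $\ell$-qubit register can be implemented in depth $\cO(\ell^{1/\delta})$ using $\cO(\ell)$ clean ancillas. To do this, compute $Ax$ into ancillas as a parity network: each output bit is a parity of a subset of inputs. Broadcast the inputs via fan-out trees laid out along grid dimensions, which takes depth $\cO(s)$ per dimension. Then compute the parities with the matching fan-in, again in depth $\cO(s)$. Finally swap registers and uncompute $x$ using $A^{-1}$ by the same method.

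Alternatively, decompose $A=LUP$. Each triangular factor is a CNOT-circuit. Permutations are routed in depth $\cO(s)$ on a grid via standard mesh sorting/routing. Triangular CNOT-circuits with clean ancillas are done by the fan-out/fan-in gadget, where copying $x$ into $\ell$ rows needs $\cO(\ell^2)$ ancillas. To stay within $\cO(k\ell)$ ancillas, I would exploit structure: multiplication by a fixed field element is a multiplication in $\mathbb F_2[t]/(g)$, so I would use a polynomial-multiplication layout (systolic convolution) of depth $\cO(\ell^{1/\delta})$ followed by a fixed reduction modulo $g$. For the reduction I would choose $g$ a sparse trinomial or pentanomial, so it has constant depth locally plus routing. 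This gives primitives 1, 2, 3 and 5, and the same convolution idea with Toffolis gives primitive 4, since general multiplication is a bilinear convolution.

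The second step assembles the components. The Clifford $C=U_MD_u$ uses the Bruhat-type decomposition of $M\in\mathrm{SL}(2,K)$ into scalings $\mathrm{diag}(a,a^{-1})$, the Weyl element (a Fourier transform) and shears $\begin{pmatrix}1&c\\0&1\end{pmatrix}$. Scalings are linear, and shears lift to the diagonal phase $\ket{x}\mapsto i^{\mathrm{Tr}(cx^2)}\ket{x}$-type quadratic phases. These are computed by squaring, fixed multiplication and a trace (a parity), then a phase kickback, then uncomputation, so $C$ is a constant number of primitives. $F$ requires evaluating a random polynomial of degree $<2k$. Horner's rule would cost depth $\cO(k)$ sequential multiplications, which is too much. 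Instead, compute the powers $x,x^2,\dots,x^{2k-1}$ by repeated squaring and products in $\cO(\log k)$ rounds, using $\cO(k)$ parallel registers, hence $\cO(k\ell)$ ancillas. Then form $\sum_j c_jx^j$ by fixed-coefficient multiplications and a binary addition tree of depth $\cO(\log k)$. Each round is a primitive of depth $\cO((k\ell)^{1/\delta})$, because the $k$ registers must be routed across a cube of side $(k\ell)^{1/\delta}$. Applying $\lambda$ is a parity, after which the phase is applied and everything is uncomputed. $S_R$ and $S_L$ follow identically, with $\sigma$ evaluated on $h$ qubits and added into the other half before uncomputing.

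I expect the main obstacle to be keeping the per-round cost at $\cO((k\ell)^{1/\delta})$ while respecting $\cO(k\ell)$ ancillas. The two difficulties are the general multiplication, which naïvely has $\ell^2$ partial products, and the routing of the $\cO(k)$ power registers in each of the $\log k$ rounds. For multiplication I would first try laying out the partial products of a Karatsuba-free schoolbook convolution along a $\delta$D slab, accumulated in place through a systolic sweep so that only $\cO(\ell)$ bits are live at a time. If that fails, I would fall back on a blocked layout where each of the $\delta$ dimensions handles $\ell^{1/\delta}$-size digit blocks.
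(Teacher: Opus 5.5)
\textbf{There is a genuine gap: the multiplication primitive that every depth bound rests on is not established.} Your overall architecture matches the paper: Bruhat words for $\mathrm{SL}(2,K)$, the Fourier transform as $H^{\otimes\ell}$ plus a trace-dual basis change, $\cO(\log k)$ rounds of parallel products on $\cO(k)$ registers, an XOR tree, parity kickback, and routing across a cube of side $(k\ell)^{1/\delta}$. The problem is multiplying field elements (and applying the fixed linear maps) in depth $\cO(\ell^{1/\delta})$ with $\cO(\ell)$ workspace, and for $\delta\ge2$ the routes you propose cannot do this.

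Your general lemma for arbitrary $\mathbb{F}_2$-linear maps is false. There are $2^{\Theta(\ell^2)}$ invertible $\ell\times\ell$ binary matrices. There are only $2^{\cO(Nd)}$ CNOT parity networks of depth $d$ on $N=\cO(\ell)$ grid qubits. This forces $d=\Omega(\ell)$.

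Your structured fallback fails for the same reason. A circuit on $\cO(k\ell)$ qubits of depth $\cO(\log(k)(k\ell)^{1/\delta})$ has only $\cO(\ell^{1+1/\delta})$ gate slots for constant $k$, which includes the case $k=3$ where the theorem is used. Schoolbook convolution, whether systolic, slab-laid, or blocked into $\ell^{1/\delta}$-digit blocks, needs $\Theta(\ell^2)$ bit products $x_iy_j$. A generic fixed multiplier likewise needs $\Theta(\ell^2)$ CNOTs.

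The missing idea is a subquadratic multiplication whose recursion parallelizes. The paper uses Toom--Cook with a constant number $r=r(\delta)$ of blocks:
\begin{itemize}
\item Evaluation and interpolation are coefficientwise linear maps with hardwired constants, costing depth $\cO(s^{1/\delta})$ via grid routing.
\item The $2r-1$ subproducts are processed in $\cO(1)$ batches of at most $r/4$, which keeps the space linear.
\item The depth recursion $T_s\le 2B_0T_{\lceil s/r\rceil}+cs^{1/\delta}$ solves to $\cO(s^{1/\delta})$ once $r^{1/\delta}$ is large enough.
\item Enough unit-separated evaluation points come from a constant-degree extension of $\mathbb{F}_2$ or $\mathbb{Z}_4$.
\end{itemize}
Field multiplication then uses a reciprocal (Barrett-type) reduction with a constant number of polynomial products. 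This also removes your reliance on sparse irreducible trinomials or pentanomials, which are not known to exist in every degree; irreducible trinomials do not exist for $\ell\equiv0\pmod 8$. The trace-dual basis change must also be shown to be a fixed convolution rather than handled as a generic linear map. The paper does this by reversing, multiplying by $p(z)$, extracting coefficients, and multiplying by $1/p'(\omega)$.

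\textbf{Second gap: the shear cannot be realized with $\mathbb{F}_2$ arithmetic.} The map $x\mapsto\mathrm{Tr}(cx^2)$ is $\mathbb{F}_2$-linear. So the phase $i^{\mathrm{Tr}(cx^2)}$, with the trace computed as a parity, gives
$$P X_a P^\dagger = i^{\mathrm{Tr}(ca^2)}X_aZ_{\mathrm{Tr}(ca^2)c^{1/2}}.$$
This is a rank-one $\mathbb{F}_2$-transvection of the Weyl labels, not the $K$-linear shear $(a,b)\mapsto(a,b+\gamma a)$. The sampled $C$ would then fall outside the $\mathrm{SL}(2,K)$ ensemble, and the transitivity argument for the exact $2$-design would no longer apply.

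Because $\mathrm{Tr}(\gamma xy)$ is not alternating, no $\mathbb{F}_2$-valued quadratic form has it as its polar form. You need a $\mathbb{Z}_4$-valued refinement satisfying $Q(x+y)-Q(x)-Q(y)=2\,\mathrm{Tr}(\gamma xy)\bmod 4$. The paper obtains it by lifting the bits to $\mathbb{Z}_4$, squaring the polynomial over $\mathbb{Z}_4[z]$, and accumulating a fixed $\mathbb{Z}_4$-linear functional before the phase kickback. That squaring is genuinely nonlinear, so it again requires the fast Toom--Cook primitive, now over a Galois ring. This is why the paper proves its multiplication lemma for $\mathrm{GR}(4,a)$ as well as for finite fields.
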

	The specific construction details and depth analysis for these components are provided in the proofs of Theorems~\ref{thm:hash-eval} and \ref{thm:finite-dimensional-exact-2-design} in the Appendix.
	
	It remains to turn these local blocks into a global design. We partition the system into patches $P_1, \ldots, P_m$ (as shown in Fig.~\ref{fig:random_unitary}), setting the individual patch size to $|P_i| = \Theta(\xi)$ with $\xi = \Theta(\log \frac{nk}{\epsilon})$. The global circuit is executed in two interleaved layers. The first layer applies independent local random unitaries to adjacent disjoint pairs, such as $P_1\sqcup P_2$, $P_3\sqcup P_4$, $\cdots$. The second layer then applies independent random unitaries to the shifted pairs, such as $P_2\sqcup P_3$, $P_4\sqcup P_5$, $\cdots$. 
	Consequently, adjacent blocks from successive layers overlap on an entire patch.
	Within each local block of size $\ell = \Theta(\xi)$, repeating the LRFC circuit $\cO(k)$ times forms a local approximate $k$-design~\cite{Cui2025UnitaryDesignsOptimal}. By the gluing theorem from Ref.~\cite{Schuster2024RandomUnitaries}, this rigorously yields a global multiplicative-$\epsilon$ approximate unitary $k$-design.
	
	Because all blocks in a single layer are disjoint and run in parallel, Theorem~\ref{thm:methods-lrfc-implementation}, combined with the $\cO(k)$ repetitions, yields a total depth of 
	\begin{equation}
		\cO\left( k\log(k)(k\xi)^{1/\delta} \right) = \cO\left( k\log(k)[k\log(nk/\epsilon)]^{1/\delta} \right),
	\end{equation}
	using $\cO(nk)$ clean ancillas, proving Theorem~\ref{thm:finite-dimensional-design}.
	For the readout application, we set the design order to $k=3$ and take a constant error $\epsilon<1/4$, which gives the depth $\cO((\log n)^{1/\delta})$. Combined with Theorem~\ref{thm:approx-design-readout}, this proves the visible regime of Theorem~\ref{thm:depth-optimal-readout} for $\delta$D architectures.
	The one-dimensional and all-to-all implementations follow from Refs.~\cite{Schuster2024RandomUnitaries, Cui2025UnitaryDesignsOptimal}.

	\clearpage
	\onecolumngrid
	\appendix
	
	\begin{center}
		{\large \textbf{Supplementary Material}}
	\end{center}
	
	\section*{Contents} 
	
	\makeatletter
	\let\oldaddcontentsline\addcontentsline
	\renewcommand{\addcontentsline}[3]{%
		\def\target{#1}%
		\def\toclabel{toc}%
		\ifx\target\toclabel
		\oldaddcontentsline{atoc}{atoc#2}{#3}%
		\else
		\oldaddcontentsline{#1}{#2}{#3}%
		\fi
	}
	
	\@starttoc{atoc}
	\makeatother
	
	\vspace{1cm}
	
	\renewcommand{\thetheorem}{S\arabic{theorem}}
	\renewcommand{\thefact}{S\arabic{fact}}
	\renewcommand{\thelemma}{S\arabic{lemma}}
	\renewcommand{\thedefinition}{S\arabic{definition}}
	\renewcommand{\theproposition}{S\arabic{proposition}}
	\renewcommand{\thecorollary}{S\arabic{corollary}}
	\renewcommand{\theclaim}{S\arabic{claim}}
	\renewcommand{\thepage}{S\arabic{page}}
	\renewcommand{\thefigure}{S\arabic{figure}}
	
	\renewcommand{\theHtheorem}{S\arabic{theorem}}
	\renewcommand{\theHfact}{S\arabic{fact}}
	\renewcommand{\theHlemma}{S\arabic{lemma}}
	\renewcommand{\theHdefinition}{S\arabic{definition}}
	\renewcommand{\theHproposition}{S\arabic{proposition}}
	\renewcommand{\theHcorollary}{S\arabic{corollary}}
	\renewcommand{\theHclaim}{S\arabic{claim}}
	\renewcommand{\theHfigure}{S\arabic{figure}}
	
	\setcounter{theorem}{0}
	\setcounter{fact}{0}
	\setcounter{lemma}{0}
	\setcounter{equation}{0}
	\setcounter{definition}{0}
	\setcounter{proposition}{0}
	\setcounter{claim}{0}
	\setcounter{corollary}{0}
	\setcounter{figure}{0}
	\setcounter{page}{1}
	\setcounter{section}{0}
	\setcounter{equation}{0}
	
	\section{Phase hiding against low-depth readouts}\label{app:phase-hiding}
	This appendix proves the phase-hiding phenomenon. 
	In Sec.~\ref{subsec:unified_theorem}, we first present a unified phase-hiding theorem, Theorem~\ref{thm:support_family_hardness}, for general circuit families satisfying the support-family light-cone condition introduced in Definition~\ref{def:support_family_structure}. 
	In Sec.~\ref{subsec:unified_proof}, we then provide the proof of Theorem~\ref{thm:support_family_hardness}, based on two main ingredients: the light-cone structure of low-depth circuits and a random product-state code construction. 
	In Sec.~\ref{sec:application_unified_theorem}, we instantiate this unified theorem for $\delta$D architectures and all-to-all circuits, thereby deriving Theorem~\ref{thm:phase_hiding_low_depth-formal}.
	Finally, in Sec.~\ref{app:randomized_adaptive}, we show that the hiding bound applies under classical randomization and adaptive choices across different copies, and in Sec.~\ref{app:clean-ancilla}, we extend the argument to ancilla-assisted measurement circuits.
	
	\subsection{A unified phase-hiding theorem}
	\label{subsec:unified_theorem}
	We first present a unified phase-hiding theorem that captures the common structure underlying the low-depth measurements. 
	The key point is that, for every low-depth circuit, one can identify many qubits whose backward light cones are pairwise disjoint and each supported on a small set. 
	In fixed architectures, these supporting sets can often be chosen deterministically, whereas in more flexible architectures, such as all-to-all circuits, they may depend on the specific circuit. 
	The following definition abstracts this feature by allowing, for each circuit, the relevant light cones to be chosen from a finite family of admissible support patterns.
	
	\begin{definition}[Support-family light-cone structure]
		\label{def:support_family_structure}
		Let $\mathfrak F$ be a finite collection of ordered families
		\begin{equation}
			\mathcal S=(S_1,\dots,S_m)
		\end{equation}
		of pairwise disjoint nonempty subsets of $[n]$ satisfying $|S_j|\le L$ for all $j$. We say that a circuit class $\cU$ admits a support-family light-cone structure with parameters $(L,m,\mathfrak F)$ if, for every $U\in\cU$, there exist a family $(S_1(U),\dots,S_m(U))\in\mathfrak F$ and distinct output sites $i_1(U),\dots,i_m(U)$ such that
		\begin{equation}
			\supp\bigl(U^\dagger Z_{i_j(U)}U\bigr)\subseteq S_j(U),
			\qquad j\in[m].
		\end{equation}
	\end{definition}
	
	The next theorem shows that this structural condition alone already implies a uniform phase-hiding statement for the entire circuit class.
	
	\begin{theorem}[Unified support-family lower bound]
		\label{thm:support_family_hardness}
		There exist universal constants $C_0,C_1,c_1,c_2>0$ such that the following holds. Suppose that a circuit class $\cU$ admits a support-family light-cone structure with parameters $(L,m,\mathfrak F)$, and set $R:=|\mathfrak F|$. Define
		\begin{equation}
			\label{eq:ell-harder}
			\ell_{L,m,R}:=
			\left\lceil
			C_1\frac{\log(R+1)+m4^LL+1}{p_Lm}
			\right\rceil,
			\qquad
			p_L:=\frac{1}{16\cdot3^L}.
		\end{equation}
		Then, for sufficiently large $m$, there exist orthonormal states $\ket{\eta_0},\ket{\eta_1}\in\cH_2^{\otimes n}$ such that, for all $\theta, \theta' \in \bR$,
		\begin{equation}
			\label{eq:support-family-bound}
			\sup_{U\in\cU}
			\TVD{p_{\psi_\theta}^U}{p_{\psi_{\theta'}}^U}
			\le
			\exp\Bigl(C_1\log(\ell_{L,m,R}+1)-c_1m\exp(-C_0L)\Bigr)
			+
			\exp(-c_2m),
		\end{equation}
		where $\ket{\psi_\theta}=2^{-1/2}\left(\ket{\eta_0}+e^{i\theta}\ket{\eta_1}\right)$.
	\end{theorem}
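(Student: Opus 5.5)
Our plan is a probabilistic construction: a random product-state code, then a union bound over circuits handled by a net argument. Sample $2N$ independent Haar-random product states $\ket{v_1},\dots,\ket{v_{2N}}$, where $\ket{v_a}=\bigotimes_{q\in[n]}\ket{v_a^{(q)}}$. Form $\ket{\tilde\eta_0}\propto\sum_{a\le N}\ket{v_a}$ and $\ket{\tilde\eta_1}\propto\sum_{a>N}\ket{v_a}$. Since pairwise overlaps are $2^{-\Omega(n)}$ with high probability, Gram–Schmidt changes these vectors negligibly, giving orthonormal $\ket{\eta_0},\ket{\eta_1}$. The correction is absorbed into the $\exp(-c_2 m)$ term, together with the failure probability of the near-orthogonality and normalization events.

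\textbf{Reduction to dephasings.} For any $U\in\cU$, the outcome distribution is obtained by applying the full dephasing channel $\prod_i\cD_{A_i}$ with $A_i=U^\dagger Z_iU$, followed by a fixed classical map. Hence
\[
\TVD{p_{\psi_\theta}^U}{p_{\psi_{\theta'}}^U}\le \bigl\|\textstyle\prod_i\cD_{A_i}(\ket{\eta_0}\bra{\eta_1})\bigr\|_1 .
\]
Since the $\cD_{A_i}$ commute and are trace-norm contractive, we may drop every factor except the $m$ selected dephasings $\cD_{A_{i_j}}$. Each of these is supported on $S_j$ by Definition~\ref{def:support_family_structure}, and the $S_j$ are disjoint. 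Expanding the coherence as $N^{-1}\sum_{a\le N<b}\ket{v_a}\bra{v_b}$, each term factorizes across the $S_j$. For a single pair, the dephasing $\cD_{A_j}$ acts on $\ket{v_a^{S_j}}\bra{v_b^{S_j}}$; the key single-block lemma is that, for any fixed $A_j$ on $L$ qubits, with probability at least $p_L=1/(16\cdot 3^L)$ over the random product states, this output has trace norm at most a constant $\gamma<1$. We will prove this by an anticoncentration argument: $\bra{v_a}A\ket{v_a}$ and $\bra{v_b}A\ket{v_b}$ differ noticeably with probability $\gtrsim 3^{-L}$, which can be seen via the Pauli expansion of $A$ and Paley–Zygmund. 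Across the $m$ disjoint blocks these events are independent, so a Chernoff bound gives about $p_L m$ contracting blocks. The pair term is therefore suppressed by roughly $\gamma^{p_L m}$, up to the $\exp(-c_1 m e^{-C_0L})$ form.

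\textbf{Summing over pairs and uniformity.} The sum over $N^2$ pairs must then be controlled. A naive triangle inequality gives a bound of order $N\gamma^{p_Lm}$, so $N$ must stay small; this appears to be the role of $\ell_{L,m,R}$, which fixes the number of codewords and produces the $C_1\log(\ell+1)$ prefactor. To make the bound hold for all $U$ simultaneously, we take a union bound over the $R$ support patterns. For each pattern, we place an $\varepsilon$-net on the tuple of local Hermitian unitaries $A_j$ on $L$ qubits, of size $\exp(O(m4^L L))$. Perturbation of the dephasing output is Lipschitz in the $A_j$, which handles points off the net. This is why $\ell\gtrsim(\log(R+1)+m4^LL)/(p_Lm)$: the per-pattern failure probability must be $\exp(-\Omega(p_Lm\ell))$, so we aim to choose $\ell$ independent blocks of codewords, or repetitions, so that the concentration exponent scales with $p_Lm\ell$ and beats the net-plus-pattern entropy.

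\textbf{Main obstacle.} The hardest step is this uniform concentration. We need a single pair $(\eta_0,\eta_1)$, fixed before $U$ is chosen, whose cross terms contract for every circuit at once, and this requires a bound whose failure probability is exponentially smaller than the net size. We would first try a bounded-differences or matrix-martingale inequality over the independent codewords, applied to the trace norm of the dephased coherence. If that proves too lossy, the fallback is to use the Hilbert–Schmidt norm on each block together with Cauchy–Schwarz.
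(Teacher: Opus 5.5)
Your proposal has a genuine gap at the uniformity step. The parts that match the paper are sound: the reduction to the $m$ selected commuting dephasings, the factorization of each cross term $\ket{v_a}\bra{v_b}$ over the disjoint blocks, and the single-block anticoncentration plus Chernoff step for a \emph{fixed} tuple of local observables.

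\textbf{Where the plan fails.} You read $\ell_{L,m,R}$ as fixing the number of codewords, and you set yourself the goal of a code whose cross terms \emph{all} contract for every $U$ at once. That goal is unattainable. For any fixed pair $v_a,v_b$ and any block $S_j$, there is a balanced $A_j=2P-I$ with $\bra{v_a^{S_j}}P\ket{v_a^{S_j}}=\bra{v_b^{S_j}}P\ket{v_b^{S_j}}=\tfrac12$. (Take $P=\ketbra{w}{w}+P'$, with $w$ in the span of the two restrictions having both squared overlaps $1/2$, and $P'$ supported on the orthogonal complement of that span.) By the rank-one formula, that pair is then not contracted at all.

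Concretely, consider depth-one circuits: $L=1$, $m=n$, $R=1$, so $\ell_{L,m,R}=O(1)$. Choosing single-qubit observables this way keeps $\|\cD_U(\ket{v_a}\bra{v_b})\|_1=1$, while the other cross terms are exponentially suppressed with high probability. Some phase pair then has total-variation distance $\Omega(1/N)$. So $N$ must be exponentially large, as in the paper's $N\approx\exp(c_\star p_L\gamma_L m)$, yet small enough that $N\lambda_L^{\delta_L m}$ stays negligible. Any argument that needs every pair to be good fails.

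The tools you propose would not repair this. Changing one codeword moves $\ket{\eta_0}\bra{\eta_1}$ by only $O(N^{-1/2})$ in trace norm. Bounded-differences or martingale inequalities therefore give tails $e^{-O(t^2)}$, independent of $m$. That is far from the $e^{-\Omega(p_Lm\ell)}$ needed to beat the net.

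\textbf{The missing idea.} Tolerate a few adversarial bad pairs and bound their \emph{number}, not their existence. The paper takes exponentially many codewords and proves the following: for every support pattern and every net tuple $(\widetilde A_1,\dots,\widetilde A_m)$, fewer than $2\ell^2$ unordered pairs have fewer than $\delta_L m$ contracting blocks.

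The key combinatorial fact is that a graph with at least $2\ell^2$ edges contains either an $\ell$-edge matching or an $\ell$-leaf star. Bad events along a matching are independent. Along a star they are conditionally independent given the center, because the single-block lemma holds for every fixed $u$. Either configuration therefore has probability at most $e^{-c\ell p_L m}$. The union bound over $(N^{2\ell}+N^{\ell+1})RK_L^m$ configurations, with $\log N\le c_\star p_L m$ absorbed, succeeds once $\ell\gtrsim(\log(R+1)+m4^LL)/(p_Lm)$. That is where $\ell_{L,m,R}$ comes from.

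The trace norm is then at most $O(\ell^2/N)+N\lambda_L^{\delta_L m}$. The $C_1\log(\ell_{L,m,R}+1)$ term is the count of exceptional pairs divided by the exponentially large $N$.

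\textbf{A smaller inaccuracy.} The single-block contraction is $1-\Theta(3^{-L})$, not a constant $\gamma$. For $A=Z^{\otimes L}$, one has $\bra{v}A\ket{v}=\prod_q z_q$ with $z_q$ i.i.d.\ uniform on $[-1,1]$. A constant gap $c$ then has probability of order $(\log(1/c))^L/L!\ll 3^{-L}$. Paley--Zygmund only gives a gap of order $3^{-L/2}$ with probability $\gtrsim 3^{-L}$. This affects only the constant $C_0$.
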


	\subsection{Proof of Theorem~\ref{thm:support_family_hardness}}
	\label{subsec:unified_proof}
	We now prove Theorem~\ref{thm:support_family_hardness}. 
	The proof has three steps:
	\begin{enumerate}
		\item In Lemmas~\ref{lem:decomposition} and \ref{lem:rankone}, we rewrite a low-depth measurement as a sequence of commuting dephasing channels. 
		\item In Lemma~\ref{lem:local_random_product_contraction}, we show that a balanced dephasing acting on a small light-cone typically contracts the coherence between two random product states by a definite amount. 
		\item In Lemmas~\ref{lem:balanced_net} and \ref{lem:robust_product_code}, we construct a random product-state code for which this local contraction occurs simultaneously on many code states, uniformly over the whole circuit class. 
	\end{enumerate}
	
	We begin by showing that a computational-basis measurement after $U$ can be viewed as applying a sequence of commuting dephasing operations with the Heisenberg-evolved Pauli-$Z$ observables.
	\begin{lemma}[Measurement as successive dephasing]
		\label{lem:decomposition}
		Let $U$ be an $n$-qubit unitary. Define the computational-basis dephasing channel
		\begin{equation}
			\Delta(X):=\sum_{z\in\{0,1\}^n}\ketbra z X\ketbra z,
		\end{equation}
		and the corresponding measurement channel $\cD_U(X):=U^\dagger\Delta(UXU^\dagger)U$. Then for any density matrices $\rho$ and $\sigma$,
		\begin{equation}
			\label{eq:total_variational_identity}
			\TVD{p_\rho^U}{p_\sigma^U}
			=\frac12\|\cD_U(\rho-\sigma)\|_1.
		\end{equation}
		Moreover, let $A_i(U):=U^\dagger Z_iU$ and
		\begin{equation}
			\cE_i^U(X):=\frac12\bigl(X+A_i(U)XA_i(U)\bigr).
		\end{equation}
		Then
		\begin{equation}
			\label{eq:dephasing_decomposition}
			\cD_U=\cE_1^U\circ\cdots\circ\cE_n^U,
		\end{equation}
		and each $\cE_i^U$ is trace-norm contractive.
	\end{lemma}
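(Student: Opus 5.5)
The plan is to verify the two claims of Lemma~\ref{lem:decomposition} separately: first the total-variation identity, then the factorization of $\cD_U$ into commuting dephasings.

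\emph{Total-variation identity.} Write $X=\rho-\sigma$ and $Y:=UXU^\dagger$. Then
\begin{equation}
\Delta(Y)=\sum_z \bigl(p_\rho^U(z)-p_\sigma^U(z)\bigr)\ketbra z ,
\end{equation}
which is a diagonal operator. Its trace norm is therefore $\sum_z|p_\rho^U(z)-p_\sigma^U(z)|$. Conjugating by the unitary $U^\dagger$ does not change the trace norm, so $\frac12\|\cD_U(X)\|_1=\TVD{p_\rho^U}{p_\sigma^U}$.

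\emph{Factorization.} I will work in the rotated frame. Define $\Delta_i(Y):=\frac12(Y+Z_iYZ_i)$. This channel kills exactly those matrix elements $\ket z\bra{z'}$ with $z_i\neq z'_i$, because $Z_i\ket z\bra{z'}Z_i=(-1)^{z_i+z'_i}\ket z\bra{z'}$. The maps $\Delta_i$ are diagonal in the basis $\{\ket z\bra{z'}\}$, so they commute with one another. Their composition keeps $\ket z\bra{z'}$ if and only if $z=z'$, which gives
\begin{equation}
\Delta=\Delta_1\circ\cdots\circ\Delta_n .
\end{equation}
Next I conjugate by $U$, using the relation
\begin{equation}
U^\dagger\Delta_i(UXU^\dagger)U=\tfrac12\bigl(X+U^\dagger Z_iU\,X\,U^\dagger Z_iU\bigr)=\cE_i^U(X).
\end{equation}
Inserting $UU^\dagger$ between consecutive factors then yields $\cD_U=\cE_1^U\circ\cdots\circ\cE_n^U$, and the $\cE_i^U$ commute with one another.

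\emph{Contractivity.} Since $A_i(U)$ is a Hermitian unitary, the triangle inequality together with unitary invariance of the trace norm gives
\begin{equation}
\|\cE_i^U(X)\|_1\le \tfrac12\bigl(\|X\|_1+\|A_iXA_i\|_1\bigr)=\|X\|_1 .
\end{equation}

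There is no real obstacle here. The only point needing care is the index-by-index bookkeeping showing that the product of the $\Delta_i$ reproduces the full dephasing $\Delta$.
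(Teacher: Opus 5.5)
Your proposal is correct and follows the same route as the paper. The total-variation identity comes from the trace norm of the diagonal operator $\Delta(U(\rho-\sigma)U^\dagger)$ and unitary invariance. The factorization $\Delta=\Delta_1\circ\cdots\circ\Delta_n$ into single-site dephasings is then conjugated by $U$, and contractivity holds because each $\cE_i^U$ is an average of two unitary conjugations. Your check on the matrix units $\ket z\bra{z'}$ just spells out the step the paper justifies by the commutativity of the $Z_i$.
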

	
	\begin{proof}
		Equation~\eqref{eq:total_variational_identity} is the standard identity between total variation distance and the trace norm after dephasing:
		\begin{equation}
			\TVD{p_\rho^U}{p_\sigma^U}
			=\frac12\sum_z\bigl|\bra zU(\rho-\sigma)U^\dagger\ket z\bigr|
			=\frac12\|\Delta(U(\rho-\sigma)U^\dagger)\|_1.
		\end{equation}
		Conjugating by $U$ gives~\eqref{eq:total_variational_identity}. For a single site $i$, the computational-basis dephasing is $\mathcal F_i(X)=\frac12(X+Z_iXZ_i)$. Since $Z_1,\ldots,Z_n$ commute, $\Delta=\mathcal F_1\circ\cdots\circ\mathcal F_n$. Conjugating by $U$ gives~\eqref{eq:dephasing_decomposition}. Finally, each $\cE_i^U$ is an average of two unitary conjugations and is therefore trace-norm contractive.
	\end{proof}
	
	The next lemma gives an exact expression for the contraction of a rank-one coherence term under a dephasing channel.
	\begin{lemma}[Rank-one dephasing]
		\label{lem:rankone}
		Let $A=P_+-P_-$ be a Hermitian unitary, where $P_\pm$ are the projectors onto its $\pm1$ eigenspaces. Let
		\begin{equation}
			\cE_A(X):=\frac12(X+AXA)=P_+XP_+ + P_-XP_-.
		\end{equation}
		Then, for any states $\ket u,\ket v$,
		\begin{equation}
			\|\cE_A(\ket u\bra v)\|_1
			=
			\|P_+\ket u\|\,\|P_+\ket v\|
			+
			\|P_-\ket u\|\,\|P_-\ket v\|.
		\end{equation}
	\end{lemma}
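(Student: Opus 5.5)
The plan is to compute $\cE_A(\ket u\bra v)$ explicitly and observe that it is a sum of two rank-one operators living in orthogonal blocks. From the definition, $\cE_A(\ket u\bra v)=P_+\ket u\bra v P_+ + P_-\ket u\bra v P_-$. For the equality $\frac12(X+AXA)=P_+XP_++P_-XP_-$, I would substitute $A=P_+-P_-$ and $\mathbb 1=P_++P_-$. The cross terms $P_\pm X P_\mp$ then cancel between $X$ and $AXA$.

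Next, set $\ket{u_\pm}=P_\pm\ket u$ and $\ket{v_\pm}=P_\pm\ket v$. Then
\begin{equation}
\cE_A(\ket u\bra v)=\ket{u_+}\bra{v_+}+\ket{u_-}\bra{v_-}.
\end{equation}
The first term has range and support in $\mathrm{ran}P_+$, and the second in $\mathrm{ran}P_-$. These subspaces are orthogonal and together span the whole space, so the operator is block diagonal with respect to this decomposition. Its trace norm is therefore the sum of the trace norms of the two blocks. I would justify this either from the singular value decomposition, which is the union of the blocks' SVDs, or by noting that $X^\dagger X$ is block diagonal, so $|X|$ is the direct sum of the blocks' absolute values.

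Finally, for a rank-one operator, $\|\ket a\bra b\|_1=\|a\|\,\|b\|$. This holds because $(\ket a\bra b)^\dagger\ket a\bra b=\|a\|^2\ket b\bra b$, whose square root $\|a\|\ket b\bra b/\|b\|$ has trace $\|a\|\|b\|$; the case where either vector vanishes gives $0$ trivially. Combining the three steps yields the claimed identity $\|P_+\ket u\|\|P_+\ket v\|+\|P_-\ket u\|\|P_-\ket v\|$.

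No step here is a real obstacle. The only point needing care is the additivity of the trace norm over orthogonal blocks, and that follows directly from the block-diagonal form of $X^\dagger X$.
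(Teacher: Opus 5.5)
Your proposal is correct and follows the paper's proof exactly: write $\cE_A(\ket u\bra v)=P_+\ket u\bra vP_+ + P_-\ket u\bra vP_-$, use the orthogonal block supports to get additivity of the trace norm, and evaluate each rank-one block as a product of vector norms. You fill in details the paper leaves implicit (the cancellation of the cross terms $P_\pm XP_\mp$, the block-diagonal form of $X^\dagger X$, and the formula $\|\ket a\bra b\|_1=\|a\|\,\|b\|$), and all of them check out.
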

	
	\begin{proof}
		We have
		\begin{equation}
			\cE_A(\ket u\bra v)=P_+\ket u\bra vP_+ + P_-\ket u\bra vP_-.
		\end{equation}
		The two summands have orthogonal supports, so the trace norm is additive. Each summand is rank one, giving the stated formula.
	\end{proof}
	
	We now show that, for any balanced local dephasing, the coherence between a fixed product state and a random product state is contracted with non-negligible probability.
	\begin{lemma}[Random product states contract every balanced dephasing]
		\label{lem:local_random_product_contraction}
		Let $A=P_+-P_-$ be a Hermitian unitary on $t$ qubits with
		\begin{equation}
			\operatorname{rank}(P_+)=\operatorname{rank}(P_-)=2^{t-1}.
		\end{equation}
		Let $\ket u$ be any fixed product state on these $t$ qubits, and let $\ket v=\bigotimes_{a=1}^t\ket{v_a}$ be a Haar-random product state, with the single-qubit states $\ket{v_a}$ independent. Then
		\begin{equation}
			\label{eq:local-random-contraction}
			\Pr_v\left[
			\left\|\cE_A(\ket u\bra v)\right\|_1
			\le
			1-\frac{1}{32\cdot 3^t}
			\right]
			\ge
			\frac{1}{16\cdot 3^t}.
		\end{equation}
	\end{lemma}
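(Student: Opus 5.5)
The plan is to reduce the trace norm to a two-dimensional inner product via Lemma~\ref{lem:rankone}, and then prove anticoncentration of the scalar $\bra v P_+\ket v$ from a second-moment computation over Haar-random product states.

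\textbf{Step 1: reduction to a scalar.} Write $a_\pm=\|P_\pm\ket u\|$ and $b_\pm=\|P_\pm\ket v\|$. Since $P_++P_-=I$, both $\mathbf a=(a_+,a_-)$ and $\mathbf b=(b_+,b_-)$ are unit vectors in $\bR^2$ with nonnegative entries. Lemma~\ref{lem:rankone} then gives
\begin{equation}
\|\cE_A(\ket u\bra v)\|_1=\mathbf a\cdot\mathbf b=1-\tfrac12\|\mathbf a-\mathbf b\|^2 .
\end{equation}
Set $c:=a_+^2=\bra u P_+\ket u$ and $X:=b_+^2=\bra vP_+\ket v$, and let $\Delta:=c-X$, so that $a_-^2-b_-^2=-\Delta$. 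Using $a_\pm-b_\pm=(a_\pm^2-b_\pm^2)/(a_\pm+b_\pm)$, we get
\begin{equation}
\|\mathbf a-\mathbf b\|^2=\Delta^2\Bigl[(a_++b_+)^{-2}+(a_-+b_-)^{-2}\Bigr].
\end{equation}
By $1/x+1/y\ge 4/(x+y)$ together with $(a_++b_+)^2+(a_-+b_-)^2\le 4$, the bracket is at least $1$. Hence $\|\cE_A(\ket u\bra v)\|_1\le 1-\Delta^2/2$. It therefore suffices to show that $Y:=(X-c)^2$ satisfies $Y\ge 1/(16\cdot 3^t)$ with probability at least $1/(16\cdot 3^t)$.

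\textbf{Step 2: second moment of $X$.} For a Haar-random single-qubit state, $\mathbb E\ketbra{v_a}=I/2$ and $\mathbb E\ketbra{v_a}^{\otimes2}=(I+F_a)/6$, where $F_a$ is the swap operator. Balancedness gives $\mathbb E X=\tr P_+/2^t=1/2$. For the second moment,
\begin{equation}
\mathbb E X^2=6^{-t}\sum_{S\subseteq[t]}\tr\bigl[(P_+\otimes P_+)F_S\bigr]=6^{-t}\sum_{S}\tr_S\bigl[(\tr_{S^c}P_+)^2\bigr].
\end{equation}
We bound each term using Cauchy--Schwarz, $\tr_S M^2\ge(\tr M)^2/2^{|S|}=4^{t-1}2^{-|S|}$. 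Summing over $S$ gives $6^{-t}\,4^{t-1}(3/2)^t=1/4$, which exactly cancels $(\mathbb EX)^2$. The gain comes from the $S=[t]$ term, which equals $\tr P_+^2=2^{t-1}$ exactly rather than the bound $2^{t-2}$. This surplus yields
\begin{equation}
\mathbb EX^2\ge \tfrac14+\tfrac14 3^{-t}, \qquad \text{hence}\qquad \operatorname{Var}X\ge 3^{-t}/4 .
\end{equation}
This step is the main obstacle: the crude bound only gives zero variance, so one must isolate the full-swap term, which is where balancedness and $P_+^2=P_+$ enter.

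\textbf{Step 3: anticoncentration.} For any fixed $c$, $\mathbb E Y=\operatorname{Var}X+(\tfrac12-c)^2\ge\mu:=3^{-t}/4$. Since $0\le X,c\le 1$, we have $0\le Y\le 1$. Therefore
\begin{equation}
\mu\le \mathbb EY\le \mu/2+\Pr[Y\ge\mu/2],
\end{equation}
so $\Pr[Y\ge 3^{-t}/8]\ge 3^{-t}/8$. On this event Step~1 gives $\|\cE_A(\ket u\bra v)\|_1\le 1-3^{-t}/16$. This is stronger than Eq.~\eqref{eq:local-random-contraction} in both the threshold and the probability, so the lemma follows.
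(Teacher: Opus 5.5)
Your proof is correct and follows the paper's three-step skeleton. You reduce via Lemma~\ref{lem:rankone} to $\|\cE_A(\ket u\bra v)\|_1\le 1-\frac12\bigl(\bra vP_+\ket v-\bra uP_+\ket u\bigr)^2$, deriving it by direct unit-vector algebra where the paper invokes the Hellinger--total-variation inequality; you then lower-bound the second moment by $3^{-t}/4$ and finish with a reverse-Markov (Paley--Zygmund-type) bound for a $[0,1]$-valued variable. The only real difference is the second-moment step: the paper expands $B=P-aI$ in Paulis with weights $3^{-\mathrm{wt}(Q)}\ge 3^{-t}$ and uses $\|B\|_2^2\ge 2^t/4$, which absorbs the bias $(1/2-a)^2$, whereas you use $\mathbb E(\ket v\bra v)^{\otimes 2}=6^{-t}\sum_S F_S$ with Cauchy--Schwarz on partial traces and the full-swap surplus to get $\operatorname{Var}X\ge 3^{-t}/4$. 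This is an equivalent calculation and gives slightly better constants.
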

	
	\begin{proof}
		Let $a_\pm:=\|P_\pm\ket u\|^2$, and choose $s\in\{+,-\}$ such that $a_s\le1/2$. Set $P:=P_s$ and $a:=a_s$. For a product state $\ket v$, define $\theta(v):=\bra vP\ket v$. By Lemma~\ref{lem:rankone},
		\begin{equation}
			\left\|\cE_A(\ket u\bra v)\right\|_1
			=\sqrt{a\theta(v)}+\sqrt{(1-a)(1-\theta(v))}.
		\end{equation}
		The right-hand side is the Bhattacharyya coefficient between the Bernoulli distributions $(a,1-a)$ and $(\theta(v),1-\theta(v))$. Since the squared Hellinger distance dominates half the squared total variation distance,
		\begin{equation}
			\label{eq:hellinger-lower}
			1-\left\|\cE_A(\ket u\bra v)\right\|_1
			\ge \frac12(\theta(v)-a)^2.
		\end{equation}
		Let $B:=P-aI$. Then $\theta(v)-a=\bra vB\ket v$. Put $D:=2^t$. Since $P$ has rank $D/2$,
		\begin{equation}
			\label{eq:B-hs-lower}
			\|B\|_2^2
			=\Tr\bigl((P-aI)^2\bigr)
			=D\left(a^2-a+\frac12\right)
			\ge \frac D4.
		\end{equation}
		Expand $B$ in the Pauli basis as
		\begin{equation}
			B=2^{-t}\sum_{Q\in\{I,X,Y,Z\}^{\otimes t}}\widehat B_Q Q,
			\qquad
			\widehat B_Q:=\Tr(BQ).
		\end{equation}
		For a Haar-random single-qubit state, non-identity Pauli expectations have mean zero and second moment $1/3$, and different Pauli directions are uncorrelated. Therefore
		\begin{equation}
			\begin{split}
				\mathbb E_v\bigl(\bra vB\ket v\bigr)^2
				&=4^{-t}\sum_Q \widehat B_Q^2\,3^{-\wt(Q)}  \\
				&\ge 4^{-t}3^{-t}\sum_Q\widehat B_Q^2
				=2^{-t}3^{-t}\|B\|_2^2
				\ge \frac{1}{4\cdot 3^t},
			\end{split}
		\end{equation}
		where the last inequality uses~\eqref{eq:B-hs-lower}. Set $\sigma_t^2:=1/(4\cdot3^t)$ and $\eta_t:=\sigma_t/2=1/(4\cdot3^{t/2})$. Since $|\bra vB\ket v|\le1$,
		\begin{equation}
			\Pr\bigl[|\bra vB\ket v|\ge\eta_t\bigr]
			\ge
			\frac{\sigma_t^2-\eta_t^2}{1-\eta_t^2}
			\ge
			\frac{1}{16\cdot3^t}.
		\end{equation}
		On this event,~\eqref{eq:hellinger-lower} gives
		\begin{equation}
			\left\|\cE_A(\ket u\bra v)\right\|_1
			\le 1-\frac12\eta_t^2
			=1-\frac{1}{32\cdot3^t}.
		\end{equation}
		This proves the lemma.
	\end{proof}
	
	To make our result uniform over all possible Heisenberg-evolved observables, we discretize this continuous set using an $\varepsilon$ net, a finite subset such that any observable is within an operator-norm distance $\varepsilon$ from at least one element in the net.
	
	\begin{lemma}[A net for balanced Hermitian unitaries]
		\label{lem:balanced_net}
		There is a universal constant $C_{\mathrm{net}}>0$ such that, for every $0<\varepsilon<1$, the set of balanced Hermitian unitaries on $L$ qubits admits an $\varepsilon$-net in operator norm of size at most
		\begin{equation}
			\label{eq:balanced-net-size}
			K_L(\varepsilon):=\exp\left(C_{\mathrm{net}}4^L\log\frac{C_{\mathrm{net}}}{\varepsilon}\right).
		\end{equation}
		
	\end{lemma}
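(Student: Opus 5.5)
The plan is to reduce the statement to a standard volume (covering-number) bound for the unitary group, and then restrict attention to the balanced Hermitian unitaries. Every balanced Hermitian unitary on $L$ qubits has the form $A = W Z_0 W^\dagger$. Here $D=2^L$, $Z_0 = \mathrm{diag}(\mathbb{1}_{D/2}, -\mathbb{1}_{D/2})$ is a fixed reference operator, and $W \in U(D)$ is arbitrary. This follows from the spectral theorem: a Hermitian unitary has eigenvalues $\pm1$, and balancedness fixes both multiplicities at $D/2$. The map $W \mapsto W Z_0 W^\dagger$ is Lipschitz in operator norm with constant $2$, because
\[
\|WZ_0W^\dagger - W'Z_0W'^\dagger\| \le \|(W-W')Z_0W^\dagger\| + \|W'Z_0(W-W')^\dagger\| \le 2\|W-W'\|.
\]
So an $(\varepsilon/2)$-net $\mathcal N$ of $U(D)$ in operator norm pushes forward to the $\varepsilon$-net $\{WZ_0W^\dagger : W \in \mathcal N\}$ of the balanced set. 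The net elements produced this way are themselves balanced Hermitian unitaries, which is convenient later in the paper.

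It remains to bound the size of an $(\varepsilon/2)$-net of $U(D)$. I would use the standard volumetric argument in the real vector space $M_D(\mathbb C) \cong \mathbb R^{2D^2}$, equipped with the operator norm. The group $U(D)$ lies in the operator-norm unit ball $B$. Take a maximal $(\varepsilon/2)$-separated subset of $U(D)$; by maximality it is an $(\varepsilon/2)$-net. The balls of radius $\varepsilon/4$ around its points are pairwise disjoint and all lie inside $(1+\varepsilon/4)B$. Comparing volumes gives
\[
|\mathcal N| \le \left(\frac{1+\varepsilon/4}{\varepsilon/4}\right)^{2D^2} \le \left(\frac{5}{\varepsilon}\right)^{2\cdot 4^L}
\]
for $\varepsilon<1$. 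This equals $\exp\bigl(2\cdot 4^L \log(5/\varepsilon)\bigr)$, so $C_{\mathrm{net}} = 5$ works, since $5 \ge 2$ and $\log(5/\varepsilon) > 0$.

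There is no serious obstacle here. The only point needing care is that the net is taken inside the balanced set rather than in the ambient space. Working with a maximal separated subset of $U(D)$ (rather than a net of the ball) and pushing it forward through the Lipschitz map handles this automatically. It also keeps the exponent at $\mathcal O(4^L)$, matching $\dim_{\mathbb R} M_D(\mathbb C) = 2\cdot 4^L$.
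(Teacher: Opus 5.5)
Your proof is correct and follows the paper's argument: you write each balanced Hermitian unitary as $WZ_0W^\dagger$, use the factor-$2$ Lipschitz bound, and push forward an $(\varepsilon/2)$-net of $U(D)$. The only difference is that you derive the covering bound for $U(D)$ yourself, via a maximal separated set and volume comparison in $M_D(\mathbb C)\cong\mathbb R^{2D^2}$, which gives the explicit constant $C_{\mathrm{net}}=5$; the paper instead cites Szarek's bound $(C/\eta)^{CD^2}$.
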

	\begin{proof}
		Let $D:=2^L$. 
		Every balanced Hermitian unitary on $L$ qubits can be written as $A=VZ_1V^{\dagger}$ for some $V\in U(D)$, where $Z_1$ is the Pauli $Z$ operator on the first qubit. By the standard volumetric covering bound for the unitary group,
		$U(D)$ admits an $\eta$-net in operator norm of cardinality at most $\left(C/\eta\right)^{C D^2}$ for a universal constant $C>0$ (see, e.g., \cite[Theorem~7]{Szarek1998MetricEntropy}). If $\|V-W\|_\infty\le\eta$, then $\|VZ_1V^\dagger-WZ_1W^\dagger\|_\infty \le 2\eta$.
		Thus, taking $\eta=\varepsilon/2$, we obtain an $\varepsilon$-net for balanced Hermitian
		unitaries on $L$ qubits of size at most
		\begin{equation}
			\left(\frac{2C}{\varepsilon}\right)^{C4^L}
			\le
			\exp\left(C_{\mathrm{net}}4^L\log\frac{C_{\mathrm{net}}}{\varepsilon}\right),
		\end{equation}
		after enlarging the universal constant $C_{\mathrm{net}}$.
	\end{proof}

	We now combine the local contraction statement with the net argument and a probabilistic code construction. 
	The goal is to build many product states such that, for every admissible support family and every local dephasing on those supports, almost all pairs are contracted on a positive fraction of the blocks.
	
	\begin{lemma}[Robust random product code for a finite support family]
		\label{lem:robust_product_code}
		There exist universal constants $c_\star,C_\star,c_{\mathrm{ov}}>0$ such that the following holds. Let $L,m,\mathfrak F$ be as in Definition~\ref{def:support_family_structure}, and let $R:=|\mathfrak F|$. Define
		\begin{equation}
			p_L:=\frac{1}{16\cdot3^L},
			\qquad
			\gamma_L:=\frac{1}{32\cdot3^L},
			\qquad
			\delta_L:=\frac{p_L}{4},
			\qquad
			\lambda_L:=1-\frac{\gamma_L}{2},
		\end{equation}
		\begin{equation}
			\varepsilon_L:=\frac{\gamma_L}{4},
			\qquad
			K_L:=K_L(\varepsilon_L),
		\end{equation}
		where $K_L(\varepsilon_L)$ is the net size from Lemma~\ref{lem:balanced_net}. Let
		\begin{equation}
			\label{eq:ell-general}
			\ell:=\left\lceil
			C_\star\frac{\log(R+1)+m\log K_L+1}{p_Lm}
			\right\rceil,
			\qquad
			r:=2\ell^2.
		\end{equation}
		Then, for sufficiently large $m$, there exist product states
		\begin{equation}
			\ket{\Phi_s}=\bigotimes_{u=1}^n\ket{\phi_{s,u}},
			\qquad s=1,\dots,N,
		\end{equation}
		with
		\begin{equation}
			\label{eq:N-general}
			N:=2\left\lfloor\frac12\exp(c_\star p_L\gamma_Lm)\right\rfloor
		\end{equation}
		such that the following properties hold.
		\begin{enumerate}[label=\textup{(\alph*)}]
			\item For all distinct $s,t\in[N]$,
			\begin{equation}
				\label{eq:robust-code-overlap}
				|\langle\Phi_s|\Phi_t\rangle|\le\exp(-c_{\mathrm{ov}}n).
			\end{equation}
			\item For every $(S_1,\dots,S_m)\in\mathfrak F$ and every choice of balanced Hermitian unitaries $A_j$ acting on $\cH_2^{\otimes S_j}$, all but at most $r-1$ unordered pairs $\{s,t\}$ satisfy
			\begin{equation}
				\label{eq:robust-code-good}
				\left|
				\left\{
				j\in[m]:
				\left\|
				\cE_{A_j}\left(\ket{\Phi_s^{S_j}}\bra{\Phi_t^{S_j}}\right)
				\right\|_1
				\le \lambda_L
				\right\}
				\right|
				\ge \delta_Lm,
			\end{equation}
			where $\ket{\Phi_s^{S_j}}:=\bigotimes_{u\in S_j}\ket{\phi_{s,u}}$ is the restriction of $\ket{\Phi_s}$ to $S_j$.
		\end{enumerate}
	\end{lemma}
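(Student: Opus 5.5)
We use the probabilistic method. Sample $N$ product states $\ket{\Phi_s}=\bigotimes_u\ket{\phi_{s,u}}$ with all single-qubit factors i.i.d.\ Haar-random, and show that (a) and (b) each fail with probability less than $1/2$.

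\emph{Property (a).} Here $|\langle\Phi_s|\Phi_t\rangle|^2=\prod_u|\langle\phi_{s,u}|\phi_{t,u}\rangle|^2$ is a product of $n$ i.i.d.\ Beta-distributed variables with mean $1/2$. Using Markov's inequality on the product of square roots, $\Pr\bigl[|\langle\Phi_s|\Phi_t\rangle|>e^{-c_{\rm ov}n}\bigr]\le (2/3)^n e^{c_{\rm ov}n}$, because $\mathbb E|\langle\phi|\phi'\rangle|=2/3$. The union bound over $N^2$ pairs costs a factor $\exp(2c_\star p_L\gamma_L m)$. Since $m\le n$ and $p_L\gamma_L\le 1/512$, this is negligible against $e^{-\Omega(n)}$ once $c_{\rm ov}$ and $c_\star$ are small.

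\emph{Property (b), fixed data.} First fix a support family $\mathcal S\in\mathfrak F$ and net elements $A_j$ from the $\varepsilon_L$-net of Lemma~\ref{lem:balanced_net} on each $S_j$.

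We need to pass from the net to arbitrary unitaries. If $\|A-A'\|_\infty\le\varepsilon_L$, then $\|\cE_A(X)-\cE_{A'}(X)\|_1\le\varepsilon_L\|X\|_1$ for a rank-one $X$ of unit trace norm. Hence a contraction to $1-\gamma_L/2-\varepsilon_L$ is needed at net points in order to get $\lambda_L$ at nearby $A$. This is slightly stronger than what Lemma~\ref{lem:local_random_product_contraction} directly gives ($1-\gamma_L$), so it is fine since $\gamma_L\ge\gamma_L/2+\varepsilon_L$.

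Lemma~\ref{lem:local_random_product_contraction} is applied with $\ket u=\ket{\Phi_s^{S_j}}$ fixed and $\ket v=\ket{\Phi_t^{S_j}}$ random. Because $|S_j|\le L$ and $p_L$ is monotone in $L$, it gives contraction on block $j$ with probability at least $p_L$. The blocks $S_j$ are disjoint, so these events are independent across $j$ for a fixed pair $(s,t)$, conditioning on $\Phi_s$.

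Call a pair $\{s,t\}$ bad if fewer than $\delta_L m=p_Lm/4$ blocks contract. By Chernoff, a fixed pair is bad with probability at most $e^{-c p_L m}$.

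The main obstacle is that bad events for different pairs are dependent, since pairs share code states. We need the probability that at least $r=2\ell^2$ pairs are bad. I will extract a large matching. Among any $r$ bad pairs, the graph on $[N]$ either contains a matching of size $\ell$ or has a vertex cover of size less than $2\ell$; in the latter case some vertex has degree at least $r/(2\ell)=\ell$, which gives a star of $\ell$ bad pairs. Both configurations have independent bad events:
\begin{itemize}
\item A matching involves disjoint state pairs, so its events are independent.
\item A star $\{s,t_1\},\dots,\{s,t_\ell\}$ has events that are conditionally independent given $\Phi_s$. The bound $e^{-cp_Lm}$ holds for every fixed $\Phi_s$, since Lemma~\ref{lem:local_random_product_contraction} holds for arbitrary fixed $\ket u$.
\end{itemize}
Hence
\[
\Pr[\ge r \text{ bad}]\le 2N^{2\ell}e^{-c p_L m\ell}.
\]
With $\log N\le c_\star p_L\gamma_L m$ and $c_\star\ll c$, this is at most $e^{-c p_L m\ell/2}$.

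\emph{Union bound.} We union-bound over $R$ families and $K_L^m$ net tuples. The choice of $\ell$ makes $c p_L m\ell/2\ge\log(R+1)+m\log K_L+1$ with room to spare (taking $C_\star$ large), so the total failure probability is below $1/2$.

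Finally, property (b) for an arbitrary balanced $A_j$ follows from its net approximant by the continuity estimate above. Large $m$ is used only to absorb constants and to ensure $N\ge2$.
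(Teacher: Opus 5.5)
Your proposal is correct and follows essentially the same route as the paper: i.i.d.\ Haar-random product states, Markov plus a union bound over pairs for (a), a per-pair Chernoff bound from the local contraction lemma (conditioning on one codeword and using independence across the disjoint blocks), the ``matching of size $\ell$ or star with $\ell$ leaves'' dichotomy to handle dependence among pairs, a union bound over the $R$ families and $K_L^m$ net tuples, and the $\varepsilon_L$-continuity step. The differences are cosmetic: you apply Markov to the first moment $\mathbb E|\langle\phi|\phi'\rangle|=2/3$ rather than the second moment $2^{-n}$. There is also one wording slip: the net-point threshold $1-\gamma_L/2-\varepsilon_L=1-3\gamma_L/4$ is \emph{weaker} than the $1-\gamma_L$ the lemma supplies, not stronger, as your own inequality $\gamma_L\ge\gamma_L/2+\varepsilon_L$ shows.
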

	
	\begin{proof}
		Sample the states $\ket{\Phi_1},\dots,\ket{\Phi_N}$ independently, each as a Haar-random product state over all $n$ qubits.
		
		For fixed $s\ne t$,
		\begin{equation}
			\mathbb E|\langle\Phi_s|\Phi_t\rangle|^2=2^{-n}.
		\end{equation}
		By Markov's inequality, for a sufficiently small universal $c_{\mathrm{ov}}>0$,
		\begin{equation}
			\Pr\left[|\langle\Phi_s|\Phi_t\rangle|>\exp(-c_{\mathrm{ov}}n)\right]
			\le \exp(-c n)
		\end{equation}
		for a universal $c>0$. Since $m\le n$ and $\log N\le c_\star m$, choosing $c_\star$ sufficiently small allows a union bound over all pairs, proving~\eqref{eq:robust-code-overlap} with probability at least $1-\exp(-cn)$ after adjusting the constant $c$.
		
		We next prove property~\textup{(b)}. Fix a support family $(S_1,\dots,S_m)\in\mathfrak F$, and first assume that each local Hermitian unitary $\widetilde A_j$ belongs to the $\varepsilon_L$-net from Lemma~\ref{lem:balanced_net} for the corresponding support size $|S_j|$. For an unordered pair $\{s,t\}$, define
		\begin{equation}
			H(s,t):=
			\left|
			\left\{
			j\in[m]:
			\left\|
			\cE_{\widetilde A_j}\left(\ket{\Phi_s^{S_j}}\bra{\Phi_t^{S_j}}\right)
			\right\|_1
			\le 1-\gamma_L
			\right\}
			\right|.
		\end{equation}
		Condition on $\Phi_s$. For each $j$, Lemma~\ref{lem:local_random_product_contraction}, applied with $t=|S_j|\le L$, $u=\Phi_s^{S_j}$, $v=\Phi_t^{S_j}$, and $A=\widetilde A_j$, implies
		\begin{equation}
			\Pr\left[
			\left\|
			\cE_{\widetilde A_j}\left(\ket{\Phi_s^{S_j}}\bra{\Phi_t^{S_j}}\right)
			\right\|_1
			\le 1-\gamma_L
			\;\middle|\;\Phi_s
			\right]
			\ge p_L.
		\end{equation}
		Because $S_1,\dots,S_m$ are pairwise disjoint, these events are conditionally independent over $j$. Thus $H(s,t)$ stochastically dominates $\Bin(m,p_L)$, and Chernoff's inequality gives
		\begin{equation}
			\label{eq:fixed-pair-bad-general}
			\Pr[H(s,t)<\delta_Lm]\le\exp(-c p_Lm)
		\end{equation}
		for a universal constant $c>0$.
		
		For the fixed support family and fixed net, define the bad-pair graph on the vertex set $[N]$ by connecting $\{s,t\}$ if $H(s,t)<\delta_Lm$. We use the elementary fact that every graph with at least $2\ell^2$ edges contains either a matching of size $\ell$ or a star with $\ell$ leaves. Indeed, if a graph has no such matching and no such star, then a maximal matching has fewer than $\ell$ edges, and every vertex has degree at most $\ell-1$; every edge meets the maximal matching, so the graph has fewer than $2\ell^2$ edges.
		
		For a fixed $\ell$-matching in the bad-pair graph, the corresponding bad events involve disjoint pairs of codewords and are independent. Hence, by~\eqref{eq:fixed-pair-bad-general}, the probability that all its edges are bad is at most $\exp(-c\ell p_Lm)$. For a fixed $\ell$-star, condition on the center codeword; the leaf codewords are then independent, and the same bound gives probability at most $\exp(-c\ell p_Lm)$.
		
		There are at most $R$ choices of the support family, at most $K_L^m$ choices of the nets for each support family, at most $N^{2\ell}$ possible $\ell$-matchings, and at most $N^{\ell+1}$ possible $\ell$-stars. Thus, the total failure probability for all support families and all nets is at most
		\begin{equation}
			\label{eq:union-robust-code}
			R K_L^m N^{2\ell}e^{-c\ell p_Lm}
			+
			R K_L^m N^{\ell+1}e^{-c\ell p_Lm}.
		\end{equation}
		Since $\log N\le c_\star p_L\gamma_Lm\le c_\star p_Lm$, choose $c_\star>0$ small enough so that the powers of $N$ are absorbed into the negative exponential. Then choose $C_\star$ large enough in~\eqref{eq:ell-general} so that~\eqref{eq:union-robust-code} is strictly smaller than $1$ for all sufficiently large $m$.
		
		It remains to pass from fixed net to arbitrary balanced Hermitian unitaries. Let $A_j$ be arbitrary and choose a net point $\widetilde A_j$ with $\|A_j-\widetilde A_j\|_\infty\le\varepsilon_L$. For any rank-one operator $X=\ket x\bra y$,
		\begin{equation}
			\begin{split}
				\|\cE_{A_j}(X)-\cE_{\widetilde A_j}(X)\|_1
				&=\frac12\|A_jXA_j-\widetilde A_jX\widetilde A_j\|_1 \\
				&\le \frac12\|(A_j-\widetilde A_j)XA_j\|_1
				+\frac12\|\widetilde A_jX(A_j-\widetilde A_j)\|_1 \\
				&\le \|A_j-\widetilde A_j\|_\infty
				\le \varepsilon_L.
			\end{split}
		\end{equation}
		Hence a block with $\|\cE_{\tilde{A}_j}\left(\ket{\Phi_s^{S_j}}\bra{\Phi_t^{S_j}}\right)\|_1 \le 1-\gamma_L$ satisfies 
		\begin{equation}
			\|\cE_{A_j}\left(\ket{\Phi_s^{S_j}}\bra{\Phi_t^{S_j}}\right)\|_1 \le 1-\gamma_L+\varepsilon_L
			=1-\frac{3\gamma_L}{4}
			\le 1-\frac{\gamma_L}{2}
			=\lambda_L.
		\end{equation}
		Therefore, the robust property for all nets implies~\eqref{eq:robust-code-good} for all balanced Hermitian unitaries. Together with the overlap property, this completes the proof.
	\end{proof}
	
	Having constructed a robust random product code whose pairwise coherence is contracted by low-depth measurements, we now use it to build phase-hiding states. The key idea is to form coherent superpositions of these random code states and show that the relative phases remain hidden from low-depth measurements, precisely because the corresponding coherence terms are strongly contracted under the measurement.
	
	\begin{proof}[Proof of Theorem~\ref{thm:support_family_hardness}]
		Apply Lemma~\ref{lem:robust_product_code} to $\mathfrak F$. Let $N$ and $\ket{\Phi_1},\dots,\ket{\Phi_N}$ be the resulting product states. Their Gram matrix $G_{st}:=\langle\Phi_s|\Phi_t\rangle$ satisfies
		\begin{equation}\label{eq:Gram_bound}
			\|G-I\|_\infty
			\le (N-1)e^{-c_{\mathrm{ov}}n}
			\le e^{-\beta m}
		\end{equation}
		for a universal $\beta>0$, after choosing $c_\star$ small enough, since $m\le n$.
		
		Split $[N]$ into two halves $A:=\{1,\dots,N/2\}$ and $B:=\{N/2+1,\dots,N\}$. Define
		\begin{equation}
			\ket{w_+}:=\sqrt{\frac2N}\sum_{s\in A}\ket{\Phi_s},
			\qquad
			\ket{w_-}:=\sqrt{\frac2N}\sum_{s\in B}\ket{\Phi_s}.
		\end{equation}
		Let $W:\cH_2\to\cH_2^{\otimes n}$ be given by $W\ket{e_0}=\ket{w_+}$ and $W\ket{e_1}=\ket{w_-}$, and let $M:=W^\dagger W$. The bound \eqref{eq:Gram_bound} implies $\|M-I_2\|_\infty\le e^{-\beta m}$ after decreasing $\beta$ if necessary. For sufficiently large $m$, $M^{-1/2}$ exists and $\|M^{-1/2}-I_2\|_\infty\le C e^{-\beta m}$.
		
		Set
		\begin{equation}
			\ket{\eta_0}:=WM^{-1/2}\ket{e_0},
			\qquad
			\ket{\eta_1}:=WM^{-1/2}\ket{e_1}.
		\end{equation}
		These states are orthonormal. For $\theta \in \bR$, let
		\begin{equation}
			\ket{v_\theta}:=\frac1{\sqrt2}(\ket{e_0}+e^{i\theta}\ket{e_1}),
			\qquad
			\ket{\psi_\theta}:=WM^{-1/2}\ket{v_\theta},
		\end{equation}
		and define the unorthogonalized phase state
		\begin{equation}
			\ket{w_\theta^\circ}:=W\ket{v_\theta}
			=\frac1{\sqrt N}\left(\sum_{s\in A}\ket{\Phi_s}+e^{i\theta}\sum_{s\in B}\ket{\Phi_s}\right).
		\end{equation}
		Writing $\rho_\theta:=\ket{\psi_\theta}\bra{\psi_\theta}$ and $\rho_\theta^\circ:=\ket{w_\theta^\circ}\bra{w_\theta^\circ}$, we have, for sufficiently large $m$,
		\begin{equation}
			\|W\|_\infty^2=\|M\|_\infty\le\frac32,
			\qquad
			\|M^{-1/2}\|_\infty\le\sqrt2.
		\end{equation}
		Therefore
		\begin{equation}
			\begin{split}
				\|\rho_\theta-\rho_\theta^\circ\|_1
				&\le \|W\|_\infty^2\,
				\left\|
				M^{-1/2}\ket{v_\theta}\bra{v_\theta}M^{-1/2}
				-\ket{v_\theta}\bra{v_\theta}
				\right\|_1 \\
				&\le C e^{-\beta m},
			\end{split}
			\label{eq:orthogonalization-error-unified}
		\end{equation}
		after increasing $C$ if necessary. Also set
		\begin{equation}
			\rho_M:=\frac1N\sum_{s=1}^N\ket{\Phi_s}\bra{\Phi_s}.
		\end{equation}
		
		Fix $U\in\cU$. By Definition~\ref{def:support_family_structure}, choose $(S_1(U),\dots,S_m(U))\in\mathfrak F$ and output sites $i_1(U),\dots,i_m(U)$ such that $\supp(U^\dagger Z_{i_j(U)}U)\subseteq S_j(U)$. Let $A_j(U)$ be the restriction of $U^\dagger Z_{i_j(U)}U$ to $S_j(U)$. This restriction is a balanced Hermitian unitary.
		Define the selected dephasing channel
		\begin{equation}
			P_U:=\cE_{A_1(U)}\circ\cdots\circ\cE_{A_m(U)},
		\end{equation}
		where each local dephasing is understood as acting on the corresponding subset $S_j(U)$. Since the supports $S_j(U)$ are pairwise disjoint, $P_U$ factorizes over these supports. Also, the observables $U^\dagger Z_iU$ commute with one another, so the dephasing maps in Lemma~\ref{lem:decomposition} commute. Hence, after moving the selected dephasings together and omitting the remaining trace-norm contractive dephasings,
		\begin{equation}
			\label{eq:selected-dephasing-contracts}
			\|\cD_U(X)\|_1\le \|P_U(X)\|_1
		\end{equation}
		for every operator $X$.
		
		For $s\ne t$, set $X_{st}:=\ket{\Phi_s}\bra{\Phi_t}$. By Lemma~\ref{lem:robust_product_code}(b), applied to the support family associated with $U$ and to the local phasing $A_j(U)$, all but at most $r-1$ unordered pairs $\{s,t\}$ satisfy
		\begin{equation}
			\left|
			\left\{
			j\in[m]:
			\left\|\cE_{A_j(U)}\left(\ket{\Phi_s^{S_j(U)}}\bra{\Phi_t^{S_j(U)}}\right)\right\|_1
			\le \lambda_L
			\right\}
			\right|
			\ge \delta_Lm.
		\end{equation}
		For every such pair, tensor factorization gives
		\begin{equation}
			\label{eq:offdiag-unified}
			\|\cD_U(X_{st})\|_1
			\le \|P_U(X_{st})\|_1
			\le \lambda_L^{\delta_Lm}.
		\end{equation}
		
		Define coefficients
		\begin{equation}
			\label{eq:alpha_coef}
			\alpha_s^{(\theta)}:=
			\begin{cases}
				1, & s\in A,\\
				e^{i\theta}, & s\in B.
			\end{cases}
		\end{equation}
		Then
		\begin{equation}
			\rho_\theta^\circ-\rho_M
			=\frac1N\sum_{s\ne t}\alpha_s^{(\theta)}\overline{\alpha_t^{(\theta)}}X_{st}.
		\end{equation}
		Using~\eqref{eq:offdiag-unified}, and noting that at most $2(r-1)$ ordered pairs correspond to exceptional unordered pairs, we obtain
		\begin{equation}
			\label{eq:mixed-comparison-unified}
			\|\cD_U(\rho_\theta^\circ-\rho_M)\|_1
			\le
			\frac{2(r-1)}{N}+N\lambda_L^{\delta_Lm}.
		\end{equation}
		
		We now estimate the right-hand side. Since $p_L\gamma_L=(512\cdot9^L)^{-1}\ge \exp(-C_0L)$ for a universal $C_0>0$ and $N$ is given by~\eqref{eq:N-general}, while $-\log\lambda_L\ge\gamma_L/2$ and $\delta_L=p_L/4$, choosing $c_\star$ small enough yields
		\begin{equation}
			N\lambda_L^{\delta_Lm}\le\exp\bigl(-c m\exp(-C_0L)\bigr).
		\end{equation}
		Moreover $r=2\ell^2$, with $\ell$ bounded by $\ell_{L,m,R}$ after increasing constants in~\eqref{eq:ell-harder}. Hence
		\begin{equation}
			\frac{2(r-1)}{N}
			\le
			\exp\Bigl(C\log(\ell_{L,m,R}+1)-c m\exp(-C_0L)\Bigr).
		\end{equation}
		Combining this estimate with~\eqref{eq:orthogonalization-error-unified},~\eqref{eq:mixed-comparison-unified}, and trace-norm contractivity gives
		\begin{equation}
			\|\cD_U(\rho_\theta-\rho_M)\|_1
			\le
			\exp\Bigl(C\log(\ell_{L,m,R}+1)-c m\exp(-C_0L)\Bigr)
			+
			\exp(-c'm).
		\end{equation}
		Finally, Lemma~\ref{lem:decomposition} gives
		\begin{equation}
			\TVD{p_{\psi_\theta}^U}{p_{\rho_M}^U}
			=\frac12\|\cD_U(\rho_\theta-\rho_M)\|_1.
		\end{equation}
		Applying the triangle inequality to $\theta$ and $\theta'$ proves~\eqref{eq:support-family-bound}, uniformly in $U$.
	\end{proof}
	
	\subsection{Proof of Theorem~\ref{thm:phase_hiding_low_depth-formal}}
	\label{sec:application_unified_theorem}
	We now specialize Theorem~\ref{thm:support_family_hardness} to the two circuit architectures appearing in Theorem~\ref{thm:phase_hiding_low_depth-formal}. 
	We first consider $\delta$D circuits, where the relevant light-cone blocks can be chosen deterministically from the underlying geometry.
	
	\begin{lemma}[Packing disjoint light cones in $\delta$D grids]
		\label{lem:Ddim_packing}
		Fix $\delta\ge 1$. There exist constants $a_\delta,b_\delta>0$, depending only on $\delta$, such
		that the following holds. Let $G$ be a $\delta$D grid.
		Then there exist pairwise disjoint subsets $B_1,\dots,B_m\subseteq[n]$ and
		sites $i_1,\dots,i_m\in[n]$, all independent of $U$, satisfying
		\begin{equation}
			|B_j|\le a_\delta(d+1)^\delta,
			\qquad
			m\ge b_\delta\frac{n}{(d+1)^\delta},
		\end{equation}
		and $\supp\bigl(U^\dagger Z_{i_j}U\bigr)\subseteq B_j$ for every $U\in\cU_{n,d}^G$ and every $j\in[m]$.
	\end{lemma}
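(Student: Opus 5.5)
The plan is to use a deterministic light-cone argument on the grid. I will treat a $\delta$D grid as (a subset of) $\mathbb Z^\delta$ whose edges join sites at $\ell_\infty$-distance $1$ (for the standard nearest-neighbour grid, $\ell_1$-distance $1$; either way graph distance dominates $\ell_\infty$-distance). The first step is the standard backward light-cone bound. Fix $U=U^{(d)}\cdots U^{(1)}\in\cU_{n,d}^G$ and an output site $i$. Conjugate $Z_i$ layer by layer, starting from $U^{(d)}$. Each layer consists of gates supported on single vertices or on edges of $G$. Hence conjugation by one layer enlarges the support of an operator by at most graph distance $1$: gates not touching the current support cancel with their inverses, and gates touching it add at most neighbouring vertices. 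By induction, $\supp(U^\dagger Z_iU)\subseteq \mathrm{Ball}_G(i,d)$, the graph ball of radius $d$ around $i$. This ball is contained in the $\ell_\infty$-cube $Q(i,d)$ of side $2d+1$ centred at $i$, which holds for every $U$ simultaneously.

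Second, I choose the sites on a sublattice. Set $s:=2d+2$ and take as centres all grid sites $i$ whose coordinates are all congruent to $0$ modulo $s$. Put $B_j:=Q(i_j,d)\cap[n]$. Two distinct centres differ by at least $s=2d+2>2d$ in some coordinate, so their cubes are disjoint. Each block then satisfies $|B_j|\le(2d+1)^\delta\le 2^\delta(d+1)^\delta$, so $a_\delta=2^\delta$ works. None of these choices depends on $U$, and the support inclusion follows from the first step.

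Third, I count the centres. The translates of the cell $\{0,\dots,s-1\}^\delta$ by the sublattice $s\mathbb Z^\delta$ partition $\mathbb Z^\delta$, and each translate contains exactly one centre. A grid of $n$ sites, say a box with side lengths $n_1,\dots,n_\delta$, is covered by at most $\prod_k\lceil n_k/s\rceil$ such cells and contains at least $\prod_k\lfloor n_k/s\rfloor$ full cells. This gives $m\ge \prod_k\lfloor n_k/s\rfloor\gtrsim n/s^\delta = n/(2^\delta(d+1)^\delta)$ whenever every $n_k\ge 2s$, since then $\lfloor n_k/s\rfloor\ge n_k/(2s)$.

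The main obstacle is this counting step when the grid is very thin in some direction or $d$ is comparable to a side length. In that regime I would fall back on the trivial bound $m\ge1$: any single site with its cube gives one block. The conclusion $m\ge b_\delta n/(d+1)^\delta$ is then immediate as long as $n\le (d+1)^\delta/b_\delta$. If the architecture allows degenerate boxes, for example $n_1=1$ in dimension $\delta\ge2$, I would pick centres only along the nondegenerate directions. Blocks of size at most $(2d+1)^\delta$ still fit there, and the counts only improve, because the cube is truncated in the thin directions. In the regime relevant to the paper, $d=\cO((\log n)^{1/\delta})$ on a grid of side $\Theta(n^{1/\delta})$, every side greatly exceeds $s$, so the main bound applies with $b_\delta=4^{-\delta}$, say.
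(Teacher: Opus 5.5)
Your proof is correct, but it builds the packing differently from the paper. The paper takes a \emph{maximal} set of sites with pairwise graph distance $>2d$ and uses the radius-$d$ graph balls around them as the $B_j$. It gets the count from maximality: the radius-$2d$ balls around the chosen sites cover all $n$ sites, so $n\le m\,a'_\delta(2d+1)^\delta$. You instead fix an explicit periodic packing, with centres on $s\mathbb{Z}^\delta$ for $s=2d+2$ and $\ell_\infty$ cubes as blocks, and count lattice points directly.

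Your route gives explicit constants ($a_\delta=2^\delta$, $b_\delta=4^{-\delta}$) and a concrete $U$-independent construction. The price is that your count depends on the shape of the site set, which is why you need the case analysis for thin or small grids. Your fallback is also slightly loose: the bound $m\ge1$ only covers small $n$. The "nondegenerate directions" patch has to be applied to every side with $n_k<s$, not just $n_k=1$.

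The paper's covering argument is shape-agnostic. It works verbatim for any set of sites in $\mathbb{Z}^\delta$, with no cases. This matters later, when the same packing is applied to an arbitrary data subset $Q$ of a larger grid in the ancilla-assisted extension.

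The empty directions in your version come from fixing the residue class $0 \bmod s$. A cleaner fix is pigeonhole over the $s^\delta$ cosets $v+s\mathbb{Z}^\delta$. These partition the sites, so some coset contains at least $n/s^\delta=n/\bigl(2^\delta(d+1)^\delta\bigr)$ of them. Distinct points of one coset still differ by a nonzero multiple of $s$ in some coordinate, so their cubes remain disjoint. Taking that coset as the centre set gives $m\ge 2^{-\delta}n/(d+1)^\delta$ for every grid shape, and for any site subset. This recovers the paper's generality while keeping your explicit construction.
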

	\begin{proof}
		Choose a maximal set of grid sites $i_1,\dots,i_m$ whose pairwise graph
		distances are strictly larger than $2d$. Let $B_j$ be the graph ball of radius
		$d$ centered at $i_j$. Then the sets $B_1,\dots,B_m$ are pairwise disjoint.
		
		Since each circuit layer can enlarge support by at most one graph step, we have $  \supp\bigl(U^\dagger Z_{i_j}U\bigr)\subseteq B_j$ for every $U\in\cU_{n,d}^G$.
		Moreover, a radius-$d$ ball in a $\delta$D grid has
		size at most $a_\delta(d+1)^\delta$ for some constant $a_\delta>0$ depending only on $\delta$, so $|B_j|\le a_\delta(d+1)^\delta$.
		
		By maximality, the radius-$2d$ balls centered at $i_1,\dots,i_m$ cover all $n$
		sites. Since each such ball has size at most $a_\delta'(2d+1)^\delta$ for some constant
		$a_\delta'>0$ depending only on $\delta$, we obtain $n\le m\,a_\delta'(2d+1)^\delta$.
		After adjusting constants, this gives $m\ge b_\delta\frac{n}{(d+1)^\delta}$.
	\end{proof}
	
	Applying Theorem~\ref{thm:support_family_hardness} to these fixed blocks gives the phase hiding results for $\delta$D architectures.
	
	\begin{corollary}[Phase hiding in $\delta$D architectures]
		\label{cor:main_hardness}
		Fix $\delta\ge 1$. There exist constants $C_{0,\delta}$,$C_{1,\delta}$,$c_{1,\delta}$,$c_{2,\delta}>0$,
		depending only on $\delta$, such that the following holds. Let $G$ be a
		$\delta$D grid. Then, for sufficiently large
		$n/(d+1)^\delta$, there exist orthonormal states
		$\ket{\eta_0},\ket{\eta_1}\in\cH_2^{\otimes n}$ such that, defining
		\begin{equation}
			\ket{\psi_\theta}
			:=
			\frac{1}{\sqrt2}\bigl(\ket{\eta_0}+e^{i\theta}\ket{\eta_1}\bigr),
		\end{equation}
		then, for all $\theta,\theta'\in\bR$,
		\begin{equation}
			\label{eq:Ddim-main-bound}
			\sup_{U\in\cU_{n,d}^G}
			\TVD{p_{\psi_\theta}^U}{p_{\psi_{\theta'}}^U}
			\le
			\exp\Bigl(
			C_{1,\delta}(d+1)^\delta
			-
			c_{1,\delta}\frac{n}{(d+1)^\delta}
			\exp\bigl(-C_{0,\delta}(d+1)^\delta\bigr)
			\Bigr)
			+
			\exp\Bigl(
			-c_{2,\delta}\frac{n}{(d+1)^\delta}
			\Bigr).
		\end{equation}
	\end{corollary}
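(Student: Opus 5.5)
The plan is to instantiate Theorem~\ref{thm:support_family_hardness} with the deterministic light-cone packing of Lemma~\ref{lem:Ddim_packing}. Fix $d$ and let $B_1,\dots,B_m$ and $i_1,\dots,i_m$ be the disjoint blocks and output sites from Lemma~\ref{lem:Ddim_packing}. Take $\mathfrak F$ to consist of the single ordered family $(B_1,\dots,B_m)$, so $R=1$. Set $L:=a_\delta(d+1)^\delta$, and, discarding surplus blocks if necessary, $m:=\lceil b_\delta n/(d+1)^\delta\rceil$. The lemma shows that every $U\in\cU_{n,d}^G$ satisfies $\supp(U^\dagger Z_{i_j}U)\subseteq B_j$ with the same family and the same sites. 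Hence $\cU_{n,d}^G$ admits a support-family light-cone structure with parameters $(L,m,\mathfrak F)$ in the sense of Definition~\ref{def:support_family_structure}. The hypothesis ``$m$ sufficiently large'' in Theorem~\ref{thm:support_family_hardness} is exactly ``$n/(d+1)^\delta$ sufficiently large.''

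Next I will simplify the bound~\eqref{eq:support-family-bound}. With $R=1$,
\begin{equation}
\ell_{L,m,1}=\left\lceil C_1\,\frac{\log 2+m4^LL+1}{p_L m}\right\rceil\le C_1\bigl(16\cdot 3^L\bigr)\bigl(4^LL+2\bigr)+1\le e^{C L}
\end{equation}
for a universal $C$, using $m\ge1$. The bound is independent of $m$ because the net-entropy term $m4^LL$ scales linearly with $m$. Therefore $C_1\log(\ell_{L,m,R}+1)\le C' L = C'a_\delta(d+1)^\delta$. Substituting $L=a_\delta(d+1)^\delta$ and $m\ge b_\delta n/(d+1)^\delta$ into the remaining terms gives
\begin{equation}
c_1m\exp(-C_0L)\ge c_1b_\delta\,\frac{n}{(d+1)^\delta}\exp\bigl(-C_0a_\delta(d+1)^\delta\bigr).
\end{equation}
Similarly, $\exp(-c_2m)\le \exp\bigl(-c_2b_\delta\, n/(d+1)^\delta\bigr)$. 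Absorbing $a_\delta,b_\delta$ into the constants $C_{0,\delta},C_{1,\delta},c_{1,\delta},c_{2,\delta}$ yields~\eqref{eq:Ddim-main-bound}.

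The only step needing care is checking that $\log\ell$ contributes at most $O((d+1)^\delta)$ rather than a term growing with $n$. That holds precisely because $R=1$: the support family does not depend on $U$, so there is no union-bound cost over support patterns. Two smaller bookkeeping points remain. The blocks $B_j$ may have different sizes, each at most $L$; Theorem~\ref{thm:support_family_hardness} only requires $|S_j|\le L$, so this is fine. The restrictions of $U^\dagger Z_{i_j}U$ to $B_j$ must be balanced Hermitian unitaries; the proof of Theorem~\ref{thm:support_family_hardness} already handles this.
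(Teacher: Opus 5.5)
Your proposal is correct and follows the paper's proof essentially verbatim. You take the single fixed family from Lemma~\ref{lem:Ddim_packing}, so $R=1$ and $\ell_{L,m,1}\le e^{CL}$ independently of $m$, and then substitute $L\asymp(d+1)^\delta$, $m\gtrsim n/(d+1)^\delta$ into Theorem~\ref{thm:support_family_hardness}. Your minor bookkeeping choices (a non-integer $L$, and a ceiling rather than a floor for $m$) are harmless, since the lemma's integer block count already exceeds $\lceil b_\delta n/(d+1)^\delta\rceil$ and the theorem only uses $|S_j|\le L$.
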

	
	\begin{proof}
		Let
		\begin{equation}
			L_\delta:=\left\lceil a_\delta(d+1)^\delta\right\rceil,
			\qquad
			m_\delta:=\left\lfloor b_\delta\frac{n}{(d+1)^\delta}\right\rfloor.
		\end{equation}
		By Lemma~\ref{lem:Ddim_packing}, there exist pairwise disjoint blocks
		$B_1,\dots,B_m$ and sites $i_1,\dots,i_m$, independent of $U$, with
		$m\ge m_\delta$, $|B_j|\le L_\delta$, and $\supp\bigl(U^\dagger Z_{i_j}U\bigr)\subseteq B_j$ for every $U\in\cU_{n,d}^G$. 
		This shows that $\cU_{n,d}^G$ admits a support-family light-cone structure with parameters
		$(L_\delta,m_\delta,\mathfrak F)$, where
		\begin{equation}
			\mathfrak F=\{(B_1,\dots,B_{m_\delta})\}.
		\end{equation}
		Hence $R=1$, and
		\begin{equation}
			\ell_{L_\delta,m_\delta,1}
			\le
			\left\lceil C\frac{4^{L_\delta}L_\delta+1}{p_{L_\delta}}\right\rceil
			\le
			\exp(CL_\delta).
		\end{equation}
		Substituting this into Theorem~\ref{thm:support_family_hardness} and absorbing
		all constants depending only on $\delta$ proves~\eqref{eq:Ddim-main-bound}.
	\end{proof}
	
	We next apply the unified theorem to the all-to-all architecture. The key ingredient is the following greedy packing lemma.
	
	\begin{lemma}[Packing disjoint all-to-all light cones]
		\label{lem:a2a_packing}
		Let $U\in\cU_{n,d}^{\mathrm{a2a}}$, set $L:=2^d$, and define
		\begin{equation}
			S_i(U):=\supp\bigl(U^\dagger Z_iU\bigr),
			\qquad i\in[n].
		\end{equation}
		Then the following hold:
		\begin{enumerate}[label=\textup{(\alph*)}]
			\item $|S_i(U)|\le L$ for every $i\in[n]$;
			\item for every site $u\in[n]$, the number of indices $i\in[n]$ such that
			$u\in S_i(U)$ is at most $L$;
			\item consequently, there exist distinct indices $i_1,\dots,i_m$ with
			$m\ge\lfloor n/L^2\rfloor$ such that the supports
			$S_{i_1}(U),\dots,S_{i_m}(U)$ are pairwise disjoint.
		\end{enumerate}
	\end{lemma}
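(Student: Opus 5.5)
We track light cones layer by layer in the Heisenberg picture. Write $U=U^{(d)}\cdots U^{(1)}$ as in Eq.~\eqref{eq:depth-d-circuit}. For an output site $i$, define the backward light cone recursively: $\Lambda_i^{(d)}:=\{i\}$, and $\Lambda_i^{(t-1)}$ is the union of the supports of all gates of layer $t$ that intersect $\Lambda_i^{(t)}$, together with any sites of $\Lambda_i^{(t)}$ not touched by a gate. A standard induction shows that $U^{(t)\dagger}\cdots U^{(d)\dagger}Z_iU^{(d)}\cdots U^{(t)}$ is supported on $\Lambda_i^{(t-1)}$. The reason is that gates in layer $t$ not meeting the current support commute with the operator and cancel. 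Hence $S_i(U)\subseteq\Lambda_i^{(0)}$.

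For (a), note that in each layer the gates have disjoint supports and act on at most two qubits. Each site of $\Lambda_i^{(t)}$ therefore contributes at most two sites to $\Lambda_i^{(t-1)}$, so $|\Lambda_i^{(t-1)}|\le 2|\Lambda_i^{(t)}|$. Iterating $d$ times gives $|S_i(U)|\le 2^d=L$.

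For (b), run the same argument forward. Define the forward light cone $F_u$ of an input site $u$ by propagating through layers $1,\dots,d$: at each layer, replace the current set by the union of the supports of the gates meeting it. Again $|F_u|\le 2^d=L$. The key observation is that $u\in\Lambda_i^{(0)}$ implies $i\in F_u$. This holds because the backward and forward cone relations are the same relation: there is a chain of gates connecting $u$ at time $0$ to $i$ at time $d$, and it can be read in either direction. Since $u\in S_i(U)\subseteq\Lambda_i^{(0)}$, the number of indices $i$ with $u\in S_i(U)$ is at most $|F_u|\le L$. The step most worth checking is this symmetry of the chain relation. I would state it explicitly as ``$u\in\Lambda_i^{(0)}$ iff there exist sites $u=x_0,x_1,\dots,x_d=i$ with $x_{t-1},x_t$ equal or jointly acted on by a gate in layer $t$.'' Both cones are characterized by this condition, and it is visibly symmetric.

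For (c), build the disjoint family greedily. Maintain a set $\mathcal A\subseteq[n]$ of available indices, initially $[n]$. Pick any $i\in\mathcal A$, add it to the family, and remove from $\mathcal A$ every index $i'$ with $S_{i'}(U)\cap S_i(U)\neq\emptyset$. By (a) and (b), the number of removed indices, including $i$ itself, is at most $\sum_{u\in S_i(U)}|\{i':u\in S_{i'}(U)\}|\le L\cdot L=L^2$. Each step therefore removes at most $L^2$ indices, so the process runs for at least $\lceil n/L^2\rceil\ge\lfloor n/L^2\rfloor$ steps. By construction, the selected supports are pairwise disjoint.

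Supports $S_i(U)$ may be empty, for example if $U^\dagger Z_iU$ were trivial. This cannot happen, since $U^\dagger Z_iU\ne I$ and hence $S_i(U)\neq\emptyset$. Each $S_i(U)$ is therefore a nonempty set, as Definition~\ref{def:support_family_structure} requires, and it still counts at most $L^2$ removals.
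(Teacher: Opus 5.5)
Your proof is correct and follows the paper's own argument: backward light-cone doubling gives (a), the forward light-cone bound gives (b), and greedy deletion of at most $L^2$ intersecting supports per selection gives (c). Your explicit chain characterization supplies the cone-symmetry detail that the paper leaves implicit in (b). One small precision: $S_i(U)\neq\emptyset$ holds because $U^\dagger Z_iU$ is traceless and hence not a scalar multiple of the identity, which is stronger than merely $U^\dagger Z_iU\neq I$.
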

	
	\begin{proof}
		The first statement follows because the backward light cone of a single output qubit can at most double in size at each layer. The second statement is the corresponding forward light-cone bound: one input qubit can influence at most $2^d=L$ output qubits. 
		
		For the third statement, greedily select one remaining support $S_i(U)$ and delete all supports intersecting it. Each selected support has size at most $L$, and each input site belongs to at most $L$ supports, so each selection deletes at most $L^2$ supports. Starting from $n$ supports, this gives at least $\lfloor n/L^2\rfloor$ pairwise disjoint supports.
	\end{proof}
	
	\begin{corollary}[Phase hiding for all-to-all circuits]
		\label{thm:a2a_general}
		There exist universal constants $C_0,C_1,c_1,c_2>0$ such that the following
		holds. Let $L:=2^d$. Then, for sufficiently large $n/L^2$, there exist
		orthonormal states $\ket{\eta_0},\ket{\eta_1}\in\cH_2^{\otimes n}$
		such that, defining
		\begin{equation}
			\ket{\psi_\theta}
			:=
			\frac{1}{\sqrt2}\bigl(\ket{\eta_0}+e^{i\theta}\ket{\eta_1}\bigr),
		\end{equation}
		then, for all $\theta,\theta'\in\bR$,
		\begin{equation}
			\label{eq:a2a-main-bound}
			\sup_{U\in\cU_{n,d}^{\mathrm{a2a}}}
			\TVD{p_{\psi_\theta}^U}{p_{\psi_{\theta'}}^U}
			\le
			\exp\Bigl(
			C_1L+C_1\log\log(n+1)
			-c_1\frac{n}{L^2}\exp(-C_0L)
			\Bigr)
			+
			\exp\Bigl(-c_2\frac{n}{L^2}\Bigr).
		\end{equation}
	\end{corollary}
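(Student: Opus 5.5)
The plan is to apply Theorem~\ref{thm:support_family_hardness} to $\cU_{n,d}^{\mathrm{a2a}}$. First we show that this class admits a support-family light-cone structure with parameters $(L,m,\mathfrak F)$, where $L=2^d$ and $m=\lfloor n/L^2\rfloor$. We then bound $R=|\mathfrak F|$ and the resulting $\ell_{L,m,R}$.

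\textbf{Light-cone structure.} For each $U\in\cU_{n,d}^{\mathrm{a2a}}$, Lemma~\ref{lem:a2a_packing}(c) provides distinct output sites $i_1(U),\dots,i_m(U)$, with $m\ge\lfloor n/L^2\rfloor$, whose supports $S_{i_j(U)}(U)$ are pairwise disjoint. By part (a), each support has size at most $L$. If the greedy procedure returns more than $m$ sites, we keep the first $m$ and set $S_j(U):=S_{i_j(U)}(U)$. These sets are nonempty, since $U^\dagger Z_iU\neq \pm I$ has nontrivial support.

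We then take $\mathfrak F$ to be the set of all ordered $m$-tuples of pairwise disjoint nonempty subsets of $[n]$ of size at most $L$. Each such tuple can be encoded by a labeling map $[n]\to\{0,1,\dots,m\}$, where label $j$ marks membership in $S_j$ and $0$ marks "unused." Hence $R\le (m+1)^n\le (n+1)^n$, so $\log(R+1)\le C n\log(n+1)$. Definition~\ref{def:support_family_structure} is then satisfied.

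\textbf{Bounding $\ell_{L,m,R}$.} Substituting into \eqref{eq:ell-harder} with $p_L^{-1}=16\cdot 3^L$ gives
\[
\ell_{L,m,R}\le C\,3^L\Bigl(\frac{n\log(n+1)}{m}+4^LL+1\Bigr).
\]
Using $m\ge n/(2L^2)$, which holds when $n/L^2$ is large, the first term is at most $2L^2\log(n+1)$. Thus $\ell_{L,m,R}\le e^{CL}\log(n+1)$, and therefore
\[
\log(\ell_{L,m,R}+1)\le CL+\log\log(n+1)+C.
\]
Plugging this together with $m\ge n/(2L^2)$ into \eqref{eq:support-family-bound}, and absorbing the factor $2$ and the additive constants into $C_1,c_1,c_2$, yields \eqref{eq:a2a-main-bound}. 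The requirement "sufficiently large $m$" translates into "sufficiently large $n/L^2$."

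\textbf{Main obstacle.} The one delicate point is that the supports depend on $U$. A crude count of $\mathfrak F$ therefore produces a $\log R$ term of order $n\log n$, which must be divided by $m\sim n/L^2$ inside $\ell$. The argument only works because the $\log$ in front of $\ell$ turns the resulting $L^2\log n$ factor into an additive $\log\log n$. So the care needed is to make sure the dependence on $\log R$ enters only through $\log\ell$, and not through the $m\exp(-C_0L)$ exponent.
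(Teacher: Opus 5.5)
Your proposal is correct and follows the paper's proof essentially step for step. It uses the same family $\mathfrak F$ of all ordered $\lfloor n/L^2\rfloor$-tuples of pairwise disjoint sets of size at most $L$, with $R\le(n+1)^n$, invokes Lemma~\ref{lem:a2a_packing} to certify the support-family structure, and feeds the estimate $\log(\ell_{L,m,R}+1)\le CL+C\log\log(n+1)$ into Theorem~\ref{thm:support_family_hardness}. Your extra remarks on the nonemptiness of the supports and on absorbing additive constants via $L\ge1$ and $m\ge n/(2L^2)$ only make explicit what the paper leaves implicit.
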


	\begin{proof}
		Let $m_0:=\lfloor n/L^2\rfloor$. Let $\mathfrak F_{\mathrm{a2a}}$ be the collection of all ordered families $(S_1,\dots,S_{m_0})$ of pairwise disjoint nonempty subsets of $[n]$ with $|S_j|\le L$. Its cardinality satisfies
		\begin{equation}
			R:=|\mathfrak F_{\mathrm{a2a}}|
			\le (m_0+1)^n
			\le (n+1)^n,
		\end{equation}
		because each site is either unused or assigned to one of the $m_0$ ordered subsets.
		
		By Lemma~\ref{lem:a2a_packing}, every $U\in\cU_{n,d}^{\mathrm{a2a}}$ has at least $m_0$ pairwise disjoint supports $\supp(U^\dagger Z_iU)$, each of size at most $L$. Keeping any ordered subfamily of exactly $m_0$ such supports shows that $\cU_{n,d}^{\mathrm{a2a}}$ admits a support-family light-cone structure with parameters $(L,m_0,\mathfrak F_{\mathrm{a2a}})$.
		
		It remains to estimate $\ell_{L,m_0,R}$. Since $m_0\ge c n/L^2$ whenever $n/L^2$ is sufficiently large,
		\begin{equation}
			\frac{\log(R+1)}{m_0}
			\le C L^2\log(n+1).
		\end{equation}
		Using $p_L^{-1}=16\cdot3^L$, we obtain
		\begin{equation}
			\ell_{L,m_0,R}
			\le
			\exp(CL)\bigl(L^2\log(n+1)+4^LL+1\bigr),
		\end{equation}
		and therefore, for sufficiently large $n$,
		\begin{equation}
			\log(\ell_{L,m_0,R}+1)
			\le C L+C\log\log(n+1).
		\end{equation}
		Substituting this estimate into Theorem~\ref{thm:support_family_hardness} gives~\eqref{eq:a2a-main-bound}, after adjusting universal constants.
	\end{proof}
	
	These two corollaries give the ancilla-free part of
	Theorem~\ref{thm:phase_hiding_low_depth-formal}. 
	\begin{proof}[Proof of Theorem~\ref{thm:phase_hiding_low_depth-formal}]
		We prove the statement for ancilla-free measurements. The ancilla-assisted extension is handled in Sec.~\ref{app:clean-ancilla}.
		
		If $G$ is a $\delta$D grid, Corollary~\ref{cor:main_hardness} yields
		Eq.~\eqref{eq:Ddim-main-bound}. Choose $\alpha_\delta>0$ sufficiently small so that,
		for all sufficiently large $n$, every
		$d\le \alpha_\delta(\log n)^{1/\delta}$ satisfies $C_{0,\delta}(d+1)^\delta \le \frac14 \log n$. Then
		\begin{equation}
			\frac{n}{(d+1)^\delta}\exp\bigl(-C_{0,\delta}(d+1)^\delta\bigr)
			\ge
			c\frac{n^{3/4}}{\log n},
		\end{equation}
		while $(d+1)^\delta=\cO(\log n)$. Hence, the right-hand side of
		Eq.~\eqref{eq:Ddim-main-bound} is at most $\exp[-n^{\Omega(1)}]$.
		
		If $G$ is all-to-all, Corollary~\ref{thm:a2a_general} yields
		Eq.~\eqref{eq:a2a-main-bound}. Choose $\alpha_{\mathrm{a2a}}>0$
		sufficiently small so that, for all sufficiently large $n$, every
		$d\le \alpha_{\mathrm{a2a}}\log\log n$ satisfies $L=2^d\le (\log n)^{1/2}$.
		Then, for sufficiently large $n$, $e^{-C_0L}\ge n^{-1/4}$, $\frac{n}{L^2}\ge c\frac{n}{\log n}$, and therefore
		\begin{equation}
			\frac{n}{L^2}e^{-C_0L}
			\ge
			c\frac{n^{3/4}}{\log n}.
		\end{equation}
		Since also $L+\log\log(n+1)=o\!\left(\frac{n^{3/4}}{\log n}\right)$, the right-hand side of Eq.~\eqref{eq:a2a-main-bound} is at most
		$\exp[-n^{\Omega(1)}]$.
	\end{proof}
	
	\subsection{Extension to randomized measurements and adaptivity across copies}
	\label{app:randomized_adaptive}
	
	We now show that our results apply to low-depth measurements when classical randomization of the circuit and adaptive choices across independent copies are allowed.
	
	First, consider a randomized single-copy measurement.  If a classical seed $R$, independent of the input
	state, selects a depth-$d$ circuit $U_R$ according to a distribution $\mu$, then the joint outcome distribution is $P_\theta^{R,Z}(r,z)=\mu(r)\,p_{\psi_\theta}^{U_r}(z)$.
	Therefore, for any $\theta, \theta' \in \bR$,
	\begin{equation}
		\label{eq:randomized_measurement_tv}
		\begin{split}
			\TVD{P_\theta^{R,Z}}{P_{\theta'}^{R,Z}} &=
			\mathbb E_R\,
			\TVD{p_{\psi_\theta}^{U_R}}{p_{\psi_{\theta'}}^{U_R}}      \\
			&\le
			\sup_{U\in\cU_{n,d}^G}
			\TVD{p_{\psi_\theta}^{U}}{p_{\psi_{\theta'}}^{U}}
			\le
			\varepsilon_{n,d}^G .
		\end{split}
	\end{equation}
	Thus, classical randomization does not increase the distinguishing power of
	low-depth measurements.
	
	A similar argument extends to adaptive protocols across multiple copies.  Suppose the
	protocol receives \(N\) copies and, in round \(k\), chooses a depth-\(d\) circuit as an
	arbitrary function of the previous transcript \(Y_{<k}\).  The choice may also be randomized,
	with the random seed included in the transcript.  For each fixed history \(y_{<k}\), the
	conditional one-copy experiment is still a randomized depth-\(d\) measurement, so
	Eq.~\eqref{eq:randomized_measurement_tv} gives
	\begin{equation}
		\TVD{P_\theta^{Y_k|Y_{<k}=y_{<k}}}{P_{\theta'}^{Y_k|Y_{<k}=y_{<k}}} \le  \varepsilon_{n,d}^G.
	\end{equation}
	By the standard chain-rule bound for total variation distance,
	\begin{equation}
		\TVD{P_\theta^{Y_{1:N}}}{P_{\theta'}^{Y_{1:N}}} \le \sum_{k=1}^N \sup_{y_{<k}}
		\TVD{P_\theta^{Y_k|Y_{<k}=y_{<k}}}{P_{\theta'}^{Y_k|Y_{<k}=y_{<k}}}
		\le N\varepsilon_{n,d}^G .
		\label{eq:adaptive_copy_bound}
	\end{equation}
	Therefore, following a derivation analogous to that of Eq.~\eqref{eq:cfi_phase_hiding}, even allowing classical randomization and full adaptivity across rounds, the total CFI obtainable from $N$ copies is at most $\cO(N\varepsilon_{n,d}^G)$.  Similarly, distinguishing two orthogonal phase states with constant error probability requires $N=\Omega\left((\varepsilon_{n,d}^G)^{-1}\right)$ copies for any single-copy schemes.
	
	\subsection{Extension to ancilla-assisted measurements}
	\label{app:clean-ancilla}
	
	We now explain that our proof also extends to low-depth measurements whose
	effective unitary on data qubits is implemented using ancillas, without requiring the
	ancillas to be returned to their initial state. Let $Q=[n]$ be the data
	register and let $A$ be an arbitrary ancilla register initialized in a
	fixed state $\tau_A$. The final ancilla state is not constrained.
	
	We say that a physical unitary $V$ on $Q\sqcup A$ is an ancilla-assisted implementation of an $n$-qubit unitary $U$ on the data
	register if, for every data input state $\rho$,
	\begin{equation}
		\tr_A\!\left[
		V(\rho\otimes \tau_A)V^\dagger
		\right]
		=
		U\rho U^\dagger .
		\label{eq:dirty_ancilla_equivalence}
	\end{equation}
	For any data input state $\rho$, applying $V$ to \(\rho\otimes \tau_A\), discarding the final ancillas, and measuring the data qubits in the computational basis gives the same output distribution as applying $U$ to $\rho$ and measuring the data qubits:
	\begin{align}
		p_{\rho\otimes \tau_A}^V(z)
		&:=
		\tr\!\left[
		\bigl(\ket{z}\bra{z}_Q\otimes I_A\bigr)
		V(\rho\otimes \tau_A)V^\dagger
		\right]  \notag\\
		&=
		\bra{z}U\rho U^\dagger\ket{z}
		=
		p_U^\rho(z).
		\label{eq:dirty_ancilla_same_distribution}
	\end{align}
	Thus, it remains only to check that the effective Heisenberg observables on the data register have the same light-cone support properties as in the no-ancilla case.
	
	\begin{lemma}[Effective light cone for ancilla-assisted implementations]
		\label{lem:effective-clean-observable}
		Let \(V\) be an ancilla-assisted implementation of $U$. For every
		data output qubit $i\in Q$, define $ O_i\coloneqq V^\dagger(Z_i\otimes I_A)V$
		as an operator on $Q\sqcup A$, and define $P_i\coloneqq U^\dagger Z_iU$ as an operator on $Q$. Then $\supp_Q(P_i)  \subseteq  \operatorname{supp}(O_i)\cap Q $.
	\end{lemma}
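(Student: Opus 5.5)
The plan is to express $P_i$ as a partial expectation of $O_i$ over the ancilla and then check supports directly. I will show the identity
\begin{equation}
	P_i \;=\; \tr_A\!\left[(I_Q\otimes \tau_A)\,O_i\right],
\end{equation}
that is, $P_i$ is the Heisenberg-picture operator $O_i$ contracted against the initial ancilla state. To verify it, I use Eq.~\eqref{eq:dirty_ancilla_equivalence}. For every data state $\rho$,
\begin{equation}
	\tr\!\left[\rho\,U^\dagger Z_i U\right]
	=\tr\!\left[Z_i U\rho U^\dagger\right]
	=\tr\!\left[(Z_i\otimes I_A)V(\rho\otimes\tau_A)V^\dagger\right]
	=\tr\!\left[(\rho\otimes \tau_A)O_i\right].
\end{equation}
The right-hand side equals $\tr\!\left[\rho\,\tr_A[(I_Q\otimes\tau_A)O_i]\right]$. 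Density matrices span all operators on $Q$, so the two operators coincide.

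Next I compare supports. Let $T:=\operatorname{supp}(O_i)$, so $O_i=\widetilde O\otimes I_{(Q\sqcup A)\setminus T}$ for some operator $\widetilde O$ on $T$. Split the complement of $T$ into its data part $Q\setminus T$ and ancilla part $A\setminus T$. Taking the partial trace against $\tau_A$ leaves the data factor $I_{Q\setminus T}$ untouched. Hence
\begin{equation}
	P_i=\tr_A\!\left[(I_Q\otimes\tau_A)O_i\right]=X\otimes I_{Q\setminus T}
\end{equation}
for some operator $X$ on $Q\cap T$. This holds for an arbitrary, possibly entangled, $\tau_A$: the trace over $A$ only produces an operator on $Q\cap T$ tensored with identity on $Q\setminus T$. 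It follows that $\supp_Q(P_i)\subseteq T\cap Q$, which is the claim.

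The main point requiring care is the formal meaning of ``support''. I take $\supp(O)$ to be the minimal set $S$ such that $O=O_S\otimes I_{S^c}$. This minimal set is well defined because sets with this property are closed under intersection. With that convention, the tensor-factor bookkeeping in the previous step, especially when $\tau_A$ is entangled, goes through directly.

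This lemma is what the phase-hiding argument needs. In the ancilla-assisted model, $V$ is a depth-$d$ circuit on $G$, so $\operatorname{supp}(O_i)$ lies in the backward light cone of $i$ in the enlarged architecture. Intersecting with $Q$ preserves the packing bounds of Lemmas~\ref{lem:Ddim_packing} and \ref{lem:a2a_packing}. In the grid case, the radius-$d$ balls are intersected with $Q$. In the all-to-all case, the counting still gives at most $2^d$, and greedy packing applied to the sets $\supp_Q(P_i)$ still yields $\lfloor n/L^2\rfloor$ disjoint supports. Theorem~\ref{thm:support_family_hardness} then applies unchanged to the effective data unitaries $U$.
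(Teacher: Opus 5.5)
Your proof is correct and follows essentially the same route as the paper. Both derive $P_i=\tr_A[(I_Q\otimes\tau_A)O_i]$ from the implementation identity, then observe that the partial trace against $\tau_A$ leaves the identity factor on $Q\setminus\supp(O_i)$ untouched. Your extra remarks (density matrices spanning all operators on $Q$, and closedness under intersection making $\supp$ well defined) only make explicit what the paper leaves implicit.
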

	
	\begin{proof}
		For any data input state \(\rho\), Eq.~\eqref{eq:dirty_ancilla_equivalence}
		implies
		\begin{equation}
			\begin{split}
				\tr_Q(\rho P_i) &= \tr_Q(Z_iU\rho U^\dagger)  \notag\\
				&=  \tr_{QA}\!\left[
				(Z_i\otimes I_A)V(\rho\otimes \tau_A)V^\dagger
				\right]  \notag\\
				&=  \tr_{QA}\!\left[
				(\rho\otimes \tau_A)O_i
				\right].
			\end{split}
		\end{equation}
		Equivalently, 
		\begin{equation}\label{eq:equivalence_Pi_Oi}
			P_i =  \tr_A\left[(I_Q\otimes \tau_A)O_i \right].
		\end{equation}
		We can write 
		\begin{equation}
			O_i = O_{S_iT_i} \otimes I_{Q\setminus S_i} \otimes I_{A\setminus T_i},
		\end{equation}
		where $S_i:=\supp(O_i) \cap Q$, $T_i:=\supp(O_i) \cap A$.
		Substituting Eq.~\eqref{eq:equivalence_Pi_Oi},  
		\begin{equation}
			P_i = \tr_A\left[(O_{S_iT_i} \otimes I_{A\setminus T_i}) (I_{S_i} \otimes \tau_A)\right] \otimes I_{Q\setminus S_i}.
		\end{equation}
		Thus, $\supp_Q(P_i) \subseteq S_i$.
	\end{proof}
	
	We now discuss the consequences for the all-to-all architecture. Let
	$\cU^{\mathrm{a2a,anc}}_{n,d}$ denote the class of induced data
	unitaries \(U\) that admit a depth-$d$ ancilla-assisted implementation
	$V$, where $V$ is an all-to-all circuit on $Q\sqcup A$ whose layers
	consists of one- and two-qubit gates with disjoint supports. Set $L:=2^d$.
	For every data output qubit $i\in Q$, the physical backward light cone of $O_i=V^\dagger(Z_i\otimes I_A)V$ 
	has size at most $L$. By Lemma~\ref{lem:effective-clean-observable}, $\operatorname{supp}_Q(P_i) \subseteq \supp(O_i) \cap Q$ for $P = U^{\dagger}Z_i U$, and hence
	\begin{equation}
		\left|\operatorname{supp}_Q(P_i)\right|
		\le L.
	\end{equation}
	The corresponding forward light-cone bound also remains true: for every
	data input qubit $u\in Q$, the number of data output qubits $i\in Q$
	such that
	\begin{equation}
		u\in \supp_Q(P_i)
	\end{equation}
	is at most \(L\), because such an \(i\) must lie in the physical forward
	light cone of \(u\) under \(V\). Therefore, the greedy packing argument in
	Lemma~\ref{lem:a2a_packing} applies verbatim and produces at least $\left\lfloor \frac{n}{L^2}\right\rfloor$ pairwise disjoint data supports $\operatorname{supp}_Q(U^\dagger Z_iU)$.
	Thus, the family $\cU^{\mathrm{a2a,anc}}_{n,d}$ admits the same
	support-family light-cone structure as in the no-ancilla all-to-all case.
	Consequently, Theorem~\ref{thm:phase_hiding_low_depth-formal} holds with the
	supremum over $U\in\cU_{n,d}^{\mathrm{a2a}}$ extended to $\cU^{\mathrm{a2a,anc}}_{n,d}$.
	
	The same argument also applies to fixed light-cone architectures. 
	As the data qubits $Q$ are embedded in a $\delta$D grid $G$, the light-cone $P_i$ for a data qubit $i\in Q$ is confined to a radius-$d$ ball centered at site $i$. This ball contains at most $a_{\delta}(d+1)^{\delta}$ data qubits, and each radius-$2d$ ball has a size of at most $a'_{\delta}(d+1)^{\delta}$. Consequently, the class $\cU^{G,\mathrm{anc}}_{n,d}$ satisfies the same light-cone condition as in Lemma~\ref{lem:Ddim_packing}, and Theorem~\ref{thm:phase_hiding_low_depth-formal} applies without modification.
	
	\section{Preliminaries on random unitaries}
	This appendix introduces the basics of random unitary integrals. 
	Let $\cU=\{q_a,U_a\}_{a\in\mathsf A}$ be an ensemble of unitaries on a $D$-dimensional Hilbert space.  Its $k$-th unitary moment channel is
	\begin{equation}
		\cT_{\cU}^{(k)}(X)
		:=
		\sum_{a\in\mathsf A}q_a\,U_a^{\otimes k}X(U_a^{\dagger})^{\otimes k} .
	\end{equation}
	The Haar moment channel is denoted by $\cT_{\mathrm{Haar}}^{(k)}$.
	
	\begin{definition}[Multiplicative-error unitary design]
		\label{def:app_relative_design}
		The ensemble $\cU$ is a multiplicative-$\epsilon$ approximate unitary $k$-design if
		\begin{equation}
			(1-\epsilon)\cT_{\mathrm{Haar}}^{(k)}
			\preceq
			\cT_{\cU}^{(k)}
			\preceq
			(1+\epsilon)\cT_{\mathrm{Haar}}^{(k)},
			\label{eq:app_relative_design_definition}
		\end{equation}
		where $\preceq$ denotes the completely-positive order.  Equivalently, both differences in Eq.~\eqref{eq:app_relative_design_definition} are completely positive maps.
	\end{definition}
	
	Sampling $a$ with probability $q_a$, applying $U_a$, and measuring in the computational basis gives a POVM
	\begin{equation}
		M_{a,y}=q_a\,U_a^\dagger\ketbra y U_a,
		\qquad a\in\mathsf A,\quad y\in[D].
	\end{equation}
	We combine $(a,y)$ into a single outcome label $x$ and write
	\begin{equation}
		M_x
		=
		D w_x\ketbra{\phi_x},
		\qquad
		\ket{\phi_x}:=U_a^\dagger\ket y,
		\qquad
		w_x:=\frac{q_a}{D}.
		\label{eq:induced_measurement}
	\end{equation}
	Then $\sum_x w_x\ketbra{\phi_x}=I/D$.  The higher-order moment is
	\begin{equation}
		\Phi_M^{(j)}
		:=
		\sum_x w_x\ketbra{\phi_x}^{\otimes j}
		=
		\frac1D\sum_a q_a\sum_{y=1}^D
		\left(U_a^\dagger\ketbra y U_a\right)^{\otimes j}.
		\label{eq:induced_projective_moment}
	\end{equation}
	The Haar moment is
	\begin{equation}
		\Phi_{\mathrm{Haar}}^{(j)}
		:=
		\int \ketbra{\phi}^{\otimes j}\,d\mu_{\mathrm{Haar}}(\phi)
		=
		\frac{\Pi_{\mathrm{sym}}^{(j)}}{\binom{D+j-1}{j}},
		\label{eq:haar_projective_moment}
	\end{equation}
	where $\Pi_{\mathrm{sym}}^{(j)}=\frac1{j!}\sum_{\pi\in S_j}P_\pi$ is the projector onto the fully symmetric subspace of $\cH_D^{\otimes j}$.
	
	\begin{lemma}
		\label{lem:relative_design_induces_projective_design}
		If $\cU$ is a multiplicative-$\epsilon$ approximate unitary $k$-design, then for every $1\le j\le k$,
		\begin{equation}
			(1-\epsilon)\Phi_{\mathrm{Haar}}^{(j)}
			\preceq
			\Phi_M^{(j)}
			\preceq
			(1+\epsilon)\Phi_{\mathrm{Haar}}^{(j)}.
			\label{eq:induced_multiplicative_design}
		\end{equation}
		Consequently, for $\Delta^{(j)}:=\Phi_M^{(j)}-\Phi_{\mathrm{Haar}}^{(j)}$,
		\begin{equation}
			-\epsilon\Phi_{\mathrm{Haar}}^{(j)}
			\preceq
			\Delta^{(j)}
			\preceq
			\epsilon\Phi_{\mathrm{Haar}}^{(j)}.
			\label{eq:induced_multiplicative_design_deriviation}
		\end{equation}
	\end{lemma}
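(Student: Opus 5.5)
The plan is to express $\Phi_M^{(j)}$ as the image of a fixed positive operator under the $j$-th moment channel, and then transfer the completely-positive order of Definition~\ref{def:app_relative_design} to positive semidefinite order on that image.

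\textbf{Step 1 (the key identity).} Set $Y_j:=\frac1D\sum_{y=1}^D\ketbra{y}^{\otimes j}$, which is a positive semidefinite operator on $\cH_D^{\otimes j}$. By Eq.~\eqref{eq:induced_projective_moment},
\begin{equation}
\Phi_M^{(j)}=\sum_a q_a\,(U_a^\dagger)^{\otimes j}\,Y_j\,U_a^{\otimes j}.
\end{equation}
This is the moment channel of the adjoint ensemble $\cU^\dagger=\{q_a,U_a^\dagger\}$ applied to $Y_j$.

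\textbf{Step 2 (from the adjoint channel back to $\cT_\cU^{(j)}$).} Two observations handle this.
\begin{enumerate}
\item The adjoint ensemble is again a multiplicative-$\epsilon$ design. The Hilbert--Schmidt adjoint of $\cT_\cU^{(k)}$ is $\cT_{\cU^\dagger}^{(k)}$, and the Haar channel is self-adjoint. Taking adjoints preserves complete positivity, since it corresponds to conjugating the Kraus operators. So the differences $\cT_\cU^{(k)}-(1-\epsilon)\cT_{\mathrm{Haar}}^{(k)}$ and $(1+\epsilon)\cT_{\mathrm{Haar}}^{(k)}-\cT_\cU^{(k)}$ remain completely positive after taking adjoints.
\item The order passes from moment $k$ down to any moment $j\le k$. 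Apply the $k$-fold maps to $X\otimes I_D^{\otimes(k-j)}$ and take the partial trace over the last $k-j$ factors. Both operations are completely positive, and they send the $k$-th moment channel to $D^{k-j}$ times the $j$-th. Hence the completely-positive order holds at every $j\le k$, for both $\cU$ and $\cU^\dagger$.
\end{enumerate}
Applying these positive maps to $Y_j\succeq0$ then gives
\begin{equation}
(1-\epsilon)\,\cT_{\mathrm{Haar}}^{(j)}(Y_j)\preceq\Phi_M^{(j)}\preceq(1+\epsilon)\,\cT_{\mathrm{Haar}}^{(j)}(Y_j).
\end{equation}

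\textbf{Step 3 (the Haar average).} It remains to show $\cT_{\mathrm{Haar}}^{(j)}(Y_j)=\Phi_{\mathrm{Haar}}^{(j)}$. Since $U\ket y$ is Haar distributed for each fixed $y$,
\begin{equation}
\int (U^\dagger)^{\otimes j}\ketbra{y}^{\otimes j}U^{\otimes j}\,dU=\int\ketbra{\phi}^{\otimes j}\,d\phi=\Phi_{\mathrm{Haar}}^{(j)}.
\end{equation}
Averaging over the $D$ values of $y$ with weight $1/D$ then yields Eq.~\eqref{eq:induced_multiplicative_design}. Subtracting $\Phi_{\mathrm{Haar}}^{(j)}$ from all three sides gives Eq.~\eqref{eq:induced_multiplicative_design_deriviation}.

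The only point needing care is the bookkeeping in Step 2: the adjoint ensemble and the reduction from moment $k$ to moment $j$. Both are routine. If one prefers to avoid the partial-trace reduction, it can also be replaced by applying the $k$-fold inequality directly to $Y_j\otimes (I/D)^{\otimes(k-j)}$ and then tracing out.
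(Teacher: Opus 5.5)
Your proposal is correct and follows the same route as the paper. In both, you pass to the inverse ensemble $\{q_a,U_a^\dagger\}$, apply its $j$-th moment channel to the positive operators $\ketbra{y}^{\otimes j}$ averaged over $y$, and identify the Haar side with $\Phi_{\mathrm{Haar}}^{(j)}$. Your Step 2 also makes explicit two points the paper only asserts: the completely-positive order survives taking Hilbert--Schmidt adjoints, and it descends from moment $k$ to every moment $j\le k$ via $X\mapsto X\otimes I^{\otimes(k-j)}$ followed by a partial trace.
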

	
	\begin{proof}
		The inverse ensemble $\cU^{-1}=\{q_a,U_a^\dagger\}$ is also a multiplicative-$\epsilon$ approximate unitary $k$-design, because the Haar measure is invariant under inversion.  For $j\le k$, apply the completely-positive order in Eq.~\eqref{eq:app_relative_design_definition} to the $j$-th moment of $\cU^{-1}$ and to the positive operators $\ketbra y^{\otimes j}$.  Averaging the resulting inequalities over $y\in[D]$ with weight $1/D$ gives Eq.~\eqref{eq:induced_multiplicative_design}.  For the Haar ensemble, the vectors $U^\dagger\ket y$ are Haar-random pure states, so the averaged Haar expression is exactly Eq.~\eqref{eq:haar_projective_moment}.  Subtracting $\Phi_{\mathrm{Haar}}^{(j)}$ gives Eq.~\eqref{eq:induced_multiplicative_design_deriviation}.
	\end{proof}

	\section{Constant-fraction readout via approximate unitary designs}\label{app:metrology}
	
	This appendix details the proof of Theorem~\ref{thm:approx-design-readout} with three main steps.
	We first reduce the multiparameter family to a one-parameter family along an arbitrary direction $\bu$. 
	We then bound the resulting CFI in terms of the second and third moments of the measurement ensemble. 
	Finally, we compare these moments with the corresponding Haar moments and use the approximate $3$-design property to establish the results.

	\begin{proof}[Proof of Theorem~\ref{thm:approx-design-readout}]
		Fix a smooth pure-state encoding
		\begin{equation}
			\cE=\{\rho_{\btheta}=\ketbra{\psi_{\btheta}}\}_{\btheta\in\Theta},
		\end{equation}
		a point $\btheta\in\Theta$, and a direction $\bu\in\mathbb R^m$.
		Consider the restricted one-parameter family
		\begin{equation}
			\rho_t \coloneqq \rho_{\btheta+t\bu}
			= \ketbra{\psi_t},
			\qquad
			\ket{\psi_t}\coloneqq \ket{\psi_{\btheta+t\bu}},
		\end{equation}
		and evaluate all derivatives at $t=0$.
		
		Let $I_t$ and $J_t$ denote the CFI and QFI of the restricted family.
		By the chain rule for Fisher information,
		\begin{equation}
			I_t=\bu^{\mathsf T} I_{\cE}^M(\btheta)\bu,
			\qquad
			J_t=\bu^{\mathsf T} J_{\cE}(\btheta)\bu.
		\end{equation}
		Therefore, it suffices to prove that 
		\begin{equation}\label{eq:app_multi_one_param_goal}
			I_t\ge \kappa_D(\epsilon) J_t
		\end{equation}
		for every $\bu\in\mathbb R^m$.
		If $J_t=0$, then Eq.~\eqref{eq:app_multi_one_param_goal} is trivial, so we may assume $J_t>0$.
		
		We first describe the symmetric logarithmic derivative (SLD) of the restricted model.
		Let $\ket{\dot\psi_t}\coloneqq \partial_t\ket{\psi_t}$. 
		Since $\braket{\psi_t}{\psi_t}=1$, $\Re\braket{\psi_t}{\dot\psi_t}=0$.
		Write
		\begin{equation}
			\ket{\dot\psi_t}
			=
			\alpha_t\ket{\psi_t}
			+
			\beta_t\ket{\psi_t^\perp},
			\qquad
			\braket{\psi_t}{\psi_t^\perp}=0,
			\qquad
			\beta_t\ge 0,
		\end{equation}
		for some unit vector $\ket{\psi_t^\perp}$.
		Then
		\begin{equation}
			J_t
			=
			4\Bigl(\braket{\dot\psi_t}{\dot\psi_t}
			-
			|\braket{\psi_t}{\dot\psi_t}|^2\Bigr)
			=
			4\beta_t^2.
			\label{eq:app_multi_qfi}
		\end{equation}
		The corresponding SLD is
		\begin{equation}
			L_t
			\coloneqq
			2\left(
			\ketbra{\dot\psi_t}{\psi_t}
			+
			\ketbra{\psi_t}{\dot\psi_t}
			\right)
			=
			2\beta_t
			\left(
			\ketbra{\psi_t^\perp}{\psi_t}
			+
			\ketbra{\psi_t}{\psi_t^\perp}
			\right),
			\label{eq:app_multi_sld}
		\end{equation}
		where the $\alpha_t$ term cancels because $\Re(\alpha_t)=0$.
		One readily checks that
		\begin{equation}
			\partial_t\rho_t=\frac12(L_t\rho_t+\rho_tL_t),
		\end{equation}
		so $L_t$ is indeed the SLD of the one-parameter family.
		We will use the identities
		\begin{equation}
			\tr(L_t)=0,
			\qquad
			\tr(L_t^2)=2J_t,
			\qquad
			\tr(\rho_tL_t^2)= \tr(L_t\rho_tL_t)=J_t.
			\label{eq:app_multi_sld_identities}
		\end{equation}
		
		The outcome probabilities for the induced POVM $M$ defined in Eq.~\eqref{eq:induced_measurement} are given by
		\begin{equation}
			p_t(x)=\Tr(M_x\rho_t)=D\,w_x\bra{\phi_x}\rho_t\ket{\phi_x}.
		\end{equation}
		Using the SLD equation~\eqref{eq:app_multi_sld},
		\begin{equation}
			\partial_t p_t(x)
			=
			D\,w_x\bra{\phi_x}\partial_t\rho_t\ket{\phi_x}
			=
			\frac{D\,w_x}{2}\bra{\phi_x}L_t\ket{\phi_x}.
			\label{eq:app_multi_pt_derivative}
		\end{equation}
		Hence
		\begin{equation}
			I_t
			=
			\sum_{x:p_t(x)>0}
			\frac{(\partial_t p_t(x))^2}{p_t(x)}
			=
			\frac{D}{4}
			\sum_{x:p_t(x)>0}
			w_x
			\frac{\bra{\phi_x}L_t\ket{\phi_x}^2}
			{\bra{\phi_x}\rho_t\ket{\phi_x}}.
			\label{eq:app_multi_cfi_expansion}
		\end{equation}
		Since $\bra{\phi_x}\rho_t\ket{\phi_x}=0$ implies $\bra{\phi_x}L_t\ket{\phi_x}=0$, a Cauchy-Schwarz inequality yields
		\begin{equation}
			\left(
			\sum_{x:p_t(x)>0} w_x
			\frac{\bra{\phi_x}L_t\ket{\phi_x}^2}
			{\bra{\phi_x}\rho_t\ket{\phi_x}}
			\right)
			\left(
			\sum_x w_x
			\bra{\phi_x}L_t\ket{\phi_x}^2\,
			\bra{\phi_x}\rho_t\ket{\phi_x}
			\right)
			\ge
			\left(
			\sum_x w_x\bra{\phi_x}L_t\ket{\phi_x}^2
			\right)^2.
		\end{equation}
		Therefore,
		\begin{equation}
			I_t
			\ge
			\frac{D}{4}
			\frac{\left(\sum_x w_x\bra{\phi_x}L_t\ket{\phi_x}^2\right)^2}
			{\sum_x w_x\bra{\phi_x}L_t\ket{\phi_x}^2\,\bra{\phi_x}\rho_t\ket{\phi_x}}.
			\label{eq:cfi_cs_bound}
		\end{equation}
		
		To estimate the numerator and denominator, we use the moment operators defined
		in Eq.~\eqref{eq:induced_projective_moment}. They give
		\begin{align}
			\sum_x w_x\bra{\phi_x}L_t\ket{\phi_x}^2
			&=
			\tr\!\left(\Phi_M^{(2)} L_t^{\otimes 2}\right),\\
			\sum_x w_x\bra{\phi_x}L_t\ket{\phi_x}^2\,\bra{\phi_x}\rho_t\ket{\phi_x}
			&=
			\tr\!\left(\Phi_M^{(3)}(L_t^{\otimes 2}\otimes \rho_t)\right).
		\end{align}
		
		For the Haar moments, we use
		\begin{equation}
			\Phi_{\mathrm{Haar}}^{(2)}=\frac{I+F}{D(D+1)},
			\qquad
			\Phi_{\mathrm{Haar}}^{(3)}=\frac{1}{D(D+1)(D+2)}\sum_{\pi\in S_3} P_\pi,
		\end{equation}
		where $F$ is the swap operator on two copies, and $P_\pi$ is the permutation operator corresponding to $\pi\in S_3$.
		Using Eq.~\eqref{eq:app_multi_sld_identities}, we obtain
		\begin{equation}
			\tr\!\left(\Phi_{\mathrm{Haar}}^{(2)}L_t^{\otimes 2}\right)
			=
			\frac{\tr(L_t^2)+\tr(L_t)^2}{D(D+1)}
			=
			\frac{2J_t}{D(D+1)},
			\label{eq:haar_second}
		\end{equation}
		and
		\begin{equation}
			\tr\!\left(\Phi_{\mathrm{Haar}}^{(3)}(L_t^{\otimes 2}\otimes \rho_t)\right)
			=
			\frac{\tr(L_t^2)\tr(\rho_t)+\tr(\rho_t L_t^2)+\tr(L_t\rho_t L_t)}{D(D+1)(D+2)}
			=
			\frac{4J_t}{D(D+1)(D+2)}.
			\label{eq:haar_third}
		\end{equation}

		Let $\Delta^{(k)} \coloneqq \Phi_M^{(k)}-\Phi_{\mathrm{Haar}}^{(k)}$. 
		Since $M$ is induced by a multiplicative-$\epsilon$-approximate unitary $3$-design, we can apply Lemma~\ref{lem:relative_design_induces_projective_design} to obtain
		\begin{equation}
			\|\Delta^{(2)}\|_\infty
			\le
			\epsilon \|\Phi_{\mathrm{Haar}}^{(2)}\|_\infty
			=
			\frac{2\epsilon}{D(D+1)},
			\qquad
			\|\Delta^{(3)}\|_\infty
			\le
			\epsilon \|\Phi_{\mathrm{Haar}}^{(3)}\|_\infty
			=
			\frac{6\epsilon}{D(D+1)(D+2)}.
			\label{eq:delta_norm_bound}
		\end{equation}
		Also, from Eq.~\eqref{eq:app_multi_sld},
		\begin{equation}
			\|L_t\|_1 = 2\sqrt{J_t},
			\qquad
			\|L_t^{\otimes 2}\|_1 = 4J_t,
			\qquad
			\|L_t^{\otimes 2}\otimes \rho_t\|_1 = 4J_t.
		\end{equation}
		Therefore,
		\begin{align}
			\tr\!\left(\Phi_M^{(2)}L_t^{\otimes 2}\right)
			&=
			\tr\!\left(\Phi_{\mathrm{Haar}}^{(2)}L_t^{\otimes 2}\right)
			+
			\tr\!\left(\Delta^{(2)}L_t^{\otimes 2}\right) \notag\\
			&\ge
			\frac{2J_t}{D(D+1)}
			-
			\|\Delta^{(2)}\|_\infty\,\|L_t^{\otimes 2}\|_1 \notag\\
			&\ge
			\frac{2(1-4\epsilon)J_t}{D(D+1)},
			\label{eq:numerator_bound}
		\end{align}
		and
		\begin{align}
			\tr\!\left(\Phi_M^{(3)}(L_t^{\otimes 2}\otimes \rho_t)\right)
			&=
			\tr\!\left(\Phi_{\mathrm{Haar}}^{(3)}(L_t^{\otimes 2}\otimes \rho_t)\right)
			+
			\tr\!\left(\Delta^{(3)}(L_t^{\otimes 2}\otimes \rho_t)\right) \notag\\
			&\le
			\frac{4J_t}{D(D+1)(D+2)}
			+
			\|\Delta^{(3)}\|_\infty\,\|L_t^{\otimes 2}\otimes \rho_t\|_1 \notag\\
			&\le
			\frac{4(1+6\epsilon)J_t}{D(D+1)(D+2)}.
			\label{eq:denominator_bound}
		\end{align}
		
		Substituting Eqs.~\eqref{eq:numerator_bound} and \eqref{eq:denominator_bound} into Eq.~\eqref{eq:cfi_cs_bound}, we obtain
		\begin{align}
			I_t
			&\ge
			\frac{D}{4}
			\frac{\left(\dfrac{2(1-4\epsilon)J_t}{D(D+1)}\right)^2}
			{\dfrac{4(1+6\epsilon)J_t}{D(D+1)(D+2)}} \notag\\
			&=
			\frac{(1-4\epsilon)^2}{1+6\epsilon}\frac{D+2}{4(D+1)}\,J_t \notag\\
			&=
			\kappa_D(\epsilon)\,J_t.
		\end{align}
		Since this holds for every $\bu\in\mathbb R^m$, we conclude that
		\begin{equation}
			I_{\cE}^M(\btheta)\succeq \kappa_D(\epsilon)\,J_{\cE}(\btheta).
		\end{equation}
	\end{proof}
	
	\section{Routing and arithmetic primitives with $\delta$D implementations}
	\label{app:finite-dimensional-arithmetic}
	This section presents reversible primitives with $\delta$D implementations, which are used later for the LRFC construction of approximate unitary design.  The common target is diameter-time implementation: a computation on $s$ qubits, or on $s$ constant-size registers, should use $\cO_\delta(s)$ space and depth $\cO_\delta(s^{1/\delta})$ on a $\delta$D grid.  
	Here, the notation $\cO_\delta$ means that hidden constant factors in the asymptotic scaling depend on $\delta$.
	All circuits below are clean, meaning that every ancilla register is returned to $0$.
	
	The organization is as follows.  
	Sec.~\ref{subsec:finite-dimensional-routing} presents the routing primitive that enables the realization of permutations on a $\delta$D grid.  
	Sec.~\ref{subsec:coefficientwise-linear-maps} gives the implementations of the coefficientwise maps and linear summations.  
	Sec.~\ref{subsec:polynomial-multiplication} and~\ref{subsec:finite-field-multiplication} then give polynomial and finite-field multiplication in diameter time $\cO_\delta(s^{1/\delta})$.
	
	\subsection{Routing}
	\label{subsec:finite-dimensional-routing}
	
	We first present a basic routing primitive for moving registers in a $\delta$D grid.  
	This primitive will be used to implement register permutations, in which we route the relevant registers to neighboring locations, apply the desired local gates, and then route them back. 
	
	\begin{definition}[$\delta$D grid]
		Let $P_r$ be the path graph on $[r]=\{1,\ldots,r\}$.  The Cartesian product $G\square H$ has vertex set $V(G)\times V(H)$, with $(g,h)\sim(g',h')$ if either $g=g'$ and $h\sim h'$, or $h=h'$ and $g\sim g'$.  The $\delta$D grid of side length $r$ is
		\begin{equation}
			G_{\delta,r}:=P_r^{\square \delta}.
		\end{equation}
		Equivalently, $G_{\delta,r}$ has vertex set $[r]^\delta$, with nearest neighbors differing by $1$ in exactly one coordinate.
	\end{definition}
	
	\begin{definition}[Routing number]
		Let $G=(V,E)$ be a graph.  A matching $M\subseteq E$ specifies a SWAP layer on $G$ which applies SWAP gates simultaneously on all edges in $M$.
		
		For a permutation $\pi:V\to V$, a sequence of SWAP layers implements $\pi$ if the quantum register initially placed at $v$ is moved to $\pi(v)$ for every $v\in V$.  The routing number $\operatorname{rt}(G)$ is the maximum, over all permutations $\pi$, of the minimum SWAP depth needed to implement $\pi$.
	\end{definition}
	
	\begin{lemma}[Permutation routing on $\delta$D grids]
		\label{lem:grid_permutation_routing_app}
		For every $\delta\ge1$,
		\begin{equation}
			\operatorname{rt}(G_{\delta,r})=\cO_\delta(r).
		\end{equation}
		Equivalently, any permutation of quantum registers on a $\delta$D grid of side length $r$ can be implemented by a nearest-neighbor SWAP circuit of depth $\cO_\delta(r)$.
	\end{lemma}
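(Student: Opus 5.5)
We prove the bound by induction on the dimension $\delta$, using the standard row--column--row decomposition of a permutation on a product graph; the one-dimensional case is odd--even transposition sort. For $\delta=1$, the grid $G_{1,r}=P_r$ is a path. Any permutation $\pi$ of the $r$ registers can be realized by $r$ rounds of odd--even transposition sort: label each register by its destination $\pi(v)$, and in alternating layers apply a SWAP on the edge $(2i-1,2i)$ or $(2i,2i+1)$ whenever the two labels are out of order. Each layer is a matching on $P_r$, and the classical 0--1 principle guarantees that $r$ layers sort any input. This gives $\operatorname{rt}(P_r)\le r$.

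For the inductive step, write $G_{\delta,r}=G_{\delta-1,r}\square P_r$, so a vertex is $(g,c)$ with $g\in[r]^{\delta-1}$ and $c\in[r]$. Given $\pi$, form the bipartite multigraph $H$ whose left vertices are the $r$ current ``columns'' (last coordinate $c$) and whose right vertices are the $r$ target columns. Each register contributes one edge joining its current and target column. Every vertex of $H$ has degree $r^{\delta-1}$, so $H$ is $r^{\delta-1}$-regular, and by K\H{o}nig's edge-colouring theorem (iterated Hall's theorem) its edges split into $r^{\delta-1}$ perfect matchings. We index these colours by the slices $g\in[r]^{\delta-1}$ and run three phases.

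\emph{Phase~1.} Within each fixed column $c$ (a copy of $G_{\delta-1,r}$), permute the registers so that the register whose edge has colour $g$ sits at $(g,c)$. This is possible because each column contains exactly one register of each colour.

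\emph{Phase~2.} Within each fixed slice $g$ (a copy of $P_r$), permute the registers so that each reaches its target column. This is a genuine permutation of the line because colour class $g$ is a perfect matching.

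\emph{Phase~3.} Within each target column, move each register to its final position $\pi(v)$.

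The phases act on disjoint copies of subgraphs in parallel, and each SWAP layer of a copy is a matching of $G_{\delta,r}$. Hence
\[
\operatorname{rt}(G_{\delta,r})\le 2\operatorname{rt}(G_{\delta-1,r})+\operatorname{rt}(P_r),
\]
which gives $\operatorname{rt}(G_{\delta,r})\le (2^{\delta}-1)r=\cO_\delta(r)$.

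The only nontrivial step is justifying the decomposition of $H$ into $r^{\delta-1}$ perfect matchings; this follows from Hall's theorem applied to regular bipartite multigraphs, applied repeatedly. The rest is bookkeeping. The sorting network in the base case acts on labels that need not be distinct after relabelling, which is harmless since any comparator schedule that sorts suffices.
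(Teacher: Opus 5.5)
Your proof is correct and follows the paper's route: odd--even transposition sort on the path, iterated through the Cartesian-product bound $\operatorname{rt}(G\square H)\le 2\operatorname{rt}(G)+\operatorname{rt}(H)$. The paper cites that bound from Alon--Chung--Graham, whereas you reprove it with the K\"onig edge-colouring three-phase argument, which is the standard proof of that bound.
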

	
	\begin{proof}
		Ref.~\cite{Alon1993RoutingPermutation} proves the Cartesian-product bound
		\begin{equation}
			\operatorname{rt}(G\square H)
			\le
			\min\{2\operatorname{rt}(G)+\operatorname{rt}(H),\,
			2\operatorname{rt}(H)+\operatorname{rt}(G)\}.
			\label{eq:routing_cartesian_product}
		\end{equation}
		The path graph satisfies $\operatorname{rt}(P_r)=\cO(r)$ by odd-even transposition routing.  Iterating Eq.~\eqref{eq:routing_cartesian_product} over $G_{\delta,r}=P_r^{\square\delta}$ gives $\operatorname{rt}(G_{\delta,r})=\cO_\delta(r)$.
	\end{proof}
	
	A direct consequence is that any layer of all-to-all two-qubit gates can be realized in a $\delta$D grid at diameter cost. 
	
	\begin{lemma}[Compiling one all-to-all layer into a $\delta$D grid]
		\label{lem:mesh_compilation_app}
		Fix $\delta\ge1$.  Let $C$ be a $q$-qubit all-to-all circuit of depth $T$, where every layer consists of pairwise-disjoint two-qubit gates.  Then $C$ can be implemented on a $\delta$D grid using $\cO(q)$ qubits and depth
		\begin{equation}
			\cO_\delta\left(Tq^{1/\delta}\right).
		\end{equation}
	\end{lemma}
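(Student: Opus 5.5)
The plan is to embed the $q$ logical qubits into a $\delta$D grid $G_{\delta,r}$ with side length $r=\lceil q^{1/\delta}\rceil$. This grid has $r^\delta=\cO_\delta(q)$ vertices, and we place the logical qubits on any $q$ of them, leaving the rest as idle qubits. We then simulate the circuit one layer at a time. Each all-to-all layer is turned into a routing phase, followed by one layer of nearest-neighbor two-qubit gates, followed by an inverse routing phase (or, equivalently, we simply continue routing from wherever the qubits ended up).

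Fix a layer consisting of disjoint gates $\{(a_j,b_j)\}_j$. First, we choose a perfect-matching-like pairing of grid vertices into nearest-neighbor pairs. For example, we pair $(x_1,\dots,x_\delta)$ with $(x_1\pm1,x_2,\dots)$ along the first coordinate, taking $r$ even after enlarging by one if needed; this gives $\lfloor r^\delta/2\rfloor\ge q/2$ adjacent vertex pairs. We then define a permutation $\pi$ of the grid vertices that sends $a_j$ and $b_j$ to the two endpoints of the $j$-th adjacent pair. Qubits not acted on in this layer, together with idle vertices, are sent to the remaining vertices arbitrarily; this extends $\pi$ to a bijection. By Lemma~\ref{lem:grid_permutation_routing_app}, $\pi$ is realized by SWAP layers of depth $\cO_\delta(r)=\cO_\delta(q^{1/\delta})$. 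After routing, all gates of the layer act on grid edges and are pairwise disjoint, so they can be applied in depth $1$. We then apply $\pi^{-1}$ by routing again at the same cost, which returns every register to its home position.

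Summing over the $T$ layers gives total depth $T\cdot(2\,\cO_\delta(q^{1/\delta})+1)=\cO_\delta(Tq^{1/\delta})$, using $\cO(q)$ qubits. The idle vertices serve as clean qubits: they begin in $\ket0$, are only ever swapped, and are restored by the inverse routing. The return routing is not strictly needed, since the next layer's permutation can start from the current placement, but keeping it makes the bookkeeping uniform.

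The only nontrivial input is the routing-number bound of Lemma~\ref{lem:grid_permutation_routing_app}. Beyond that, the one step that needs a little care is the padding argument: we must check that the grid always contains enough disjoint adjacent pairs (at least $q/2$ of them) and that non-participating registers can be placed without conflicts. Both are immediate from the size $r^\delta\ge q$ of the grid.
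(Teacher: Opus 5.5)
Your proposal is correct and follows the paper's proof essentially step for step. For each layer, you route the gate pairs to adjacent sites of a grid of side $\Theta(q^{1/\delta})$ in depth $\cO_\delta(q^{1/\delta})$ via Lemma~\ref{lem:grid_permutation_routing_app}, apply all gates in one nearest-neighbor layer, and route back. Your explicit check that the grid contains at least $q/2$ disjoint adjacent pairs fills in a small detail that the paper leaves implicit.
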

	
	\begin{proof}
		Place the $q$ logical qubits inside a $\delta$D grid of side length $r=\Theta(q^{1/\delta})$, adding $\cO(q)$ idle dummy qubits if necessary.
		
		It suffices to simulate one all-to-all layer.  Let the layer contain pairwise-disjoint two-qubit gates $G_1,\ldots,G_s$, with $s\le q/2$.  Assign the two input qubits of $G_j$ to neighboring sites, and assign all remaining qubits arbitrarily to the remaining sites.  This assignment is a permutation of the grid sites.
		
		By Lemma~\ref{lem:grid_permutation_routing_app}, this permutation is implemented by $\delta$D nearest-neighbor SWAP gates in depth $\cO_\delta(r)=\cO_\delta(q^{1/\delta})$.  
		The gates $G_1,\ldots,G_s$ can then be applied simultaneously in one nearest-neighbor two-qubit gate layer. 
		Finally, route the inverse permutation to return all qubits to their original locations.  Thus one all-to-all layer costs $\cO_\delta(q^{1/\delta})$ nearest-neighbor depth, and $T$ layers cost $\cO_\delta(Tq^{1/\delta})$.
	\end{proof}
	
	\subsection{Coefficientwise linear maps}
	\label{subsec:coefficientwise-linear-maps}
	We next present two elementary primitives for manipulating arrays of constant-size registers.  
	The first primitive implements constant-width local updates at every coefficient position.  
	
	\begin{lemma}[Constant-width coefficientwise linear maps]
		\label{lem:coeff-linear-map}
		Let $S$ be a fixed finite ring.  Fix constants $a,b$.  Suppose that for each coefficient position $0\le t<N$ we have input registers $x_{1,t},\ldots,x_{a,t}\in S$ and output registers $y_{1,t},\ldots,y_{b,t}\in S$.  For fixed constants $\lambda_{\ell j}\in S$, the reversible update
		\begin{equation}
			y_{\ell,t}
			\longleftarrow
			y_{\ell,t}+\sum_{j=1}^{a}\lambda_{\ell j}x_{j,t},
			\qquad
			1\le \ell\le b,
			\quad
			0\le t<N,
			\label{eq:coeffwise-linear-update}
		\end{equation}
		with all $x_{j,t}$ unchanged, can be implemented in $\delta$D space $\cO_\delta(N)$ and depth $\cO_\delta(N^{1/\delta})$.  The same bound holds if each output position is shifted by one of a constant number of fixed offsets, for example $y_{\ell,t+d_\ell}$ instead of $y_{\ell,t}$.
	\end{lemma}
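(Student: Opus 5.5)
The plan is to place the $N$ coefficient positions on a $\delta$D grid of side length $r=\Theta(N^{1/\delta})$ so that all registers belonging to one position sit close together. Each position $t$ has $a+b=\cO(1)$ registers over the fixed finite ring $S$, and each register is a constant number of qubits. I would therefore give position $t$ a constant-size contiguous tile of the grid, holding $x_{1,t},\ldots,x_{a,t}$ and $y_{1,t},\ldots,y_{b,t}$, and arrange the tiles in a $\delta$D array of side length $\cO(N^{1/\delta})$. The total space is $\cO_\delta(N)$, counting a constant number of clean scratch qubits per tile.

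With this layout, the unshifted update in Eq.~\eqref{eq:coeffwise-linear-update} acts independently on each tile. For fixed constants $\lambda_{\ell j}$, the map $(x,y)\mapsto(x,\,y+\Lambda x)$ is a fixed reversible permutation of the computational basis of $\cO(1)$ qubits. It therefore has a fixed exact decomposition into one- and two-qubit gates (for instance Toffolis plus clean scratch, compiled to two-qubit gates) of constant size. Within a tile of constant diameter, any such circuit can be routed in constant nearest-neighbor depth. This can be done by applying Lemma~\ref{lem:mesh_compilation_app} with $q=\cO(1)$, or simply by noting that the gate count is constant. All tiles run in parallel, so the unshifted case costs depth $\cO(1)\le\cO_\delta(N^{1/\delta})$, and the scratch qubits are returned to $0$ because each tile circuit is clean.

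For the shifted variant, where output $y_{\ell,t+d_\ell}$ receives $\sum_j\lambda_{\ell j}x_{j,t}$ for a constant number of fixed offsets $d_\ell$, the obstacle is that the interacting registers may sit in non-adjacent tiles. An offset of $1$ in the linear index can jump across a whole row of the grid. I would handle this by routing, in three steps.
\begin{enumerate}
\item For each offset $d_\ell$ separately, apply a register permutation that moves every $y_{\ell,t+d_\ell}$ into the tile of position $t$. Positions where $t+d_\ell$ falls outside $[0,N)$ are simply left untouched. By Lemma~\ref{lem:grid_permutation_routing_app}, this costs SWAP depth $\cO_\delta(r)=\cO_\delta(N^{1/\delta})$.
\item Apply the tilewise update from the unshifted case, in constant depth.
\item Route the inverse permutation to return each $y$ register to its original tile, again in depth $\cO_\delta(N^{1/\delta})$.
\end{enumerate}
Since there are only $b=\cO(1)$ distinct offsets, the total depth is $\cO_\delta(N^{1/\delta})$. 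Alternatively, all offsets can be merged into a single permutation, because the moved registers are distinct.

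I expect the only real care point to be the shifted case: making sure the routed permutation is a genuine permutation of grid sites, with the boundary positions padded by idle slots. Once that is in place, the routing lemma does all the work. A snake (boustrophedon) ordering would make shifts by $1$ local in 1D, but it does not help for general $\delta$, so the generic routing argument is the cleaner route.
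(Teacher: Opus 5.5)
Your proposal is correct and follows essentially the paper's argument. The paper places the registers in an $\cO_\delta(N)$-site grid and applies Lemma~\ref{lem:mesh_compilation_app} to a constant number of layers of disjoint constant-size updates: it routes the interacting registers together via Lemma~\ref{lem:grid_permutation_routing_app}, applies the updates in parallel, and routes back, which handles fixed offsets of any size automatically. Your tile layout, which makes the unshifted case constant depth and uses routing only for the shifted case, is a slightly more explicit version of the same mechanism.
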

	
	\begin{proof}
		Each element of the fixed ring $S$ is stored in $\cO_S(1)$ bits.  For a fixed $t$, Eq.~\eqref{eq:coeffwise-linear-update} is a constant-size reversible gate, namely a product of updates $(x,y)\mapsto (x,y+\lambda x)$ with hardwired $\lambda\in S$.  The operations for different $t$ are disjoint.  Put the $\cO(N)$ coefficient registers in a grid of space $\cO_\delta(N)$.  By Lemma~\ref{lem:mesh_compilation_app}, one layer of these disjoint constant-size operations costs $\cO_\delta(N^{1/\delta})$ depth.  Since $ a$ and $ b$ are constants, only a constant number of such layers are needed.
	\end{proof}
	
	The second primitive aggregates contributions from all coefficient positions into a single accumulator. 
	
	\begin{lemma}[Fixed ring-valued summations with $\delta$D implementations]
		\label{lem:fixed-ring-reduction}
		Let $S$ and $R$ be fixed finite rings.  For each $0\le t<N$, let $\mu_t:S\to R$ be a fixed map.  Given input registers $x_0,\ldots,x_{N-1}\in S$ and one accumulator register $y\in R$, the reversible update
		\begin{equation}
			y
			\longleftarrow
			y+\sum_{t=0}^{N-1}\mu_t(x_t)
			\label{eq:fixed-ring-reduction-update}
		\end{equation}
		with all input registers unchanged can be implemented cleanly in $\delta$D space $\cO_\delta(N)$ and depth $\cO_\delta(N^{1/\delta})$.
	\end{lemma}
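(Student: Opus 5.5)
We implement the reduction $y\leftarrow y+\sum_t\mu_t(x_t)$ as a balanced binary tree of additions in $R$, compiled onto the grid by Lemma~\ref{lem:mesh_compilation_app}, followed by uncomputation of the intermediate partial sums. All intermediate values lie in the fixed finite ring $R$, so each tree node is a constant-size reversible gate. The naive depth of a tree of height $\log N$ compiled layer by layer would be $\cO_\delta(N^{1/\delta}\log N)$. We therefore avoid compiling the tree level by level, and instead use the recursive geometry of the grid so that level sizes shrink geometrically.

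\emph{Step 1.} We place $x_0,\dots,x_{N-1}$ in a $\delta$D grid of side $r=\Theta(N^{1/\delta})$, each site also holding $\cO(1)$ clean ancilla registers for elements of $R$. Since each $\mu_t$ is a fixed map on a constant-size register, one constant-size reversible gate per site computes $z_t\leftarrow \mu_t(x_t)$ into an ancilla. This is depth $\cO_S(1)$ in parallel and needs no routing.

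\emph{Step 2 (the key step).} We recursively sum over subcubes. We partition the grid into subcubes of side $2$ in each coordinate, so each subcube has $2^\delta$ sites. Each subcube sums its values into a designated corner ancilla; this costs $\cO(2^\delta)=\cO_\delta(1)$ depth of nearest-neighbour adds and swaps along paths inside the subcube. At recursion level $j$, the relevant corners are spaced $2^j$ apart, and each block of $2^\delta$ such corners must be combined. This requires moving a constant-size register over distance $\cO(2^j)$, which we do with a chain of nearest-neighbour SWAPs through the intermediate sites (their contents are swapped and restored, or we use a dedicated routing ancilla layer). The cost is depth $\cO_\delta(2^j)$, with all blocks in parallel since the paths are disjoint within each block. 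Summing over $j\le\log_2 r$ gives $\sum_j \cO_\delta(2^j)=\cO_\delta(r)=\cO_\delta(N^{1/\delta})$. At the top level, the total $\sum_t\mu_t(x_t)$ sits in one ancilla. We route it to the site of $y$ in depth $\cO_\delta(r)$ and apply $y\leftarrow y+\text{total}$, which is a single constant-size gate.

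\emph{Step 3.} To return all ancillas to $0$, we run Steps 2 and 1 in reverse (Bennett uncomputation). The partial sums are recomputed, subtracted, and the $z_t$ are erased via $\mu_t$ again. This works because the only non-ancilla change, the update of $y$, commutes with the reversal, as $y$ is never read by the tree. The total depth is twice the forward cost, so it stays $\cO_\delta(N^{1/\delta})$, and the space is $\cO_\delta(N)$.

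The main obstacle is Step 2. A black-box use of Lemma~\ref{lem:mesh_compilation_app} on the $\log N$-depth tree loses a $\log N$ factor. The fix is to exploit that level-$j$ communication distances are only $\cO(2^j)$, so the costs form a geometric series. We must also check that the parallel SWAP paths at each level are disjoint, which holds because each level-$j$ block is confined to its own subcube.
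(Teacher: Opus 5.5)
Your proof is correct, but it takes a more elaborate route than the paper. The paper also computes $s_t=\mu_t(x_t)$ locally and uncomputes at the end. For the reduction itself, however, it uses a dimension-by-dimension linear sweep. Along every line parallel to the first coordinate it applies in-place nearest-neighbour additions $s_i\leftarrow s_i+s_{i+1}$ in order, from the far end toward the first site, with all lines processed in parallel. After $\cO(L)$ depth the first site of each line holds the line sum. It then repeats this on the hyperplane of first sites along the second coordinate, and so on for $\delta$ rounds. The intermediate registers just hold suffix sums and are erased when the sweeps are reversed. This needs no SWAP routing, no padding to a power-of-two side length, and no disjointness check for parallel paths.

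It also shows that the $\log N$ loss you identify as the main obstacle never actually arises. A sequential chain of additions is already nearest-neighbour, and its depth equals the diameter, so no tree is needed.

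Your hierarchical subcube reduction reaches the same $\cO_\delta(N^{1/\delta})$ bound through the geometric series $\sum_j \cO_\delta(2^j)$. The price is the extra bookkeeping you mention: SWAP chains through occupied sites that must be swapped back (or a dedicated routing ancilla layer), and padding the grid. What it buys is that only $\cO(\log N)$ rounds of ring additions are performed, with the remaining depth spent on routing. That would matter if individual additions were expensive, for example with non-constant-size registers. For a fixed finite ring $R$, the paper's simpler sweep suffices.
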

	
	\begin{proof}
		Place the $N$ input registers in a $\delta$D grid of side length $L=\Theta_\delta(N^{1/\delta})$, padding by idle sites if necessary. At the site of $x_t$, allocate a clean scratch register $s_t\in R$ and compute
		\begin{equation}
			(x_t,s_t)
			\longmapsto
			(x_t,s_t+\mu_t(x_t)).
		\end{equation}
		This is a constant-size gate because $S$ and $R$ are fixed finite rings and $\mu_t$ is hardwired.  The input register $x_t$ is unchanged, and initially $s_t=0$, so after this step $s_t=\mu_t(x_t)$.
		
		It remains to add the scratch registers to one root scratch register.  We do this by dimension-by-dimension nearest-neighbor sweeps.  Along each line parallel to the first coordinate, apply nearest-neighbor additions from the last site toward the first, so the first site of the line contains the sum along that line.  All such lines are processed in parallel, costing $\cO(L)$ depth.  Repeat the same procedure along the second coordinate on the hyperplane with the first coordinate fixed at the first site, and continue through all $\delta$ coordinates.  After the sweeps, the root scratch register contains $\sum_t\mu_t(x_t)$.
		
		Add the root scratch register to $y$, then reverse the sweeps and the initial computation.  This restores every scratch register to $0$ and leaves exactly the update in Eq.~\eqref{eq:fixed-ring-reduction-update}.  Since addition in the fixed ring $R$ is a constant-size reversible gate, the total depth is $\cO_\delta(L)=\cO_\delta(N^{1/\delta})$, and the space is $\cO_\delta(N)$.
	\end{proof}

	\subsection{Polynomial multiplication}
	\label{subsec:polynomial-multiplication}
	
	We now implement the key multiplication primitive used later.  
	The polynomial-multiplication primitive will be used over two types of coefficient rings.  
	Finite fields are needed for arithmetic over $\bF_{2^s}$, in particular for polynomial evaluation in the $t$-wise independent function family and for finite-field multiplication.  
	Galois rings, including $\mathbb Z_4$ and constant-degree extensions of $\mathbb Z_4$, are needed for the mod-$4$ arithmetic appearing in the quadratic phase gates of the finite-field Clifford implementation.  
	We therefore prove the multiplication primitive for fixed finite fields and fixed Galois rings together.
	
	The algorithm is a standard Toom--Cook recursion.  
	One point needing care over rings is interpolation, where unit-separated evaluation points ensure that the Vandermonde matrix is invertible.
	This motivates the following definition.
	
	\begin{definition}[Unit-separated evaluation points]
		\label{def:unit-separated}
		Let $S$ be a finite commutative ring and let $r\ge1$.  A list $\alpha_1,\ldots,\alpha_m\in S$, with $m:=2r-1$, is called \emph{unit-separated} if $\alpha_i-\alpha_j\in S^\times$ for all $i\ne j$.
		
		For such a list, the Vandermonde matrix
		\begin{equation}
			V=(V_{i,j})_{1\le i\le m,\,0\le j\le m-1},
			\qquad
			V_{i,j}:=\alpha_i^j,
		\end{equation}
		is invertible over $S$, because $\det V=\prod_{1\le i<i'\le m}(\alpha_{i'}-\alpha_i)$ is a product of units.
	\end{definition}
	
	With unit-separated evaluation points, Toom--Cook multiplication has the same diameter-time implementation over these rings as over fields.
	
	\begin{lemma}[Polynomial multiplication with fixed unit-separated evaluation points]
		\label{lem:unit-separated-poly-mult}
		Fix $\delta\ge1$.  Let $S$ be a fixed finite field or a fixed Galois ring $\GR(4,a)$.  Suppose that $S$ contains a unit-separated list $\alpha_1,\ldots,\alpha_{2r-1}$ for a sufficiently large constant $r=r(\delta)$.  Then, for every $s$, the reversible map
		\begin{equation}
			\ket P\ket Q\ket C\ket0
			\longmapsto
			\ket P\ket Q\ket{C+PQ}\ket0,
			\label{eq:poly-mult-map}
		\end{equation}
		where $P,Q\in S[z]_{<s}$ and $C\in S[z]_{<2s}$, can be implemented in $\delta$D space $\cO_\delta(s)$ and depth $\cO_\delta(s^{1/\delta})$, with clean ancillas.
	\end{lemma}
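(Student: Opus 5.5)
We would prove Lemma~\ref{lem:unit-separated-poly-mult} by a Toom--Cook-$r$ recursion whose cost telescopes geometrically in diameter. Write $P=\sum_{i=0}^{r-1}P_i z^{i b}$ and $Q=\sum_{i=0}^{r-1}Q_i z^{ib}$ with $b=\lceil s/r\rceil$ and $P_i,Q_i\in S[z]_{<b}$, and view them as $P(w)=\sum_i P_i w^i$ with $w=z^b$. Evaluate at the fixed points $\alpha_1,\ldots,\alpha_{2r-1}$: $\hat P_k:=\sum_i \alpha_k^i P_i$, and similarly $\hat Q_k$. Each $\hat P_k$ is a coefficientwise fixed linear combination of the blocks $P_i$. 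So all $2r-1$ evaluations can be written into clean scratch registers with Lemma~\ref{lem:coeff-linear-map}, with $a=r$ inputs and $b=2r-1$ outputs per coefficient position, in depth $\cO_\delta(s^{1/\delta})$.

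Next, recursively compute the $2r-1$ products $\hat R_k=\hat P_k\hat Q_k\in S[z]_{<2b}$ into fresh zero registers. Each subproblem has size $\approx s/r$ and runs on a disjoint subgrid, and the $2r-1$ calls run in parallel. We then interpolate. Invertibility of the Vandermonde matrix $V$ (Definition~\ref{def:unit-separated}) gives $R_j=\sum_k (V^{-1})_{jk}\hat R_k$, where $PQ=\sum_{j=0}^{2r-2}R_j w^j$. The update $C\leftarrow C+\sum_j z^{jb}R_j$ is again a coefficientwise fixed linear map with a constant number of fixed block offsets. Its entries $(V^{-1})_{jk}\in S$ are hardwired constants, so the shifted version of Lemma~\ref{lem:coeff-linear-map} applies; the overlap of adjacent shifted blocks is harmless because the update is additive. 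Finally, we uncompute in reverse: run the recursive products backwards to zero the $\hat R_k$, then undo the evaluations. All ancillas return to $0$, which gives Eq.~\eqref{eq:poly-mult-map}.

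The cost recurrences are $D(s)\le D(s/r)+c_\delta s^{1/\delta}$ for depth and $A(s)\le(2r-1)A(s/r)+c\,s$ for space, where the depth recurrence counts the parallel recursive calls once. The depth sums geometrically to $\cO_\delta(s^{1/\delta})$ because $r^{1/\delta}>1$.

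The main obstacle is space and geometry. The naive space recurrence gives $s^{\log_r(2r-1)}$, which is superlinear. The routing needed to move the $2r-1$ subproblems onto separate subgrids and back also costs diameter of the current level, and this is fine only if the grid stays of side $\cO(s^{1/\delta})$.

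I would try first to truncate the recursion. Recurse only for $t=\cO(1)$ levels, or until the block size is $s^{\beta}$ for a suitable $\beta$, and then finish with a direct schoolbook product. Schoolbook is quadratic but acts on tiny blocks: it costs space $(s^\beta)^2$ per block and depth $\cO((s^{\beta})^{2/\delta})$ via Lemma~\ref{lem:mesh_compilation_app}, with total count $(2r-1)^{t}$. Choose $r=r(\delta)$ large so that $\log_r(2r-1)$ is close to $1$. Then choose the stopping level so that the leaf count times leaf size stays $\cO_\delta(s)$ up to the constant overhead absorbed into $\cO_\delta$.

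If this balancing fails, the fallback is to reuse scratch sequentially. Compute the products $\hat R_k$ one group at a time, accumulating each into $C$ through its interpolation coefficients and then uncomputing it. This spends only a constant factor $2r-1$ in depth per level while keeping the space linear.

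Routing each level's sub-blocks into contiguous subcubes uses Lemma~\ref{lem:grid_permutation_routing_app} at cost proportional to the current side length. This again sums geometrically, so it does not change the $\cO_\delta(s^{1/\delta})$ bound.
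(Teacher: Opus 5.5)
Your skeleton matches the paper's: a Toom--Cook-$r$ recursion, evaluation and interpolation as coefficientwise fixed linear maps via Lemma~\ref{lem:coeff-linear-map}, and forward-then-backward recursion to clean the $\hat R_k$. The gap is the step you flag as the main obstacle: neither remedy you propose works as stated.

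\textbf{Truncation cannot give linear space.} If the $2r-1$ subproblems run in parallel for $t$ levels, just storing the level-$t$ inputs costs $\Theta\bigl(s\,((2r-1)/r)^t\bigr)$. So linear space forces $t=\cO(1)$. But then the leaves have size $\Theta(s)$, and by your own accounting a schoolbook product on them needs $\Theta(s^2)$ space. If instead you stop at block size $s^\beta$, the leaf inputs alone occupy $s^{1+(1-\beta)(\lambda-1)}$ with $\lambda=\log_r(2r-1)>1$. Taking $r$ large only pushes $\lambda$ toward $1$, never to $1$. The overhead is therefore a polynomial factor, not a constant you can absorb into $\cO_\delta$.

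\textbf{The sequential fallback, as stated, breaks the depth bound.} Reusing scratch sequentially is the right direction, but ``a constant factor $2r-1$ in depth per level'' is not affordable, because the factor compounds over the $\log_r s$ levels. With $g$ sequential groups, each run forward and backward, the recurrence is $D(s)\le 2g\,D(s/r)+c\,s^{1/\delta}$. This is $\cO(s^{1/\delta})$ only if $2g<r^{1/\delta}$. With $g=2r-1$ that fails for every $r$, and the recurrence solves to $s^{\log_r(4r-2)}$, which is worse than linear.

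\textbf{What is missing: batches of size proportional to $r$.} The paper uses $h=\lfloor r/4\rfloor$ products per batch, so there are $B_0=\lceil(2r-1)/h\rceil$ batches, a number that stays bounded (roughly $8$) as $r$ grows. This balances both recurrences at once:
\begin{itemize}
\item Space: $V_s\le c_0s+hV_{\lceil s/r\rceil}$ is linear because $h/r\le 1/4$.
\item Depth: $T_s\le 2B_0T_{\lceil s/r\rceil}+c_1s^{1/\delta}$ is geometric once $r=r(\delta)$ is large enough that $2B_0r^{-1/\delta}<1/2$.
\end{itemize}
This is why the lemma requires a sufficiently large constant $r(\delta)$: the per-level shrinkage $r^{1/\delta}$ must beat a batch count that does not grow with $r$. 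It is not about bringing $\log_r(2r-1)$ close to $1$.

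\textbf{Minor slip.} Even your parallel recurrence should read $D(s)\le 2D(s/r)+c_\delta s^{1/\delta}$, since each recursive call is run forward and backward. So you need $r^{1/\delta}>2$, not merely $r^{1/\delta}>1$; this is harmless for large $r$.
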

	
	\begin{proof}
		We prove the claim by induction on $s$.  Set $m:=2r-1$ and $b:=\lceil s/r\rceil$.  Pad $P,Q$ with zero coefficients to length $rb$, and write
		\begin{equation}
			\begin{split}
				P(z)&=\sum_{u=0}^{r-1}P_u(z)z^{ub},\qquad P_u\in S[z]_{<b},\\
				Q(z)&=\sum_{u=0}^{r-1}Q_u(z)z^{ub},\qquad Q_u\in S[z]_{<b}.
			\end{split}
			\label{eq:tc-blocks-simple}
		\end{equation}
		Introduce a block variable $X$ and define
		\begin{equation}
			\mathcal P(X;z):=\sum_{u=0}^{r-1}P_u(z)X^u,
			\qquad
			\mathcal Q(X;z):=\sum_{u=0}^{r-1}Q_u(z)X^u.
		\end{equation}
		Their block product is
		\begin{equation}
			T(X;z):=\mathcal P(X;z)\mathcal Q(X;z)
			=\sum_{j=0}^{m-1}C_j(z)X^j,
			\qquad
			C_j\in S[z]_{<2b}.
			\label{eq:tc-block-product-simple}
		\end{equation}
		The ordinary product is obtained by substituting $X=z^b$:
		\begin{equation}
			P(z)Q(z)=\sum_{j=0}^{m-1}C_j(z)z^{jb}.
			\label{eq:tc-substitute-simple}
		\end{equation}
		
		For each evaluation point $\alpha_i$, define
		\begin{equation}
			P_i(z):=\mathcal P(\alpha_i;z),
			\qquad
			Q_i(z):=\mathcal Q(\alpha_i;z),
			\qquad
			T_i(z):=P_i(z)Q_i(z).
			\label{eq:tc-evaluations-simple}
		\end{equation}
		The evaluation maps $P_u,Q_u\mapsto P_i,Q_i$ are constant-width coefficientwise $S$-linear maps, so Lemma~\ref{lem:coeff-linear-map} implements them in depth $\cO_\delta(s^{1/\delta})$.
		
		The values $T_i$ determine the block coefficients $C_j$.  Substituting $X=\alpha_i$ in Eq.~\eqref{eq:tc-block-product-simple} gives
		\begin{equation}
			T_i(z)=\sum_{j=0}^{m-1}\alpha_i^j C_j(z),
			\qquad
			i=1,\ldots,m.
			\label{eq:tc-linear-system-simple}
		\end{equation}
		By Definition~\ref{def:unit-separated}, the corresponding Vandermonde matrix is invertible over $S$.  Write $V^{-1}=(\beta_{j,i})_{0\le j\le m-1,\,1\le i\le m}$.  Then
		\begin{equation}
			C_j(z)=\sum_{i=1}^{m}\beta_{j,i}T_i(z).
			\label{eq:tc-interpolation-simple}
		\end{equation}
		The constants $\beta_{j,i}$ are hardwired, so interpolation is again a constant-width coefficientwise $S$-linear map.
		
		We implement multiplication reversibly in batches to keep the workspace linear.  Let $h:=\lfloor r/4\rfloor$ and $B_0:=\lceil m/h\rceil$, and partition $\{1,\ldots,m\}$ into batches $J_1,\ldots,J_{B_0}$, each of size at most $h$.  For one batch $J$, perform the compute-add-uncompute sequence
		\begin{equation}
			\begin{split}
				&\text{compute }P_i,Q_i\text{ for }i\in J,\\
				&\text{recursively compute }T_i=P_iQ_i\in S[z]_{<2b}\text{ for }i\in J,\\
				&\widetilde C
				\longleftarrow
				\widetilde C+
				\sum_{j=0}^{m-1}z^{jb}\sum_{i\in J}\beta_{j,i}T_i(z),\\
				&\text{run the recursive products backward to erase all }T_i,\\
				&\text{uncompute }P_i,Q_i\text{ for }i\in J.
			\end{split}
			\label{eq:tc-batch-add-simple}
		\end{equation}
		Here $\widetilde C$ is the target register padded to length $<2rb$.  After all batches, the target has been updated by
		\begin{equation}
			\sum_{j=0}^{m-1}z^{jb}\sum_{i=1}^{m}\beta_{j,i}T_i(z)
			=\sum_{j=0}^{m-1}z^{jb}C_j(z)
			=P(z)Q(z),
		\end{equation}
		and all scratch registers have been returned to $0$.
		
		Let $V_s$ and $T_s$ be the required space and depth.  At the parent level, the inputs, padded target, evaluations for one batch, and interpolation buffers use $\cO(s)$ ring elements.  A batch contains at most $h$ recursive products of length $b=\lceil s/r\rceil$.  Thus
		\begin{equation}
			V_s\le c_0s+hV_{\lceil s/r\rceil}.
			\label{eq:tc-space-simple}
		\end{equation}
		Since $h/r\le1/4$, the induction $V_s\le Cs$ closes for $C$ large enough.
		
		For depth, evaluation, interpolation, and their inverses cost $\cO_\delta(s^{1/\delta})$ per batch by Lemma~\ref{lem:coeff-linear-map}.  The $h$ recursive products in one batch run in parallel and are used once forward and once backward.  Therefore
		\begin{equation}
			T_s\le 2B_0T_{\lceil s/r\rceil}+c_1s^{1/\delta}.
			\label{eq:tc-depth-simple}
		\end{equation}
		Choose the constant $r=r(\delta)$ large enough that $2B_0r^{-1/\delta}<1/2$.  Then the induction $T_s\le C's^{1/\delta}$ follows from Eq.~\eqref{eq:tc-depth-simple}.  This proves the stated space and depth bounds.
	\end{proof}
	
	The preceding lemma assumes that there are enough unit-separated evaluation points.  
	For any fixed finite field or fixed Galois ring, this is obtained by passing to a constant-degree extension and then copying the result back to the original coefficient ring.
	
	\begin{corollary}[Polynomial multiplication over any fixed finite field or fixed Galois ring]
		\label{cor:base-poly-mult}
		Let $A$ be any fixed finite field or fixed Galois ring $\GR(4,a)$.  Then the clean reversible polynomial multiplication map
		\begin{equation}
			\ket P\ket Q\ket C\ket0
			\longmapsto
			\ket P\ket Q\ket{C+PQ}\ket0,
			\qquad
			P,Q\in A[z]_{<s},\quad C\in A[z]_{<2s},
			\label{eq:base-poly-mult-map}
		\end{equation}
		can be implemented in $\delta$D space $\cO_\delta(s)$ and depth $\cO_\delta(s^{1/\delta})$.
	\end{corollary}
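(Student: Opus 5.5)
The plan is to reduce to Lemma~\ref{lem:unit-separated-poly-mult} by passing to a constant-degree extension ring $S\supseteq A$ that contains enough unit-separated points. There are two cases.

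\emph{Finite field case.} If $A=\mathbb F_q$, take $S=\mathbb F_{q^e}$ with $q^e\ge 2r(\delta)-1$. Any $2r-1$ distinct elements are unit-separated, since nonzero differences are units in a field.

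\emph{Galois ring case.} If $A=\GR(4,a)$, take $S=\GR(4,ae)$ with $2^{ae}\ge 2r-1$. Choose Teichm\"uller lifts $\alpha_i$ of distinct elements of the residue field $\mathbb F_{2^{ae}}$. Then $\alpha_i-\alpha_j$ reduces mod $2$ to a nonzero residue, hence is a unit. In both cases $e$ depends only on $\delta$ and $A$, so it is a constant.

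Next I will fix the encoding. $S$ is a free $A$-module of rank $e$, so each element of $S$ is stored as $e$ registers over $A$. The inclusion $A\hookrightarrow S$ places an element in the first coordinate and sets the others to $0$, which is a relabeling with clean zero padding. The implementation then proceeds in three steps:
\begin{enumerate}
\item Embed $P,Q$ coefficientwise into $S[z]_{<s}$, with no gates beyond allocating clean ancillas. Embed $C$ into a fresh target register over $S$ of length $2s$.
\item Apply Lemma~\ref{lem:unit-separated-poly-mult} over $S$ to add $PQ$ into this target. The product of elements of $A$ lies in $A$, so the extra $S$-coordinates of the target remain $0$.
\item Read the result back.
\end{enumerate}

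Step~3 needs care to keep ancillas clean. To avoid copying, I will let the $A$-coordinate of the $S$-target \emph{be} the physical register holding $C$, with the remaining $e-1$ coordinates being clean ancillas. Because the Lemma's map is exactly $\ket{P}\ket{Q}\ket{C}\mapsto\ket{P}\ket{Q}\ket{C+PQ}$ on all of $S$, and the image stays in $A$, those ancillas return to $0$ automatically. Multiplication in $S$ is a fixed constant-size operation on $A$-registers (hardwired structure constants), so it is fine as a gate primitive. The paper's constant-size gates act on bits, and elements of $A$ are $\cO(1)$ bits, so this is consistent.

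The costs follow directly. Space is $\cO(e)\cdot\cO_\delta(s)=\cO_\delta(s)$ and depth is $\cO_\delta(s^{1/\delta})$. Placing the $e$ coordinates of each $S$-element adjacently in the grid changes the geometry only by a constant factor.

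The main obstacle is the Galois ring case. One must verify that $\GR(4,ae)$ is a Galois extension of $\GR(4,a)$, which holds whenever $a\mid ae$. One must also check that the Teichm\"uller differences are units, which follows because the maximal ideal is $2S$.
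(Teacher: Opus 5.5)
Your proposal is correct and follows essentially the same route as the paper. Both pass to a constant-degree extension, a larger finite field or $\GR(4,af)$, that contains enough elements with distinct residues mod $2$ (your Teichm\"uller lifts are one valid choice), and both then invoke Lemma~\ref{lem:unit-separated-poly-mult} there. The only difference is cosmetic: the paper computes $PQ$ into an $S$-valued scratch register, adds its $A$-coordinate to $C$, and uncomputes, whereas you use the $C$ register directly as the $A$-coordinate of the $S$-target. Your version is also valid, because the lemma's final map $C\mapsto C+PQ$ returns the extra coordinates to $0$ even if intermediate batches make them nonzero, and it saves one call to the lemma.
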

	
	\begin{proof}
		Let $r=r(\delta)$ be the constant used in Lemma~\ref{lem:unit-separated-poly-mult}, and set $M:=2r-1$.  Lemma~\ref{lem:unit-separated-poly-mult} applies to coefficient rings containing $M$ unit-separated evaluation points.  If $A$ itself is too small, we perform the Toom--Cook computation in a fixed constant-degree extension $S$ and then copy the answer back to the original $A$-target.
		
		If $A$ is a finite field, choose a finite field extension $S/A$ with $|S|\ge M$.  Since $M$ depends only on $\delta$, the extension degree is $\cO_\delta(1)$.  Any $M$ distinct elements of $S$ are unit-separated.
		
		If $A=\GR(4,a)$, choose $f=\cO_\delta(1)$ such that $2^{af}\ge M$, and set $S:=\GR(4,af)$.  An element of $S$ is a unit whenever its reduction modulo $2$ is nonzero.  Choose $M$ elements $\alpha_1,\ldots,\alpha_M\in S$ whose reductions modulo $2$ are distinct in $\bF_{2^{af}}$.  Then $\alpha_i-\alpha_j\in S^\times$ for $i\ne j$.
		
		In both cases, an element of $S$ is represented by a constant number of elements of $A$.  Embed $P,Q\in A[z]_{<s}$ coefficientwise into $S[z]$ and apply Lemma~\ref{lem:unit-separated-poly-mult} over $S$ to compute a clean scratch product $PQ\in S[z]$.  Because the inputs came from $A$, the product has only its $A$-coordinate nonzero.  Add that coordinate to the true $A$-target $C$, then run the $S$-valued multiplication backward.  The extension degree is constant, so the space and depth remain $\cO_\delta(s)$ and $\cO_\delta(s^{1/\delta})$.
	\end{proof}
	
	\subsection{Finite-field multiplication}
	\label{subsec:finite-field-multiplication}
	Finite-field multiplication is ordinary polynomial multiplication followed by reduction modulo the fixed irreducible polynomial.  The following reciprocal identity allows us to perform that reduction with only a constant number of polynomial multiplications.
	
	Let
	\begin{equation}
		K=\bF_{2^s}\cong \bF_2[z]/(p(z)),
		\qquad
		p(z)=z^s+p_{<s}(z),
	\end{equation}
	where $p$ is a fixed irreducible monic polynomial of degree $s$.  An element of $K$ is represented by a binary polynomial of degree $<s$.
	
	For a polynomial $F(z)$ of degree $<L$, define its length-$L$ reversal by
	\begin{equation}
		\operatorname{rev}_L(F)(z):=z^{L-1}F(z^{-1}).
	\end{equation}
	Thus $\operatorname{rev}_L$ reverses the list of $L$ coefficients, padding with zeros if needed.  Also write $\operatorname{low}_s(F)$ for the coefficients of degrees $0,\ldots,s-1$.
	
	\begin{lemma}[Reciprocal division modulo a fixed polynomial]
		\label{lem:reciprocal-reduction}
		Let $P(z)\in\bF_2[z]$ have degree $<2s$.  Let $Q(z)=\lfloor P(z)/p(z)\rfloor$ be the quotient in ordinary polynomial division, so $P=Qp+R$ and $\deg R<s$.  Define
		\begin{equation}
			p^*(z):=\operatorname{rev}_{s+1}(p)(z)=z^s p(z^{-1}).
		\end{equation}
		Since $p$ is monic, $p^*(0)=1$, so $p^*$ has an inverse modulo $z^s$.  Let
		\begin{equation}
			u(z)\equiv (p^*(z))^{-1}\pmod{z^s},
			\qquad
			\deg u<s.
		\end{equation}
		Then
		\begin{equation}
			Q^*(z):=\operatorname{rev}_s(Q)(z)
			=\operatorname{low}_s\left(\operatorname{rev}_{2s}(P)(z)u(z)\right).
			\label{eq:reciprocal-formula}
		\end{equation}
		Moreover,
		\begin{equation}
			R=\operatorname{low}_s\left(P+Qp_{<s}\right).
			\label{eq:remainder-low-coeffs}
		\end{equation}
	\end{lemma}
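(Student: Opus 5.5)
The plan is to prove the classical reciprocal-division identity by reversing coefficients in $P=Qp+R$ and reducing modulo $z^s$. Without loss of generality $\deg P\le 2s-1$, so $\deg Q\le s-1$ and $\deg R\le s-1$. Multiply the identity by $z^{2s-1}$ and substitute $z\mapsto z^{-1}$:
\begin{equation}
	z^{2s-1}P(z^{-1})=\bigl(z^{s-1}Q(z^{-1})\bigr)\bigl(z^{s}p(z^{-1})\bigr)+z^{s}\bigl(z^{s-1}R(z^{-1})\bigr).
\end{equation}
In terms of the reversal notation this reads
\begin{equation}
	\operatorname{rev}_{2s}(P)=\operatorname{rev}_s(Q)\,p^*+z^s\operatorname{rev}_s(R).
\end{equation}
Here every reversal is a genuine polynomial, because the degree bounds $\deg Q<s$, $\deg p=s$, and $\deg R<s$ ensure that no negative powers appear.

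Next, reduce modulo $z^s$. The remainder term is divisible by $z^s$ and drops out, leaving
\begin{equation}
	\operatorname{rev}_{2s}(P)\equiv Q^*p^*\pmod{z^s}.
\end{equation}
Since $p$ is monic, $p^*(0)=1$, so $p^*$ is invertible in $\bF_2[z]/(z^s)$ with inverse $u$. Multiplying by $u$ gives
\begin{equation}
	Q^*\equiv \operatorname{rev}_{2s}(P)\,u\pmod{z^s}.
\end{equation}
Because $\deg Q^*<s$, this congruence is an equality after taking $\operatorname{low}_s$, which is exactly Eq.~\eqref{eq:reciprocal-formula}.

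For the remainder formula, write $p=z^s+p_{<s}$. Then
\begin{equation}
	R=P+Qp=P+Qz^s+Qp_{<s},
\end{equation}
where we use that $-1=+1$ in characteristic $2$. The term $Qz^s$ contributes only to degrees $\ge s$, and $\deg R<s$, so applying $\operatorname{low}_s$ gives Eq.~\eqref{eq:remainder-low-coeffs}.

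The only point needing care is the bookkeeping of reversal lengths. One must pad $P$ to length $2s$ and $Q$ to length $s$ even when their actual degrees are smaller, so that the exponent identity $(s-1)+s=2s-1$ holds. This is where the degree bounds enter; everything else is routine.
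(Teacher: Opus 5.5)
Your proposal is correct and follows essentially the same route as the paper: reverse $P=Qp+R$ at length $2s$ to get $\operatorname{rev}_{2s}(P)=Q^*p^*+z^s\operatorname{rev}_s(R)$, reduce modulo $z^s$, multiply by $u$, and then obtain $R$ from $P+Q(z^s+p_{<s})$ by discarding all degrees $\ge s$. Your explicit check of the reversal lengths, $(s-1)+s=2s-1$, fills in a step the paper leaves implicit.
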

	
	\begin{proof}
		Write $P=Qp+R$ with $\deg Q<s$ and $\deg R<s$.  Reverse this identity using length $2s$:
		\begin{equation}
			\operatorname{rev}_{2s}(P)
			=\operatorname{rev}_s(Q)\operatorname{rev}_{s+1}(p)+z^s\operatorname{rev}_s(R).
		\end{equation}
		Reducing modulo $z^s$ gives $\operatorname{rev}_{2s}(P)\equiv Q^*p^*\pmod{z^s}$.  Multiplying by $u=(p^*)^{-1}\bmod z^s$ proves Eq.~\eqref{eq:reciprocal-formula}.  Finally, $R=P+Qp=P+Q(z^s+p_{<s})$.  Taking the low $s$ coefficients removes $Qz^s$ and yields Eq.~\eqref{eq:remainder-low-coeffs}.
	\end{proof}
	
	Combining reciprocal reduction with polynomial multiplication gives clean multiplication in $K=\bF_{2^s}$.
	\begin{corollary}[Clean finite-field multiplication in diameter time]
		\label{cor:field-mult}
		The reversible finite-field multiplication map
		\begin{equation}
			\ket a\ket b\ket c\ket0
			\longmapsto
			\ket a\ket b\ket{c+ab}\ket0
			\label{eq:field-mult-map}
		\end{equation}
		over $\bF_{2^s}$ has a clean $\delta$D implementation using space $\cO_\delta(s)$ and depth $\cO_\delta(s^{1/\delta})$.  The same bound holds for multiplication by a fixed field element.
	\end{corollary}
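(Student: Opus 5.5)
The plan is to realize the map $\ket a\ket b\ket c\ket0\mapsto\ket a\ket b\ket{c+ab}\ket0$ as a constant-length compute--add--uncompute sequence. Every step will be either a call to Corollary~\ref{cor:base-poly-mult} over $A=\bF_2$ on length-$\cO(s)$ polynomials, or a fixed coefficient permutation or coefficientwise $\bF_2$-linear map. Coefficient permutations such as $\operatorname{rev}_L$ or truncation $\operatorname{low}_s$ cost $\cO_\delta(s^{1/\delta})$ by Lemma~\ref{lem:grid_permutation_routing_app}. Coefficientwise linear maps cost the same by Lemma~\ref{lem:coeff-linear-map}. Since the number of steps is a constant independent of $s$, the total space is $\cO_\delta(s)$ and the total depth is $\cO_\delta(s^{1/\delta})$.

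The concrete steps, in order, are as follows.
\begin{enumerate}
\item Compute $P=ab$ into a clean scratch register of length $2s$, using one polynomial multiplication.
\item Form $\operatorname{rev}_{2s}(P)$ by routing. Multiply it by the hardwired polynomial $u=(p^*)^{-1}\bmod z^s$ into a fresh scratch register, keeping only the low $s$ coefficients. By Lemma~\ref{lem:reciprocal-reduction}, this register then holds $Q^*=\operatorname{rev}_s(Q)$.
\item Reverse $Q^*$ to obtain $Q$, and multiply $Q$ by the hardwired $p_{<s}$ into a further scratch register.
\item Add the low $s$ coefficients of $P$ and of $Qp_{<s}$ into the target $c$, using a coefficientwise XOR. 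By Eq.~\eqref{eq:remainder-low-coeffs}, this adds exactly $R=ab \bmod p$ to $c$.
\item Run steps 1--3 backward to return all scratch registers to $0$.
\end{enumerate}

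Multiplication by a hardwired polynomial can be handled in either of two ways. One option is to load the constant's coefficients into an ancilla register with $X$ gates, apply Corollary~\ref{cor:base-poly-mult}, and unload afterwards. Alternatively, one can note that multiplication by a fixed polynomial is an $\bF_2$-linear map that is a special case of the general multiplication circuit. For the final claim about multiplication by a fixed field element $\beta$, I would use the same loading trick: write $\beta$ into a clean register, apply the general circuit with $b=\beta$, and unload.

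The main point needing care is the truncation $\operatorname{low}_s$ inside a reversible computation. Corollary~\ref{cor:base-poly-mult} produces the full product of length $<2s$ in scratch space, not just its low coefficients. I would therefore compute the full product $\operatorname{rev}_{2s}(P)\cdot u$ into a scratch register of length $3s$. Only its low $s$ coefficients are then used in later steps, and the entire register is cleaned in the uncompute pass, so no garbage remains.

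The remaining bookkeeping is to verify two things. First, all the intermediate registers fit in $\cO(s)$ sites. Second, routing between the placements chosen by the successive primitives also stays within diameter time. This follows again from Lemma~\ref{lem:grid_permutation_routing_app}, since each re-placement is just a permutation of $\cO(s)$ sites on a grid of side $\Theta(s^{1/\delta})$.
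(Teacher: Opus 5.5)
Your proposal is correct and follows essentially the same route as the paper: compute $P=ab$ via Corollary~\ref{cor:base-poly-mult} over $\bF_2$, obtain $Q$ from the reciprocal formula of Lemma~\ref{lem:reciprocal-reduction} with the hardwired $u$, and form $Qp_{<s}$. You then add $\operatorname{low}_s(P+Qp_{<s})$ into the target and uncompute, with a fixed multiplicand hardwired (your loading trick is equivalent). The only cosmetic difference is that the paper truncates to $H=\operatorname{low}_s(\operatorname{rev}_{2s}(P))$ before multiplying by $u$, whereas you multiply the full reversed product and discard the high coefficients; both give the same $\operatorname{low}_s$ result at the same $\cO_\delta(s)$ space and $\cO_\delta(s^{1/\delta})$ depth.
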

	
	\begin{proof}
		Represent $a,b$ by binary polynomials $A,B\in\bF_2[z]_{<s}$.  First compute the ordinary product $P=AB$, of degree $<2s$, into scratch using Corollary~\ref{cor:base-poly-mult} over $\bF_2$.  Then form $H=\operatorname{low}_s(\operatorname{rev}_{2s}(P))$ by wire reversal and coefficient selection.  Compute $Q^*=\operatorname{low}_s(Hu)$, where $u=(p^*)^{-1}\bmod z^s$ is fixed, and reverse $Q^*$ to obtain $Q$.  Next compute $T=Qp_{<s}$, where $p_{<s}$ is fixed.  By Lemma~\ref{lem:reciprocal-reduction}, adding $\operatorname{low}_s(P+T)$ to the target adds $AB\bmod p$.  Finally, reverse the computations of $T,Q,Q^*,H,P$.
		
		There are only a constant number of polynomial multiplications of length $\cO(s)$, plus reversals, shifts, and XORs.  By Corollary~\ref{cor:base-poly-mult}, the space is $\cO_\delta(s)$ and the depth is $\cO_\delta(s^{1/\delta})$.  If one input is a fixed field element, the same circuit is used with that input hardwired.
	\end{proof}
	
	\section{$t$-wise independent functions with $\delta$D implementations}
	\label{sec:t-wise-independent-functions}
	
	The LRFC construction uses both binary phase functions and vector-valued shuffle functions (see \eqref{eq:t-wise-phase}, \eqref{eq:t-wise-shuffle}).  Both are obtained from the same polynomial family over a finite field.
	
	\begin{lemma}[Finite-field polynomial $t$-wise independence]
		\label{lem:twise}
		Let $K=\bF_{2^s}$ and $t\ge1$.  Choose coefficients $a_0,\ldots,a_{t-1}\in K$ independently and uniformly, and define
		\begin{equation}
			P_a(x):=\sum_{j=0}^{t-1}a_jx^j.
		\end{equation}
		Then $x\mapsto P_a(x)$ is a $t$-wise independent function $K\to K$.  If $\ell:K\to\bF_2$ is a nonzero linear functional, then $x\mapsto \ell(P_a(x))$ is a $t$-wise independent binary function.
	\end{lemma}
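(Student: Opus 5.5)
The plan is the standard Vandermonde argument. Write $t$-wise independence as the requirement that for every choice of $t$ distinct points $x_1,\ldots,x_t\in K$, the random vector $(P_a(x_1),\ldots,P_a(x_t))$ is uniformly distributed on $K^t$. The claim for fewer than $t$ distinct points then follows by marginalization: pad the list with arbitrary additional distinct points, which is possible when $|K|=2^s\ge t$, the regime in which the statement is meaningful.

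The key step is linear algebra over $K$. The map $a=(a_0,\ldots,a_{t-1})\mapsto (P_a(x_1),\ldots,P_a(x_t))$ is $K$-linear and is given by the $t\times t$ Vandermonde matrix $V_{i,j}=x_i^{\,j}$. Its determinant is $\prod_{i<i'}(x_{i'}-x_i)$, which is nonzero because the $x_i$ are distinct and $K$ is a field. This is the same invertibility fact noted in Definition~\ref{def:unit-separated}, where every nonzero element of a field is a unit. Hence $V$ is a bijection of $K^t$. Since $a$ is uniform on $K^t$, its image $Va$ is uniform on $K^t$, which is exactly the independence and uniformity of the values at the $t$ points.

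For the binary statement, fix distinct $x_1,\ldots,x_t$ and apply $\ell$ coordinatewise. We need that $(\ell(y_1),\ldots,\ell(y_t))$ is uniform on $\bF_2^t$ when $(y_1,\ldots,y_t)$ is uniform on $K^t$. Because the $y_i$ are independent, it suffices to show that $\ell(y)$ is a uniform bit for a single uniform $y\in K$. A nonzero $\bF_2$-linear functional $\ell:K\to\bF_2$ is surjective, and its kernel is a subspace of index $2$. Every fiber $\ell^{-1}(b)$ is therefore a coset of size $|K|/2$, so $\Pr[\ell(y)=b]=1/2$. Independence of the $\ell(y_i)$ is inherited from independence of the $y_i$.

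No step is a real obstacle. The only points needing care are the invertibility of the Vandermonde matrix, which requires that the field admits no zero divisors among the differences $x_{i'}-x_i$, and the convention that $t$-wise independence concerns distinct inputs, together with the implicit assumption $t\le|K|$ needed for padding.
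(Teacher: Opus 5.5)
Your proof is correct and follows the paper's argument. Invertibility of the Vandermonde matrix over the field gives uniform, independent values, and a nonzero $\bF_2$-linear functional sends independent uniform field elements to independent uniform bits. The only difference is that the paper handles any $r\le t$ distinct points directly: the $r\times t$ Vandermonde matrix has rank $r$, so every $r$-tuple of values has exactly $|K|^{t-r}$ preimages, which avoids your padding step and its assumption $t\le|K|$.
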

	
	\begin{proof}
		For distinct $x_1,\ldots,x_r\in K$ with $r\le t$, the evaluation map from coefficients to values has an $r\times t$ Vandermonde matrix.  Its first $r$ columns have determinant $\prod_{i<j}(x_j-x_i)\ne0$, so the map has rank $r$.  Hence, each $r$-tuple of values has exactly $|K|^{t-r}$ preimages, proving independent uniformity.  Applying a nonzero linear functional to independent uniform field elements gives independent uniform bits.
	\end{proof}
	
	We next give a clean evaluation circuit.  The input $x$, the output $y$, and all workspace registers below are field registers, each containing $s$ qubits.
	
	\begin{theorem}[Diameter-time evaluation of polynomial functions]
		\label{thm:hash-eval}
		Let $t=2k$ and $K=\bF_{2^s}$.  For a sampled polynomial $P_a(x)=\sum_{j=0}^{t-1}a_jx^j$ over $K$, the clean reversible circuit
		\begin{equation}
			\ket x\ket y\ket0
			\longmapsto
			\ket x\ket{y+P_a(x)}\ket0
			\label{eq:hash-eval-map}
		\end{equation}
		can be implemented in $\delta$D space $\cO_\delta(ks)$ and depth $\cO_\delta\left(\log(k)(ks)^{1/\delta}\right)$.  The binary phase oracle $\ket x\mapsto (-1)^{\ell(P_a(x))}\ket x$ for any nonzero linear functional $\ell:K\to\bF_2$ has the same asymptotic resources.
	\end{theorem}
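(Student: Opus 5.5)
We evaluate $P_a(x)=\sum_{j<t}a_jx^j$ by computing all powers $x^j$ in parallel, multiplying each by its coefficient, and summing. Every field multiplication uses Corollary~\ref{cor:field-mult} on a block of $\cO_\delta(s)$ qubits in depth $\cO_\delta(s^{1/\delta})$. Routing between blocks uses Lemma~\ref{lem:grid_permutation_routing_app} on a grid of $\Theta(ks)$ sites, at cost $\cO_\delta((ks)^{1/\delta})$ per round. The sampled coefficients $a_j$ are classical, so multiplication by $a_j$ is multiplication by a fixed field element and can be hardwired.

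\textbf{Step 1 (powers by repeated squaring, $\lceil\log_2 t\rceil$ rounds).} Allocate $t$ clean field registers $r_0,\dots,r_{t-1}$ and set $r_0=1$, $r_1=x$ by copying. In round $q$, for each $j$ with $2^{q}\le j<2^{q+1}$, compute $r_j\leftarrow r_j + r_{2^q}\, r_{j-2^q}$. Both factors are already available, and at most $2^q\le t$ disjoint multiplications run simultaneously.

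A register used as a factor by many products would be a fan-out bottleneck. To avoid it, I would first copy $r_{2^q}$ into $2^q$ scratch registers by a doubling fan-out: $q$ sub-rounds of CNOT-copies, each preceded by a routing permutation. Each round then costs one routing of cost $\cO_\delta((ks)^{1/\delta})$ to bring factor pairs adjacent, one parallel layer of Corollary~\ref{cor:field-mult} of depth $\cO_\delta(s^{1/\delta})$, and the fan-out together with its uncomputation.

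\textbf{Step 2 (coefficients and summation).} In parallel, replace each $r_j$ in place by $a_jr_j$. This is valid because multiplication by a nonzero fixed element is an invertible $\bF_2$-linear map on $s$ bits; for $a_j=0$, instead use the out-of-place version into a fresh register. Sum the $t$ registers into $y$ with a binary tree of $\lceil\log_2 t\rceil$ addition rounds, each preceded by a routing step. Alternatively, treat addition in $K$ as bitwise XOR and apply Lemma~\ref{lem:fixed-ring-reduction} coordinatewise, which costs $\cO_\delta((ks)^{1/\delta})$ in one shot. Finally, run Steps 1--2 (except the addition into $y$) backward, so that all ancillas are returned to $0$ and the map in Eq.~\eqref{eq:hash-eval-map} results.

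\textbf{Resources.} Space is $t$ field registers plus $\cO(t)$ scratch and the multiplication workspace, i.e.\ $\cO_\delta(ks)$. Depth is $\cO(\log k)$ rounds of $\cO_\delta((ks)^{1/\delta})$ each. The fan-out inside each round costs $\cO(q)$ sub-rounds, so a naive bound gives $\log^2 k$. I expect this to be the main obstacle.

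To obtain a single $\log k$ factor, I would rather maintain, after round $q$, at least $\lceil t/2^q\rceil$ redundant copies of the newest power $x^{2^q}$. Squaring each copy in place is not valid, since squaring is linear over $\bF_2$ but the copies must be preserved. Instead, produce two copies of $x^{2^{q+1}}$ from each pair of copies of $x^{2^q}$, using one multiplication and one copy. Each round then needs only a constant number of parallel multiplications and copy layers, so the copy budget stays at $\cO(t)$ registers and the total depth is $\cO_\delta(\log(k)(ks)^{1/\delta})$.

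\textbf{Phase oracle.} Apply the map with $y=0$ in a fresh field register, apply $Z$ gates implementing $(-1)^{\ell(y)}$, since a linear functional $\ell$ is a parity of a fixed subset of bits, and then uncompute. This doubles the cost and gives the same asymptotic resources.
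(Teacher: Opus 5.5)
Most of your plan is sound: hardwired coefficients, XOR summation, and running the computation backward to clean ancillas. Your phase oracle, which applies $Z$ directly to the bits in the support of $\ell$, is correct and simpler than the paper's ancilla-based parity computation.

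\textbf{The gap.} It lies in the step you flag as the main obstacle: your copy-maintenance scheme does not match what Step~1 consumes. First fix an off-by-one: round $q$ should cover $2^q<j\le 2^{q+1}$, since $r_{2^q}$ must already exist to serve as a factor. In that round, up to $2^q$ disjoint multiplications each read $x^{2^q}$, so the demand for copies \emph{grows} with $q$. Your invariant of $\lceil t/2^q\rceil$ copies \emph{shrinks} with $q$. Once $q>\frac12\log_2 t$ there are fewer copies than simultaneous products. You must then either serialize, using about $4^q/t$ batches (i.e.\ $\Theta(t)$ in the last rounds), or fan out again, which reinstates the $\log^2 k$ bound. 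Moreover, your update rule (one multiplication and one copy per pair) keeps the copy count constant rather than halving it. Starting from $t$ copies of $x$, that would supply enough factors. However, retaining all $\log t$ levels for the final uncomputation then costs $\Theta(t\log t)$ field registers, violating the $\cO_\delta(ks)$ space bound. Under either reading, the claimed depth and space bounds do not follow simultaneously.

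\textbf{Repairs.} The fix is easy. Let the copy count grow instead. From $\Theta(2^q)$ copies of $x^{2^q}$, multiply disjoint pairs and apply two doubling copy layers to obtain $\Theta(2^{q+1})$ copies of $x^{2^{q+1}}$. Every register is then read $\cO(1)$ times per round, so each round is a constant number of layers of cost $\cO_\delta((ks)^{1/\delta})$, and the total storage is $\sum_q 2^q=\cO(t)$ registers. Alternatively, an $M$-fold fan-out of an $s$-bit register can be done in depth $\cO_\delta((Ms)^{1/\delta})$ by dimension-wise CNOT chains (the copy analogue of the sweeps in Lemma~\ref{lem:fixed-ring-reduction}), so the bottleneck is not real in the first place.

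\textbf{Comparison with the paper.} The paper takes a different route. It copies $x$ once into $t-1$ leaves and runs a parallel prefix on a balanced binary tree. The upsweep computes $G_I=x^{|I|}$ on dyadic intervals, which is precisely your family of $t/2^q$ copies of $x^{2^q}$. The downsweep $R_{I_L}=R_I$, $R_{I_R}=R_IG_{I_L}$ then delivers $x^{j-1}$ to leaf $j$. This downsweep is the ingredient your scheme lacks: instead of broadcasting $x^{2^q}$ to $2^q$ products, it passes partial products down the tree, so each register is read by at most two operations per level.
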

	
	\begin{proof}
		The coefficients $a_j\in K$ are classical constants fixed by the random seed.  Set $T:=t-1$.  If $T$ is not a power of two, pad the construction with dummy leaves equal to the field identity $1$.  This changes the number of leaves by at most a factor of two, so we assume below that $T$ is a power of two.
		
		First, using bitwise CNOTs, copy the computational-basis value $x$ into $T$ clean field registers:
		\begin{equation}
			\ket x\ket0^{\otimes T}
			\longmapsto
			\ket x\ket{x}^{\otimes T}.
		\end{equation}
		Call the leaves $G_1,\ldots,G_T$.  Next compute the interval products $G_I:=\prod_{q\in I}G_q$ for all dyadic intervals $I\subseteq\{1,\ldots,T\}$ using a balanced binary tree.  For an internal interval $I=I_L\sqcup I_R$, allocate a clean register $M_I$ and compute
		\begin{equation}
			M_I\longleftarrow M_I+G_{I_L}G_{I_R}.
		\end{equation}
		All products at the same tree level act on disjoint registers.
		
		A downsweep computes prefix products.  For each dyadic interval $I=[u,v]$, define $R_I:=\prod_{q<u}G_q$.  At the root, this value is $1$.  If $I=I_L\sqcup I_R$ with $I_L=[u,w]$ and $I_R=[w+1,v]$, then
		\begin{equation}
			R_{I_L}=R_I,
			\qquad
			R_{I_R}=R_I G_{I_L}.
		\end{equation}
		At a leaf $j$, this gives $R_{\{j\}}=x^{j-1}$.  Allocate a clean register $X_j$ and compute $X_j\leftarrow X_j+R_{\{j\}}G_j$, obtaining $X_j=x^j$ for $1\le j\le T$.  We set $X_0:=1$ as a known constant.
		
		For each $j=0,\ldots,t-1$, allocate a clean term register $Y_j$ and compute
		\begin{equation}
			Y_j\longleftarrow Y_j+a_jX_j,
		\end{equation}
		where $a_j$ is hardwired and $X_0=1$.  The term $Y_0=a_0$ is therefore just the preparation of a known classical field element.  Then use a balanced XOR tree over $K$ to compute
		\begin{equation}
			S=\sum_{j=0}^{t-1}Y_j=P_a(x)
		\end{equation}
		into a clean field register $S$, add $S$ to the target $y$, and reverse the computation.
		
		The construction uses $\cO(t)$ field registers, hence $\cO(ts)=\cO(ks)$ qubits.  There are $\cO(\log t)$ layers of disjoint register operations.  Each layer consists of field copies, field additions, or field multiplications on disjoint $s$-qubit registers.  In a grid of total space $\Theta(ts)$, Lemma~\ref{lem:mesh_compilation_app} and Corollary~\ref{cor:field-mult} implement one such layer in depth $\cO_\delta((ts)^{1/\delta})$.  Therefore the total depth is $\cO_\delta(\log(t)(ts)^{1/\delta})=\cO_\delta(\log(k)(ks)^{1/\delta})$.
		
		For the phase oracle, compute $P_a(x)$ into a clean field register $S$.  In the polynomial basis, the fixed nonzero linear functional $\ell:K\to\bF_2$ has the form $\ell(S)=\sum_{i=0}^{s-1}\lambda_iS_i$ for hardwired bits $\lambda_i\in\bF_2$.  By Lemma~\ref{lem:fixed-ring-reduction}, the update $b\leftarrow b+\ell(S)$ into one clean bit uses space $\cO_\delta(s)$ and depth $\cO_\delta(s^{1/\delta})$, which are dominated by the resources above.  Apply a $Z$ phase to $b$, then reverse the computation.  Thus, the phase oracle has the same asymptotic resources.
	\end{proof}
	
	\section{Exact unitary $2$-designs with $\delta$D implementations}
	\label{app:exact-2-design}
	We next construct the exact local unitary $2$-design, which serves as the first component of each LRFC block.  
	The proof is organized in three steps.  
	We first introduce trace-dual coordinates, because they identify the finite-field Fourier transform with $H^{\otimes s}$ followed by a coordinate conversion.  
	We then prove that the restricted finite-field Clifford ensemble is an exact unitary $2$-design.  
	Finally, we show that every sampled Clifford can be implemented in $\delta$D diameter time.
	
	Throughout this section, let
	\begin{equation}
		K:=\bF_{2^s}\cong \bF_2[\omega]/(p(\omega)),
	\end{equation}
	where $p(z)\in\bF_2[z]$ is a fixed irreducible polynomial of degree $s$.  Every field element is represented in the polynomial basis $x=x_0+x_1\omega+\cdots+x_{s-1}\omega^{s-1}$, where $x_j\in\bF_2$.  Thus $\cH_K:=\mathrm{span}\{\ket x:x\in K\}\cong (\mathbb C^2)^{\otimes s}$.
	
	\subsection{Trace-dual coordinates}
	\label{subsec:trace-dual-coordinates}
	
	\begin{definition}[Finite-field trace]
		\label{def:finite-field-trace}
		The trace from $K=\bF_{2^s}$ to $\bF_2$ is the map $\Tr=\Tr_{K/\bF_2}:K\to\bF_2$ defined by
		\begin{equation}
			\Tr(x):=x+x^2+x^{2^2}+\cdots+x^{2^{s-1}}.
			\label{eq:field-trace-def}
		\end{equation}
	\end{definition}
	
	The trace indeed takes values in $\bF_2$: if $T:=x+x^2+\cdots+x^{2^{s-1}}$, then $T^2=T$, since $x^{2^s}=x$ for every $x\in\bF_{2^s}$.  Hence $T\in\bF_2$.  The trace is $\bF_2$-linear and induces the nondegenerate pairing
	\begin{equation}
		K\times K\to\bF_2,
		\qquad
		(x,y)\longmapsto \Tr(xy).
		\label{eq:trace-pairing}
	\end{equation}
	Let $e_i:=\omega^i$ for $0\le i<s$.  The trace-dual basis $e_0^\vee,\ldots,e_{s-1}^\vee$ is defined by $\Tr(e_i e_j^\vee)=\delta_{ij}$.  Hence, if $x=\sum_{i=0}^{s-1}x_i e_i$ and $y=\sum_{j=0}^{s-1}\eta_j e_j^\vee$, then
	\begin{equation}
		\Tr(xy)=\sum_{i=0}^{s-1}x_i\eta_i.
		\label{eq:trace-dual-dot-product}
	\end{equation}
	
	The following lemma implements the conversion between trace-dual and polynomial coordinates, which is needed for the Fourier transform introduced later.
	
	\begin{lemma}[Trace-dual and polynomial coordinate conversion]
		\label{lem:trace-dual-conversion}
		The linear map
		\begin{equation}\label{eq:trace-dual-transform}
			L_{\vee\to\mathrm{poly}}:
			(\eta_0,\ldots,\eta_{s-1})
			\longmapsto
			\sum_{j=0}^{s-1}\eta_j e_j^\vee
		\end{equation}
		and its inverse can be implemented in-place in $\delta$D space $\cO_\delta(s)$ and depth $\cO_\delta(s^{1/\delta})$.
	\end{lemma}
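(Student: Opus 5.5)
The map $L_{\vee\to\mathrm{poly}}$ is a fixed $\bF_2$-linear bijection of $\bF_2^s$. Its matrix $\Gamma\in\mathrm{GL}(s,\bF_2)$ has $j$-th column equal to the polynomial coordinates of $e_j^\vee$. A generic fixed linear map would need an all-to-all circuit of depth up to $\Theta(s)$, so I will not compile $\Gamma$ directly. Instead I will express it through the structured arithmetic already built: finite-field multiplication (Corollary~\ref{cor:field-mult}), polynomial multiplication (Corollary~\ref{cor:base-poly-mult}), wire reversals, and coefficientwise maps (Lemma~\ref{lem:coeff-linear-map}).

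The key algebraic input is the classical Euler-type description of the trace-dual of a polynomial basis. Write $p(z)=(z-\omega)\,b(z)$ in $K[z]$ with $b(z)=\sum_{j=0}^{s-1}b_j z^j$, and let $p'$ be the formal derivative. Then
\begin{equation}
e_j^\vee = \frac{b_j}{p'(\omega)},\qquad j=0,\ldots,s-1.
\end{equation}
This follows from the partial-fraction identity $\Tr\bigl(\omega^i b(z)/((z-\omega)\ldots)\bigr)$, equivalently from $\sum_{\text{roots }\rho}\rho^i/(p'(\rho)(z-\rho))=z^i/p(z)$ truncated to degrees $<s$. Hence
\begin{equation}
\sum_j \eta_j e_j^\vee = p'(\omega)^{-1}\sum_j \eta_j b_j.
\end{equation}
The coefficients $b_j\in K$ come from synthetic division: $b_j=\sum_{k>j} p_k\,\omega^{k-j-1}$. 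Substituting, $\sum_j\eta_j b_j$ is the low part (reduced mod $p$) of a product of the binary polynomial $\eta(z)$ with the fixed reversed polynomial $\operatorname{rev}(p)$, up to an index reversal. Concretely, $\sum_j\eta_jb_j=\sum_{m}\omega^m\sum_{j}\eta_j p_{j+m+1}$. The inner sums are exactly the coefficients of degrees $s,\ldots,2s-1$ of $\operatorname{rev}_s(\eta)\cdot$ (a fixed polynomial) after a reversal, a Hankel/Toeplitz product. So the whole map is the composition of the following steps:
\begin{enumerate}
\item a wire reversal;
\item one polynomial multiplication by a fixed polynomial over $\bF_2$;
\item coefficient selection;
\item one field multiplication by the fixed unit $p'(\omega)^{-1}$.
\end{enumerate}

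For the circuit, I will implement this out-of-place and then make it in-place. First compute $y=\Gamma\eta$ into a clean register, using Corollary~\ref{cor:base-poly-mult} and Corollary~\ref{cor:field-mult} with hardwired inputs. Next compute the inverse map out-of-place to clear $\eta$: apply $\eta\leftarrow\eta+\Gamma^{-1}y$, and then swap the registers. Reversals and swaps are permutations, so by Lemma~\ref{lem:grid_permutation_routing_app} they cost $\cO_\delta(s^{1/\delta})$. Every step uses $\cO_\delta(s)$ space and depth $\cO_\delta(s^{1/\delta})$.

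The inverse $\Gamma^{-1}$ maps $y\mapsto(\Tr(e_iy))_i$, so I also need a structured circuit for it. I will write $\Tr(\omega^i y)$ through the same identity in reverse. Multiply by $p'(\omega)$ to get $\sum_j\eta_jb_j$. Then invert the Toeplitz product, which is the reciprocal-division trick of Lemma~\ref{lem:reciprocal-reduction}: the fixed triangular Toeplitz matrix built from $p$ has an inverse that is again a power-series truncation. This gives another constant number of fixed polynomial multiplications.

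The main obstacle is getting the index bookkeeping right, namely checking that $\Gamma$ really factors as reversal $\circ$ truncated fixed-polynomial product $\circ$ field scaling, and that the Toeplitz part is unitriangular so that its inverse is a truncated power-series product. I would verify the Euler formula and the unitriangularity first (the leading coefficient $p_s=1$ gives the unit diagonal). After that, the depth bounds are immediate from the cited primitives.
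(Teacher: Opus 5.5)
Your proposal is correct. The forward direction coincides with the paper's. You use the same dual-basis formula $e_j^\vee=p'(\omega)^{-1}\sum_{k>j}p_k\omega^{k-j-1}$, which the paper simply cites. You realize it the same way: a reversal, one product with the fixed polynomial $p$, extraction of the coefficients of degrees $s,\dots,2s-1$, and a fixed field multiplication by $p'(\omega)^{-1}$. The out-of-place-to-in-place trick is also identical.

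The genuine difference is the inverse. The paper never inverts the factorization. It uses the defining property of the dual basis, $\eta_i=\Tr(\omega^i y)=\sum_t y_t\Tr(\omega^{i+t})$. So $L_{\vee\to\mathrm{poly}}^{-1}$ is a single Hankel product of $\operatorname{rev}_s(y)$ with the fixed trace sequence $\tau(z)=\sum_{r\le 2s-2}\Tr(\omega^r)z^r$, followed by coefficient extraction.

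You instead undo the factors one by one, and your bookkeeping does close. Let $c$ denote the polynomial coordinates of $\sum_j\eta_jb_j$, and set $\eta(z)=\sum_i\eta_iz^i$. Then $\operatorname{rev}_s(c)=\operatorname{low}_s(p^*\eta)$, with $p^*=\operatorname{rev}_{s+1}(p)$ and $p^*(0)=1$. Hence $\eta=\operatorname{low}_s\bigl(u\,\operatorname{rev}_s(c)\bigr)$, with exactly the constant $u=(p^*)^{-1}\bmod z^s$ of Lemma~\ref{lem:reciprocal-reduction}.

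The paper's route is leaner: one polynomial multiplication and no field multiplication. Its correctness also rests only on the definition of trace-dual coordinates, independently of the Euler formula. Your route makes invertibility manifest, as a unit field scaling composed with multiplication by a unit of $\bF_2[z]/(z^s)$, and needs no trace values. It costs only a constant number of extra multiplications, so the $\cO_\delta(s)$ space and $\cO_\delta(s^{1/\delta})$ depth bounds are unchanged.

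Two minor nits. First, the parenthetical ``reduced mod $p$'' is unnecessary, since $\sum_j\eta_jb_j$ already has $\omega$-degree below $s$. Second, the first half of your justification of the dual-basis formula is garbled. The clean argument is Lagrange interpolation: $\sum_\rho\rho^i\,p(z)/\bigl((z-\rho)p'(\rho)\bigr)=z^i$ for $i<s$, summed over the Frobenius-conjugate roots $\rho$ of $p$. Comparing coefficients of $z^j$ gives $\Tr\bigl(\omega^i b_j/p'(\omega)\bigr)=\delta_{ij}$.
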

	
	\begin{proof}
		Write $p(z)=p_0+p_1z+\cdots+p_{s-1}z^{s-1}+z^s$, with $p_s:=1$.  The trace-dual basis of the power basis is given by the standard formula~\cite[Theorem 5.1.12]{Mullen2013HandbookFiniteFields}: 
		\begin{equation}
			e_i^\vee
			=
			\lambda\sum_{j=i+1}^{s}p_j\omega^{j-i-1},
			\qquad
			\lambda:=(p'(\omega))^{-1}\in K.
			\label{eq:power-basis-dual-formula}
		\end{equation}
		Thus, for $y=\sum_{i=0}^{s-1}\eta_i e_i^\vee$,
		\begin{equation}
			y
			=
			\lambda
			\sum_{i=0}^{s-1}\eta_i
			\sum_{j=i+1}^{s}p_j\omega^{j-i-1}.
			\label{eq:dual-conversion-expanded}
		\end{equation}
		Define $E^{\mathrm{rev}}(z):=\sum_{i=0}^{s-1}\eta_i z^{s-1-i}$.  The coefficient of $z^{s+t}$ in $E^{\mathrm{rev}}(z)p(z)$ is
		\begin{equation}
			\sum_{i=0}^{s-1-t}\eta_i p_{i+t+1},
			\qquad
			0\le t<s,
			\label{eq:convolution_coefficients}
		\end{equation}
		which is exactly the coefficient of $z^t$ in the inner sum of Eq.~\eqref{eq:dual-conversion-expanded}.  Therefore, $L_{\vee\to\mathrm{poly}}$ is implemented by reversing the input list, multiplying by the fixed polynomial $p(z)$, extracting coefficients of degrees $s,\ldots,2s-1$, and multiplying the resulting field element by the fixed element $\lambda$.
		
		The multiplication by a fixed $p(z)$ is an ordinary polynomial multiplication of length $\cO(s)$, and the multiplication by $\lambda$ is fixed finite-field multiplication.  Corollaries~\ref{cor:base-poly-mult} and~\ref{cor:field-mult} give space $\cO_\delta(s)$ and depth $\cO_\delta(s^{1/\delta})$.
		
		For the inverse map, if $y=\sum_{j=0}^{s-1}\eta_j e_j^\vee$, then $\eta_i=\Tr(e_i y)=\Tr(\omega^i y)$.  Writing $y=\sum_{t=0}^{s-1}y_t\omega^t$, we get
		\begin{equation}
			\eta_i
			=
			\sum_{t=0}^{s-1}y_t\Tr(\omega^{i+t}),
			\qquad
			0\le i<s.
			\label{eq:inverse-trace-dual}
		\end{equation}
		Let $\tau_r:=\Tr(\omega^r)\in\bF_2$ for $0\le r\le 2s-2$, define $\tau(z):=\sum_{r=0}^{2s-2}\tau_rz^r$, and define $Y^{\mathrm{rev}}(z):=\sum_{t=0}^{s-1}y_tz^{s-1-t}$.  The coefficient of $z^{s-1+i}$ in $Y^{\mathrm{rev}}(z)\tau(z)$ is $\eta_i$.  Hence, the inverse conversion is also a fixed polynomial multiplication of length $\cO(s)$, followed by coefficient extraction.
		
		Finally, an out-of-place implementation of an invertible linear map and its inverse gives an in-place implementation by
		\begin{equation}
			\ket x\ket0
			\longmapsto
			\ket x\ket{Lx}
			\longmapsto
			\ket0\ket{Lx}
			\longmapsto
			\ket{Lx}\ket0,
			\label{eq:inplace-linear-map}
		\end{equation}
		where the second arrow adds $L^{-1}(Lx)=x$ into the first register.  This completes the proof.
	\end{proof}
	
	\subsection{The restricted finite-field Clifford ensemble}
	\label{subsec:restricted-clifford}
	Let $q:=|K|=2^s$.  For $a,b\in K$, define
	\begin{equation}
		X_a\ket x:=\ket{x+a},
		\qquad
		Z_b\ket x:=(-1)^{\Tr(bx)}\ket x.
	\end{equation}
	These satisfy $Z_bX_a=(-1)^{\Tr(ab)}X_aZ_b$.
	
	For $v=(a,b)\in K^2$, define the Hermitian Weyl operator
	\begin{equation}
		D_v:=i^{\Tr(ab)}X_aZ_b.
		\label{eq:self-adjoint-weyl}
	\end{equation}
	Then $D_v^\dagger=D_v$ and $D_v^2=I$.  The operators $\{D_v:v\in K^2\}$ form an orthogonal basis of $\operatorname{End}(\cH_K)$:
	\begin{equation}
		\tr(D_vD_w)=q\delta_{v,w}.
		\label{eq:weyl-orthogonality}
	\end{equation}
	Here $D_0=I$, and all $D_v$ with $v\ne0$ are traceless.
	
	Define the symplectic form
	\begin{equation}
		[u,v]:=\Tr(ad+bc),
		\qquad
		u=(a,b),\quad v=(c,d).
	\end{equation}
	Conjugation by $D_u$ acts diagonally on the Weyl basis:
	\begin{equation}
		D_uD_vD_u=(-1)^{[u,v]}D_v.
		\label{eq:weyl-character-action}
	\end{equation}
	
	Let $\operatorname{SL}(2,K)$ act on $K^2$ by left multiplication on column vectors.  Since the characteristic is two,
	\begin{equation}
		\operatorname{SL}(2,K)
		=
		\left\{
		\begin{pmatrix}
			a&b\\
			c&d
		\end{pmatrix}
		:
		ad+bc=1
		\right\}.
	\end{equation}
	For each $M\in\operatorname{SL}(2,K)$, fix one unitary lift $U_M$ satisfying
	\begin{equation}
		U_MD_vU_M^\dagger=\sigma(M,v)D_{Mv},
		\qquad
		\sigma(M,v)\in\{\pm1\}.
		\label{eq:symplectic-lift-action}
	\end{equation}
	The required lifts are implemented explicitly in Lemma~\ref{lem:finite-dimensional-clifford-generators} below.

	We now define the restricted finite-field Clifford ensemble and prove its second-moment property.  
	\begin{definition}[Restricted finite-field Clifford ensemble]
		\label{def:restricted-finite-field-clifford}
		The restricted finite-field Clifford ensemble is
		\begin{equation}
			\mathcal C_{\mathrm{res}}(K)
			:=
			\{\,U_MD_u:M\in\operatorname{SL}(2,K),\ u\in K^2\,\},
		\end{equation}
		sampled by choosing $M$ uniformly from $\operatorname{SL}(2,K)$ and $u$ uniformly from $K^2$.
	\end{definition}
	
	\begin{lemma}[Restricted finite-field Clifford ensemble is an exact unitary $2$-design]
		\label{lem:restricted-clifford-direct-2-design}
		The ensemble $\mathcal C_{\mathrm{res}}(K)$ is an exact unitary $2$-design on $\cH_K$.
	\end{lemma}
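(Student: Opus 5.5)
We prove that $\mathcal C_{\mathrm{res}}(K)$ is an exact unitary $2$-design by working in the Weyl basis. Write the twirl as
\[
\cT(X)=\mathbb E_{M,u}\,(U_MD_u)^{\otimes2}X(D_uU_M^\dagger)^{\otimes2}.
\]
Both $\cT$ and the Haar twirl are linear, so it suffices to check that they agree on every basis element $D_v\otimes D_w$ of $\operatorname{End}(\cH_K^{\otimes2})$. The Haar twirl acts on this basis as follows. It maps $D_v\otimes D_w$ to $0$ unless $v=w$. It fixes $I\otimes I$. For $v\neq0$, it maps $D_v\otimes D_v$ to the uniform average $\frac{1}{q^2-1}\sum_{v'\neq0}D_{v'}\otimes D_{v'}$. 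This is the standard characterization: the second-moment Haar twirl is the projection onto $\mathrm{span}\{I,F\}$, and the swap $F=\frac1q\sum_v D_v\otimes D_v$.

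\textbf{Step 1 (Weyl averaging).} Use Eq.~\eqref{eq:weyl-character-action}:
\[
D_u^{\otimes2}(D_v\otimes D_w)D_u^{\otimes2}=(-1)^{[u,v]+[u,w]}D_v\otimes D_w=(-1)^{[u,v+w]}D_v\otimes D_w .
\]
Averaging over uniform $u\in K^2$ kills the term unless $v+w=0$, that is, $v=w$ in characteristic two. This needs nondegeneracy of $[\cdot,\cdot]$, which follows from nondegeneracy of the trace pairing~\eqref{eq:trace-pairing}.

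\textbf{Step 2 (symplectic averaging).} For $v=w$, apply Eq.~\eqref{eq:symplectic-lift-action}:
\[
U_M^{\otimes2}(D_v\otimes D_v)U_M^{\dagger\otimes2}=\sigma(M,v)^2\,D_{Mv}\otimes D_{Mv}=D_{Mv}\otimes D_{Mv}.
\]
The sign ambiguity disappears because it enters squared. So it remains to show that $\operatorname{SL}(2,K)$ acts transitively on $K^2\setminus\{0\}$. Then averaging over uniform $M$ sends each $v\neq0$ to the uniform distribution on nonzero vectors, since the stabilizers of all nonzero vectors have equal size.

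Transitivity is elementary. Given a nonzero $v=(a,c)^{\mathsf T}$, choose $(b,d)$ with $ad+bc=1$. This is possible because $a$ and $c$ are not both zero: take $d=a^{-1}$, $b=0$ if $a\neq0$, and $b=c^{-1}$, $d=0$ otherwise. The resulting matrix lies in $\operatorname{SL}(2,K)$ and maps $e_1$ to $v$.

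\textbf{Step 3 (matching Haar).} Combining Steps 1 and 2, $\cT$ agrees with the Haar twirl on every basis element $D_v\otimes D_w$, so $\mathcal C_{\mathrm{res}}(K)$ is an exact unitary $2$-design.

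The main point needing care is the order of the two averages. We sample $U_MD_u$, so the conjugation by $D_u$ is applied first and the $u$-average is taken first. This is harmless, because after the $u$-average only diagonal terms survive, and $U_M$ preserves diagonality. The other subtle point is checking that the symplectic form in Eq.~\eqref{eq:weyl-character-action} is preserved by $\operatorname{SL}(2,K)$. Strictly, this is needed only for the existence of the lifts $U_M$, which the paper provides. The commutation identity itself follows from $Z_bX_a=(-1)^{\Tr(ab)}X_aZ_b$.
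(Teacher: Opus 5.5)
Your proof is correct and follows the paper's argument. You expand in the Weyl basis, use the displacement average to kill the $v\neq w$ terms, and use transitivity of $\operatorname{SL}(2,K)$ on $K^2\setminus\{0\}$, with the lift signs squaring away. The paper finishes by rewriting the twirled operator via $\tr(Y)$ and $\tr(FY)$ to recognize the Haar twirl, whereas you match the two channels basis element by basis element and also supply the transitivity and nondegeneracy checks the paper leaves implicit.
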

	
	\begin{proof}
		Let
		\begin{equation}
			\cT_{\mathrm{res}}^{(2)}(Y)
			:=
			\mathbb E_{M,u}
			(U_MD_u)^{\otimes2}
			Y
			(D_uU_M^\dagger)^{\otimes2}
		\end{equation}
		be the second-moment twirling channel.  Let $F$ denote the swap operator on $\cH_K\otimes\cH_K$.  For the Hermitian orthogonal basis $\{D_v:v\in K^2\}$,
		\begin{equation}
			F=\frac1q\sum_{v\in K^2}D_v\otimes D_v.
			\label{eq:swap-weyl-expansion}
		\end{equation}
		Every $Y\in\operatorname{End}(\cH_K^{\otimes2})$ has a unique expansion
		\begin{equation}
			Y
			=
			\frac1{q^2}
			\sum_{v,w\in K^2}
			y_{v,w}D_v\otimes D_w,
			\qquad
			y_{v,w}:=\tr\left[(D_v\otimes D_w)Y\right].
			\label{eq:Y-weyl-expansion}
		\end{equation}
		
		First average over the displacement $u$.  By Eq.~\eqref{eq:weyl-character-action},
		\begin{equation}
			D_u^{\otimes2}(D_v\otimes D_w)D_u^{\otimes2}
			=
			(-1)^{[u,v]+[u,w]}D_v\otimes D_w.
		\end{equation}
		The average over $u\in K^2$ is zero unless $v=w$.  Therefore
		\begin{equation}
			\mathcal P^{(2)}(Y)
			:=
			\mathbb E_uD_u^{\otimes2}YD_u^{\otimes2}
			=
			\frac1{q^2}\sum_{v\in K^2}y_{v,v}D_v\otimes D_v.
			\label{eq:pauli-second-twirl}
		\end{equation}
		
		The group $\operatorname{SL}(2,K)$ is transitive on $K^2\setminus\{0\}$.  Thus, for every nonzero $v$,
		\begin{equation}
			\mathbb E_{M\in\operatorname{SL}(2,K)}D_{Mv}\otimes D_{Mv}
			=
			\frac1{q^2-1}\sum_{w\in K^2\setminus\{0\}}D_w\otimes D_w.
			\label{eq:symplectic-average-nonzero}
		\end{equation}
		Using Eq.~\eqref{eq:symplectic-lift-action}, the signs cancel in the tensor square, so combining Eqs.~\eqref{eq:pauli-second-twirl} and~\eqref{eq:symplectic-average-nonzero} gives
		\begin{equation}
			\cT_{\mathrm{res}}^{(2)}(Y)
			=
			\frac{y_{0,0}}{q^2}I\otimes I
			+
			\frac{\sum_{v\ne0}y_{v,v}}{q^2(q^2-1)}
			\sum_{w\ne0}D_w\otimes D_w.
			\label{eq:restricted-twirl-diagonal}
		\end{equation}
		
		Now $y_{0,0}=\tr(Y)$, Eq.~\eqref{eq:swap-weyl-expansion} implies $\tr(FY)=q^{-1}\sum_{v\in K^2}y_{v,v}$, and $\sum_{w\ne0}D_w\otimes D_w=qF-I\otimes I$.  Substituting these identities into Eq.~\eqref{eq:restricted-twirl-diagonal} yields
		\begin{equation}
			\cT_{\mathrm{res}}^{(2)}(Y)
			=
			\frac{q\tr(Y)-\tr(FY)}{q(q^2-1)}I\otimes I
			+
			\frac{q\tr(FY)-\tr(Y)}{q(q^2-1)}F.
			\label{eq:restricted-twirl-haar-form}
		\end{equation}
		This is exactly the second-order Haar twirl.  Therefore $\mathcal C_{\mathrm{res}}(K)$ is an exact unitary $2$-design.
	\end{proof}
	
	\subsection{The restricted Clifford generators with $\delta$D implementations}
	\label{subsec:finite-dimensional-clifford-generators}
	It remains to realize the unitary lifts appearing in Eq.~\eqref{eq:symplectic-lift-action}. 
	We use the standard generators of $\operatorname{SL}(2,K)$: Fourier transform, scaling, and quadratic shear, together with Weyl displacements.
	
	\begin{lemma}[Clifford generators with $\delta$D implementations]
		\label{lem:finite-dimensional-clifford-generators}
		For every $\delta\ge1$, each of the following gates has a clean $\delta$D implementation in space $\cO_\delta(s)$ and depth $\cO_\delta(s^{1/\delta})$: displacements $X_uZ_v$, the additive Fourier transform $F_K$, scalings $M_\lambda$ for $\lambda\in K^\times$, and quadratic shears $P_\gamma$ for $\gamma\in K$.  Their actions on Weyl labels realize the generators
		\begin{equation}
			w=
			\begin{pmatrix}
				0&1\\
				1&0
			\end{pmatrix},
			\qquad
			D(\lambda)=
			\begin{pmatrix}
				\lambda&0\\
				0&\lambda^{-1}
			\end{pmatrix},
			\qquad
			L(\gamma)=
			\begin{pmatrix}
				1&0\\
				\gamma&1
			\end{pmatrix}.
			\label{eq:clifford-generators-matrices}
		\end{equation}
	\end{lemma}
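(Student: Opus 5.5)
For each generator we will write the gate as a classical reversible map on computational-basis registers, possibly combined with Hadamards and a diagonal phase. Then we compile it with the arithmetic primitives of Sec.~\ref{app:finite-dimensional-arithmetic}, each of which costs space $\cO_\delta(s)$ and depth $\cO_\delta(s^{1/\delta})$. Finally we check the conjugation action on $X_a,Z_b$, which fixes the induced matrix on Weyl labels up to the sign $\sigma(M,v)$. Throughout we use that $D_{(a,b)}\propto X_aZ_b$, so it suffices to track how $X_a$ and $Z_b$ transform.

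\emph{Displacements.} In the polynomial basis $X_u$ is a product of bit flips $X^{u_0}\otimes\cdots$, so it has depth one. For $Z_v$ we have $(-1)^{\Tr(vx)}=(-1)^{\sum_i x_i\Tr(v\omega^i)}$, which is a product of single-qubit $Z$ gates with hardwired exponents $\Tr(v\omega^i)$. So $Z_v$ also has depth one.

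\emph{Scaling.} Set $M_\lambda\ket x=\ket{\lambda x}$. This is an invertible $\bF_2$-linear map, implemented in place via Eq.~\eqref{eq:inplace-linear-map} using two fixed-element multiplications from Corollary~\ref{cor:field-mult} (by $\lambda$ and by $\lambda^{-1}$). One checks $M_\lambda X_aM_\lambda^\dagger=X_{\lambda a}$ and $M_\lambda Z_bM_\lambda^\dagger=Z_{\lambda^{-1}b}$, since $\Tr(b\lambda^{-1}x')$ with $x'=\lambda x$ equals $\Tr(bx)$. This is exactly the action of $D(\lambda)$.

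\emph{Fourier transform.} Define $F_K\ket x=q^{-1/2}\sum_y(-1)^{\Tr(xy)}\ket y$. By Eq.~\eqref{eq:trace-dual-dot-product}, if the output $y$ is written in trace-dual coordinates, then $F_K=L_{\vee\to\mathrm{poly}}\circ H^{\otimes s}$. Lemma~\ref{lem:trace-dual-conversion} implements this within the stated resources. Conjugation gives $F_KX_aF_K^\dagger=Z_a$ and $F_KZ_bF_K^\dagger=X_b$, which realizes $w$. The signs are consistent because $\Tr(ab)$ is symmetric.

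\emph{Quadratic shear.} This is the main obstacle. We need a diagonal gate $P_\gamma\ket x=i^{Q_\gamma(x)}\ket x$ such that $P_\gamma X_aP_\gamma^\dagger\propto X_aZ_{\gamma a}$. This requires $Q_\gamma(x+a)-Q_\gamma(x)\equiv 2\Tr(\gamma ax)+\mathrm{const}\pmod 4$. Over characteristic two, the natural choice is $Q_\gamma(x)=\widetilde{\Tr}(\tilde\gamma\tilde x^2)\bmod 4$, computed in the Galois ring $\GR(4,s)$ via the Teichmüller-style lift $\tilde x=\sum_i x_i\tilde\omega^i$ and the Galois-ring trace. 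We would verify the required difference identity using $(\tilde x+\tilde a)^2=\tilde x^2+2\tilde a\tilde x+\tilde a^2$ together with the fact that the ring trace reduces mod $2$ to $\Tr$. The corrected constant is then absorbed into the sign $\sigma$ and the convention for $D_v$.

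For the implementation, we compute $\tilde x^2\in\GR(4,s)$ into clean scratch registers using Corollary~\ref{cor:base-poly-mult} over $\mathbb Z_4$. We then reduce modulo the lifted monic polynomial, which uses the same reciprocal trick as Lemma~\ref{lem:reciprocal-reduction} (valid over $\mathbb Z_4$ since the polynomial is monic). Next we multiply by the fixed $\tilde\gamma$. The $\mathbb Z_4$-valued linear functional $\widetilde\Tr$ is accumulated into a two-qubit register by Lemma~\ref{lem:fixed-ring-reduction}. We apply the phase $S$ ($i^{c}$) on that register and uncompute everything. Each step has space $\cO_\delta(s)$ and depth $\cO_\delta(s^{1/\delta})$. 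The delicate point is confirming that the mod-$4$ phase gives exactly $Z_{\gamma a}$ rather than an extra linear term. If it does give an extra term, we would cancel it with a compensating $Z_{c}$ displacement, which is harmless because the sign $\sigma$ is unconstrained.
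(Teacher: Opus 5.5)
Your proposal is correct. For $F_K$ and $M_\lambda$ it is the paper's argument, but it departs from the paper in two places.

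\textbf{Displacements.} You implement $Z_v$ as the depth-one layer $\bigotimes_i Z^{\Tr(v\omega^i)}$, using $\bF_2$-linearity of the trace. The paper instead accumulates $\Tr(vx)$ into an ancilla bit via Lemma~\ref{lem:fixed-ring-reduction}, applies the phase, and uncomputes. Your version is simpler and strictly cheaper.

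\textbf{Quadratic shear.} You use the Galois-ring form $\widetilde{\mathrm{Tr}}(\tilde\gamma\tilde x^2)$. This requires reducing $\tilde x^2$ modulo the lifted polynomial (a $\mathbb{Z}_4$ version of Lemma~\ref{lem:reciprocal-reduction}), a fixed multiplication in $\GR(4,s)$, and the Galois-ring trace. The paper avoids all of this. It squares the \emph{unreduced} polynomial $\widetilde X_x(z)$ in $\mathbb{Z}_4[z]$ and applies the functional with $0/1$ coefficients $\Tr(\gamma\omega^r)$ to its $2s-1$ coefficients. This works because the cross term appears only as $2(\cdot)\bmod 4$, so only its mod-$2$ value $\Tr(\gamma xy)$ matters. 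The paper's route therefore needs nothing beyond Corollary~\ref{cor:base-poly-mult} and Lemma~\ref{lem:fixed-ring-reduction}.

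Your composite ``reduce, multiply by $\tilde\gamma$, trace'' is itself a fixed $\mathbb{Z}_4$-linear functional on those unreduced coefficients, so you could collapse it the same way. The two forms have the same polarization, so they differ by an additive $\{0,2\}$-valued function $2\Tr(cx)$, i.e.\ by a displacement $Z_c$. Both therefore realize $L(\gamma)$. Your route gives a standard object with a transparent polarization check; the paper's uses fewer primitives.

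\textbf{Two details to tighten.}
\begin{enumerate}
\item The identity $(\tilde x+\tilde a)^2=\tilde x^2+2\tilde a\tilde x+\tilde a^2$ is not enough by itself, since $\widetilde{x+a}\neq\tilde x+\tilde a$ because of carries. You need that the difference lies in $2\GR(4,s)$ and that $(u+2w)^2=u^2$ there, so $\tilde x^2$ depends only on $x$. This is also why your coordinatewise lift is fine, even though it is not a Teichm\"uller lift. The paper handles this with $\widetilde{x_j+y_j}=\tilde x_j+\tilde y_j+2\tilde x_j\tilde y_j$.
\item With that fact, the polarization is exactly $2\Tr(\gamma a x)$, so no extra term appears. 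This matters because your fallback would not work: composing with a diagonal $Z_c$ only multiplies $P_\gamma X_aP_\gamma^\dagger$ by $(-1)^{\Tr(ca)}$ and cannot change the $Z$-label. The leftover phase $i^{Q_\gamma(a)}$ combines with $i^{\Tr(ab)}$ into $\pm1$, since $Q_\gamma(a)\equiv\Tr(\gamma a^2)\pmod 2$. This also follows from Hermiticity of $D_v$ being preserved under conjugation.
\end{enumerate}
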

	
	\begin{proof}
		For displacements, $X_u$ is bitwise XOR by a classical constant in the polynomial basis, and $Z_v$ applies the phase $(-1)^{\Tr(vx)}$.  The map $x\mapsto\Tr(vx)$ is a fixed $\bF_2$-linear functional of the $s$ polynomial-basis bits of $x$.  By Lemma~\ref{lem:fixed-ring-reduction}, it can be accumulated into one clean ancilla bit in space $\cO_\delta(s)$ and depth $\cO_\delta(s^{1/\delta})$, phased, and uncomputed.
		
		The additive Fourier transform is
		\begin{equation}
			F_K\ket x
			:=
			2^{-s/2}\sum_{y\in K}(-1)^{\Tr(xy)}\ket y.
			\label{eq:finite-field-fourier-corrected}
		\end{equation}
		This can be written as
		\begin{equation}
			F_K=L_{\vee\to\mathrm{poly}}\circ H^{\otimes s},
			\label{eq:fourier-implementation-corrected}
		\end{equation}
		where $L_{\vee\to\mathrm{poly}}$ is defined in Eq.~\eqref{eq:trace-dual-transform}. Lemma~\ref{lem:trace-dual-conversion} gives the claimed implementation.  Directly from the definition,
		\begin{equation}
			F_KX_aF_K^\dagger=Z_a,
			\qquad
			F_KZ_bF_K^\dagger=X_b,
			\label{eq:fourier-pauli-action-corrected}
		\end{equation}
		so $F_K$ realizes $w$.
		
		For $\lambda\in K^\times$, define $M_\lambda\ket x:=\ket{\lambda x}$.  A clean in-place implementation is obtained from fixed multiplication by $\lambda$ and $\lambda^{-1}$, using Corollary~\ref{cor:field-mult}.  Its Pauli action is
		\begin{equation}
			M_\lambda X_a M_\lambda^\dagger=X_{\lambda a},
			\qquad
			M_\lambda Z_b M_\lambda^\dagger=Z_{\lambda^{-1}b},
			\label{eq:scaling-pauli-action-corrected}
		\end{equation}
		so it realizes $D(\lambda)$.
		
		It remains to construct the shear.  For $x=\sum_{j=0}^{s-1}x_j\omega^j$, lift each bit $x_j\in\{0,1\}$ to $\widetilde x_j\in\mathbb Z_4$ and define $\widetilde X_x(z):=\sum_{j=0}^{s-1}\widetilde x_jz^j\in\mathbb Z_4[z]$.  For $0\le r\le2s-2$, set $\tau_r:=\Tr(\gamma\omega^r)\in\bF_2\subset\mathbb Z_4$, and define
		\begin{equation}
			Q_\gamma(x)
			:=
			\sum_{r=0}^{2s-2}\tau_r[\widetilde X_x(z)^2]_r
			\pmod 4.
			\label{eq:quadratic-refinement-corrected}
		\end{equation}
		Set $P_\gamma\ket x:=i^{Q_\gamma(x)}\ket x$.
		
		For bits lifted to $\mathbb Z_4$, $\widetilde{x_j+y_j}=\widetilde x_j+\widetilde y_j + 2\widetilde x_j\widetilde y_j\pmod4$.  Therefore
		\begin{equation}
			\widetilde X_{x+y}(z)^2
			=
			\widetilde X_x(z)^2+
			\widetilde X_y(z)^2+
			2\widetilde X_x(z)\widetilde X_y(z)
			\pmod4.
		\end{equation}
		Substituting into Eq.~\eqref{eq:quadratic-refinement-corrected} gives
		\begin{equation}
			\begin{split}
				Q_\gamma(x+y)-Q_\gamma(x)-Q_\gamma(y)
				&=
				2\sum_{r=0}^{2s-2}\Tr(\gamma\omega^r)
				[\widetilde X_x(z)\widetilde X_y(z)]_r
				\pmod4\\
				&=
				2\Tr(\gamma xy)
				\pmod4.
			\end{split}
			\label{eq:quadratic-polar-corrected}
		\end{equation}
		It follows that
		\begin{equation}
			P_\gamma X_aP_\gamma^\dagger
			=
			i^{Q_\gamma(a)}X_aZ_{\gamma a},
			\qquad
			P_\gamma Z_bP_\gamma^\dagger=Z_b.
			\label{eq:shear-pauli-action-corrected}
		\end{equation}
		Thus $P_\gamma$ realizes $L(\gamma)$ on Weyl labels, up to an irrelevant phase.
		
		The implementation of $P_\gamma$ is clean and diameter-time.  Compute $\widetilde X_x(z)^2$ over $\mathbb Z_4[z]$ using Corollary~\ref{cor:base-poly-mult} with $A=\mathbb Z_4$.  Then compute the fixed linear functional in Eq.~\eqref{eq:quadratic-refinement-corrected} into a two-bit $\mathbb Z_4$ accumulator.  This is an instance of Lemma~\ref{lem:fixed-ring-reduction} with $N=2s-1$ constant-size $\mathbb Z_4$ registers, so it uses space $\cO_\delta(s)$ and depth $\cO_\delta(s^{1/\delta})$.  Apply the phase $i^{Q_\gamma(x)}$ and reverse the computation.  Hence, $P_\gamma$ has the claimed resources.
	\end{proof}

	\begin{lemma}[Constant-word implementation of symplectic lifts]
		\label{lem:constant-word-symplectic-lifts}
		Every $M\in\operatorname{SL}(2,K)$ can be implemented, up to a global phase and the signs in Eq.~\eqref{eq:symplectic-lift-action}, by a constant-length product of the generators in Lemma~\ref{lem:finite-dimensional-clifford-generators}.
	\end{lemma}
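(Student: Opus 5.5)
We will use a Bruhat-type decomposition of $\operatorname{SL}(2,K)$ into the generators $w$, $D(\lambda)$, and $L(\gamma)$ from Lemma~\ref{lem:finite-dimensional-clifford-generators}, and then lift the word generator by generator. The first ingredient is the upper shear $U(\beta)=\begin{pmatrix}1&\beta\\0&1\end{pmatrix}$. It is the conjugate $U(\beta)=w\,L(\beta)\,w$, since $w$ swaps the two coordinates. So $U(\beta)$ is a word of length three in the generators.

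Now fix $M=\begin{pmatrix}a&b\\c&d\end{pmatrix}$ with $ad+bc=1$, and split into two cases on whether $a$ vanishes.
\begin{itemize}
\item If $a\neq 0$, we claim $M=L(c a^{-1})\,D(a)\,U(a^{-1}b)$. Multiplying out gives the entries $a$, $b$, $c$, and $ca^{-1}b+a^{-1}$. The last entry equals $d$ because $ad+bc=1$ in characteristic two.
\item If $a=0$, then $bc=1$. We write $M=w\,M'$, where $M'=wM=\begin{pmatrix}c&d\\0&b\end{pmatrix}$ has nonzero top-left entry $c$. The first case then applies to $M'$.
\end{itemize}
In either case $M$ is a product of at most eight generators. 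The parameters $\lambda,\gamma$ are explicit field elements computed classically from $M$ during sampling.

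Next we lift the word. Let $W$ be the unitary obtained by replacing each matrix generator with its unitary: $F_K$ for $w$, $M_\lambda$ for $D(\lambda)$, and $P_\gamma$ for $L(\gamma)$. By Lemma~\ref{lem:finite-dimensional-clifford-generators}, each of these unitaries maps $D_v$ to $\pm D_{gv}$, where $g$ is the corresponding matrix. Composing these actions gives $WD_vW^\dagger=\pm D_{Mv}$.

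This needs a little care, because the lemma states the actions on $X_a$ and $Z_b$ only up to phases. Since $D_v=i^{\Tr(ab)}X_aZ_b$, the image of $D_v$ is a unitary multiple of $D_{gv}$. Conjugation preserves Hermiticity and squaring to $I$. So the phase $\omega$ in the image $\omega D_{gv}$ must satisfy $\omega=\bar\omega$ and $\omega^2=1$, which forces $\omega=\pm1$. Hence each generator's unitary maps $D_v$ to $\pm D_{gv}$ up to a global phase of the unitary, as required.

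Therefore $W$ is a valid choice of the lift $U_M$ in Eq.~\eqref{eq:symplectic-lift-action}. Any two lifts differ by a unitary commuting with all $D_v$ up to signs, and by Eq.~\eqref{eq:weyl-character-action} such a unitary is a displacement times a global phase. This is harmless, because the ensemble in Definition~\ref{def:restricted-finite-field-clifford} already averages over displacements $D_u$. The proof of Lemma~\ref{lem:restricted-clifford-direct-2-design} also only used the sign-insensitive tensor-square action.

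The main obstacle is the sign and phase bookkeeping in the lifting step, especially the shear $P_\gamma$, whose $X_a$ action carries the factor $i^{Q_\gamma(a)}$. The Hermiticity argument above is how I would settle it: it reduces every such factor to a sign without tracking $Q_\gamma$ explicitly. Given that, the constant word length combined with Lemma~\ref{lem:finite-dimensional-clifford-generators} yields depth $\cO_\delta(s^{1/\delta})$ for $U_M D_u$.
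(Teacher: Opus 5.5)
Your proof is correct and is essentially the paper's argument. The paper uses the same Bruhat-type decomposition but splits on $b$ rather than $a$: it writes $M=L(d/b)\,D(b)\,w\,L(a/b)$ when $b\neq0$ and $M=L(c/a)\,D(a)$ when $b=0$, which avoids the upper shear $w\,L(\beta)\,w$ and gives slightly shorter words. Your Hermiticity argument, which forces each generator's conjugation phase to be $\pm1$ and so makes the word a genuine lift in the sense of Eq.~\eqref{eq:symplectic-lift-action}, is a correct detail that the paper leaves implicit.
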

	
	\begin{proof}
		Let
		\begin{equation}
			M=
			\begin{pmatrix}
				a&b\\
				c&d
			\end{pmatrix}
			\in\operatorname{SL}(2,K),
			\qquad
			ad+bc=1.
		\end{equation}
		If $b\ne0$, then
		\begin{equation}
			M=L(d/b)D(b)wL(a/b).
			\label{eq:bruhat-b-nonzero-corrected}
		\end{equation}
		Indeed,
		\begin{equation}
			L(d/b)D(b)wL(a/b)
			=
			\begin{pmatrix}
				a&b\\
				(ad+1)/b&d
			\end{pmatrix},
		\end{equation}
		and $(ad+1)/b=c$ because $ad+bc=1$ and the characteristic is two.  If $b=0$, then $ad=1$ and
		\begin{equation}
			M=L(c/a)D(a).
			\label{eq:bruhat-b-zero-corrected}
		\end{equation}
		Thus, every symplectic part has constant word length.
	\end{proof}
	
	The previous lemma gives the symplectic lift as a constant product of the generators.  
	Combining this with the displacement gives the required exact design.
	
	\begin{theorem}[Exact unitary $2$-design with $\delta$D implementations]
		\label{thm:finite-dimensional-exact-2-design}
		For every $s$, there exists an exact unitary $2$-design on $s$ qubits whose unitaries can be implemented in a $\delta$D grid using $\cO_\delta(s)$ space, $\cO_\delta(s^{1/\delta})$ depth, and clean ancillas.
	\end{theorem}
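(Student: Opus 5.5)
The plan is to take the exact $2$-design to be the restricted finite-field Clifford ensemble $\mathcal C_{\mathrm{res}}(K)$ with $K=\bF_{2^s}$, and to show that each sampled element can be compiled cheaply on the grid. Identify $s$ qubits with $\cH_K$ through the polynomial basis. Lemma~\ref{lem:restricted-clifford-direct-2-design} already shows that sampling $M$ uniformly from $\operatorname{SL}(2,K)$ and $u$ uniformly from $K^2$, and then applying $U_MD_u$, gives an exact unitary $2$-design. Two features make this robust to the details of the lift. First, the twirl computed there depends only on the action of $U_M$ on Weyl labels, up to the signs $\sigma(M,v)$, which cancel in the tensor square. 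Second, global phases drop out of $U^{\otimes2}(\cdot)U^{\dagger\otimes2}$. So any unitary that realizes $M$ on Weyl labels up to signs and phase may serve as $U_M$, and what remains is purely an implementation question.

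For the implementation, I would first write $D_u=i^{\Tr(ab)}X_aZ_b$ with $u=(a,b)$. Since $u$ is classical, the phase $i^{\Tr(ab)}$ is a known global phase and can be dropped. The operator $X_aZ_b$ is a displacement, which Lemma~\ref{lem:finite-dimensional-clifford-generators} implements in space $\cO_\delta(s)$ and depth $\cO_\delta(s^{1/\delta})$. Next, Lemma~\ref{lem:constant-word-symplectic-lifts} writes every $M$ as a word of length at most four in the generators $w$, $D(\lambda)$, $L(\gamma)$, branching on whether $b\ne0$. Each generator has a clean lift ($F_K$, $M_\lambda$, or $P_\gamma$) with the same resource bounds. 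The product of these lifts acts on Weyl labels as the product matrix, up to signs and phases, because conjugation actions compose. It is therefore a valid choice of $U_M$.

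To assemble the circuit, lay the $s$ data qubits and $\cO_\delta(s)$ ancillas in one $\delta$D region of side length $\Theta(s^{1/\delta})$. Run the at most five component circuits in sequence. Each component is clean, so every ancilla returns to $\ket{0}$ and can be reused by the next component. The space is therefore the maximum over components, $\cO_\delta(s)$, and the depth is a constant multiple of $\cO_\delta(s^{1/\delta})$. If the component lemmas assume different placements of the data qubits, I would add a routing step between components using Lemma~\ref{lem:grid_permutation_routing_app}, which costs $\cO_\delta(s^{1/\delta})$ depth.

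I expect the main obstacle to be the sign and phase bookkeeping. The generator lemmas realize $L(\gamma)$ only up to extra phases such as $i^{Q_\gamma(a)}$, so I must check that the composed lift still satisfies Eq.~\eqref{eq:symplectic-lift-action} up to signs. It suffices for the composite to map each $D_v$ to a phase times $D_{Mv}$. Hermiticity of both sides then forces that phase to be $\pm1$, and the proof of Lemma~\ref{lem:restricted-clifford-direct-2-design} goes through unchanged.
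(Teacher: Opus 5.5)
Your proposal is correct and follows the paper's proof essentially step for step. You take $\mathcal C_{\mathrm{res}}(K)$, invoke Lemma~\ref{lem:restricted-clifford-direct-2-design} for the exact $2$-design property, and implement $U_MD_u$ as a displacement followed by a constant-length word in the clean diameter-time generators $F_K$, $M_\lambda$, $P_\gamma$ (Lemmas~\ref{lem:finite-dimensional-clifford-generators} and~\ref{lem:constant-word-symplectic-lifts}). Your extra checks are bookkeeping the paper leaves implicit: that the composed lift maps each $D_v$ to $\pm D_{Mv}$ (the phase is forced real by Hermiticity), and that clean ancillas can be reused across components.
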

	
	\begin{proof}
		Use the ensemble $\mathcal C_{\mathrm{res}}(K)$ from Definition~\ref{def:restricted-finite-field-clifford}.  Lemma~\ref{lem:restricted-clifford-direct-2-design} proves that it is an exact unitary $2$-design.  A unitary in $\mathcal C_{\mathrm{res}}(K)$ is a displacement followed by a symplectic lift.  By Lemmas~\ref{lem:finite-dimensional-clifford-generators} and~\ref{lem:constant-word-symplectic-lifts}, this is implemented by a constant number of clean diameter-time generator circuits, so the total space is $\cO_\delta(s)$ and the total depth is $\cO_\delta(s^{1/\delta})$.
	\end{proof}
	
	\section{Random unitaries with $\delta$D implementations}
	\label{app:random-unitaries}
	
	We now assemble the primitives with $\delta$D implementations into low-depth random unitary designs.  The construction has two steps.  First, we use the exact local $2$-design and the $2k$-wise independent functions above to implement each LRFC block directly with $\delta$D implementations.  Second, we glue the local blocks into a global approximate unitary design by a double-layer blocked circuit.  
	
	\subsection{Gluing random unitaries in double-layer blocked circuits}
	\label{subsec:double-layer-blocked-gluing}
	
	\begin{definition}[Double-layer blocked circuit]
		\label{def:double_layer_blocked_circuit}
		Let $P_1,\ldots,P_m$ be disjoint patches whose union is the full set of $n$ data qubits, with $m\ge2$.  Define two nearest-neighbor matchings
		\begin{equation}
			\mathcal M_{\mathrm{odd}}:=\{(a,a+1):a\text{ odd},\ 1\le a<m\},
			\qquad
			\mathcal M_{\mathrm{even}}:=\{(a,a+1):a\text{ even},\ 1\le a<m\}.
		\end{equation}
		For an edge $e=(a,a+1)$, let $Q_e:=P_a\sqcup P_{a+1}$.
		
		A double-layer blocked circuit is a unitary of the form
		\begin{equation}
			U_{\mathrm{dbl}}=U_{\mathrm{even}}U_{\mathrm{odd}},
			\qquad
			U_{\mathrm{odd}}:=\bigotimes_{e\in\mathcal M_{\mathrm{odd}}}U_e,
			\qquad
			U_{\mathrm{even}}:=\bigotimes_{e\in\mathcal M_{\mathrm{even}}}U_e,
			\label{eq:double_layer_blocked_circuit}
		\end{equation}
		where each $U_e$ is supported on $Q_e$.  Given local ensembles $\{\mathcal L_e\}_{e\in\mathcal M_{\mathrm{odd}}\cup\mathcal M_{\mathrm{even}}}$, the associated double-layer blocked ensemble $\cU_{\mathrm{dbl}}$ is obtained by sampling $U_e\sim\mathcal L_e$ independently for all edges $e$ and applying Eq.~\eqref{eq:double_layer_blocked_circuit}.
	\end{definition}
	
	Neighboring local blocks in opposite layers overlap on one patch: $Q_{(a,a+1)}\cap Q_{(a+1,a+2)}=P_{a+1}$.  Thus, if every patch has size at least $\xi$, then every such overlap contains at least $\xi$ qubits.
	
	\begin{fact}[Gluing small unitary designs with double-layer blocked circuits, {\cite[Theorem~6]{Schuster2024RandomUnitaries}}]
		\label{fact:relative_gluing}
		There is a universal constant $C_{\mathrm g}>0$ with the following property.  Consider the double-layer blocked ensemble from Definition~\ref{def:double_layer_blocked_circuit}.  Suppose that $|P_a|\ge\xi$ for every $a\in\{2,\ldots,m-1\}$, and suppose that, for every $e\in\mathcal M_{\mathrm{odd}}\cup\mathcal M_{\mathrm{even}}$, the local ensemble $\mathcal L_e$ is a multiplicative-$\eta$ approximate unitary $k$-design on $Q_e$.  If
		\begin{equation}
			C_{\mathrm g}m\left(\eta+k^2 2^{-\xi}\right)\le\epsilon,
			\label{eq:gluing_condition_app}
		\end{equation}
		then $\cU_{\mathrm{dbl}}$ is a multiplicative-$\epsilon$ approximate unitary $k$-design on $n$ qubits.
	\end{fact}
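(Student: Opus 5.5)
The plan is to separate the two sources of error: the local ensembles are only $\eta$-approximate designs, and two overlapping Haar blocks are only approximately a Haar unitary on their union. I will first swap every local ensemble for its exact Haar counterpart, paying a factor $(1\pm\eta)$ per block. Then I will show that the all-Haar double-layer circuit is itself a multiplicative approximate design, paying a factor $(1\pm ck^22^{-\xi})$ per overlap.

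\emph{Step 1: reduce to Haar blocks.} Write the moment channel of the glued ensemble as the composition $\cT_{\cU_{\mathrm{dbl}}}^{(k)}=\bigl(\bigotimes_{e\in\mathcal M_{\mathrm{even}}}\cT^{(k)}_{\mathcal L_e}\bigr)\circ\bigl(\bigotimes_{e\in\mathcal M_{\mathrm{odd}}}\cT^{(k)}_{\mathcal L_e}\bigr)$. This holds because the $U_e$ are sampled independently.
\begin{itemize}
\item The completely-positive order is preserved under tensor products of CP maps: if $\Phi_1\preceq\Psi_1$ and $\Phi_2\preceq\Psi_2$, then $\Phi_1\otimes\Phi_2\preceq\Psi_1\otimes\Psi_2$, by inserting the middle term $\Psi_1\otimes\Phi_2$.
\item It is also preserved under composition with CP maps on either side.
\item Replacing each local moment channel one at a time by $(1\pm\eta)\cT^{(k)}_{\mathrm{Haar},Q_e}$ therefore gives
\[(1-\eta)^{m-1}\cT^{(k)}_{\mathrm{H,dbl}}\preceq\cT^{(k)}_{\cU_{\mathrm{dbl}}}\preceq(1+\eta)^{m-1}\cT^{(k)}_{\mathrm{H,dbl}},\]
where $\cT^{(k)}_{\mathrm{H,dbl}}$ is the moment channel of the double-layer circuit with exact Haar blocks.
\end{itemize}

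\emph{Step 2: the overlap lemma.} The key ingredient is a statement about three registers $A,B,C$ with $|B|\ge\xi$:
\[(1-\gamma)\cT^{(k)}_{ABC}\preceq\cT^{(k)}_{BC}\circ\cT^{(k)}_{AB}\preceq(1+\gamma)\cT^{(k)}_{ABC},\qquad \gamma=\cO(k^22^{-|B|}).\]
I would prove it with Weingarten calculus.
\begin{itemize}
\item The Choi operator of $\cT^{(k)}_{AB}$ is a Weingarten-weighted sum over pairs of permutation operators on $AB$; composing with $\cT^{(k)}_{BC}$ contracts the permutations on $B$.
\item This contraction produces the Gram matrix $G_B(\sigma,\tau)=\tr(P_\sigma^\dagger P_\tau)/d_B^k$ on register $B$.
\item The bound $\|G_B-I\|\le k^2/d_B$, valid for $d_B\ge k^2$, should show that the composed Choi operator equals the $ABC$ Haar Choi operator conjugated by an operator $\cO(k^2/d_B)$-close to the identity on the span of permutation operators.
\item Since that span is exactly the support of the Haar Choi operator, closeness to the identity there translates into the stated multiplicative sandwich.
\end{itemize}

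\emph{Step 3: merge sequentially.} Haar invariance gives $\cT_{Q}\circ\cT_{Q'}=\cT_{Q}$ whenever $Q'\subseteq Q$. I would use this to merge the blocks one at a time from left to right.
\begin{itemize}
\item Merge $Q_{(1,2)}$ with $Q_{(2,3)}$ into an approximate Haar twirl on $P_1P_2P_3$ by the overlap lemma.
\item Absorb the next odd block $Q_{(3,4)}$ by writing the merged twirl as $\cT_{P_1P_2P_3}=\cT_{P_1P_2P_3}\circ\cT_{P_3}$, so that the overlap lemma applies again across $P_3$.
\item Continue until all $m-1$ blocks are absorbed.
\end{itemize}
At each step the untouched blocks act on disjoint registers, and the Step 1 monotonicity lets the two-sided bounds multiply. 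The result is
\[(1-\gamma)^{m}\cT^{(k)}_{\mathrm{Haar}}\preceq\cT^{(k)}_{\mathrm{H,dbl}}\preceq(1+\gamma)^{m}\cT^{(k)}_{\mathrm{Haar}}.\]
Combining with Step 1 gives factors $(1\pm\eta)^m(1\pm\gamma)^m$. Condition~\eqref{eq:gluing_condition_app} with a large enough $C_{\mathrm g}$ makes $m(\eta+\gamma)$ small, so these factors lie within $1\pm\epsilon$.

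\emph{Main obstacle.} I expect the main difficulty to be Step 2 and its use in Step 3, for two reasons.
\begin{itemize}
\item The overlap lemma must be proved in the completely-positive order, not merely in diamond norm.
\item When merging in Step 3, the lower and upper bounds must stay compatible: a lower bound has to compose with a lower bound on a CP remainder.
\end{itemize}
I would first try to control the remainder by the Haar channel itself. The aim is to write $\cT_{BC}\circ\cT_{AB}=\cT_{ABC}\circ(\mathrm{id}+\Delta)$, where $\Delta$ is a Hermiticity-preserving map with $\pm\Delta\preceq\gamma\,\mathrm{id}$ on the range of $\cT_{ABC}$. The Gram-matrix estimate is designed to give exactly this.
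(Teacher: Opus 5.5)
Your skeleton matches the argument the paper imports from Ref.~\cite{Schuster2024RandomUnitaries} (cf.\ the Remark after the statement): replace the local ensembles by Haar blocks, glue two Haar blocks across one patch at cost $\cO(k^2 2^{-\xi})$, and chain along the path-shaped overlap graph. Steps 1 and 3 are sound. The compatibility issue you flag is not an obstacle: a sandwich $(1-\gamma)\Phi\preceq\Psi\preceq(1+\gamma)\Phi$ between CP maps survives tensoring with $\mathrm{id}$ and pre- or post-composition with arbitrary CP maps, so the factors simply multiply. In Step 3 the merges alternate between the orders $\cT_{BC}\circ\cT_{AB}$ and $\cT_{AB}\circ\cT_{BC}$, and relabeling $A\leftrightarrow C$ covers both.

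The genuine gap is in Step 2, which carries the entire content of the statement. The picture you describe treats the Haar twirl as the projector onto $\mathrm{span}\{P_\sigma\}$, with a Gram-matrix correction on $B$. That is the Liouville (moment-operator) representation $\mathbb E\,U^{\otimes k}\otimes\bar U^{\otimes k}$, not the Choi representation, and it is in the Choi representation that the completely-positive order becomes the PSD order. The Haar Choi operator $\int|\mathrm{vec}(U^{\otimes k})\rangle\langle\mathrm{vec}(U^{\otimes k})|\,dU$ is supported on the vectorization of the commutant $\{P_\sigma\}'=\mathrm{span}\{X^{\otimes k}\}$, not on the span of permutation operators. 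So closeness to the identity on that span controls nothing in PSD order. In the Liouville picture you only get additive norm bounds, and converting those to multiplicative error costs factors polynomial in $d^k$. Your fallback $\cT_{BC}\circ\cT_{AB}=\cT_{ABC}\circ(\mathrm{id}+\Delta)$ cannot hold either. Already for $k=2$ and nontrivial $A,C$, the output $\cT_{BC}\cT_{AB}(I_{AB}\otimes F_C)$ has a nonzero component along $I_A\otimes F_{BC}$, which lies outside the range $\mathrm{span}\{I,F_{ABC}\}$ of $\cT_{ABC}$.

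A route that works is to compare each Haar twirl with $\widetilde\cT_X(Y):=d_X^{-k}\sum_{\sigma\in S_k}\tr(P_\sigma^\dagger Y)P_\sigma$. Its Choi operator $d_X^{-k}\sum_\sigma P_\sigma\otimes P_\sigma$ is $k!\,d_X^{-k}$ times a projector. By Schur--Weyl duality and the hook-content formula, on each isotypic block $\lambda$ the Haar Choi operator equals it times $\prod_{(i,j)\in\lambda}\bigl(1+(j-i)/d_X\bigr)^{-1}=1\pm\cO(k^2/d_X)$. This gives a multiplicative sandwich between $\cT_X$ and $\widetilde\cT_X$.

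For these maps the composition factorizes exactly. Write $|\pi|:=k-\#\mathrm{cycles}(\pi)$. Then $J(\widetilde\cT_{BC}\widetilde\cT_{AB})=J(\widetilde\cT_{ABC})\,(I+E)$ with $E=\sum_{\pi\ne e}d_B^{-|\pi|}\,(P^A_\pi)_{\mathrm{out}}\otimes(P^{AB}_\pi)_{\mathrm{in}}$. This $E$ is Hermitian. It commutes with $J(\widetilde\cT_{ABC})$ because $d_B^{-|\pi|}$ is a class function. It satisfies $\|E\|_\infty\le\prod_{j=1}^{k-1}(1+j/d_B)-1=\cO(k^2/d_B)$. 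Together these yield the overlap lemma with $\gamma=\cO(k^2 2^{-|B|})$.

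This is the correct form of your ``conjugation by an operator close to the identity'': that operator acts on the whole Choi space and commutes with the Choi operator, rather than living on the span of permutations.
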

	
	\begin{remark}
		Fact~\ref{fact:relative_gluing} is a specialized form of
		Ref.~\cite[Theorem~6]{Schuster2024RandomUnitaries}.  In that theorem, one
		considers a general two-layer circuit of overlapping local multiplicative-error
		designs and defines an overlap graph whose vertices are the local unitaries.
		An edge is drawn whenever a first-layer block and a second-layer block overlap
		on at least \(\xi\) qubits.  In the double-layer blocked circuit above, this
		overlap graph is a path, because
		\(Q_{(a,a+1)}\cap Q_{(a+1,a+2)}=P_{a+1}\).
		
		The theorem in Ref.~\cite{Schuster2024RandomUnitaries} is stated with the sufficient choices $\eta\le\epsilon/n$ and $\xi\gtrsim\log(nk^2/\epsilon)$.  The more flexible criterion Eq.~\eqref{eq:gluing_condition_app} is the same proof with the local errors and gluing losses kept explicit: there are $\cO(m)$ local blocks, and each two-block gluing step contributes $\cO(k^2 2^{-\xi})$.
	\end{remark}
	
	\subsection{LRFC designs with $\delta$D implementations}
	\label{subsec:local-finite-dimensional-lrfc-blocks}
	
	We next build the local ensembles used in the double-layer circuit.  Let $\Lambda$ be an even-size block of $\ell=2h$ qubits, with a bipartition $\Lambda=\Lambda_L\sqcup \Lambda_R$ and $|\Lambda_L|=|\Lambda_R|=h$.  A single LRFC circuit on $\Lambda$ has the form
	\begin{equation}
		U_{\mathrm{LRFC}}=S_LS_RF C.
	\end{equation}
	Here $C$ is sampled from an exact unitary $2$-design on $\Lambda$, $F$ is a diagonal random phase unitary, and $S_L,S_R$ are conditional shuffle gates:
	\begin{equation}\label{eq:t-wise-phase}
		F\ket x=(-1)^{f(x)}\ket x,
	\end{equation}
	\begin{equation}\label{eq:t-wise-shuffle}
		S_L\ket{x_L,x_R}
		=
		\ket{x_L+\sigma_L(x_R),x_R},
		\qquad
		S_R\ket{x_L,x_R}
		=
		\ket{x_L,x_R+\sigma_R(x_L)}.
	\end{equation}
	The function $f:\bF_2^\ell\to\bF_2$ is chosen from a $2k$-wise independent binary family, and $\sigma_L,\sigma_R:\bF_2^h\to\bF_2^h$ are chosen from $2k$-wise independent vector-valued families.  These families are realized by polynomial evaluation over $\bF_{2^\ell}$ and $\bF_{2^h}$ as in Lemma~\ref{lem:twise}.
	
	For $p\ge1$, define the amplified local ensemble
	\begin{equation}
		\mathcal L_\Lambda^{(p)}:=(\mathrm{LRFC}_\Lambda)^p
	\end{equation}
	as the product of $p$ independently sampled LRFC circuits on $\Lambda$.
	
	\begin{fact}[Amplified LRFC designs, {\cite[Lemma~14]{Cui2025UnitaryDesignsOptimal}}]
		\label{fact:local_amplified_lrfc_design}
		Let $\Lambda$ be an even-size block of $\ell$ qubits, and let $p\ge8k+1$.  If $k\le2^{\ell/8}$, then $\mathcal L_\Lambda^{(p)}$ is a multiplicative-$\eta_\ell$ approximate unitary $k$-design on $\Lambda$, with $\eta_\ell\le2k^2 2^{-\ell/2}$.
	\end{fact}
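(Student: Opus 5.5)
The plan is a spectral-gap argument: bound the gap of one LRFC block, amplify it by taking independent products, and then convert the resulting operator-norm error into a multiplicative (completely-positive order) error. Write the $k$-th moment operator of one LRFC block as
\begin{equation}
M:=\mathbb E_{U\sim \mathrm{LRFC}_\Lambda}\,U^{\otimes k}\otimes \bar U^{\otimes k},
\end{equation}
and let $\Pi$ be the Haar moment operator. By Schur--Weyl duality, $\Pi$ is the orthogonal projector onto the commutant spanned by the vectorized permutation operators $\{P_\pi\}_{\pi\in S_k}$. Every unitary fixes each $P_\pi$ under $k$-fold conjugation, so $M\Pi=\Pi M=\Pi$. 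The moment operator of $\mathcal L_\Lambda^{(p)}$ is $M^p$, which gives
\begin{equation}
M^p-\Pi=(M-\Pi)^p,\qquad \|M^p-\Pi\|_\infty\le \eta_0^p,\qquad \eta_0:=\|M-\Pi\|_\infty.
\end{equation}

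The first step is the single-block gap bound, which I expect to be the main obstacle. The target is $\eta_0\le c\,k^2 2^{-\ell/4}$, or any bound of the form $2^{-\Omega(\ell)}$ with a constant of this type. I would follow the path-recording and PFC-style analysis.
\begin{enumerate}
\item The exact 2-design $C$ first moves any input $k$-tuple of computational-basis labels into a generic, collision-free configuration. The cost is $\cO(k^2/2^{\ell})$, by a pairwise (second-moment) collision count.
\item On distinct $2k$-tuples, the $2k$-wise independent phase $F$ behaves exactly like a uniformly random Boolean function, since only $2k$ evaluations appear in $U^{\otimes k}\otimes\bar U^{\otimes k}$.
\item Likewise, the two conditional shuffles $S_R$ and $S_L$ behave like truly random round functions of a two-round Luby--Rackoff network.
\item The Luby--Rackoff network maps distinct inputs to distinct outputs. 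Its $k$-fold action therefore coincides with that of a uniformly random permutation, except on a bad event where left halves collide. That event has probability $\cO(k^2 2^{-h})$ with $h=\ell/2$.
\item A random permutation combined with a random phase yields a moment operator whose projection off the permutation-operator subspace is controlled by the same collision probabilities.
\end{enumerate}
The delicate point is to obtain a genuine operator-norm bound on $M-\Pi$ rather than a bound on a single input. I would first try to write $M-\Pi$ as a sum of terms, each supported on collision subspaces, and bound each term using Cauchy--Schwarz on the relevant index patterns.

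The second step converts to multiplicative error. A standard lemma bounds the completely-positive-order error by the dimension factor times the operator-norm error: with $D=2^\ell$, the multiplicative error is at most $D^{2k}\|M^p-\Pi\|_\infty$. Combined with the amplification identity above, this gives
\begin{equation}
D^{2k}\eta_0^p\le 2^{2k\ell}\bigl(c k^2 2^{-\ell/4}\bigr)^{8k+1}.
\end{equation}

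The final step is to check the constants, and this is where the hypotheses enter. Taking $p\ge 8k+1$ makes the exponent $-(8k+1)\ell/4$ absorb the factor $2^{2k\ell}$, leaving a residual $2^{-\ell/4}$. The assumption $k\le 2^{\ell/8}$ ensures $ck^2 2^{-\ell/4}\le 1$, so the powers of $ck^2$ can be absorbed. With the constants arranged accordingly, this yields $\eta_\ell\le 2k^2 2^{-\ell/2}$.
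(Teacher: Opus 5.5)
You have the right architecture (single-block gap, amplification via $M^p-\Pi=(M-\Pi)^p$, then the $D^{2k}$ conversion to the completely positive order), but the quantitative core fails. The paper does not prove this Fact itself; it imports it from Lemma~14 of the cited work.

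\textbf{The final step counts the same factors twice.} Writing $D^{2k}\eta_0^{8k+1}=(ck^2)^{8k+1}2^{-\ell/4}$ already uses every factor $2^{-\ell/4}$ to cancel $2^{2k\ell}$. You then invoke $ck^2 2^{-\ell/4}\le 1$ to discard $(ck^2)^{8k+1}$, which reuses those same factors. Done correctly, $k\le 2^{\ell/8}$ only gives $(ck^2)^{8k+1}\le c^{8k+1}2^{(8k+1)\ell/4}$. Your inputs therefore yield $D^{2k}\eta_0^{8k+1}\le c^{8k+1}2^{2k\ell}$, which is vacuous. Even with a harmless prefactor, the residual $2^{-\ell/4}$ is weaker than the claimed $2^{-\ell/2}$. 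For the same reason, your remark that any single-block bound of the form $2^{-\Omega(\ell)}$ would do is false. With $p=8k+1$ and the loss $D^{2k}=2^{2k\ell}$, you need $\eta_0$ to be essentially at most $2^{-\ell/4}$.

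\textbf{What the hypotheses actually require.} The conditions $p\ge 8k+1$ and $k\le 2^{\ell/8}$ are calibrated to a single-block operator-norm bound at the full Luby--Rackoff collision rate of your own step~4, $\|M-\Pi\|_\infty\le k^2 2^{-h}=k^2 2^{-\ell/2}$, with no square-root loss. Then $k\le 2^{\ell/8}$ gives $\eta_0\le 2^{-\ell/4}$, so $2^{2k\ell}\eta_0^{8k}\le 1$. Hence $\eta_\ell\le 2^{2k\ell}\eta_0^{p}\le \eta_0\le k^2 2^{-\ell/2}$. A leading constant $c>1$ in the single-block bound would leave an extra factor $c^{8k}$.

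\textbf{The missing operator-norm estimate.} This bound is the real content of the Fact, and your sketch does not supply it. Steps~1--5 are per-input collision and hybrid arguments. At best they bound $\|T\|_{1\to 1}$ for $T:=\Phi_{\rm LRFC}-\Phi_{\rm Haar}$, whereas $\|M-\Pi\|_\infty=\|T\|_{2\to 2}$. To bridge this you have two options:
\begin{enumerate}
\item Prove the same $\cO(k^2 2^{-\ell/2})$ bound for the inverse ensemble $C^\dagger F S_R S_L$, in which the 2-design acts last, and then interpolate: $\|T\|_{2\to2}\le\sqrt{\|T\|_{1\to1}\|T^*\|_{1\to1}}$.
\item Analyze the moment operator directly.
\end{enumerate}
Any conversion that pays a dimension factor would destroy the rate.
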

	
	We now combine the exact design and finite-field multiplication ingredients to implement the amplified LRFC designs in $\delta$D architectures. 
	\begin{lemma}[Amplified LRFC blocks with $\delta$D implementations]
		\label{lem:finite-dimensional-lrfc-local}
		Fix $\delta\ge1$.  Let $\Lambda$ be an even-size block of $\ell$ qubits and let $p=8k+1$.  The ensemble $\mathcal L_\Lambda^{(p)}$ can be implemented in a $\delta$D grid using $\cO_\delta(k\ell)$ space and depth
		\begin{equation}
			\cO_\delta\left(k\log(k)(k\ell)^{1/\delta}\right),
			\label{eq:finite-dimensional-lrfc-local-depth}
		\end{equation}
		with clean ancillas.
	\end{lemma}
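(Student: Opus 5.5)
The plan is to implement one sampled LRFC circuit $U_{\rm LRFC}=S_LS_RFC$ in $\delta$D depth $\cO_\delta(\log(k)(k\ell)^{1/\delta})$ with $\cO_\delta(k\ell)$ clean ancillas. Then we compose $p=8k+1=\cO(k)$ independent copies sequentially, reusing the same clean ancilla workspace each time. Since every component returns its ancillas to $\ket{0}$, the space stays $\cO_\delta(k\ell)$, and the depths simply add. This gives total depth $\cO(k)\cdot\cO_\delta(\log(k)(k\ell)^{1/\delta})$, which is Eq.~\eqref{eq:finite-dimensional-lrfc-local-depth}. Sampling is classical: the seeds (the symplectic matrix $M$, the displacement $u$, and the polynomial coefficients) are drawn beforehand and hardwired into the gates.

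We handle the four components of a single block in turn.
\begin{enumerate}
\item For $C$, we take the restricted finite-field Clifford ensemble $\mathcal C_{\rm res}(\bF_{2^\ell})$ on $\Lambda$. By Theorem~\ref{thm:finite-dimensional-exact-2-design}, it is an exact $2$-design implementable in space $\cO_\delta(\ell)$ and depth $\cO_\delta(\ell^{1/\delta})$.
\item For $F$, we take $f(x)=\lambda(P_a(x))$ with $P_a$ of degree $<2k$ over $\bF_{2^\ell}$. This family is $2k$-wise independent by Lemma~\ref{lem:twise}. Theorem~\ref{thm:hash-eval} with $t=2k$ and $s=\ell$ implements the phase oracle in space $\cO_\delta(k\ell)$ and depth $\cO_\delta(\log(k)(k\ell)^{1/\delta})$.
\item For $S_R$, we identify $\Lambda_L,\Lambda_R$ with $\bF_{2^h}$ for $h=\ell/2$ and take $\sigma_R=P_a$ over $\bF_{2^h}$. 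Then $S_R\ket{x_L,x_R}=\ket{x_L,x_R+\sigma_R(x_L)}$ is exactly the map \eqref{eq:hash-eval-map} with input register $\Lambda_L$ and target register $\Lambda_R$. Theorem~\ref{thm:hash-eval} again gives space $\cO_\delta(kh)$ and depth $\cO_\delta(\log(k)(kh)^{1/\delta})$.
\item $S_L$ is handled identically, with the roles of the two halves swapped.
\end{enumerate}
All component depths are bounded by $\cO_\delta(\log(k)(k\ell)^{1/\delta})$.

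The main obstacle is geometric. Theorem~\ref{thm:hash-eval} and Theorem~\ref{thm:finite-dimensional-exact-2-design} each assume their registers and workspace are laid out in their own grid region of diameter $\Theta((k\ell)^{1/\delta})$. In the LRFC block, however, the data qubits of $\Lambda$ sit at fixed positions, and different components want different layouts: whole-block for $C$ and $F$, a left/right split for the shuffles. We resolve this by fixing one grid region of side length $\Theta_\delta((k\ell)^{1/\delta})$ that contains $\Lambda$ and the $\cO(k\ell)$ ancillas. Before each component, we apply a SWAP-network permutation that moves the data qubits into the layout that component's construction expects, and afterwards we apply the inverse permutation. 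By Lemma~\ref{lem:grid_permutation_routing_app}, each such routing costs depth $\cO_\delta((k\ell)^{1/\delta})$. This is at most the component cost, so the per-block bound is unchanged.

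One point needs checking: that the $C$ circuit, built for space $\cO(\ell)$, still has depth $\cO_\delta(\ell^{1/\delta})$ when embedded in the larger region. This holds because it can act locally on a sub-box, with the unused ancillas left idle.

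Finally, the data of $\Lambda$ ends up in its original positions, so consecutive blocks compose without extra cost. This establishes the lemma.
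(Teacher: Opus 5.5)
Your proposal is correct and follows essentially the same route as the paper. Like the paper, you implement $C$ via Theorem~\ref{thm:finite-dimensional-exact-2-design}, implement $F$, $S_R$, $S_L$ via Theorem~\ref{thm:hash-eval} over $\bF_{2^\ell}$ and $\bF_{2^h}$, and then run $p=8k+1$ independently sampled layers sequentially on a reused clean workspace. Your explicit SWAP-routing step between components, which costs $\cO_\delta((k\ell)^{1/\delta})$ by Lemma~\ref{lem:grid_permutation_routing_app}, is a layout detail the paper leaves implicit and does not change the bound.
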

	
	\begin{proof}
		One LRFC layer has four components.  The exact $2$-design unitary $C$ is implemented by Theorem~\ref{thm:finite-dimensional-exact-2-design} in space $\cO_\delta(\ell)$ and depth $\cO_\delta(\ell^{1/\delta})$.  The binary phase $F$ is implemented by Theorem~\ref{thm:hash-eval} over $\bF_{2^\ell}$, with space $\cO_\delta(k\ell)$ and depth $\cO_\delta(\log(k)(k\ell)^{1/\delta})$.  The shuffles $S_L$ and $S_R$ are vector-valued polynomial evaluations over $\bF_{2^h}$, again by Theorem~\ref{thm:hash-eval}, followed by field addition into the target half.  Their space is $\cO_\delta(kh)$ and their depth is $\cO_\delta(\log(k)(kh)^{1/\delta})$.
		
		Thus one LRFC layer is implemented in space $\cO_\delta(k\ell)$ and depth $\cO_\delta(\log(k)(k\ell)^{1/\delta})$.  The amplified ensemble uses $p=8k+1$ independent layers sequentially and reuses the same workspace, giving Eq.~\eqref{eq:finite-dimensional-lrfc-local-depth}.
	\end{proof}
	
	The amplified LRFC construction above is stated for even-sized blocks. The next lemma turns this construction into a local design for every support size $\ell$.
	
	\begin{lemma}[Local block designs for arbitrary block sizes]
		\label{lem:arbitrary_size_local_design}
		Let $\Lambda$ be a block of $\ell$ qubits with $\ell\ge2\xi$, and suppose $k\le2^{\xi/4}$.  Then there exists an ensemble $\widehat{\mathcal L}_\Lambda$ on $\Lambda$ which is a multiplicative-$\eta$ approximate unitary $k$-design, with $\eta\le Ck^2 2^{-\xi}$, where $C>0$ is a universal constant.  It has a $\delta$D nearest-neighbor implementation using $\cO_\delta(k\ell)$ space and depth $\cO_\delta\left(k\log(k)(k\ell)^{1/\delta}\right)$, with clean ancillas.
	\end{lemma}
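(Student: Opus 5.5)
If $\ell$ is even, I will simply take $\widehat{\mathcal L}_\Lambda:=\mathcal L_\Lambda^{(p)}$ with $p=8k+1$. The hypothesis $k\le 2^{\xi/4}$ together with $\ell\ge 2\xi$ gives $k\le 2^{\ell/8}$. Then Fact~\ref{fact:local_amplified_lrfc_design} applies and yields
\[
\eta\le 2k^2 2^{-\ell/2}\le 2k^2 2^{-\xi}.
\]
The implementation cost is exactly Lemma~\ref{lem:finite-dimensional-lrfc-local}. So the only real work is the odd case.

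For odd $\ell$, I would reuse the gluing idea at the scale of a single block. Write $\Lambda=\Lambda_1\cup\Lambda_2$, where $\Lambda_1$ is the first $\ell-1$ qubits and $\Lambda_2$ is the last $\ell-1$ qubits. Both sets have even size $\ell-1\ge 2\xi-1$, and they overlap on $\ell-2$ qubits. If the two pieces overlap too much for a clean application of Fact~\ref{fact:relative_gluing}, an alternative is to choose three patches $P_1,P_2,P_3$ partitioning $\Lambda$ with $|P_2|\ge\xi$ and $|P_1\sqcup P_2|$, $|P_2\sqcup P_3|$ both even. This is possible when $\ell$ is odd by taking $|P_1|$ odd, $|P_2|$ odd, and $|P_3|$ even, adjusting by one qubit as needed. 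One then samples $\widehat{\mathcal L}_\Lambda=\mathcal L^{(p)}_{P_2\sqcup P_3}\,\mathcal L^{(p)}_{P_1\sqcup P_2}$. This is precisely a double-layer blocked circuit with $m=3$ patches, so Fact~\ref{fact:relative_gluing} gives multiplicative error $C_{\mathrm g}\cdot 3\,(\eta'+k^2 2^{-\xi})$, where $\eta'$ is the error of each local LRFC design.

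To control $\eta'$, each local block has size at least $|P_2|+1\ge\xi+1$. Choosing $|P_2|\approx\ell/2\ge\xi$ makes both blocks have size at least about $\ell/2+\xi/2$, which is still $\gtrsim\xi$. The local design hypothesis then needs $k\le 2^{(\text{block size})/8}$, which follows from $k\le 2^{\xi/4}$ once the block size is at least $2\xi$. Otherwise I would pick sizes so that each block has at least $2\xi$ qubits, which is possible whenever $\ell\ge 3\xi$. In the narrow range $2\xi\le\ell<3\xi$, I would instead make the overlap $P_2$ large and accept a block size of at least $\ell-1\ge 2\xi-1$, which needs only a constant-factor slack in the exponent.

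The main obstacle is this bookkeeping. The $2^{-\ell/2}$ decay from Fact~\ref{fact:local_amplified_lrfc_design} and the $2^{-\xi}$ gluing loss must combine into $\eta\le Ck^2 2^{-\xi}$ with a universal constant. The constraint $k\le 2^{\text{size}/8}$ must also hold for both even sub-blocks. If the parity adjustment costs one qubit, giving $2^{-(2\xi-1)/2}$, that only changes constants, which are absorbed into $C$.

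Finally, the resource bound follows by running the two even LRFC blocks sequentially. Each costs $\cO_\delta(k\log(k)(k\ell)^{1/\delta})$ depth and $\cO_\delta(k\ell)$ space by Lemma~\ref{lem:finite-dimensional-lrfc-local}. Their workspaces are reused, and the blocks are placed as subgrids of the $\delta$D region containing $\Lambda$. If a block is not a contiguous subgrid, it is embedded using Lemma~\ref{lem:mesh_compilation_app}-type routing at $\cO_\delta(\ell^{1/\delta})$ extra depth. All ancillas remain clean because each component is clean.
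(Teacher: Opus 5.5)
Your proposal is correct and is essentially the paper's proof. The even case is identical. For odd $\ell$, the paper glues exactly your first decomposition: the two even sub-blocks of size $\ell-1$, obtained by deleting either endpoint qubit, overlapping on $\ell-2$ qubits. That is your $m=3$ blocked circuit with $|P_1|=|P_3|=1$, so your worry about ``too much overlap'' is unfounded, since the gluing lemma only needs a lower bound on the overlap.

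Two slips in your fallback bookkeeping should be fixed. First, the parity constraint forces $|P_1|,|P_2|,|P_3|$ to be all odd; with $|P_3|$ even and $|P_2|$ odd, the block $P_2\sqcup P_3$ has odd size. Second, no ``slack in the exponent'' is needed, and the hypothesis $k\le 2^{\xi/4}$ would not supply it anyway. For integer $\xi$, odd $\ell\ge 2\xi$ already gives $\ell-1\ge 2\xi$, hence $k\le 2^{(\ell-1)/8}$ and $\eta'\le 2k^2 2^{-\xi}$ directly.
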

	
	\begin{proof}
		If $\ell$ is even, apply Fact~\ref{fact:local_amplified_lrfc_design} with $p=8k+1$.  Since $\ell\ge2\xi$, the assumption $k\le2^{\xi/4}$ implies $k\le2^{\ell/8}$, and hence $\eta_\ell\le2k^2 2^{-\ell/2}\le2k^2 2^{-\xi}$.
		The resource bounds follow from Lemma~\ref{lem:finite-dimensional-lrfc-local}.
		
		If $\ell$ is odd, write $\ell=2h+1$.  Then $h\ge\xi$.  Choose two even-size subblocks $\Lambda_-$ and $\Lambda_+$, each of size $2h$, obtained by deleting one endpoint qubit from $\Lambda$ in two different ways.  Their overlap has size $2h-1$.  Apply the even-size construction independently on $\Lambda_-$ and $\Lambda_+$ and compose the two sampled unitaries.  Each subblock ensemble has error at most $2k^2 2^{-h}\le2k^2 2^{-\xi}$.  The two-block gluing lemma underlying Fact~\ref{fact:relative_gluing} gives total multiplicative error
		\begin{equation}
			\cO\left(k^2 2^{-h}+k^2 2^{-(2h-1)}\right)
			\le
			Ck^2 2^{-\xi}.
		\end{equation}
		The two subblock implementations are sequential and use regions of size $\cO_\delta(k\ell)$, so the space and depth bounds increase only by a constant factor.
	\end{proof}
	
	\subsection{Global multiplicative-error designs with $\delta$D implementations}
	\label{subsec:finite_dimensional_design_theorem}
	
	We now combine the local $\delta$D LRFC construction with the double-layer gluing lemma.  
	
	\begin{theorem}[Unitary designs with $\delta$D low-depth implementations, formal version of Theorem~\ref{thm:finite-dimensional-design}]
		\label{thm:finite_dimensional_design_app}
		Fix $\delta\ge1$.  There are constants $C_\delta,C_*>0$, with $C_\delta$ depending only on $\delta$ and $C_*$ universal, such that the following holds.  Let $n$ be the number of data qubits, let $k\ge2$, and let $0<\epsilon\le1$.  Suppose there is an integer $\xi$ satisfying
		\begin{equation}
			2\xi\le n,
			\qquad
			k\le2^{\xi/4},
			\qquad
			C_*\frac n\xi k^2 2^{-\xi}\le\epsilon.
			\label{eq:finite_dim_parameter_condition}
		\end{equation}
		Then there exists an $n$-qubit random unitary ensemble $\cU$ which is a multiplicative-$\epsilon$ approximate unitary $k$-design and is implementable by a $\delta$D nearest-neighbor circuit with clean ancillas.  The required depth is $\cO_{\delta}(k\log(k)(k\xi)^{1/\delta})$ and the number of clean ancillas is $\cO_{\delta}(nk)$.
	\end{theorem}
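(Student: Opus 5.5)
The plan is to assemble the double-layer blocked circuit of Definition~\ref{def:double_layer_blocked_circuit}, using the local designs of Lemma~\ref{lem:arbitrary_size_local_design} for the blocks, and then invoke Fact~\ref{fact:relative_gluing}.

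\textbf{Patch layout.} First I partition the $n$ data qubits into $m=\lfloor n/\xi\rfloor$ patches $P_1,\ldots,P_m$. Each patch has size between $\xi$ and $2\xi$. The condition $2\xi\le n$ guarantees $m\ge2$. On a $\delta$D grid I choose the patches to be contiguous, roughly cubical regions of diameter $\cO_\delta(\xi^{1/\delta})$. They are ordered along a snake (boustrophedon) path through a coarse grid of patches, so that consecutive patches $P_a,P_{a+1}$ are geometrically adjacent. Then every merged block $Q_e=P_a\sqcup P_{a+1}$ has $\ell_e\in[2\xi,4\xi]$ qubits and lies inside a region of diameter $\cO_\delta(\xi^{1/\delta})$.

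\textbf{Local designs and error budget.} On each $Q_e$ I place the ensemble $\widehat{\mathcal L}_{Q_e}$ from Lemma~\ref{lem:arbitrary_size_local_design}. This applies because $\ell_e\ge2\xi$ and $k\le 2^{\xi/4}$. Each such ensemble is a multiplicative-$\eta$ design with $\eta\le Ck^22^{-\xi}$. Substituting into Eq.~\eqref{eq:gluing_condition_app} gives
\[
C_{\mathrm g}\,m\left(\eta+k^22^{-\xi}\right)\le C_{\mathrm g}(C+1)\frac{n}{\xi}k^22^{-\xi}.
\]
Taking $C_*:=C_{\mathrm g}(C+1)$, the third condition in Eq.~\eqref{eq:finite_dim_parameter_condition} makes this at most $\epsilon$. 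Fact~\ref{fact:relative_gluing} then shows that the global ensemble is a multiplicative-$\epsilon$ approximate unitary $k$-design. The overlap hypothesis $|P_a|\ge\xi$ holds by construction.

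\textbf{Resources.} Within one layer the blocks are disjoint, so they run in parallel. The main obstacle is making sure the $\cO_\delta(k\ell_e)$ clean ancillas used by each block fit geometrically next to that block without collisions, while keeping the $\cO_\delta((k\ell_e)^{1/\delta})$ diameter needed by Lemma~\ref{lem:finite-dimensional-lrfc-local}. My approach is as follows:
\begin{itemize}
\item[(i)] Dilate the grid by a factor of $\Theta(k^{1/\delta})$ in each direction. Each data qubit then sits inside a cell of $\Theta(k)$ ancilla sites.
\item[(ii)] Each block $Q_e$ now occupies a region of volume $\Theta(k\ell_e)$ and diameter $\cO_\delta((k\ell_e)^{1/\delta})$. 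Run the Lemma~\ref{lem:arbitrary_size_local_design} circuit inside that region, treating the region as its own $\delta$D grid. Any routing it needs stays within the region and costs no more than the region's diameter, by Lemma~\ref{lem:grid_permutation_routing_app}.
\item[(iii)] Blocks in the same layer occupy disjoint regions. Ancillas are clean after each layer, so the same ancillas are reused by the second layer.
\end{itemize}
A nearest-neighbor gate between original data qubits becomes a gate at distance $\cO(k^{1/\delta})$ in the dilated grid. This overhead is absorbed because the local circuits are already compiled through routing inside their own regions.

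Two layers, each of depth $\cO_\delta(k\log(k)(k\xi)^{1/\delta})$, give the stated total depth. The ancilla count is $\sum_e\cO_\delta(k\ell_e)=\cO_\delta(nk)$, since each data qubit lies in $\mathcal{O}(1)$ blocks.
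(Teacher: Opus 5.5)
Your proposal is correct and follows essentially the paper's own proof. You use the same $m=\lfloor n/\xi\rfloor$ patches, apply Lemma~\ref{lem:arbitrary_size_local_design} on each $Q_e$, invoke Fact~\ref{fact:relative_gluing} with $C_*=C_{\mathrm g}(C+1)$, and run the disjoint blocks of each layer in parallel with reusable clean ancillas. The one difference is the layout: the paper places boxes $\Gamma_a$ of $\Theta(k\xi)$ sites in a row along the first coordinate with each patch inside its own box, whereas you snake roughly cubical patches through a coarse grid and dilate by $\Theta(k^{1/\delta})$. Both layouts give each block a box region of volume $\Theta(k\xi)$ and diameter $\mathcal{O}_\delta((k\xi)^{1/\delta})$, so the resource counts are identical.
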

	
	\begin{proof}
		Choose $m:=\lfloor n/\xi\rfloor$.  Since $2\xi\le n$, we have $m\ge2$.  Write $n=m\xi+r$ with $0\le r<\xi$, and define patch sizes
		\begin{equation}
			s_a:=\xi\quad(1\le a<m),
			\qquad
			s_m:=\xi+r.
		\end{equation}
		Then $\xi\le s_a\le2\xi$, $\sum_{a=1}^m s_a=n$, and $m\le n/\xi$.
		
		We place these patches in a $\delta$D grid with clean ancillas.  Let $Q_{\mathrm{cell}}:=C_0k\xi$ for a sufficiently large constant $C_0$, and set $B:=\lceil Q_{\mathrm{cell}}^{1/\delta}\rceil$.  For $a=1,\ldots,m$, define the cell
		\begin{equation}
			\Gamma_a
			:=
			\{(x_1,\ldots,x_\delta):(a-1)B<x_1\le aB,\ 1\le x_j\le B\text{ for }2\le j\le\delta\}.
		\end{equation}
		The cells are disjoint boxes arranged consecutively along the first coordinate, and consecutive cells share a $(\delta-1)$-dimensional face.
		
		Inside each cell $\Gamma_a$, choose a connected subset $P_a\subset\Gamma_a$ of size $s_a$ and place the data qubits of patch $P_a$ there.  All remaining sites in $\Gamma_a$ are clean ancillas.  For an edge $e=(a,a+1)$, let $Q_e:=P_a\sqcup P_{a+1}$.  Then
		\begin{equation}
			2\xi\le |Q_e|\le4\xi.
		\end{equation}
		
		Apply Lemma~\ref{lem:arbitrary_size_local_design} to each support $Q_e$.  We obtain a local ensemble $\mathcal L_e$ which is a multiplicative-$\eta$ approximate unitary $k$-design on $Q_e$, with $\eta\le Ck^2 2^{-\xi}$.  The same lemma gives a nearest-neighbor implementation in the physical region $\Gamma_a\cup\Gamma_{a+1}$ using space $\cO_\delta(k\xi)$ and depth $\cO_\delta(k\log(k)(k\xi)^{1/\delta})$.  
		
		Now form the double-layer blocked ensemble $\cU_{\mathrm{dbl}}$ from Definition~\ref{def:double_layer_blocked_circuit}.  Since every patch has size at least $\xi$, neighboring blocks in opposite layers overlap on at least $\xi$ qubits.  Fact~\ref{fact:relative_gluing} gives that $\cU_{\mathrm{dbl}}$ is a multiplicative-$\epsilon'$ approximate unitary $k$-design with
		\begin{equation}
			\epsilon'
			\le
			C_{\mathrm g}m\left(\eta+k^2 2^{-\xi}\right)
			\le
			C_*\frac n\xi k^2 2^{-\xi} \le \epsilon.
		\end{equation}
		
		The qubits used by distinct edges in $\mathcal M_{\mathrm{odd}}$ are disjoint, so all odd-layer local blocks are implemented in parallel.  The same is true for the even layer.  Thus, the two-layer circuit increases the local depth only by a constant factor.  Finally, the total number of physical sites is
		\begin{equation}
			mB^\delta = \cO_\delta\left(\frac n\xi\cdot k\xi\right) = \cO_\delta(nk),
		\end{equation}
		so the number of clean ancillas is at most $C_\delta nk$ after increasing $C_\delta$.
	\end{proof}

\end{document}